\newtheorem{theorem}{Theorem} 	      	      	                              
\newtheorem{corollary}[theorem]{Corollary}     	      	      	      	      
\newtheorem{lemma}[theorem]{Lemma}     	       	      	      	      	      
\newtheorem{proposition}[theorem]{Proposition} 	      	      	      	      
\newtheorem{definition}[theorem]{Definition} 	      	      	                
\newtheorem{question}[theorem]{Question}     	      	      	      	        
\newtheorem{remark}[theorem]{Remark}                                          
\numberwithin{equation}{section}                                              
\numberwithin{theorem}{section}                                               
\newcommand{\mf}[1]{\mathfrak{#1}}                                            
\newcommand{\mi}[1]{\mathscr{#1}}                                             
\newcommand{\mc}[1]{\mathcal{#1}}                                             
\newcommand{\ms}[1]{\mathsf{#1}}                                              
\newcommand{\mr}[1]{\mathrm{#1}}                                              
\newcommand{\N}{\mathbb{N}}                                                   
\newcommand{\Z}{\mathbb{Z}}                                                   
\newcommand{\R}{\mathbb{R}}                                                   
\newcommand{\Sph}{\mathbb{S}}                                                 
\newcommand{\trace}[1]{\mr{tr}_{#1}}                                          
\newcommand{\gv}{\mathsf{g}}                                                  
\newcommand{\kv}{\mathsf{k}}                                                  
\newcommand{\Rv}{\mathsf{R}}                                                  
\newcommand{\Rcv}{\mathsf{Rc}}                                                
\newcommand{\Rsv}{\mathsf{Rs}}                                                
\newcommand{\Dv}{\mathsf{D}}                                                  
\newcommand{\gb}[1]{\mathfrak{g}^{\scriptscriptstyle (#1)}}                   
\newcommand{\gm}{\mf{g}}                                                      
\newcommand{\Dm}{\mf{D}}                                                      
\newcommand{\Rb}[1]{\mathfrak{r}^{\scriptscriptstyle (#1)}}                   
\newcommand{\Rm}{\mf{R}}                                                      
\newcommand{\Rcm}{\mf{Rc}}                                                    
\newcommand{\Rsm}{\mf{Rs}}                                                    
\newcommand{\bb}[1]{\mathfrak{b}^{\scriptscriptstyle (#1)}}                   
\newcommand{\sch}[1]{\mi{S} ( #1 )}                                           
\def\comp{1}
\begin{document}

\title[The Near-Boundary Geometry of Vacuum aAdS Spacetimes]{The Near-Boundary Geometry of Einstein-Vacuum Asymptotically Anti-de Sitter Spacetimes}
\author{Arick Shao}
\address{School of Mathematical Sciences\\
Queen Mary University of London\\
London E1 4NS\\ United Kingdom}
\email{a.shao@qmul.ac.uk}

\begin{abstract}
We study the geometry of a general class of vacuum asymptotically Anti-de Sitter spacetimes near the conformal boundary.
In particular, the spacetime is only assumed to have finite regularity, and it is allowed to have arbitrary boundary topology and geometry.
For the main results, we derive limits at the conformal boundary of various geometric quantities, and we use these limits to construct partial Fefferman--Graham expansions from the boundary.
The results of this article will be applied, in upcoming papers, toward proving symmetry extension and gravity--boundary correspondence theorems for vacuum asymptotically Anti-de Sitter spacetimes.
\end{abstract}

\subjclass[2010]{83C30, 83C05, 83E99, 58J45}

\maketitle

\section{Introduction} \label{sec.intro}

The main objective of this article is to study the geometry, near the timelike conformal boundary, of a wide class of asymptotically Anti-de Sitter spacetimes that also satisfy the Einstein-vacuum equations.
More specfically, assuming that such a spacetime, which can have arbitrary boundary topology and geometry, satisfies generic and finite regularity assumptions near the boundary:
\begin{enumerate}
\item We derive limits of various geometric quantities at the conformal boundary.

\item We then apply these limits in order to construct partial Fefferman--Graham expansions from the conformal boundary for these same geometric quantities.
\end{enumerate}
Moreover, for each such partial expansion, we obtain enough terms so that both ``free" coefficients---those undetermined by the Einstein-vacuum equations---are captured.

In addition, this article supplements the recent results obtained in \cite{hol_shao:uc_ads, hol_shao:uc_ads_ns} on unique continuation of waves from the conformal boundary of asymptotically Anti-de Sitter spacetimes, as well as their upcoming applications.
The estimates from \cite{hol_shao:uc_ads_ns} will be applied toward proving:
\begin{enumerate}
\item Symmetry extension results from the conformal boundary for Einstein-vacuum spacetimes.

\item Correspondence results between vacuum spacetimes and data at the conformal boundary.
\end{enumerate}
However, for these results, one needs a detailed understanding of the boundary asymptotics implied by the Einstein-vacuum equations.
This task is fulfilled by the present paper.

\subsection{Background}

\emph{Anti-de Sitter} (abbreviated \emph{AdS}) spacetime can be viewed as the prototypical, and maximally symmetric, solution of the \emph{Einstein-vacuum equations},
\begin{equation}
\label{eq.intro_vacuum} Rc_{ \alpha \beta } - \frac{1}{2} Rs \cdot g_{ \alpha \beta } + \Lambda \cdot g_{ \alpha \beta } = 0 \text{,}
\end{equation}
with negative \emph{cosmological constant} $\Lambda$.
Given the appropriate normalization of $\Lambda$ (see Definition \ref{def.aads_vacuum}), AdS spacetime can be globally represented as the manifold $\R^{1+n}$, with metric
\begin{equation}
\label{eq.intro_ads} g_0 := ( r^2 + 1 )^{-1} dr^2 - ( r^2 + 1 ) dt^2 + r^2 \mathring{\gamma} \text{,}
\end{equation}
where $\mathring{\gamma}$ denotes the round metric on $\Sph^{n-1}$.

One fundamental feature of AdS spacetime is the presence of a timelike conformal boundary.
For instance, if the radial coordinate $r$ is replaced by $\rho := r^{-1}$, then $g_0$ can be written as
\begin{equation}
\label{eq.intro_ads_inv} g_0 = \rho^{-2} [ ( 1 + \rho^2 )^{-1} d \rho^2 - ( 1 + \rho^2 ) dt^2 + \mathring{\gamma} ] \text{.}
\end{equation}
Thus, if we disregard the factor $\rho^{-2}$ in \eqref{eq.intro_ads_inv}, then we can formally attach a boundary
\begin{equation}
\label{eq.intro_boundary} ( \mi{I}, \gm_0 ) := ( \R \times \Sph^{ n - 1 }, - dt^2 + \mathring{\gamma} )
\end{equation}
at $\rho = 0$ (or equivalently, at $r = \infty$).  In particular, $( \mi{I}, \gm_0 )$ is an $n$-dimensional Lorentzian manifold, and $\gm$ represents the limit of $\rho^2 g_0$ as $\rho \searrow 0$ (or $r \nearrow \infty$).

In general, we loosely refer to spacetimes (not necessarily vacuum) that have a similar conformal boundary $( \mi{I}, \gm )$ as \emph{asympotically Anti-de Sitter}, which we will also abbreviate as \emph{aAdS}.
(A much more precise definition of aAdS spacetimes for this paper will be given in Section \ref{sec.aads}.)

In the past two decades, aAdS spacetimes have been a topic of significant interest in theoretical physics due to the AdS/CFT conjecture \cite{malda:ads_cft}, which roughly posits a holographic correspondence between gravitational theory in the ($(n+1)$-dimensional) spacetime and a conformal field theory on its ($n$-dimensional) boundary.
Despite its substantial influence in physics, there are scant rigorously formulated mathematical results that support or refute this conjecture; see, e.g., \cite{and_herz:uc_ricci, and_herz:uc_ricci_err, biq:uc_einstein, chru_delay:uc_killing, witt_yau:ads_cft}.

In the context of classical relativity, one may attempt to approach this conjecture by studying the relationship between vacuum aAdS spacetimes---representing the interior gravitational theory---and appropriate data posed at the conformal boundary $\mi{I}$.
In this direction, an essential task would be to gain a precise understanding of the geometries of such spacetimes and their asymptotics near $\mi{I}$.

\subsubsection{Fefferman--Graham Expansions}

Much of the understanding around the above question has arisen from the works of Fefferman and Graham \cite{fef_gra:conf_inv, fef_gra:amb_met}, which were concerned with constructing Einstein-vacuum spacetimes given conformal data on a null hypersurface.
From this data and the Einstein-vacuum equations, one then derives a formal series expansion; when the data is sufficiently ``analytic", this series then converges to a genuine Einstein-vacuum spacetime.

These ideas have been widely adapted by the physics community to the present aAdS setting.
Here, one first simplifies the expression for the metric by adopting special coordinates, so that the aAdS spacetime metric $g$ has a form that trivializes the (inverse) radial component:
\begin{equation}
\label{eq.intro_fg_gauge} g := \rho^{-2} ( d \rho^2 + \gv_{ a b } dx^a dx^b ) \text{.}
\end{equation}
(Here, the $x^a$'s denote non-radial coordinates that are tangent to the boundary manifold $\mi{I}$.)
Using the Einstein-vacuum equations, one then derives a formal series expansion for $\gv$ from $\mi{I}$:
\begin{equation}
\label{eq.intro_fg_exp} \gv = \begin{cases} \sum_{ k = 0 }^{ \frac{ n - 1 }{2} } \gb{ 2k } \rho^{ 2 k } + \gb{n} \rho^n + \dots & \text{$n$ odd,} \\ \sum_{ k = 0 }^{ \frac{ n - 2 }{2} } \gb{ 2k } \rho^{ 2 k } + \gb{\star} \rho^n \log \rho + \gb{n} \rho^n + \dots & \text{$n$ even,} \end{cases}
\end{equation}
When $n$ is odd, the above continues as a power series in $\rho$.
However, when $n$ is even, the series becomes polyhomogeneous beyond $n$-th order, containing terms of the form $\mf{h} \cdot \rho^k ( \log \rho )^l$.

In addition, one can observe the following features of the expansion \eqref{eq.intro_fg_exp}:
\begin{itemize}
\item The zero-order coefficient $\gb{0}$ and the $n$-th order coefficient $\gb{n}$ represent the ``free" data that are not determined by the Einstein-vacuum equations.
\footnote{More accurately, only the trace-free and divergence-free parts of $\gb{n}$ are ``free" in this sense.}

\item The coefficients $\gb{1}, \gb{2}, \dots, \gb{\star}$ that arise before $\gb{n}$ are formally determined only by $\gb{0}$.

\item All the coefficients arising after $\gb{n}$ are formally determined by both $\gb{0}$ and $\gb{n}$.
\end{itemize}
Again, under appropriate ``analyticity" (taking into account the possible polyhomogeneity) assumptions, these expansions can be shown to converge to Einstein-vacuum spacetimes \cite{fef_gra:conf_inv, fef_gra:amb_met, kichen:fg_log}.

We remark that $\gb{0}$ is precisely the conformal boundary metric $\gm$ from the preceding discussions.
Furthermore, $\gb{n}$ is closely related to the \emph{boundary stress-energy tensor} that is often studied in the physics literature \cite{deharo_sken_solod:holog_adscft, sken:aads_set} and is expected to describe the boundary conformal field theory.

While Fefferman--Graham expansions have proven quite fruitful for inferring the near-boundary heuristics of vacuum aAdS geometries, they do not provide a robust---or even generic---means of constructing vacuum aAdS spacetimes.
In particular, given ``free" boundary data $( \gb{0}, \gb{n} )$, one can solve the Einstein-vacuum equations for series solutions of the form \eqref{eq.intro_fg_exp} only within the excessively restrictive class of sufficiently ``analytic" metrics.
\footnote{Of course, this notion of ``analyticity" must be modified to account for the possible polyhomogeneity in \eqref{eq.intro_fg_gauge}.}
Therefore, if our aim is to obtain a general understanding of aAdS spacetimes near the conformal boundary, then we would need to dramatically broaden the class of metrics under consideration.

Unfortunately, the expectation is that the Einstein-vacuum equations are \emph{ill-posed} given non-``analytic" data $( \gb{0}, \gb{n})$ at $\mi{I}$.
This arises from the observation that one generally cannot solve hyperbolic PDEs from non-analytic Cauchy data imposed on a timelike hypersurface.
Continuing this analogy with hyperbolic equations, we then expect that the appropriate venue for solving the Einstein-vacuum equations in the aAdS context is as an \emph{initial-boundary value problem}:
\begin{itemize}
\item One imposes the usual initial data on a spacelike hypersurface.

\item One also imposes \emph{one} piece of boundary data (say, either $\gb{0}$ or $\gb{n}$) on $\mi{I}$.
\end{itemize}
These principles were affirmed in \cite{enc_kam:aads_hol, frie:aads_conf}, where the authors solved, under various assumptions, the Einstein-vacuum equations in aAdS settings via this initial-boundary value problem.

\subsubsection{Partial Expansions}

While we cannot rely on Fefferman--Graham expansions as a means for solving the Einstein-vacuum equations in the smooth (or finitely regular) class, the ideas could still provide powerful insights for characterizing the boundary asymptotics of vacuum aAdS spacetimes.
Even in these more general situations, one might expect that the Einstein-vacuum equations could still be exploited as before to obtain \emph{partial expansions}, up to a finite order, for the metric from the conformal boundary.
In terms of elementary analysis, this is comparable to the distinction between Taylor series (which are applicable only to analytic functions) and the finite expansions from Taylor's theorem (which apply to finitely differentiable functions).

One objective of the present article is to give a rigorous affirmation of the above expectations.
To be more specific, we will address the following question:

\begin{question} \label{q.fg_exp}
Suppose we are given a ``generic" aAdS spacetime $( \mi{M}, g )$ that also satisfies the Einstein-vacuum equations.
Then, how does $g$ behave near the conformal boundary $\mi{I}$, and what are the asymptotic properties of $g$ at $\mi{I}$?
Moreover, can we rigorously derive a partial Fefferman--Graham expansion for $g$---as well as other associated geometric quantities---from $\mi{I}$?
\end{question}

By ``generic", we mean that we make minimal assumptions on our aAdS spacetime:
\begin{enumerate}
\item The conformal boundary $( \mi{I}, \gm )$ is allowed to have arbitrary topology and geometry.

\item $g$ satisfies the Einstein-vacuum equations \eqref{eq.intro_vacuum}.

\item We also presume a minimal amount of regularity for the spacetime metric $g$.
\end{enumerate}
Regarding the last point, the rough idea is that we only assume that the $\mi{I}$-tangent (i.e., non-radial) components of $g$ remain regular up to a sufficiently high but finite order.
(See \eqref{eq.fg_main_ass_g} and \eqref{eq.fg_main_ass_Lg} for the precise conditions.)
In particular, these assumptions are satisfied by the well-known aAdS spacetimes (such as AdS spacetime itself and the Kerr-AdS families).
Moreover, the aAdS spacetimes obtained by solving the initial-boundary value problem for the Einstein-vacuum equations (for instance, in \cite{enc_kam:aads_hol, frie:aads_conf}) also satisfy these regularity criteria.

Given these very weak assumptions, the main results of this paper achieve the following:
\begin{itemize}
\item We derive limits at the conformal boundary for the metric $g$ and its $\rho$-derivatives, up to order $n$.
We also show that these limits depend only on the boundary metric $\gm$.

\item We prove the existence of the boundary limit at the $n$-th order (which corresponds to $\gb{n}$), which is not determined by the Einstein-vacuum equations.

\item We also obtain analogous limits for other geometric quantities, such as the curvature.
\end{itemize}
Using these boundary limits, we then construct the expected partial Fefferman--Graham expansions, up to and including the $n$-th order, for the above geometric quantities.
In particular, these capture both branches of ``free" data at the conformal boundary.

The results here can also be viewed as complementary to the well-posedness results for the initial-boundary value problem.
In \cite{enc_kam:aads_hol}, the authors assume initial data that solve the constraint equations and have the boundary asymptotics dictated by a partial Fefferman--Graham expansion.
They then solve the Einstein-vacuum equations, in turn showing that these boundary asymptotics are propagated to all times at which the solution exists.
In contrast, here we assume a priori (rather than solve for) a vacuum aAdS spacetime, but we impose far weaker regularity assumptions around the conformal boundary.
We then proceed to \emph{derive} the boundary asymptotics (up to and including the $n$-order) that are suggested by the Fefferman--Graham expansions.

A Riemannian analogue of our results was established by Chru\'sciel, Delay, Lee, and Skinner in \cite{chru_delay_lee_skin:fg_einstein}.
Roughly, they showed that a conformally compact Einstein manifold has boundary regularity consistent with a partial Fefferman--Graham expansion from the conformal boundary.

We also mention here the recent results of \cite{rod_shlar:selfsimilar_fg}, which is in some ways similar in nature to \cite{enc_kam:aads_hol}, but lies closer to the original context of Fefferman--Graham expansions.
Here, the authors solve, from characteristic initial data, for asymptotically self-similar solutions of the Einstein-vacuum equations (with zero cosmological constant).
In particular, the authors impose, on the initial outgoing null hypersurface, geometric data that is consistent with a partial Fefferman--Graham expansion, again up to the $n$-th order.
They then proceed to rigorously prove that this form of the partial expansion is propagated along the incoming null direction.

\subsubsection{Connections to Unique Continuation}

Though we cannot generally solve the Einstein-vacuum equations given boundary data $( \gb{0}, \gb{n} )$, we could instead formulate a slightly different problem---that of \emph{unique continuation} for the Einstein-vacuum equations from $\mi{I}$:

\begin{question} \label{q.correspondence}
Suppose we are given a vacuum aAdS spacetime $( \mi{M}, g )$.
Does its ``free" boundary values $( \gb{0}, \gb{n} )$ uniquely determine $g$?
In other words, is there a one-to-one correspondence between aAdS solutions of the Einstein-vacuum equations and some space of boundary data?
\end{question}

A related, though generally simpler, problem is whether a suitably defined symmetry of the conformal boundary must necessarily be inherited by the spacetime:

\begin{question} \label{q.symmetry}
Assume $( \mi{M}, g )$ is a vacuum aAdS spacetime, and suppose its boundary $( \mi{I}, \gm )$ has a symmetry (in an appropriate sense).
Must this symmetry necessarily extend into $( \mi{M}, g )$?
\end{question}

A crucial step toward resolving both questions has been given in \cite{hol_shao:uc_ads, hol_shao:uc_ads_ns}, which establish a unique continuation property from $\mi{I}$ for wave equations on a fixed aAdS spacetime.
However, in order to apply the results of \cite{hol_shao:uc_ads, hol_shao:uc_ads_ns} toward Questions \ref{q.correspondence} and \ref{q.symmetry}, we must first have a detailed understanding of the boundary asymptotics of vacuum aAdS spacetimes.
In particular:
\begin{itemize}
\item For Question \ref{q.correspondence}, we must establish the coefficients $( \gb{0}, \gb{n} )$ as the appropriate boundary data, and we must understand how they relate to the spacetime geometry.
We must also connect them with applications of the unique continuation results of \cite{hol_shao:uc_ads, hol_shao:uc_ads_ns}.

\item Similarly, for Question \ref{q.symmetry}, the appropriate notion of symmetry of $( \mi{I}, \gm )$ is to posit that both $\gb{0}$ and $\gb{n}$ are invariant with respect to the corresponding transformation.
Once again, we must then connect this condition with the relevant unique continuation results.
\end{itemize}
The results of the present paper provide the foundations for the above tasks.
Therefore, a second objective---and a key application---of this paper is as an important step toward answering Questions \ref{q.correspondence} and \ref{q.symmetry}.
These tasks will be addressed in future papers.

\subsection{The Main Results}

To conclude our introductory discussions, we give informal statements of the main results of this paper, as well as some of the main ideas of their proofs.

\subsubsection{The Vertical Formulation}

We begin with a few key points on the formalisms that we will use for our results and their proofs.
The main objects in our analysis are \emph{vertical tensor fields}, which can be viewed as one-parameter families, indexed by the radial coordinate $\rho$, of tensor fields on $\mi{I}$; see Definition \ref{def.aads_vertical} and the subsequent discussion for further details.

Throughout, we assume the spacetime metric $g$ is expressed in a \emph{Fefferman--Graham gauge}---that is, of the form \eqref{eq.intro_fg_gauge}.
In particular, the object $\gv$ in \eqref{eq.intro_fg_gauge} is a vertical tensor field---which we call the \emph{vertical metric}---that is also a Lorentzian metric for each fixed $\rho$-value.
One advantage of adopting a Fefferman--Graham gauge is that the entire geometric content of $g$ lies in the vertical metric $\gv$.

From $\gv$, we proceed to define other geometric objects associated with it, such as the Levi--Civita connection and the Riemann, Ricci, and scalar curvatures; for details, see Definition \ref{def.aads_vertical_metric}.

Next, we can make sense of boundary limits (as $\rho \searrow 0$) of vertical tensor fields as tensor fields on $\mi{I}$; a formal description of this is given in Definition \ref{def.aads_limit}.
Therefore, as long as the appropriate limits exist, we can express vertical tensor fields in terms of (finite) expansions from the boundary in powers of $\rho$, with each coefficient given by a tensor field on $\mi{I}$.

With the above in mind, we can now state a rough version of the main result:

\begin{theorem} \label{thm.intro_main}
Let $( \mi{M}, g )$ be an $( n + 1 )$-dimensional vacuum aAdS spacetime, and suppose that $g$ is expressed in terms of a Fefferman--Graham gauge \eqref{eq.intro_fg_gauge}.
In addition, assume:
\begin{itemize}
\item The vertical metric $\gv$ has a boundary limit $\gm$ that is a Lorentzian metric on $\mi{I}$.

\item A sufficient number of non-radial derivatives of $\gv$ remain locally bounded on $\mi{M}$.

\item $\partial_\rho \gv$ is locally bounded on $\mi{M}$.
\end{itemize}
Then, the following statements hold:
\begin{itemize}
\item $\gv$ and its (radial and non-radial) derivatives up to and including order $n$ satisfy boundary limits at $\mi{I}$.
Similar boundary limits also hold for the curvature $\Rv$ associated to $\gv$.

\item $\gv$ can be expressed as the following partial Fefferman--Graham expansion:
\begin{equation}
\label{eq.intro_main} \gv = \begin{cases} \sum_{ k = 0 }^{ \frac{ n - 1 }{2} } \gb{ 2k } \rho^{ 2 k } + \gb{n} \rho^n + \ms{r}_{ \gv } \rho^n & \text{$n$ odd,} \\ \sum_{ k = 0 }^{ \frac{ n - 2 }{2} } \gb{ 2k } \rho^{ 2 k } + \gb{\star} \rho^n \log \rho + \gb{n} \rho^n + \ms{r}_{ \gv } \rho^n & \text{$n$ even,} \end{cases}
\end{equation}
where the coefficients $\gb{0}, \gb{2}, \dots, \gb{\star}, \gb{n}$ are tensor fields on $\mi{I}$, and where the remainder $\ms{r}_{ \gv }$ is a vertical tensor field that has vanishing boundary limit.
Similar expansions, also up to $n$-th order, hold for $\Rv$ and other associated geometric objects.

\item The following coefficients depend only on $\gb{0} = \gm$: the coefficients $\gb{k}$ for all $0 \leq k < n$, the anomalous coefficient $\gb{\star}$, and both the ($\gm$-)trace and the ($\gm$-divergence) of $\gb{n}$. 
\end{itemize}
\end{theorem}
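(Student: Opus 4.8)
The plan is to read everything off from the Einstein-vacuum equations \eqref{eq.intro_vacuum} once $g$ is placed in the Fefferman--Graham gauge \eqref{eq.intro_fg_gauge}. In this gauge the equations decouple into one second-order radial evolution equation for the vertical metric $\gv$, schematically of the form
\begin{equation*}
\rho \, \partial_\rho^2 \gv - (n-1) \, \partial_\rho \gv - ( \trace{\gv} \partial_\rho \gv ) \, \gv + \rho \, \mi{Q}( \gv, \partial_\rho \gv ) + 2 \rho \, \Rcv = 0 \text{,}
\end{equation*}
where $\mi{Q}$ is quadratic in $\partial_\rho \gv$ and $\Rcv$ is the Ricci curvature of $\gv$ (regarded as a metric on $\mi{I}$ for each fixed $\rho$), together with two first-order constraint equations coming from the mixed $\rho a$ and the $\rho\rho$ components of \eqref{eq.intro_vacuum}---equivalently, from the contracted Bianchi identities. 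The homogeneous linear part $\rho \partial_\rho^2 - (n-1) \partial_\rho$ has indicial polynomial $s( s - n )$, so $\rho = 0$ is a regular singular point with characteristic exponents $0$ and $n$; this is the structural origin both of the two ``free'' coefficients $\gb{0}$ and $\gb{n}$ and of the logarithmic anomaly when $n$ is even. First I would establish the boundary limits for $\gv$ and its radial and non-radial derivatives up to and including order $n$ (the first bullet): starting from the hypotheses (a Lorentzian boundary limit $\gm$ for $\gv$, boundedness of $\partial_\rho \gv$, and boundedness of enough non-radial derivatives), one bootstraps order by order, using the radial equation to trade each new radial derivative against lower-order data and tangential derivatives, closing the iteration by means of the finite-regularity assumptions.

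With the limits available, the partial expansion \eqref{eq.intro_main} and the claimed dependence of its coefficients on $\gb{0}$ reduce to substituting the ansatz into the radial equation and matching powers of $\rho$. At order $\rho^{m}$ the relevant coefficient enters multiplied by the indicial factor $m(m-n)$, plus a source that is a universal polynomial expression in the lower-order coefficients $\gb{0}, \dots, \gb{m-2}$, their inverses, and their $\gm$-covariant derivatives (through the $\Rcv$ term, whose leading contribution is the Ricci, equivalently the Schouten, tensor of $\gb{0}$). For $0 < m < n$ the factor $m(m-n)$ is invertible, so $\gb{m}$ is algebraically determined, and an induction on $m$---together with a parity argument killing the odd coefficients---shows that every $\gb{m}$ with $m < n$ is a functional of $\gb{0} = \gm$ alone. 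When $n$ is even, at $m = n$ the factor $m(m-n)$ vanishes, the recursion is obstructed, and one is forced to introduce the term $\gb{\star} \rho^n \log \rho$; the resulting $\gb{\star}$ is precisely the Fefferman--Graham obstruction tensor of $\gb{0}$. The leftover freedom at order $\rho^n$ is exactly $\gb{n}$, which the radial equation alone does not constrain.

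To pin down the $\gm$-trace and $\gm$-divergence of $\gb{n}$ I would use the two constraint equations. Contracting or differentiating them produces first-order radial ODEs---again with a regular singular point at $\rho = 0$---satisfied respectively by $\trace{\gv}( \partial_\rho \gv )$ and by the $\gv$-divergence of $\partial_\rho \gv$. Expanding these identities and reading off the coefficient at the order where $\gb{n}$ first appears expresses $\trace{\gm} \gb{n}$ and $\mr{div}_\gm \gb{n}$ in terms of the already-determined $\gb{0}, \dots, \gb{n-2}$, hence in terms of $\gm$ only by the previous paragraph; the trace-free, divergence-free part of $\gb{n}$ survives as genuinely free data, consistently with the footnote to \eqref{eq.intro_fg_exp}. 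The remainder $\ms{r}_{\gv}$ is then defined as $\rho^{-n}$ times the difference between $\gv$ and the finite partial sum, and its vanishing boundary limit is a consequence of the order-$n$ limit established in the first step (a finite Taylor-type statement for the singular ODE). Finally, the analogous expansions for $\Rv$ and the other vertical curvature quantities follow by differentiating \eqref{eq.intro_main} and inserting it into the defining formulae; the curvature coefficients are universal algebraic expressions in the $\gb{k}$'s, so they inherit the same dependence on $\gm$.

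I expect the principal obstacle to be the first step: proving that the boundary limits actually exist through order $n$ under merely finite regularity, rather than under smoothness or analyticity. The difficulties are controlling the nonlinear terms $\mi{Q}( \gv, \partial_\rho \gv )$ and $\Rcv$, which involve products, the inverse $\gv^{-1}$, and tangential derivatives, so that products of partially-expanded vertical tensors stay under control; correctly capturing the resonance at $\rho^n$ (especially for $n$ even, where the $\rho^n \log \rho$ term must be produced with exactly the right coefficient, neither spuriously appearing nor being missed); and interleaving the tangential-derivative estimates with the radial iteration so that the finite number of assumed derivatives suffices to reach order $n$. Once this analytic core is in place, the coefficient bookkeeping in the remaining steps is essentially linear algebra and induction.
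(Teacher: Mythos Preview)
Your strategy coincides with the paper's: reformulate the vacuum equations in the FG gauge as a singular radial ODE for $\gv$ (the paper's Proposition~2.16 gives exactly your evolution equation), iterate order by order using an ODE lemma for $\rho y' - c y = f$ with $c = n-k > 0$ at the $k$-th step (Proposition~2.30), handle the resonance at $k=n$ separately, use the constraint equations for the trace and divergence of $\gb{n}$, and assemble the expansion via Taylor's theorem. Your reading of the indicial structure, the parity argument for the odd coefficients, and the role of the contracted Bianchi identities is correct.

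There is one technical ingredient you underweight, and the paper singles it out as the crux. At order $n$ the equation degenerates to $\rho \, \partial_\rho (\partial_\rho^n \gv) = f$, and to conclude that $\partial_\rho^n \gv - f(0)\log\rho$ has a \emph{finite} limit as $\rho \searrow 0$---i.e.\ that $\gb{n}$ exists---one needs $f$ to converge to $f(0)$ with an integrable rate, specifically $\int_0^{\rho_0} \sigma^{-1} |f - f(0)|\,d\sigma < \infty$. Mere $C^M$ convergence of the lower-order limits is not enough. The paper therefore introduces a strengthened notion of convergence (written $\Rightarrow^M$, Definition~2.10) encoding this integrability, proves that its ODE lemma preserves $\Rightarrow^M$ and not just $\rightarrow^M$, and then propagates this stronger mode through \emph{every} step of the iteration from the very beginning. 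Your bootstrap as described would recover $\gb{\star}$ but would not, without this refinement, yield the existence of $\gb{n}$; you should build the rate control into the induction hypothesis from the outset rather than treating the order-$n$ limit as a consequence of a ``finite Taylor-type statement.''
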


In short, we assume in Theorem \ref{thm.intro_main} that sufficiently many \emph{non-radial} derivatives of $\gv$ are well-behaved up to $\mi{I}$.
We then use the Einstein-vacuum equations to deduce the expected asymptotics for $\rho$-derivatives of $\gv$, as dictated by the Fefferman--Graham expansion, up to order $n$.

As previously mentioned, the fact that the partial expansions \eqref{eq.intro_main} hold for ``analytic" (in a polyhomogeneous sense) solutions of the Einstein-vacuum equations is already well-known.
The main novel contribution of present article is to show that even in finite regularity, the form \eqref{eq.intro_main} is, in fact, still forced by the Einstein-vacuum equations.
One particuilar point of emphasis is that the presence of the undetermined $\gb{n}$-term also needs not be assumed a priori, as the existence of the boundary limit corresponding to $\gb{n}$ is derived as part of the proof of Theorem \ref{thm.intro_main}.

\begin{remark}
While we also recover the expected formulas detailing how the coefficients $\gb{k}$ (with $0 < k < n$) and $\gb{\star}$ depend on $\gb{0}$, this part of the argument more closely resembles how the Einstein-vacuum equations are used in the ``analytic" setting.
\end{remark}

In addition, Theorem \ref{thm.intro_main} can be connected to the rigidity results of \cite{and:uc_ads} in aAdS settings.
In particular, the results of \cite{and:uc_ads} assume a vacuum aAdS spacetime that also satisfies the partial expansion \eqref{eq.intro_main}.
As a result, in this context, Theorem \ref{thm.intro_main} demonstrates that the assumption of \eqref{eq.intro_main} is unnecessary and can be replaced by much weaker conditions.

\begin{remark}
Here, we only perform our analysis up to order $n$, as this is the minimal regularity required in order to capture both parts of the ``free" boundary data for $\gv$.
If we assume additional regularity for $\gv$, then it is also possible to deduce additional terms beyond $n$-th order in \eqref{eq.intro_main}.
\end{remark}

We also stress that Theorem \ref{thm.intro_main} only represents the most notable portion of our main results.
The precise and full statements of the boundary limits for $\gv$ and its associated geometric objects are given in Theorem \ref{thm.fg_main} and Corollary \ref{thm.fg_covar}.
Formal statements of the partial Fefferman--Graham expansions are given in Theorem \ref{thm.fg_exp}.
The reader is referred to these statements for additional details, such as the precise sense of convergence for the various vertical quantities.

In addition, we obtain analogous partial expansions for components of the spacetime Weyl curvature in Corollary \ref{thm.fg_exp_W}; these will be needed in upcoming results regarding Questions \ref{q.correspondence} and \ref{q.symmetry}.
Finally, for completeness, we recall the standard partial expansion for Schwarzschild-AdS metrics (which is well-known in the physics literature; see, e.g., \cite{bautier_englert_rooman_spindel:fg_ads, tetradis:bh_holog}) in Corollaries \ref{thm.schw_ads_fg} and \ref{thm.schw_ads_exp}.

\subsubsection{Ideas of the Proof}

The proof of Theorem \ref{thm.intro_main}, given in Section \ref{sec.fg_proof}, can be divided into three steps.
In the first step, we derive limits for $\gv$ and $\partial_\rho \gv$, as well as their non-radial derivatives.
The main idea is to recast the Einstein-vacuum equations as a system of equations satisfied by the vertical metric $\gv$ and its curvature $\Rv$.
The appropriate limits can then be obtained by combining a careful analysis of these equations with the assumptions we have imposed on $\gv$.

The second step is to obtain boundary limits for $\partial_\rho^k \gv$ and $\partial_\rho^k \Rv$ for all $0 \leq k < n$.
This is systematically accomplished by differentiating the Einstein-vacuum equations and then analyzing the resulting system.
(In particular, this is equivalent to the observations that first led to the full Fefferman--Graham expansions.)
However, this process does not continue up to $n$-th order, as the Einstein-vacuum equations do not determine limits $\partial_\rho^n \gv$ and $\partial_\rho^n \Rv$.

At its most basic level, this analysis revolves around the study of ODEs of the form
\begin{equation}
\label{eq.intro_ode} x y' - c y = f \text{,} \qquad c \in \R \text{,}
\end{equation}
where $y$ is the unknown, $x$ is the independent variable, and $f$ is a fixed forcing term.
Note that \eqref{eq.intro_ode} becomes singular as $x \searrow 0$.
The key observation here is that when $c > 0$, the solution $y$ becomes determined at $x = 0$, via the relation $- c \cdot y (0) = f (0)$; see Proposition \ref{thm.limit_positive}.

Since the Einstein-vacuum equations and its $\rho$-derivatives, formulated in terms of vertical tensor fields, have the basic form \eqref{eq.intro_ode}, we can use the above observation in order to derive boundary limits for $\gv$ and its derivatives.
Moreover, in these equations (see \eqref{eq.einstein_vertical_deriv}), the corresponding constant $c$ remains positive until one reaches the equation for $\partial_\rho^n \gv$, for which $c = 0$.
This accounts for the fact that we can obtain boundary limits up to, but not including, the $n$-th order.

The final step of the proof is to examine the $n$-th order behavior.
The first observation is that the Einstein-vacuum equation do imply boundary limits for $\rho \partial_\rho^{ n + 1 } \gv$ (and, by extension, $\rho \partial_\rho^{ n + 1 } \Rv$).
These limits yield the anomalous logarithmic terms in \eqref{eq.intro_fg_exp}.
In addition, we show that these limits vanish when $n$ is odd, so that the logarithmic terms are not present in this case.
(Again, this is analogous to the derivation of the $\log$-term in the original Fefferman--Graham expansion.)

Now, the above observations yield the coefficients $\gb{k}$ and $\gb{\star}$ in \eqref{eq.intro_fg_exp}, for all $0 \leq k < n$.
However, this does not yet imply the ``free" $\gb{n}$-term, which is not determined by the Einstein-vacuum equations.
To obtain this, we require a stronger type of convergence for all the preceding limits that also establishes a minimal convergence rate; see \eqref{eq.aads_limit_ex} for details.

To be more specific, this analysis at the $n$-th order is now based on the ODE
\begin{equation}
\label{eq.intro_ode_log} x y' = f \text{.}
\end{equation}
From \eqref{eq.intro_ode_log}, we see that as $x \searrow 0$, the leading order behavior of $y$ is logarithmic:
\begin{equation}
\label{eq.intro_ode_log_2} y (x) = f (0) \cdot \log x + \text{error.}
\end{equation}
This accounts for the anomalous coefficient $\gb{\star}$.
The additional observation that is responsible for $\gb{n}$ is the following: if we also assume $f$ converges to $f (0)$ in a slightly stronger sense, then the error term in \eqref{eq.intro_ode_log_2} will have a \emph{finite} limit as $x \searrow 0$.
(However, this limit is no longer determined by $f$, which leads to the fact that $\gb{n}$ is ``free".)
See Proposition \ref{thm.limit_zero} for details.

Furthermore, the above considerations force us to modify all the preceding steps of this proof.
We have to derive this stronger convergence for $\gv$ and $\partial_\rho \gv$ in the first step of the above.
Moreover, we must then propagate this stronger convergence to higher derivatives in the second step, and then also to the logarithmic term.
These considerations further complicate the technical analysis.

\subsubsection{Outline}

In Section \ref{sec.aads_spacetime}, we give a precise description of the aAdS spacetimes that we will study.
The goal of Section \ref{sec.aads_vacuum} is to reformulate the Einstein-vacuum equations in terms of vertical tensor fields.
Some basic estimates involving these spacetimes are given in Section \ref{sec.aads_estimates}.

The main results of the paper are stated in Section \ref{sec.fg_result}, while the proof of Theorem \ref{thm.fg_main} (where the boundary limits are obtained) are given in Section \ref{sec.fg_proof}.
In addition, in Section \ref{sec.fg_schw_ads}, we recall the partial Fefferman--Graham expansion for the usual Schwarzschild-AdS metrics.

\if\comp1
Finally, Appendix \ref{sec.comp} contains the proofs of a number of propositions found within the main text.
These contain additional details and computations that are of benefit to especially interested readers, but would also be distractions from the main presentation.
\fi

\subsection{Acknowledgments}

The author thanks Gustav Holzegel and Alex McGill for numerous discussions, as well as Claude Warnick for generously providing some preliminary notes and computations.
The work in this paper is supported by EPSRC grant EP/R011982/1.

\section{Asymptotically AdS Spacetimes} \label{sec.aads}

The aim of this section is to precisely describe the spacetimes that we will study in this paper: near-boundary regions of asymptotically AdS spacetimes satisfying the Einstein-vacuum equations with negative cosmological constant.
We will accomplish this through three steps:
\begin{enumerate}
\item In Section \ref{sec.aads_spacetime_manifold}, we define the manifolds on which our spacetimes will lie.

\item In Section \ref{sec.aads_spacetime_metric}, we define the geometry of the spacetime.

\item In Section \ref{sec.aads_vacuum}, we impose the Einstein-vacuum equations on our spacetime.
\end{enumerate}
Moreover, in Sections \ref{sec.aads_vacuum} and \ref{sec.aads_estimates}, we discuss some basic properties of these spacetimes.

\subsection{aAdS Spacetimes} \label{sec.aads_spacetime}

In this subsection, we give a precise geometric definition of the asymptotically AdS spacetimes that we will study, as well as the Fefferman--Graham gauge condition.
In addition, we define the main tensorial objects that we will use for our analysis.

\subsubsection{The Spacetime Manifold} \label{sec.aads_spacetime_manifold}

The first step is to construct the spacetime manifold $\mi{M}$:

\begin{definition} \label{def.aads_manifold}
We define an \emph{aAdS region} to be a manifold (with boundary) of the form 
\begin{equation}
\label{eq.aads_manifold} \mi{M} := ( 0, \rho_0 ] \times \mi{I} \text{,} \qquad \rho_0 > 0 \text{,}
\end{equation}
where $\mi{I}$ is a smooth $n$-dimensional manifold, and where $n \in \N$.

For brevity, we will, in the context of the above, refer only to $\mi{M}$ as the aAdS region; the remaining quantities $n$, $\mi{I}$, $\rho_0$ are always assumed to be implicitly associated with $\mi{M}$.
\end{definition}

\begin{remark}
In particular, $\mi{I}$ and $\mi{M}$ in Definition \ref{def.aads_manifold} represent a \emph{boundary segment} and a \emph{spacetime segment}, respectively, of a general asymptotically AdS spacetime.
\end{remark}

\begin{definition} \label{def.aads_rho}
Given an aAdS region $\mi{M}$ as in Definition \ref{def.aads_manifold}:
\begin{itemize}
\item We let $\rho$ denote the function on $\mi{M}$ projecting onto the $( 0, \rho_0 ]$-component of $\mi{M}$.

\item We let $\partial_\rho$ denote the lift of the canonical vector field $d_\rho$ on $( 0, \rho_0 ]$ to $\mi{M}$.
\end{itemize}
\end{definition}

In our analysis, we will encounter three types of tensorial objects on an aAdS region $\mi{M}$:
\begin{enumerate}
\item Tensor fields on $\mi{M}$.

\item Vertical tensor fields on $\mi{M}$, representing $\rho$-parametrized families of tensor fields on $\mi{I}$.

\item Tensor fields on $\mi{I}$, representing limits of vertical tensor fields as $\rho \searrow 0$.
\end{enumerate}
While tensor fields on $\mi{M}$ and $\mi{I}$ are self-explanatory, the notion of vertical tensor fields was not explicitly used in \cite{hol_shao:uc_ads, hol_shao:uc_ads_ns}.
These objects can be described more precisely as follows:

\begin{definition} \label{def.aads_vertical}
Let $\ms{V}^k_l \mi{M}$ denote the \emph{vertical bundle} of rank $( k, l )$ over $\mi{M}$, which we define to be the manifold of all tensors of rank $( k, l )$ on each level set of $\rho$ in $\mi{M}$:
\footnote{Here, $T^k_l \{ \rho = \sigma \}$ denotes the usual tensor bundle of rank $( k, l )$ over the level set $\{ \rho = \sigma \}$ of $\mi{M}$.
Keeping with standard conventions, $k$ represents the contravariant rank, while $l$ denotes the covariant rank.}
\begin{equation}
\label{eq.aads_vertical} \ms{V}^k_l \mi{M} = \bigcup_{ \sigma \in ( 0, \rho ] } T^k_l \{ \rho = \sigma \} \text{.}
\end{equation}
In addition, we refer to smooth sections of $\ms{V}^k_l \mi{M}$ as \emph{vertical tensor fields} of rank $( k, l )$.
\end{definition}

Notice that a vertical tensor field of rank $( k, l )$ on an aAdS region $\mi{M}$ can be viewed as a one-parameter family, indexed by $\rho \in ( 0, \rho_0 ]$, of rank $( k, l )$ tensor fields on $\mi{I}$.
Throughout, we will make use of this interpretation of vertical tensor fields whenever convenient.

\begin{definition} \label{def.aads_tensor}
We adopt the following conventions for tensor fields on an aAdS region $\mi{M}$:
\begin{itemize}
\item We use the normal italicized font (for example, $g$) to denote tensor fields on $\mi{M}$.

\item We use serif font (for example, $\gv$) to denote vertical tensor fields on $\mi{M}$.

\item We use Fraktur font (for example, $\mf{g}$) to denote tensor fields on $\mi{I}$.
\end{itemize}
Moreover, unless otherwise stated, we always assume that a given tensor field is smooth.
\end{definition}

\begin{definition} \label{def.aads_tensor_id}
We also note the following natural identifications of tensor fields:
\begin{itemize}
\item Given a vertical tensor field $\ms{A}$ and $\sigma \in ( 0, \rho_0 ]$, we let $\ms{A} |_\sigma$ denote the tensor field on $\mi{I}$ that is obtained by restricting $\ms{A}$ to the level set $\{ \rho = \sigma \}$ (and identifying $\{ \rho = \sigma \}$ with $\mi{I}$).

\item Given a tensor field $\mf{A}$ on $\mi{I}$, we will also use $\mf{A}$ to denote the vertical tensor field on $\mi{M}$ obtained by extending $\mf{A}$ as a $\rho$-independent family of tensor fields on $\mi{I}$:
\begin{equation}
\label{eq.aads_vertical_const} \mf{A} |_\sigma := \mf{A} \text{,} \qquad \sigma \in [ 0, \rho_0 ) \text{.}
\end{equation}

\item Any vertical tensor field $\ms{A}$ can be uniquely extended to a tensor field on $\mi{M}$ of the same rank, via the condition that the contraction of any component of $\ms{A}$ with either $\partial_\rho$ or $d \rho$ (whichever is appropriate) is defined to vanish identically.
\end{itemize}
\end{definition}

More succinctly, the last part of Definition \ref{def.aads_tensor_id} notes that tensor fields on $\mi{I}$ can be equivalently viewed as vertical tensor fields that do not depend on $\rho$.
Once again, we will use both interpretations of tensor fields on $\mi{I}$ interchangeably, depending on context.

\begin{remark} \label{rmk.aads_scalar_const}
As a special case of Definition \ref{def.aads_tensor_id}, any scalar function on $\mi{I}$ can, at the same time, be viewed as a $\rho$-independent scalar function on $\mi{M}$.
\end{remark}

\begin{definition} \label{def.aads_vertical_lie}
Let $\mi{M}$ be an aAdS region, and let $\ms{A}$ be a vertical tensor field.
We define the \emph{Lie derivative} with respect to $\rho$ of $\ms{A}$, denoted $\mi{L}_\rho \ms{A}$, to be the vertical tensor field given by
\begin{equation}
\label{eq.aads_vertical_lie} \mi{L}_\rho \ms{A} |_\sigma = \lim_{ \sigma' \rightarrow \sigma } ( \sigma' - \sigma )^{-1} ( \ms{A} |_{ \sigma' } - \ms{A} |_\sigma ) \text{,} \qquad \sigma \in ( 0, \rho_0 ] \text{.}
\end{equation}
\end{definition}

Next, we set the coordinate conventions that we will use throughout this article:

\begin{definition} \label{def.aads_index}
Let $\mi{M}$ be an aAdS region, and let $( U, \varphi )$ be a coordinate system on $\mi{I}$.
\begin{itemize}
\item Let $\varphi_\rho := ( \rho, \varphi )$ denote the coordinates on $( 0, \rho_0 ] \times U$ obtained by combining $\rho$ and the $\rho$-independent extensions of the components $\varphi$ (see Remark \ref{rmk.aads_scalar_const}).

\item We use lower-case Latin indices $a, b, c, \dots$ to denote $\varphi$-coordinate components, and we use $x^a, x^b, x^c, \dots$ to denote $\varphi$-coordinate functions.

\item We use lower-case Greek indices $\alpha, \beta, \mu, \nu, \dots$ to denote $\varphi_\rho$-coordinate components.

\item As usual, repeated indices will indicate summations over the appropriate components.
\end{itemize}
\end{definition}

\begin{definition} \label{def.aads_coord}
Let $\mi{M}$ be an aAdS region.
A coordinate system $( U, \varphi )$ on $\mi{I}$ is \emph{compact} iff:
\begin{itemize}
\item $\bar{U}$ is a compact subset of $\mi{I}$.

\item $\varphi$ extends smoothly to (an open neighborhood of) $\bar{U}$.
\end{itemize}
\end{definition}

We use compact coordinates to define local notions of convergence for vertical tensor fields:

\begin{definition} \label{def.aads_norm}
Let $\mi{M}$ be an aAdS region, and let $M \geq 0$.
Moreover, let $( U, \varphi )$ be a compact coordinate system on $\mi{I}$, and let $\ms{A}$ be a vertical tensor field of rank $(k, l)$.
\begin{itemize}
\item We define the following local pointwise norm for $\ms{A}$ (with respect to $\varphi$-coordinates):
\begin{equation}
\label{eq.aads_size} | \ms{A} |_{ M, \varphi }: ( 0, \rho_0 ] \times U \rightarrow \R \text{,} \qquad | \ms{A} |_{ M, \varphi } := \sum_{ m = 0 }^M \sum_{ \substack{ a_1, \dots, a_m \\ b_1, \dots, b_k \\ c_1, \dots, c_l } } | \partial^m_{ a_1 \dots a_m } \ms{A}^{ b_1 \dots b_k }_{ c_1 \dots c_l } | \text{.}
\end{equation}

\item We also define the following local uniform norm of $\ms{A}$:
\begin{equation}
\label{eq.aads_norm} \| \ms{A} \|_{ M, \varphi } := \sup_{ ( 0, \rho_0 ] \times U } | \ms{A} |_{ M, \varphi } \text{.}
\end{equation}
\end{itemize}
\end{definition}

\begin{definition} \label{def.aads_limit}
Let $\mi{M}$, $M$, and $\ms{A}$ be as in Definition \ref{def.aads_norm}.
Moreover, let $\mf{A}$ be a tensor field on $\mi{I}$ having the same rank as $\ms{A}$.
We then define the following:
\begin{itemize}
\item $\ms{A}$ is \emph{locally bounded in $C^M$} iff for any compact coordinate system $( U, \varphi )$ on $\mi{I}$,
\begin{equation}
\label{eq.aads_bounded} \| \ms{A} \|_{ M, \varphi } < \infty \text{.}
\end{equation}

\item $\ms{A}$ \emph{locally converges in $C^M$} to $\mf{A}$, which we abbreviate as $\ms{A} \rightarrow^M \mf{A}$, iff given any compact coordinate system $( U, \varphi )$ on $\mi{I}$, we have that
\begin{equation}
\label{eq.aads_limit} \lim_{ \sigma \searrow 0 } \sup_{ \{ \sigma \} \times U } | \ms{A} - \mf{A} |_{ M, \varphi } = 0
\end{equation}

\item Moreover, we write $\ms{A} \Rightarrow^M \mf{A}$ iff for every compact coordinate system $( U, \varphi )$ on $\mi{I}$,
\begin{equation}
\label{eq.aads_limit_ex} \lim_{ \sigma \searrow 0 } \sup_{ \{ \sigma \} \times U } | \ms{A} - \mf{A} |_{ M, \varphi } = 0 \text{,} \qquad \sup_U \int_0^{ \rho_0 } \frac{1}{ \sigma } | \ms{A} - \mf{A} |_{ M, \varphi } |_\sigma d \sigma < \infty \text{.}
\end{equation}
\end{itemize}
\end{definition}

Observe that \eqref{eq.aads_limit} is a locally uniform coordinate limit in $C^M$ (involving only derivatives along $\mi{I}$) as $\rho \searrow 0$.
Moreover, the $\Rightarrow$-convergence defined in \eqref{eq.aads_limit_ex} describes a slightly stronger notion of limits that can be roughly interpreted as converging at a minimal rate.

\subsubsection{The Spacetime and Boundary Geometry} \label{sec.aads_spacetime_metric}

Having described the background manifold, as well as the types of objects we will study on them, the next step is to describe the geometry of our spacetime.
For our purposes, these will be quantified using vertical tensor fields:

\begin{definition} \label{def.aads_vertical_metric}
Let $\mi{M}$ be a aAdS region.
A vertical tensor field $\ms{h}$ of rank $( 0, 2 )$ on $\mi{M}$ is called a \emph{vertical metric} iff $\ms{h} |_\sigma$ is a Lorentzian metric on $\mi{I}$ for each $0 < \sigma \leq \rho_0$.

Furthermore, given a vertical metric $\ms{h}$ on $\mi{M}$:
\begin{itemize}
\item The metric dual of $\ms{h}$, denoted $\ms{h}^{-1}$, is the rank $( 2, 0 )$ vertical tensor field on $\mi{M}$ such that $\ms{h}^{-1} |_\sigma$ is the metric dual $( \ms{h} |_\sigma )^{-1}$ of $\ms{h} |_\sigma$ for each $0 < \sigma \leq \rho_0$.

\item Similarly, the Levi-Civita connection associated with $\ms{h}$ is the operator on vertical tensor fields such that its restriction to any level set $\{ \rho = \sigma \}$, $0 < \sigma \leq \rho_0$, coincides with the usual Levi-Civita connection of the Lorentzian metric $\ms{h} |_\sigma$.

\item The Riemann curvature associated with $\ms{h}$ is the rank $( 1, 3 )$ vertical tensor field whose restriction to any $\{ \rho = \sigma \}$, $0 < \sigma \leq \rho_0$, is the Riemann curvature associated with $\ms{h} |_\sigma$.
Analogous definitions also hold for the Ricci and scalar curvatures of $\ms{h}$.
\end{itemize}
\end{definition}

The spacetimes that we will consider in this article can now be defined as follows:

\begin{definition} \label{def.aads_metric}
$( \mi{M}, g )$ is called an \emph{FG-aAdS segment} iff the following hold:
\begin{itemize}
\item $\mi{M}$ is an aAdS region, and $g$ is a Lorentzian metric on $\mi{M}$.

\item There exists a vertical metric $\gv$ on $\mi{M}$ such that
\begin{equation}
\label{eq.aads_metric} g := \rho^{-2} ( d \rho^2 + \gv ) \text{.}
\end{equation}

\item There is a Lorentzian metric $\gm$ on $\mi{I}$ such that
\begin{equation}
\label{eq.aads_metric_limit} \gv \rightarrow^0 \gm \text{.}
\end{equation}
\end{itemize}

Moreover, in the above context, while we refer only to $( \mi{M}, g )$ as the FG-aAdS segment, we always implicitly assume the associated metrics $\gv$ and $\gm$ described in \eqref{eq.aads_metric} and \eqref{eq.aads_metric_limit}.
\end{definition}

Definition \ref{def.aads_metric} captures the minimal assumptions required for a (segment of a) spacetime to be considered ``asymptotically AdS".
We refer to the form \eqref{eq.aads_metric} for the spacetime metric $g$ as the \emph{Fefferman--Graham} (abbreviated \emph{FG}) \emph{gauge condition}, which is characterized by the $\rho$-component being (up to the conformal factor $\rho^{-2}$) trivial and fully decoupled from the remaining components.

\begin{definition} \label{def.aads_boundary}
Given an FG-aAdS segment $( \mi{M}, g )$, we refer to the $n$-dimensional (Lorentz) manifold $( \mi{I}, \gm )$ as the \emph{conformal boundary} associated with $( \mi{M}, g, \rho )$.
\end{definition}

\begin{remark}
Imposing the FG gauge condition in Definition \ref{def.aads_metric} does not result in any loss of generality.
For more general spacetime metrics $g$ having the same asymptotics as $\rho \searrow 0$, one can apply an appropriate change of coordinates to transform $g$ into the form \eqref{eq.aads_metric}.
\end{remark}

\begin{remark}
The conformal boundary $( \mi{I}, \gm )$ in Definition \ref{def.aads_boundary} is gauge-dependent, in that it depends on the choice of $\rho$.
More specifically, through coordinate transformations, one can find other boundary-defining functions $\bar{\rho}$ with respect to which $( \mi{M}, g )$ satisfies all the conditions in Definition \ref{def.aads_metric}.
Such transformations result in $\gm$ being multiplied by conformal factors; see \cite{deharo_sken_solod:holog_adscft, imbim_schwim_theis_yanki:diffeo_holog}.
\end{remark}

\begin{definition} \label{def.aads_metric_vertical}
Given an FG-aAdS segment $( \mi{M}, g )$:
\begin{itemize}
\item Let $g^{-1}$ and $\nabla$ denote the metric dual and the Levi-Civita connection, respectively, associated with $g$.
Moreover, let $R$, $Rc$, $Rs$ and $W$ denote the Riemann curvature, Ricci curvature, scalar curvature, and Weyl curvature, respectively, associated to $g$.

\item Let $\gv^{-1}$ and $\Dv$ denote the metric dual and the Levi-Civita connection, respectively, associated with the vertical metric $\gv$.
Moreover, let $\Rv$, $\Rcv$, and $\Rsv$ denote the Riemann curvature, Ricci curvature, and scalar curvature, respectively, associated to $\gv$.

\item Let $\gm^{-1}$ and $\Dm$ denote the metric dual and the Levi-Civita connection, respectively, associated with $\gm$.
In addition, let $\Rm$, $\Rcm$, and $\Rsm$ denote the Riemann curvature, Ricci curvature, and scalar curvature, respectively, that are associated to $\gm$.
\end{itemize}
Also, as is standard, we omit the superscript ``${}^{-1}$" when expressing a metric dual in index notion.
\end{definition}

\begin{remark}
To maintain consistency, we will, by default, always treat the Riemann curvature---spacetime, vertical, or boundary---as a rank $(1, 3)$ tensor field.
\end{remark}

Below, we collect some basic observations on the geometry of FG-aAdS segments:

\begin{proposition} \label{thm.g_rho}
Let $( \mi{M}, g )$ be an FG-aAdS segment.
Moreover, fix a coordinate system $( U, \varphi )$ on $\mi{I}$, and assume all indices below are with respect to $\varphi$ and $\varphi_\rho$-coordinates.
Then:
\begin{itemize}
\item The components of $g$ and $g^{-1}$ satisfy
\begin{equation}
\label{eq.g_rho} g_{ \rho \rho } = \rho^{-2} \text{,} \qquad g_{ \rho a } = 0 \text{,} \qquad g_{ a b } = \rho^{-2} \gv_{ a b } \text{,} \qquad g^{ \rho \rho } = \rho^2 \text{,} \qquad g^{ \rho a } = 0 \text{,} \qquad g^{ a b } = \rho^2 \gv^{ a b } \text{.}
\end{equation}

\item The vector field $N := \rho \partial_\rho$ satisfies
\begin{equation}
\label{eq.geod_rho} g ( N, N ) = 1 \text{,} \qquad g ( N, \partial_a ) = 0 \text{,} \qquad \nabla_N N = 0 \text{.}
\end{equation}

\item The following identity holds:
\begin{equation}
\label{eq.sff_rho} - \frac{1}{2} \rho^{-1} \mi{L}_\rho \gv_{ a b } + \rho^{-2} \gv_{ a b } = - \rho \cdot g ( \nabla_a \partial_\rho, \partial_b ) \text{.}
\end{equation}
In particular, the left-hand side of \eqref{eq.sff_rho} is precisely the second fundamental form of the level sets of $\rho$, as submanifolds of $( \mi{M}, g )$, with respect to the unit normal $N$.
\end{itemize}
\end{proposition}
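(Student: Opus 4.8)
The plan is to prove all three items by direct computation in a $\varphi_\rho$-coordinate system, using only the Fefferman--Graham form \eqref{eq.aads_metric} of $g$ and the identification (Definition \ref{def.aads_tensor_id}) of the vertical metric $\gv$ with the tensor field on $\mi{M}$ whose contractions against $\partial_\rho$ and $d\rho$ vanish. In particular, in $\varphi_\rho$-coordinates the components $\gv_{ab}$ carry no $\rho$-index, and by Definition \ref{def.aads_vertical_lie} the coordinate components of $\mi{L}_\rho \gv$ are exactly the $\partial_\rho$-derivatives of the components of $\gv$.

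First I would record the components of $g$ in \eqref{eq.g_rho}: expanding \eqref{eq.aads_metric} gives $g_{\rho\rho} = \rho^{-2}$, $g_{\rho a} = 0$, $g_{ab} = \rho^{-2}\gv_{ab}$ immediately. Since the matrix $(g_{\mu\nu})$ is then block diagonal, with a $1 \times 1$ block equal to $\rho^{-2}$ and a lower block equal to $\rho^{-2}(\gv_{ab})$, its inverse is block diagonal with entries $\rho^2$ and $\rho^2(\gv^{ab})$, where $(\gv^{ab})$ is the inverse matrix of $(\gv_{ab})$ --- i.e.\ $\gv^{-1}$ in the sense of Definition \ref{def.aads_vertical_metric}. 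This gives the remaining formulas in \eqref{eq.g_rho}, and the first two identities in \eqref{eq.geod_rho} follow at once: $g(N,N) = \rho^2 g_{\rho\rho} = 1$ and $g(N,\partial_a) = \rho\, g_{\rho a} = 0$.

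Next I would compute the Christoffel symbols of $g$ that are needed, directly from \eqref{eq.g_rho}. A short calculation yields $\Gamma^\rho_{\rho\rho} = -\rho^{-1}$, $\Gamma^a_{\rho\rho} = 0$, and $\Gamma^c_{a\rho} = -\rho^{-1}\delta^c_a + \tfrac12 \gv^{cd}\, \mi{L}_\rho \gv_{da}$ (here the second term comes from $\tfrac12 g^{cd}\partial_\rho g_{da}$, using the identification of $\partial_\rho \gv_{da}$ with $\mi{L}_\rho \gv_{da}$). From the first two, $\nabla_{\partial_\rho}\partial_\rho = -\rho^{-1}\partial_\rho$, hence $\nabla_N N = \nabla_{\rho\partial_\rho}(\rho\partial_\rho) = \rho\partial_\rho + \rho^2 \nabla_{\partial_\rho}\partial_\rho = \rho\partial_\rho - \rho\partial_\rho = 0$, which is the last relation in \eqref{eq.geod_rho}. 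For \eqref{eq.sff_rho} I would expand $g(\nabla_a\partial_\rho,\partial_b) = \Gamma^\mu_{a\rho} g_{\mu b}$; the $\mu = \rho$ term drops since $g_{\rho b} = 0$, so $g(\nabla_a\partial_\rho,\partial_b) = \Gamma^c_{a\rho} g_{cb}$. Substituting the expression for $\Gamma^c_{a\rho}$ together with $g_{cb} = \rho^{-2}\gv_{cb}$, contracting, and using the symmetry of $\gv$ (hence of $\mi{L}_\rho \gv$) gives $g(\nabla_a\partial_\rho,\partial_b) = -\rho^{-3}\gv_{ab} + \tfrac12 \rho^{-2} \mi{L}_\rho \gv_{ab}$; multiplying by $-\rho$ reproduces the left-hand side of \eqref{eq.sff_rho}.

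Finally, to identify that left-hand side with the second fundamental form of $\{\rho = \sigma\}$ relative to the unit normal $N$, I would use metric compatibility: since $g(\partial_b, N)$ vanishes identically, $g(\nabla_{\partial_a}\partial_b, N) = -g(\partial_b, \nabla_{\partial_a}N) = -\rho\, g(\nabla_a\partial_\rho, \partial_b)$, and the right-hand side is exactly the left-hand side of \eqref{eq.sff_rho}; this is the second fundamental form with the convention $\mathrm{II}(X,Y) = g(\nabla_X Y, N)$ (the opposite sign convention merely flips a global sign). None of these steps poses a genuine obstacle --- the argument is an elementary computation --- so the only real care needed is bookkeeping: tracking the conformal factors $\rho^{-2}$, consistently reading $\partial_\rho$-components of $\gv$-quantities as Lie derivatives $\mi{L}_\rho$, and fixing the second-fundamental-form sign convention to match the statement.
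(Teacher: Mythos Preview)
Your proof is correct and follows essentially the same direct-computation approach as the paper's own argument. The only cosmetic difference is that you compute the relevant Christoffel symbols explicitly and read off $\nabla_N N$ and $g(\nabla_a\partial_\rho,\partial_b)$ from them, whereas the paper instead uses metric-compatibility identities such as $g(\nabla_\rho\partial_\rho,\partial_\rho)=\tfrac12\partial_\rho g_{\rho\rho}$ and $\partial_\rho g_{ab}=g(\nabla_a\partial_\rho,\partial_b)+g(\partial_a,\nabla_b\partial_\rho)$ to reach the same conclusions; both routes are equivalent and equally elementary.
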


\if\comp1
\begin{proof}
See Appendix \ref{sec.comp_g_rho}.
\end{proof}
\fi

Finally, we will also require the following operators later in our analysis:

\begin{definition} \label{def.aads_trace}
Let $( \mi{M}, g )$ be an FG-aAdS segment.
\begin{itemize}
\item Given a symmetric vertical tensor field $\ms{h}$ of rank $( 0, 2 )$, we define its $\gv$-trace, denoted $\trace{\gv} \ms{h}$, and its $\gv$-divergence, denoted $\Dv \cdot \ms{h}$, to be the vertical tensor fields given by
\begin{equation}
\label{eq.aads_trace_vertical} \trace{\gv} \ms{h} := \gv^{ a b } \ms{h}_{ a b } \text{,} \qquad ( \Dv \cdot \ms{h} )_a := \gv^{ b c } \Dv_b \ms{h}_{ a c } \text{.}
\end{equation}

\item Given a symmetric tensor field $\mf{h}$ on $\mi{I}$ of rank $( 0, 2 )$, we define its $\gm$-trace, denoted $\trace{\gm} \mf{h}$, and its $\gm$-divergence, denoted $\Dm \cdot \mf{h}$, to be the tensor fields on $\mi{I}$ given by
\begin{equation}
\label{eq.aads_trace_boundary} \trace{\gm} \mf{h} := \gm^{ a b } \mf{h}_{ a b } \text{,} \qquad ( \Dm \cdot \mf{h} )_a := \gm^{ b c } \Dm_b \mf{h}_{ a c } \text{.}
\end{equation}
\end{itemize}
\end{definition}

\subsection{Einstein-Vacuum Spacetimes} \label{sec.aads_vacuum}

The final assumption we will pose on our asymptotically AdS setting is that it satisfies the Einstein-vacuum equations, with negative cosmological constant:

\begin{definition} \label{def.aads_vacuum}
An FG-aAdS segment $( \mi{M}, g )$ is called a \emph{vacuum FG-aAdS segment} iff
\begin{equation}
\label{eq.einstein} Rc - \frac{1}{2} Rs \cdot g + \Lambda \cdot g = 0 \text{,} \qquad \Lambda := - \frac{ n (n - 1) }{ 2 } \text{.}
\end{equation}
\end{definition}

\begin{remark}
Definition \ref{def.aads_vacuum} normalizes the cosmological constant to a specific negative value.
This is done for convenience, in order to simplify constants from various computations.
\end{remark}

\begin{proposition} \label{thm.einstein_ex}
Suppose $( \mi{M}, g )$ is a vacuum FG-aAdS segment, and let $( U, \varphi )$ be a coordinate system on $\mi{I}$.
Then, the following hold with respect to $\varphi_\rho$-coordinates:
\begin{equation}
\label{eq.einstein_ex} Rc_{ \alpha \beta } = - n \cdot g_{ \alpha \beta } \text{,} \qquad Rs = - n (n + 1) \text{,} \qquad W_{ \alpha \beta \gamma \delta } = R_{ \alpha \beta \gamma \delta } + g_{ \alpha \gamma } g_{ \beta \delta } - g_{ \alpha \delta } g_{ \beta \gamma } \text{.}
\end{equation}
\end{proposition}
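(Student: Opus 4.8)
\emph{Proposal.} The entire statement is algebraic, extracted from the trace structure of \eqref{eq.einstein}, so the plan is simply to take traces and back-substitute, bearing in mind that $\mi{M}$ has dimension $d := n+1$ (so it is $n+1$, not $n$, that enters every trace identity). First I would contract \eqref{eq.einstein} with $g^{\alpha\beta}$: since $\trace{g} Rc = Rs$ and $\trace{g} g = d$, this gives $Rs - \tfrac{d}{2} Rs + \Lambda d = 0$, hence $Rs = \tfrac{2\Lambda d}{d-2} = \tfrac{2\Lambda(n+1)}{n-1}$, and plugging in $\Lambda = -\tfrac{n(n-1)}{2}$ yields $Rs = -n(n+1)$. (This division uses $n \geq 2$, which we may assume, as $\Lambda = 0$ and the equation degenerates otherwise.) Substituting this back into \eqref{eq.einstein} gives $Rc_{\alpha\beta} = \bigl( \tfrac{1}{2} Rs - \Lambda \bigr) g_{\alpha\beta}$, and computing the scalar $\tfrac{1}{2} Rs - \Lambda = -\tfrac{n(n+1)}{2} + \tfrac{n(n-1)}{2} = -n$ produces the first two identities in \eqref{eq.einstein_ex}.

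For the Weyl identity I would invoke the standard decomposition of the Riemann curvature in dimension $d = n+1$, written with all indices lowered,
\[
W_{\alpha\beta\gamma\delta} = R_{\alpha\beta\gamma\delta} - \frac{1}{d-2} \bigl( g_{\alpha\gamma} Rc_{\beta\delta} - g_{\alpha\delta} Rc_{\beta\gamma} - g_{\beta\gamma} Rc_{\alpha\delta} + g_{\beta\delta} Rc_{\alpha\gamma} \bigr) + \frac{Rs}{(d-1)(d-2)} \bigl( g_{\alpha\gamma} g_{\beta\delta} - g_{\alpha\delta} g_{\beta\gamma} \bigr),
\]
and then substitute the Einstein relations just obtained, namely $Rc_{\mu\nu} = -n\, g_{\mu\nu}$, $Rs = -n(n+1)$, together with $d-2 = n-1$ and $d-1 = n$. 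The Ricci term collapses to $\tfrac{2n}{n-1} \bigl( g_{\alpha\gamma} g_{\beta\delta} - g_{\alpha\delta} g_{\beta\gamma} \bigr)$ and the scalar term to $-\tfrac{n+1}{n-1} \bigl( g_{\alpha\gamma} g_{\beta\delta} - g_{\alpha\delta} g_{\beta\gamma} \bigr)$; since $\tfrac{2n}{n-1} - \tfrac{n+1}{n-1} = \tfrac{n-1}{n-1} = 1$, this is precisely the third identity in \eqref{eq.einstein_ex}. (One should note that here $R_{\alpha\beta\gamma\delta}$ and $W_{\alpha\beta\gamma\delta}$ are read in fully covariant form, the rank-$(1,3)$ versions being obtained by raising the first index.)

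I do not expect any genuine obstacle: the argument is linear algebra on curvature tensors, and the only points that demand care are (i) consistently using the spacetime dimension $n+1$ in every trace identity, (ii) tracking the signs in the Weyl decomposition and in the normalization of $\Lambda$, and (iii) the harmless restriction $n \geq 2$. As an alternative to quoting the explicit Weyl formula, one could observe directly that $Rc_{\alpha\beta} = -n\,g_{\alpha\beta}$ makes the trace-free Ricci part of $R$ vanish, so that $W$ differs from $R$ only by the scalar-curvature contribution, whose coefficient is then pinned down by $Rs = -n(n+1)$; but the substitution above is the cleanest to present.
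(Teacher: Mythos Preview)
Your proposal is correct and follows essentially the same route as the paper: trace \eqref{eq.einstein} to obtain $Rs$, substitute back for $Rc$, then plug both into the standard Weyl decomposition in dimension $n+1$ and simplify. The paper's appendix proof is terser but identical in structure, including the same intermediate coefficients $\tfrac{2n}{n-1}$ and $-\tfrac{n+1}{n-1}$ that combine to $1$.
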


\if\comp1
\begin{proof}
See Appendix \ref{sec.comp_einstein_ex}.
\end{proof}
\fi

\subsubsection{The Vertical Formulation}

We now reformulate the Einstein-vacuum equations \eqref{eq.einstein} in terms of vertical tensor fields, in particular the vertical metric $\gv$, its $\mi{L}_\rho$-derivatives, and its associated curvature $\Rv$.
We begin by converting some standard differential geometric identities, such as the Gauss--Codazzi equations, into equations for vertical tensor fields.

\begin{proposition} \label{thm.geom_vertical}
Let $( \mi{M}, g )$ be a vacuum FG-aAdS segment, and let $( U, \varphi )$ be a coordinate system on $\mi{I}$.
Then, the following identities hold with respect to $\varphi$ and $\varphi_\rho$-coordinates:
\begin{align}
\label{eq.geom_vertical} \rho^2 W_{ \rho c a b } &= \frac{1}{2} \Dv_b \mi{L}_\rho \gv_{ a c } - \frac{1}{2} \Dv_a \mi{L}_\rho \gv_{ b c } \text{,} \\
\notag \rho^2 W_{ a b c d } &= \Rv_{ a b c d } + \frac{1}{4} ( \mi{L}_\rho \gv_{ a d } \mi{L}_\rho \gv_{ b c } - \mi{L}_\rho \gv_{ b d } \mi{L}_\rho \gv_{ a c } ) \\
\notag &\qquad + \frac{1}{2} \rho^{-1} ( \gv_{ b d } \mi{L}_\rho \gv_{ a c } + \gv_{ a c } \mi{L}_\rho \gv_{ b d } - \gv_{ a d } \mi{L}_\rho \gv_{ b c } - \gv_{ b c } \mi{L}_\rho \gv_{ a d } ) \text{,} \\
\notag \rho^2 W_{ \rho a \rho b } &= - \frac{1}{2} \mi{L}_\rho^2 \gv_{ a b } + \frac{1}{2} \rho^{-1} \mi{L}_\rho \gv_{ a b } + \frac{1}{4} \gv^{ c d } \mi{L}_\rho \gv_{ a d } \mi{L}_\rho \gv_{ b c } \text{.}
\end{align}
\end{proposition}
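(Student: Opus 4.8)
The plan is to derive all three identities from the standard hypersurface formulas—the Gauss equation, the Codazzi equation, and the Riccati/radial equation—applied to the level sets $\{\rho = \sigma\}$ inside $(\mi{M}, g)$, and then to convert everything into vertical language using Proposition \ref{thm.g_rho}. The crucial input is \eqref{eq.sff_rho}, which identifies the second fundamental form of $\{\rho = \sigma\}$ with respect to the unit normal $N = \rho\partial_\rho$ as
\begin{equation*}
\ms{k}_{ab} := -\tfrac{1}{2}\rho^{-1}\mi{L}_\rho\gv_{ab} + \rho^{-2}\gv_{ab} = -\rho\, g(\nabla_a\partial_\rho, \partial_b)\text{,}
\end{equation*}
together with the fact that the induced metric on $\{\rho = \sigma\}$ is $\rho^{-2}\gv$ (so its Levi–Civita connection coincides with that of $\gv$, and its intrinsic Riemann curvature is related to $\Rv$ by the conformal rescaling by $\rho^{-2}$). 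First I would record the relation between the spacetime curvature $R$ and the Weyl curvature $W$ from \eqref{eq.einstein_ex}, since the right-hand sides are stated in terms of $W$; by Proposition \ref{thm.einstein_ex}, $W_{\alpha\beta\gamma\delta} = R_{\alpha\beta\gamma\delta} + g_{\alpha\gamma}g_{\beta\delta} - g_{\alpha\delta}g_{\beta\gamma}$, so each identity amounts to computing the appropriate components of $R$ and then adding the explicit metric corrections, which by \eqref{eq.g_rho} are of the schematic form $\rho^{-4}(\gv\gv)$.

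Next I would treat the three components in turn. For the mixed component $\rho^2 W_{\rho cab}$, the Codazzi equation gives $R(\partial_a, \partial_b, \partial_c, N) = \nasla_a \ms{k}_{bc} - \nasla_b \ms{k}_{ac}$ (up to sign and normalization conventions to be fixed), and since the metric-correction terms in $W - R$ vanish when exactly one slot is normal, substituting $\ms{k}_{ab} = -\tfrac{1}{2}\rho^{-1}\mi{L}_\rho\gv_{ab} + \rho^{-2}\gv_{ab}$ and using that $\Dv\gv = 0$ kills the $\rho^{-2}\gv$ piece, leaving $\tfrac{1}{2}(\Dv_b\mi{L}_\rho\gv_{ac} - \Dv_a\mi{L}_\rho\gv_{bc})$ after bookkeeping the factor of $\rho$ absorbed into $N = \rho\partial_\rho$ versus $\partial_\rho$. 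For the tangential component $\rho^2 W_{abcd}$, the Gauss equation expresses $R_{abcd}$ in terms of the intrinsic curvature of the slice plus the quadratic $\ms{k}\wedge\ms{k}$ Gauss term; rescaling the intrinsic curvature from the metric $\rho^{-2}\gv$ to $\gv$ produces $\Rv_{abcd}$ plus conformal-rescaling terms, and expanding $\ms{k}_{ab}\ms{k}_{cd} - \ms{k}_{ad}\ms{k}_{cb}$ with $\ms{k} = -\tfrac12\rho^{-1}\mi{L}_\rho\gv + \rho^{-2}\gv$ yields exactly the $\tfrac14(\mi{L}_\rho\gv\,\mi{L}_\rho\gv)$ terms, the $\tfrac12\rho^{-1}(\gv\,\mi{L}_\rho\gv)$ cross terms, and a pure $\rho^{-4}(\gv\gv)$ term that must cancel against the $W - R$ correction $g_{ac}g_{bd} - g_{ad}g_{bc} = \rho^{-4}(\gv_{ac}\gv_{bd} - \gv_{ad}\gv_{bc})$. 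For the doubly-normal component $\rho^2 W_{\rho a\rho b}$, the radial structure equation (Riccati equation) gives $R(N, \partial_a, N, \partial_b) = -\mi{L}_N\ms{k}_{ab} - \ms{k}_{ac}\gv^{cd}\ms{k}_{db} \cdot(\text{rescaled})$ type formula, i.e. the normal derivative of the second fundamental form plus its square; differentiating $\ms{k}_{ab}$ along $\rho$ brings in $\mi{L}_\rho^2\gv_{ab}$, and the square term $\ms{k}^2$ supplies the $\tfrac14\gv^{cd}\mi{L}_\rho\gv_{ad}\mi{L}_\rho\gv_{bc}$ piece after the $\rho^{-2}\gv$ contributions again cancel against the metric correction in $W - R$.

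The main obstacle I anticipate is purely bookkeeping: keeping the powers of $\rho$ straight. The induced metric is $\rho^{-2}\gv$, the unit normal is $N = \rho\partial_\rho$ (not $\partial_\rho$), the second fundamental form carries its own $\rho$-weights as in \eqref{eq.sff_rho}, and the intrinsic curvature of $\rho^{-2}\gv$ differs from that of $\gv$ by conformal-rescaling terms involving $d(\rho^{-2})$; moreover the identity $\mi{L}_\rho(\gv^{ab}) = -\gv^{ac}\gv^{bd}\mi{L}_\rho\gv_{cd}$ will be needed when differentiating contracted expressions. All of the singular $\rho^{-4}$, $\rho^{-3}$, and $\rho^{-1}$ terms generated by naively applying Gauss–Codazzi must be shown to either cancel against the explicit $g\wedge g$ terms in $W - R$ or to recombine into the stated $\rho^{-1}\mi{L}_\rho\gv$ terms; verifying these cancellations is the real content. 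Since these are lengthy but mechanical computations in local coordinates, I would carry them out in the appendix (consistent with the paper's \texttt{\textbackslash comp} convention), citing the standard Gauss, Codazzi, and radial curvature equations, Proposition \ref{thm.g_rho} for the metric components and $\nabla_N N = 0$, and Proposition \ref{thm.einstein_ex} for the $W$–$R$ relation.
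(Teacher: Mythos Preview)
Your proposal is correct and follows essentially the same Gauss--Codazzi--Riccati strategy as the paper. One simplification worth noting: since $\rho$ is constant on each level set, the metrics $\gv$ and $\rho^{-2}\gv$ share the same Christoffel symbols on each slice, so the intrinsic $(1,3)$-curvatures agree exactly and $\ms{S}_{abcd} = \rho^{-2}\Rv_{abcd}$ with no conformal-rescaling correction terms---the $d(\rho^{-2})$ terms you anticipate all vanish tangentially.
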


\begin{proof}
We begin by defining the vertical tensor fields
\begin{equation}
\label{eql.geom_vertical_0} \ms{h} := \rho^{-2} \gv \text{,} \qquad \kv := - \frac{1}{2} \rho^{-1} \mi{L}_\rho \gv + \rho^{-2} \gv \text{.}
\end{equation}
Notice that \eqref{eq.aads_metric} and \eqref{eq.sff_rho} imply that $\ms{h}$ and $\ms{k}$ are the (vertical) metrics and second fundamental forms on the level sets of $\rho$ that are induced by the spacetime metric $g$.

Observe that with respect to $\varphi$-coordinates, we have
\begin{equation}
\label{eql.geom_vertical_1} \ms{\Gamma}^a_{ b c } := \frac{1}{2} \gv^{ a d } ( \partial_b \gv_{ d c } + \partial_c \gv_{ d b } - \partial_d \gv_{ b c } ) = \frac{1}{2} \ms{h}^{ a d } ( \partial_b \ms{h}_{ d c } + \partial_c \ms{h}_{ d b } - \partial_d \ms{h}_{ b c } ) \text{,}
\end{equation}
that is, $\gv$ and $\ms{h}$ have the same Christoffel symbols $\ms{\Gamma}^a_{ b c }$.
This implies $\Dv$ is the Levi-Civita connection for both $\gv$ and $\ms{h}$.
Thus, the Codazzi equations on the level sets of $\rho$ are given by
\begin{equation}
\label{eql.geom_vertical_2} \rho R_{ \rho c a b } = \Dv_a \ms{k}_{ b c } - \Dv_b \ms{k}_{ a c } \text{.}
\end{equation}
Observe that by \eqref{eq.g_rho} and \eqref{eq.einstein_ex}, the left-hand side of \eqref{eql.geom_vertical_2} is equal to $\rho W_{ \rho c a b }$.
Thus, replacing $\ms{k}$ in \eqref{eql.geom_vertical_2} using \eqref{eql.geom_vertical_0} and noting that $\Dv \gv = 0$ yields the first identity in \eqref{eq.geom_vertical}.

Next, let $\ms{S}$ denote the Riemann curvature associated with $\ms{h}$.
Using the ($\varphi$-)coordinate representation of the Riemann curvature along with \eqref{eql.geom_vertical_1}, we have that
\[
\ms{S}^a{}_{ b c d } = ( \partial_c \ms{\Gamma}^a_{b d} - \partial_d \ms{\Gamma}^a_{b c} + \ms{\Gamma}^a_{c e} \ms{\Gamma}^e_{b d} - \ms{\Gamma}^a_{d e} \ms{\Gamma}^e_{b c} ) |_{ \rho = 0 } = \Rv^a{}_{ b c d } \text{.}
\]
Lowering the first index on both sides of the above (with respect to $\ms{h}$ on the left-hand side, and with respect to $\gv$ on the right-hand side) and recalling \eqref{eql.geom_vertical_0}, we obtain
\begin{equation}
\label{eql.geom_vertical_3} \ms{S}_{ a b c d } = \rho^{-2} \Rv_{ a b c d } \text{.}
\end{equation}
Now, the Gauss equations on the level sets of $\rho$ are given by
\[
R_{ a b c d } = \ms{S}_{ a b c d } + \ms{k}_{ a d } \ms{k}_{ b c } - \ms{k}_{ b d } \ms{k}_{ a c } \text{.}
\]
Replacing $\ms{k}$ and $\ms{S}$ in the above using \eqref{eql.geom_vertical_0} and \eqref{eql.geom_vertical_3}, we see that
\begin{align*}
\rho^2 R_{ a b c d } &= \Rv_{ a b c d } + \frac{1}{4} \mi{L}_\rho \gv_{ a d } \mi{L}_\rho \gv_{ b c } - \frac{1}{4} \mi{L}_\rho \gv_{ b d } \mi{L}_\rho \gv_{ a c } + \frac{1}{2} \rho^{-1} \gv_{ b d } \mi{L}_\rho \gv_{ a c } + \frac{1}{2} \rho^{-1} \gv_{ a c } \mi{L}_\rho \gv_{ b d } \\
\notag &\qquad - \frac{1}{2} \rho^{-1} \gv_{ a d } \mi{L}_\rho \gv_{ b c } - \frac{1}{2} \rho^{-1} \gv_{ b c } \mi{L}_\rho \gv_{ a d } + \rho^{-2} \gv_{ a d } \gv_{ b c } - \rho^{-2} \gv_{ b d } \gv_{ a c } \text{.}
\end{align*}
Replacing $R$ with $W$ using \eqref{eq.einstein_ex} and recalling \eqref{eq.aads_metric} results in the second part of \eqref{eq.geom_vertical}.

For the remaining equality in \eqref{eq.geom_vertical}, we set $N := \rho \partial_\rho$ as in Proposition \ref{thm.g_rho}, and we note that
\[
[ N, \partial_a ] = \rho [ \partial_\rho, \partial_a ] = 0 \text{.}
\]
Recalling \eqref{eq.sff_rho}, the second part of \eqref{eql.geom_vertical_0}, and the above, we then obtain
\begin{align*}
\rho \mi{L}_\rho \kv_{ a b } &= - N [ g ( \nabla_a N, \partial_b ) ] \\
&= - g ( \nabla_N ( \nabla_a N ), \partial_b ) - g ( \nabla_a N, \nabla_N \partial_b ) \\
&= [ - g ( \nabla_a ( \nabla_N N ), \partial_b ) - R ( \partial_b, N, N, \partial_a ) ] - g ( \nabla_a N, \nabla_b N ) \text{.}
\end{align*}
Since $\nabla_N N$ vanishes by \eqref{eq.geod_rho}, then \eqref{eq.sff_rho}, \eqref{eql.geom_vertical_0}, and the above imply
\begin{equation}
\label{eql.geom_vertical_4} \rho \mi{L}_\rho \ms{k}_{ a b } = R ( N, \partial_a, N, \partial_b ) - g ( \nabla_a N, \nabla_b N ) = \rho^2 R_{ \rho a \rho b } - \ms{h}^{ c d } \ms{k}_{ a d } \ms{k}_{ b c } \text{.}
\end{equation}

Moreover, turning again to the second part of \eqref{eql.geom_vertical_0}, we see that
\[
\rho \mi{L}_\rho \ms{k}_{ a b } + \ms{h}^{ c d } \ms{k}_{ a d } \ms{k}_{ b c } = - \frac{1}{2} \mi{L}_\rho^2 \gv_{ a b } + \frac{1}{2} \rho^{-1} \mi{L}_\rho \gv_{ a b } - \rho^{-2} \gv_{ a b } + \frac{1}{4} \gv^{ c d } \mi{L}_\rho \gv_{ a d } \mi{L}_\rho \gv_{ b c } \text{,}
\]
which, when combined with \eqref{eql.geom_vertical_4}, yields
\[
\rho^2 R_{ \rho a \rho b } + \rho^{-2} \gv_{ a b } = - \frac{1}{2} \mi{L}_\rho^2 \gv_{ a b } + \frac{1}{2} \rho^{-1} \mi{L}_\rho \gv_{ a b } + \frac{1}{4} \gv^{ c d } \mi{L}_\rho \gv_{ a d } \mi{L}_\rho \gv_{ b c } \text{.}
\]
The last part of \eqref{eq.geom_vertical} now follows from \eqref{eq.aads_metric}, \eqref{eq.g_rho}, and \eqref{eq.einstein_ex}, which together imply that
\[
\rho^2 R_{ \rho a \rho b } + \rho^{-2} \gv_{ a b } = \rho^2 ( R_{ \rho a \rho b } + g_{ \rho \rho } g_{ a b } - g_{ \rho a } g_{ \rho b } ) = \rho^2 W_{ \rho a \rho b } \text{.} \qedhere
\]
\end{proof}

From the identities in Proposition \ref{thm.geom_vertical}, we obtain our main relations for $\mi{L}_\rho \gv$:

\begin{proposition} \label{thm.einstein_vertical}
Let $( \mi{M}, g )$ be a vacuum FG-aAdS segment, and let $( U, \varphi )$ be a coordinate system on $\mi{I}$.
Then, the following identities hold with respect to $\varphi$-coordinates:
\begin{align}
\label{eq.einstein_vertical} 0 &= \gv^{ b c } \Dv_b \mi{L}_\rho \gv_{ a c } - \Dv_a ( \trace{\gv} \mi{L}_\rho \gv ) \text{,} \\
\notag 0 &= \rho \mi{L}_\rho^2 \gv_{ a b } - ( n - 1 ) \cdot \mi{L}_\rho \gv_{ a b } - \trace{\gv} \mi{L}_\rho \gv \cdot \gv_{ a b } - 2 \rho \cdot \Rcv_{ a b } \\
\notag &\qquad + \frac{1}{2} \rho \cdot \trace{\gv} \mi{L}_\rho \gv \cdot \mi{L}_\rho \gv_{ a b } - \rho \cdot \gv^{ c d } \mi{L}_\rho \gv_{ a c } \mi{L}_\rho \gv_{ b d } \text{.}
\end{align}
\end{proposition}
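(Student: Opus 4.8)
The plan is to contract the three identities of Proposition~\ref{thm.geom_vertical} with the inverse vertical metric $\gv^{-1}$ and to exploit repeatedly that the spacetime Weyl tensor $W$ is totally trace-free (which, via Proposition~\ref{thm.einstein_ex}, is equivalent to the Einstein-vacuum equations in this setting).

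First I would derive the constraint \eqref{eq.einstein_vertical} (the first identity) by contracting the first equation of \eqref{eq.geom_vertical} with $\gv^{bc}$. On the left-hand side, the pair symmetry of $W$ gives $W_{\rho cab} = W_{ab\rho c}$, and since \eqref{eq.g_rho} yields $g^{\rho a} = 0$ and $g^{ab} = \rho^2 \gv^{ab}$, the quantity $\rho^2 \gv^{bc} W_{\rho cab}$ equals a $g$-trace of $W$ over an index pair (the $\rho\rho$-contribution dropping out by antisymmetry in the last two slots), hence vanishes. On the right-hand side, metric compatibility $\Dv \gv = 0$ lets me move $\gv^{bc}$ through $\Dv_a$, so that $\gv^{bc}\Dv_a \mi{L}_\rho \gv_{bc}$ becomes $\Dv_a(\trace{\gv}\mi{L}_\rho\gv)$; multiplying through by $2$ produces the first identity.

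For the second (evolution) identity, I would first use the trace-free property in the form $g^{\mu\nu}W_{\mu a \nu b} = 0$ together with \eqref{eq.g_rho} to conclude $W_{\rho a \rho b} = -\gv^{cd} W_{cadb}$. Then I substitute the second equation of \eqref{eq.geom_vertical} (with indices suitably relabeled) for $\rho^2 W_{cadb}$ and contract with $\gv^{cd}$: the curvature term yields $\Rcv_{ab}$ (the $\gv$-trace of $\Rv$, per Definition~\ref{def.aads_metric_vertical}), the factors $\gv^{cd}\gv_{cb}$ and $\gv^{cd}\gv_{ad}$ collapse to Kronecker deltas, $\gv^{cd}\gv_{cd} = n$, and $\gv^{cd}\mi{L}_\rho\gv_{cd} = \trace{\gv}\mi{L}_\rho\gv$ by Definition~\ref{def.aads_trace}. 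This expresses $\rho^2 W_{\rho a \rho b}$ purely in terms of $\gv$, $\mi{L}_\rho\gv$, $\trace{\gv}\mi{L}_\rho\gv$, $\Rcv$, and a factor $\rho^{-1}$. Finally I equate this with the third equation of \eqref{eq.geom_vertical}, which gives an independent expression for $\rho^2 W_{\rho a \rho b}$ involving $\mi{L}_\rho^2 \gv$, $\mi{L}_\rho\gv$, and the quadratic term $\gv^{cd}\mi{L}_\rho\gv_{ad}\mi{L}_\rho\gv_{bc}$; cancelling $W_{\rho a \rho b}$ between the two and multiplying through by $-2\rho$ yields the stated identity, the coefficient $n - 1$ on $\mi{L}_\rho\gv_{ab}$ emerging as $\tfrac12 + \tfrac{n-2}{2}$ and the two quadratic contributions combining with coefficient $\tfrac14 + \tfrac14 = \tfrac12$.

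The only genuine obstacle I anticipate is the index bookkeeping in the second part: correctly relabeling dummy indices when substituting the Gauss-type identity, and tracking the several $\rho^{-1}$-terms so that the final coefficients (the $-(n-1)$ on $\mi{L}_\rho\gv_{ab}$, the $-2$ on $\rho\,\Rcv_{ab}$, and the signs on both quadratic terms) come out exactly as in \eqref{eq.einstein_vertical}. Once the trace-free relations for $W$ are recorded, the remainder is routine algebra.
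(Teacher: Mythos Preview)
Your proposal is correct and follows essentially the same route as the paper's proof: for the first identity you contract the Codazzi-type equation in \eqref{eq.geom_vertical} with $\gv^{bc}$ and use that $W$ is $g$-trace-free, and for the second identity you combine the $\gv$-trace of the Gauss-type equation with the radial evolution equation (the third line of \eqref{eq.geom_vertical}) via the relation $\gv^{cd}W_{cadb}=-W_{\rho a\rho b}$, then clear the factor of $\rho^{-1}$. The only cosmetic difference is the sign convention in the final step (the paper multiplies by $2\rho$ rather than $-2\rho$, having already moved everything to one side with the opposite sign), which is immaterial.
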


\begin{proof}
For the first part, we take a $\gv$-contraction of the first identity of \eqref{eq.geom_vertical}, in the components given by $b$ and $c$.
Since $W$ is ($g$-)trace free, the left-hand side evaluates to
\[
\gv^{ b c } \cdot \rho^2 W_{ \rho c a b } = g^{ \beta \gamma } W_{ \rho \gamma a \beta } - g^{ \rho \rho } W_{ \rho \rho a \rho } = 0 \text{,}
\]
and the first equality of \eqref{eq.einstein_vertical} follows immediately.

Next, we take a $\gv$-contraction of the second identity in \eqref{eq.geom_vertical}, with respect to the components $a$ and $c$.
Using that $W$ is ($g$-)trace free, along with \eqref{eq.g_rho}, the left-hand side becomes
\[
\gv^{ a c } \cdot \rho^2 W_{ a b c d } = - g^{ \rho \rho } W_{ \rho b \rho d } = - \rho^2 W_{ \rho b \rho d } \text{,}
\]
and hence we have
\begin{align*}
- \rho^2 W_{ \rho b \rho d } &= \Rcv_{ b d } + \frac{1}{4} ( \gv^{ a c } \mi{L}_\rho \gv_{ a d } \mi{L}_\rho \gv_{ b c } - \trace{\gv} \mi{L}_\rho \gv \cdot \mi{L}_\rho \gv_{ b d } ) \\
&\qquad + \frac{1}{2} \rho^{-1} ( \trace{\gv} \mi{L}_\rho \gv \cdot \gv_{ b d } + n \cdot \mi{L}_\rho \gv_{ b d } - \mi{L}_\rho \gv_{ b d } - \mi{L}_\rho \gv_{ b d } ) \text{.}
\end{align*}

Rearranging the above (that is, replacing the free indices $b, d$ by $a, b$, respectively) and then replacing the left-hand side using the last identity in \eqref{eq.geom_vertical}, we obtain
\begin{align*}
0 &= \frac{1}{2} \mi{L}_\rho^2 \gv_{ a b } - \frac{1}{2} ( n - 1 ) \rho^{-1} \cdot \mi{L}_\rho \gv_{ a b } - \frac{1}{2} \rho^{-1} \cdot \trace{\gv} \mi{L}_\rho \gv \cdot \gv_{ a b } - \Rcv_{ a b } \\
&\qquad + \frac{1}{4} \trace{\gv} \mi{L}_\rho \gv \cdot \mi{L}_\rho \gv_{ a b } - \frac{1}{2} \gv^{ c d } \mi{L}_\rho \gv_{ a c } \mi{L}_\rho \gv_{ b d } \text{.}
\end{align*}
Multiplying both sides by $2 \rho$ results in the second part of \eqref{eq.einstein_vertical}.
\end{proof}

We will also require corresponding differential equations for $\Rv$, as well as general commutation formulas for $\mi{L}_\rho$-derivatives.
These are somewhat standard geometric identities (for instance, in the Ricci flow literature; see \cite{chow_lu_ni:hamilton_ricci}) adapted to our current vertical setting.

\begin{proposition} \label{thm.comm_vertical}
Let $( \mi{M}, g )$ be an FG-aAdS segment, let $( U, \varphi )$ be a coordinate system on $\mi{I}$, and let $\ms{A}$ be a vertical tensor field of rank $( k, l )$.
Then, in terms of $\varphi$-coordinates, we have
\begin{align}
\label{eq.comm_vertical} [ \mi{L}_\rho, \Dv_a ] \ms{A}^{ b_1 \dots b_k }_{ c_1 \dots c_l } &= \frac{1}{2} \sum_{ j = 1 }^k \gv^{ b_j e } ( \Dv_a \mi{L}_\rho \gv_{ e d } + \Dv_d \mi{L}_\rho \gv_{ e a } - \Dv_e \mi{L}_\rho \gv_{ a d } ) \ms{A}^{ b_1 \hat{d}_j b_k }_{ c_1 \dots c_l } \\
\notag &\qquad - \frac{1}{2} \sum_{ j = 1 }^l \gv^{ d e } ( \Dv_a \mi{L}_\rho \gv_{ e c_j } + \Dv_{ c_j } \mi{L}_\rho \gv_{ e a } - \Dv_e \mi{L}_\rho \gv_{ a c_j } ) \ms{A}^{ b_1 \dots b_k }_{ c_1 \hat{d}_j c_l } \text{,}
\end{align}
where $b_1 \hat{d}_j b_k$ denotes the sequence of indices $b_1 \dots b_k$ but with $b_j$ replaced by $d$, and where $c_1 \hat{d}_j c_l$ denotes the sequence of indices $c_1 \dots c_l$ but with $c_j$ replaced by $d$.
\end{proposition}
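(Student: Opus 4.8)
The plan is to derive the commutation formula \eqref{eq.comm_vertical} by a direct computation in $\varphi$-coordinates, exploiting the fact that on each level set $\{ \rho = \sigma \}$ the operator $\Dv$ is the Levi--Civita connection of $\gv |_\sigma$, so that the only obstruction to $\mi{L}_\rho$ and $\Dv$ commuting comes from the $\sigma$-dependence of the Christoffel symbols $\ms{\Gamma}^a_{bc}$ of $\gv$. First I would recall that in coordinates, for a rank $(k,l)$ vertical tensor field $\ms{A}$, the covariant derivative $\Dv_a \ms{A}^{b_1\dots b_k}_{c_1\dots c_l}$ equals $\partial_a \ms{A}^{b_1\dots b_k}_{c_1\dots c_l}$ plus $k$ terms of the form $+\ms{\Gamma}^{b_j}_{ae}\ms{A}^{b_1\hat{e}_j b_k}_{c_1\dots c_l}$ and $l$ terms of the form $-\ms{\Gamma}^{e}_{ac_j}\ms{A}^{b_1\dots b_k}_{c_1\hat{e}_j c_l}$. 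Applying $\mi{L}_\rho$ (which on vertical tensor fields written in $\varphi$-coordinates acts componentwise, as it is the lift of $\partial_\rho$ and $[\partial_\rho,\partial_a]=0$), and noting that $\mi{L}_\rho$ commutes with $\partial_a$, all terms in which $\mi{L}_\rho$ hits $\ms{A}$ itself reassemble into $\Dv_a \mi{L}_\rho \ms{A}$; what survives in the commutator is precisely the collection of terms where $\mi{L}_\rho$ falls on a Christoffel symbol, i.e.\ $\mi{L}_\rho \ms{\Gamma}^a_{bc}$.

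The next step is to compute $\mi{L}_\rho \ms{\Gamma}^a_{bc}$. Starting from the coordinate formula $\ms{\Gamma}^a_{bc} = \frac12 \gv^{ad}(\partial_b \gv_{dc} + \partial_c \gv_{db} - \partial_d \gv_{bc})$ and differentiating, one gets a term from $\mi{L}_\rho \gv^{ad} = -\gv^{ae}\gv^{df}\mi{L}_\rho\gv_{ef}$ and a term where $\mi{L}_\rho$ hits the parenthetical expression, which becomes $\partial_b \mi{L}_\rho\gv_{dc} + \partial_c \mi{L}_\rho\gv_{db} - \partial_d \mi{L}_\rho\gv_{bc}$. The standard manipulation—rewriting the coordinate partials of $\mi{L}_\rho\gv$ in terms of $\Dv$ using $\Dv\gv = 0$ (so that $\partial_b\mi{L}_\rho\gv_{dc} = \Dv_b \mi{L}_\rho\gv_{dc} + \ms{\Gamma}^e_{bd}\mi{L}_\rho\gv_{ec} + \ms{\Gamma}^e_{bc}\mi{L}_\rho\gv_{de}$, and similarly for the other two)—causes the terms involving $\mi{L}_\rho\gv^{ad}$ and all the loose Christoffel symbols to cancel, leaving the tensorial identity
\begin{equation*}
\mi{L}_\rho \ms{\Gamma}^a_{bc} = \frac12 \gv^{ad}\paren{ \Dv_b \mi{L}_\rho\gv_{dc} + \Dv_c \mi{L}_\rho\gv_{db} - \Dv_d \mi{L}_\rho\gv_{bc} }\text{.}
\end{equation*}
This is the vertical analogue of the familiar variation-of-Christoffel-symbol formula from Ricci flow theory; I would either cite \cite{chow_lu_ni:hamilton_ricci} for the underlying computation or carry it out directly, as it is short.

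Finally, I would substitute this expression for $\mi{L}_\rho\ms{\Gamma}$ back into the surviving commutator terms identified in the first step. The $j$-th upper-index term contributes $+(\mi{L}_\rho\ms{\Gamma}^{b_j}_{ad})\ms{A}^{b_1\hat{d}_j b_k}_{c_1\dots c_l}$, and relabelling the dummy index $d$ appropriately matches the first sum in \eqref{eq.comm_vertical} exactly (with the symmetrized combination $\Dv_a\mi{L}_\rho\gv_{ed} + \Dv_d\mi{L}_\rho\gv_{ea} - \Dv_e\mi{L}_\rho\gv_{ad}$ appearing after lowering with $\gv^{b_j e}$); the $j$-th lower-index term contributes $-(\mi{L}_\rho\ms{\Gamma}^{d}_{ac_j})\ms{A}^{b_1\dots b_k}_{c_1\hat{d}_j c_l}$, which after the same relabelling produces the second sum, with the overall minus sign and the $\gv^{de}$ factor accounted for. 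I expect the main obstacle to be purely bookkeeping: keeping the index placements, the hatted-index notation, and the symmetry of $\ms{\Gamma}$ in its lower indices straight, and verifying that the cancellation of the non-tensorial pieces in the computation of $\mi{L}_\rho\ms{\Gamma}$ is complete—there is no conceptual difficulty, since $\mi{L}_\rho$ acts trivially on coordinate components and the whole content is the transformation behaviour of the connection coefficients.
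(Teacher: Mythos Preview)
Your proposal is correct and follows essentially the same approach as the paper: reduce the commutator to terms involving $\partial_\rho \ms{\Gamma}^a_{bc}$, identify $\partial_\rho \ms{\Gamma}^a_{bc} = \tfrac{1}{2}\gv^{ad}(\Dv_b\mi{L}_\rho\gv_{dc} + \Dv_c\mi{L}_\rho\gv_{db} - \Dv_d\mi{L}_\rho\gv_{bc})$, and substitute back. The only minor difference is that the paper obtains this formula for $\partial_\rho\ms{\Gamma}$ by applying the already-established commutator identity to $\ms{A} = \gv$ and symmetrizing, rather than differentiating the explicit Christoffel formula; both routes are standard and equally short.
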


\if\comp1
\begin{proof}
See Appendix \ref{sec.comp_comm_vertical}.
\end{proof}
\fi

\begin{proposition} \label{thm.curvature_vertical}
Let $( \mi{M}, g )$ be an FG-aAdS segment, and let $( U, \varphi )$ be a coordinate system on $\mi{I}$.
Then, the following identity holds, with respect to $\varphi$-coordinates:
\begin{equation}
\label{eq.curvature_vertical} \mi{L}_\rho \Rv^a{}_{ b c d } = \frac{1}{2} \gv^{ a e } ( \Dv_{ c d } \mi{L}_\rho \gv_{ e b } + \Dv_{ c b } \mi{L}_\rho \gv_{ e d } - \Dv_{ c e } \mi{L}_\rho \gv_{ b d } - \Dv_{ d c } \mi{L}_\rho \gv_{ e b } - \Dv_{ d b } \mi{L}_\rho \gv_{ e c } + \Dv_{ d e } \mi{L}_\rho \gv_{ b c } ) \text{.}
\end{equation}
\end{proposition}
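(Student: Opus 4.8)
The plan is to derive the first variation of the Riemann curvature under the one-parameter flow generated by $\mi{L}_\rho$. Since the vertical metric $\gv$ depends on $\rho$, its Christoffel symbols $\ms{\Gamma}^a_{bc}$ also depend on $\rho$, and the key intermediate object is $\mi{L}_\rho \ms{\Gamma}^a_{bc}$. First I would note that, although $\ms{\Gamma}^a_{bc}$ is not tensorial, the difference of two connections is, and differentiating the coordinate formula for $\ms{\Gamma}^a_{bc}$ in $\eqref{eql.geom_vertical_1}$ with respect to $\rho$ (using $\mi{L}_\rho \gv^{ae} = - \gv^{af}\gv^{eh}\mi{L}_\rho\gv_{fh}$ and that $\Dv\gv = 0$) yields the standard identity
\begin{equation*}
\mi{L}_\rho \ms{\Gamma}^a_{bc} = \frac{1}{2} \gv^{ae} ( \Dv_b \mi{L}_\rho \gv_{ec} + \Dv_c \mi{L}_\rho \gv_{eb} - \Dv_e \mi{L}_\rho \gv_{bc} ) \text{,}
\end{equation*}
which is exactly the tensor appearing (contracted into $\ms{A}$) on the right-hand side of $\eqref{eq.comm_vertical}$.

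Next I would start from the coordinate expression for the curvature,
\begin{equation*}
\Rv^a{}_{bcd} = \partial_c \ms{\Gamma}^a_{bd} - \partial_d \ms{\Gamma}^a_{bc} + \ms{\Gamma}^a_{ce}\ms{\Gamma}^e_{bd} - \ms{\Gamma}^a_{de}\ms{\Gamma}^e_{bc} \text{,}
\end{equation*}
and apply $\mi{L}_\rho$. Since $\mi{L}_\rho$ commutes with the coordinate partials $\partial_c$ (the coordinate vector fields $\partial_a$ are $\rho$-independent, as used in the proof of Proposition \ref{thm.geom_vertical}), this produces $\partial_c (\mi{L}_\rho \ms{\Gamma}^a_{bd}) - \partial_d(\mi{L}_\rho\ms{\Gamma}^a_{bc})$ plus the Leibniz terms $\ms{\Gamma}\cdot\mi{L}_\rho\ms{\Gamma}$. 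The standard manipulation is then to recognize that the combination of partial-derivative terms and Christoffel-times-$\mi{L}_\rho\ms{\Gamma}$ terms assembles into covariant derivatives: $\mi{L}_\rho \Rv^a{}_{bcd} = \Dv_c (\mi{L}_\rho\ms{\Gamma}^a_{bd}) - \Dv_d(\mi{L}_\rho\ms{\Gamma}^a_{bc})$, since $\mi{L}_\rho\ms{\Gamma}^a_{bd}$ is a genuine $(1,2)$-tensor. Substituting the formula for $\mi{L}_\rho\ms{\Gamma}^a_{bd}$ from the first step and expanding $\Dv_c$ acting on it — writing $\Dv_{cd}$ for the iterated (non-symmetrized) covariant derivative as in the statement, and using again $\Dv\gv = 0$ to pull $\gv^{ae}$ through — gives precisely the six-term expression $\eqref{eq.curvature_vertical}$. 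It only remains to check that the index bookkeeping in $\eqref{eq.curvature_vertical}$ matches: the $\Dv_c(\mi{L}_\rho\ms{\Gamma}^a_{bd})$ piece contributes $\frac12\gv^{ae}(\Dv_{cb}\mi{L}_\rho\gv_{ed} + \Dv_{cd}\mi{L}_\rho\gv_{eb} - \Dv_{ce}\mi{L}_\rho\gv_{bd})$ and the $-\Dv_d(\mi{L}_\rho\ms{\Gamma}^a_{bc})$ piece contributes the three terms with $c,d$ swapped and an overall sign, reproducing the stated right-hand side term by term.

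I do not expect a genuine obstacle here; this is the vertical analogue of the well-known first-variation-of-curvature formula from Ricci flow (cf. \cite{chow_lu_ni:hamilton_ricci}), and the only real care needed is (i) justifying that $\mi{L}_\rho$ commutes with coordinate partials so that differentiating the coordinate curvature formula is legitimate, and (ii) tracking signs and index placement so the final six terms land in the order written in $\eqref{eq.curvature_vertical}$. Alternatively — and perhaps more cleanly — one could bypass the coordinate computation entirely by invoking Proposition \ref{thm.comm_vertical}: apply the commutator $[\mi{L}_\rho, \Dv_a]$ twice to a vector field and use the definition $\Rv^a{}_{bcd}\,\ms{A}^b = ([\Dv_c,\Dv_d]\ms{A})^a$ together with $\mi{L}_\rho[\Dv_c,\Dv_d] = [\mi{L}_\rho,[\Dv_c,\Dv_d]]$ acting on a $\rho$-independent $\ms{A}$; expanding the nested commutators via $\eqref{eq.comm_vertical}$ and reading off the coefficient of $\ms{A}^b$ yields $\eqref{eq.curvature_vertical}$ directly. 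Either route is routine, so I would simply defer the full computation to the appendix, consistent with how Propositions \ref{thm.comm_vertical} and \ref{thm.geom_vertical}'s companions are handled.
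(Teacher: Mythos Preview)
Your proposal is correct, and in fact you have outlined both the paper's route and a standard alternative. The paper follows essentially your second suggestion: it takes a $\rho$-independent vertical vector field $\ms{X}$, writes $\Rv^c{}_{dab}\ms{X}^d = \Dv_{ab}\ms{X}^c - \Dv_{ba}\ms{X}^c$, applies $\mi{L}_\rho$, and expands using the commutator identity \eqref{eq.comm_vertical} twice; setting $\ms{X} = \partial_{x^d}$ at the end extracts \eqref{eq.curvature_vertical}. Your primary approach---differentiating the coordinate curvature formula and recognizing $\mi{L}_\rho\Rv^a{}_{bcd} = \Dv_c(\mi{L}_\rho\ms{\Gamma}^a_{bd}) - \Dv_d(\mi{L}_\rho\ms{\Gamma}^a_{bc})$---is the more direct ``variation of curvature'' computation and arrives at the same place slightly faster, since it avoids introducing and then eliminating the auxiliary vector field. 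Either route is routine, as you say; the paper's choice has the mild advantage of reusing Proposition \ref{thm.comm_vertical} verbatim rather than re-deriving the $\mi{L}_\rho\ms{\Gamma}$ formula separately.
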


\if\comp1
\begin{proof}
See Appendix \ref{sec.comp_curvature_vertical}.
\end{proof}
\fi

\subsubsection{Higher-Order Relations}

For our analysis, we also need higher-order equations satisfied by $\mi{L}_\rho$-derivatives of $\gv$ and $\Rv$.
However, the exact forms of these identities tend to be quite lengthy.
Fortunately, the precise forms of various terms that lie below the leading order are not important.
Thus, in order to describe and treat them more conveniently, we adopt the following schematic notations that highlight only the essential structures of ``error" terms.

\begin{definition} \label{def.schematic_trivial}
Let $( \mi{M}, g )$ be an FG-aAdS segment, and fix $k_1, l_1, k_2, l_2 \in \N$.
Let $\mc{Q}$ be a linear operator mapping vertical tensor fields of rank $( k_1, l_1 )$ to vertical tensor fields of rank $( k_2, l_2 )$.
\begin{itemize}
\item We say $\mc{Q}$ is \emph{schematically trivial} iff $\mc{Q}$ is a composition of the following operations: component permutations, (non-metric) contractions, and multiplications by a constant.

\item $\mc{Q}$ is said to be \emph{schematically $\gv$-trivial} iff $\mc{Q}$ is a composition of the following operations: component permutations, (non-metric) contractions, multiplications by a constant, tensor products with $\gv$, and tensor products with $\gv^{-1}$.
\end{itemize}
\end{definition}

\begin{definition} \label{def.schematic_vertical}
Let $( \mi{M}, g )$ be an FG-aAdS segment.
For any $N \geq 1$ and any vertical tensor fields $\ms{A}_1, \dots, \ms{A}_N$ on $\mi{M}$, we define the following schematic notations:
\begin{itemize}
\item We write $\sch{ \ms{A}_1, \dots, \ms{A}_N }$ to represent any vertical tensor field of the form
\begin{equation}
\label{eq.schematic_vertical} \sum_{ j = 1 }^T \mc{Q}_j ( \ms{A}_1 \otimes \dots \otimes \ms{A}_N ) \text{,}
\end{equation}
where $T \geq 0$, and where each $\mc{Q}_j$, $1 \leq j \leq T$, is a schematically trivial operator.

\item We write $\sch{ \gv; \ms{A}_1, \dots, \ms{A}_N }$ to represent any vertical tensor field of the form
\begin{equation}
\label{eq.schematic_vertical_metric} \sum_{ j = 1 }^T \mc{Q}_j ( \ms{A}_1 \otimes \dots \otimes \ms{A}_N ) \text{,}
\end{equation}
where $T \geq 0$, and where each $\mc{Q}_j$, $1 \leq j \leq T$, is a schematically $\gv$-trivial operator.
\end{itemize}
\end{definition}

For example, we can reformulate the equations \eqref{eq.einstein_vertical} and \eqref{eq.curvature_vertical} using schematic notations:

\begin{proposition} \label{thm.einstein_vertical_ex}
Let $( \mi{M}, g )$ be a vacuum FG-aAdS segment.
Then, the following identities hold:
\begin{align}
\label{eq.einstein_vertical_ex} \rho \mi{L}_\rho^2 \gv - ( n - 1 ) \mi{L}_\rho \gv &= \trace{\gv} \mi{L}_\rho \gv \cdot \gv + 2 \rho \cdot \Rcv + \rho \cdot \mi{S} ( \gv; \mi{L}_\rho \gv, \mi{L}_\rho \gv ) \text{,} \\
\notag \rho \mi{L}_\rho ( \trace{\gv} \mi{L}_\rho \gv ) - ( 2 n - 1 ) \trace{\gv} \mi{L}_\rho \gv &= 2 \rho \cdot \Rsv + \rho \cdot \mi{S} ( \gv; \mi{L}_\rho \gv, \mi{L}_\rho \gv ) \text{,} \\
\notag \mi{L}_\rho \ms{R} &= \mi{S} ( \gv; \Dv^2 \mi{L}_\rho \gv ) \text{.}
\end{align}
\end{proposition}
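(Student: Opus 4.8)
The plan is to derive each of the three schematic identities in \eqref{eq.einstein_vertical_ex} directly from the (non-schematic) identities already established in Propositions \ref{thm.einstein_vertical} and \ref{thm.curvature_vertical}, by discarding the exact coefficients of the lower-order terms and absorbing them into the notations $\sch{\gv; \cdot}$ from Definition \ref{def.schematic_vertical}. The key point is that every lower-order term in those identities is built out of the listed ingredients by the operations permitted for a schematically $\gv$-trivial operator; I only need to check this ingredient-by-ingredient.

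For the first identity: start from the second equation in \eqref{eq.einstein_vertical}, and move the terms $\trace{\gv}\mi{L}_\rho\gv\cdot\gv_{ab}$ and $2\rho\cdot\Rcv_{ab}$ to the right-hand side. The remaining right-hand side terms are $\frac12\rho\cdot\trace{\gv}\mi{L}_\rho\gv\cdot\mi{L}_\rho\gv_{ab}$ and $-\rho\cdot\gv^{cd}\mi{L}_\rho\gv_{ac}\mi{L}_\rho\gv_{bd}$. The first is $\rho$ times a contraction of $\gv^{-1}\otimes\mi{L}_\rho\gv\otimes\mi{L}_\rho\gv$, and the second is $\rho$ times a contraction of $\gv^{-1}\otimes\mi{L}_\rho\gv\otimes\mi{L}_\rho\gv$ as well; both are therefore of the form $\rho\cdot\sch{\gv;\mi{L}_\rho\gv,\mi{L}_\rho\gv}$, since contractions, permutations, tensor products with $\gv^{-1}$, and constant multiples are exactly the allowed operations. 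This gives the first line. For the second identity, take a $\gv$-trace of the first line just obtained: the left-hand side becomes $\rho\,\gv^{ab}\mi{L}_\rho^2\gv_{ab}-(n-1)\trace{\gv}\mi{L}_\rho\gv$, and I use the Leibniz-type identity $\mi{L}_\rho(\trace{\gv}\mi{L}_\rho\gv)=\gv^{ab}\mi{L}_\rho^2\gv_{ab}+(\mi{L}_\rho\gv^{ab})\mi{L}_\rho\gv_{ab}$ together with $\mi{L}_\rho\gv^{ab}=-\gv^{ac}\gv^{bd}\mi{L}_\rho\gv_{cd}$ (which follows from differentiating $\gv^{ab}\gv_{bc}=\delta^a_c$) to rewrite $\rho\,\gv^{ab}\mi{L}_\rho^2\gv_{ab}=\rho\,\mi{L}_\rho(\trace{\gv}\mi{L}_\rho\gv)+\rho\cdot\sch{\gv;\mi{L}_\rho\gv,\mi{L}_\rho\gv}$. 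On the right-hand side, $\gv^{ab}(\trace{\gv}\mi{L}_\rho\gv\cdot\gv_{ab})=n\cdot\trace{\gv}\mi{L}_\rho\gv$, which combines with the $-(n-1)$ on the left to produce the coefficient $-(2n-1)$; $\gv^{ab}\cdot 2\rho\Rcv_{ab}=2\rho\Rsv$; and the trace of $\rho\cdot\sch{\gv;\mi{L}_\rho\gv,\mi{L}_\rho\gv}$ is again of that form. For the third identity, contract $a$ with the Riemann index in \eqref{eq.curvature_vertical} appropriately (or simply note the right-hand side there) — it is manifestly $\gv^{ae}$ times a sum of permuted second covariant derivatives $\Dv^2\mi{L}_\rho\gv$, hence exactly of the form $\sch{\gv;\Dv^2\mi{L}_\rho\gv}$; one only has to observe that $\Dv_{cd}$ and $\Dv_{dc}$ differ by curvature terms of strictly lower differential order, but since the schematic notation records only the top-order object $\Dv^2\mi{L}_\rho\gv$, this subtlety is irrelevant (alternatively, keep the symmetrized combination).

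The main obstacle, such as it is, is bookkeeping rather than mathematics: one must be scrupulous that each discarded term genuinely factors through the permitted operations — in particular that the only metric ingredients appearing are $\gv$ and $\gv^{-1}$ (never $\Dv\gv$, which is zero, nor $\Rcv$, which is separated out as its own term), and that every index contraction is a legitimate (non-metric, or metric-via-$\gv^{-1}$) contraction. A secondary point of care is the $\mi{L}_\rho$–trace computation for the second identity, where one must not forget the $(\mi{L}_\rho\gv^{ab})\mi{L}_\rho\gv_{ab}$ term produced by commuting $\mi{L}_\rho$ past the trace; this is precisely what feeds an extra $\rho\cdot\sch{\gv;\mi{L}_\rho\gv,\mi{L}_\rho\gv}$ contribution and must be checked to have the right $\rho$–weight. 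None of these steps requires a long computation; the proof is essentially an exercise in recognizing the schematic structure of identities already in hand.
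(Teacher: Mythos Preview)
Your proposal is correct and follows essentially the same route as the paper: the first identity is read off from the second equation of Proposition \ref{thm.einstein_vertical} after absorbing the two quadratic-in-$\mi{L}_\rho\gv$ terms into $\rho\cdot\sch{\gv;\mi{L}_\rho\gv,\mi{L}_\rho\gv}$; the second is obtained by taking a $\gv$-trace of the first and using $\mi{L}_\rho\gv^{ab}=-\gv^{ac}\gv^{bd}\mi{L}_\rho\gv_{cd}$ to commute $\mi{L}_\rho$ past $\trace{\gv}$; and the third is immediate from Proposition \ref{thm.curvature_vertical}. Your aside about $\Dv_{cd}$ versus $\Dv_{dc}$ is unnecessary---both are simply components of the tensor $\Dv^2\mi{L}_\rho\gv$, so no commutation is needed---but it does no harm.
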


\if\comp1
\begin{proof}
See Appendix \ref{sec.comp_einstein_vertical_ex}.
\end{proof}
\fi

To extract higher-order terms of the FG expansion, we take derivatives of the equations \eqref{eq.einstein_vertical} and \eqref{eq.curvature_vertical} and commute.
The results of these computations are summarized below:

\begin{proposition} \label{thm.einstein_vertical_deriv}
Let $( \mi{M}, g )$ be a vacuum FG-aAdS segment.
Then, for any $k \geq 2$,
\begin{align}
\label{eq.einstein_vertical_deriv} 0 &= \rho \mi{L}_\rho^{ k + 1 } \gv - ( n - k ) \mi{L}_\rho^k \gv - \trace{ \gv } \mi{L}^k_\rho \gv \cdot \gv - 2 ( k - 1 ) \mi{L}_\rho^{ k - 2 } \Rcv - 2 \rho \mi{L}_\rho^{ k - 1 } \Rcv \\
\notag &\qquad + \sum_{ \substack{ j_1 + \dots + j_l = k \\ 1 \leq j_p < k } } \mi{S} ( \gv; \mi{L}_\rho^{ j_1 } \gv, \dots, \mi{L}_\rho^{ j_l } \gv ) + \sum_{ \substack{ j_1 + \dots + j_l = k + 1 \\ 1 \leq j_p \leq k } } \rho \cdot \mi{S} ( \gv; \mi{L}_\rho^{ j_1 } \gv, \dots, \mi{L}_\rho^{ j_l } \gv ) \text{,} \\
\notag 0 &= \rho \mi{L}_\rho ( \trace{\gv} \mi{L}_\rho^k \gv ) - ( 2 n - k ) \trace{\gv} \mi{L}_\rho^k \gv - 2 ( k - 1 ) \mi{L}_\rho^{ k - 2 } \Rsv - 2 \rho \mi{L}_\rho^{ k - 1 } \Rsv \\
\notag &\qquad + \sum_{ \substack{ j_1 + \dots + j_l = k \\ 1 \leq j_p < k } } \mi{S} ( \gv; \mi{L}_\rho^{ j_1 } \gv, \dots, \mi{L}_\rho^{ j_l } \gv ) + \sum_{ \substack{ j_1 + \dots + j_l = k + 1 \\ 1 \leq j_p \leq k } } \rho \cdot \mi{S} ( \gv; \mi{L}_\rho^{ j_1 } \gv, \dots, \mi{L}_\rho^{ j_l } \gv ) \text{.}
\end{align}
\end{proposition}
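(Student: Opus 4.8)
The plan is to establish both identities in \eqref{eq.einstein_vertical_deriv} simultaneously, by induction on $k$. The base case is already available: the identities at ``$k=1$'' are precisely the first two relations in \eqref{eq.einstein_vertical_ex} (Proposition \ref{thm.einstein_vertical_ex}), once one reads the coefficient $2(k-1)$ as suppressing the a priori meaningless term $\mi{L}_\rho^{k-2}\Rcv$ when $k=1$ and notes that the first schematic sum is empty in this case; alternatively, one differentiates \eqref{eq.einstein_vertical_ex} once to obtain the $k=2$ case directly. For the inductive step, assuming the level-$k$ identities, I would apply $\mi{L}_\rho$ to each of them and reorganize the result into the level-$(k+1)$ form. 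The only tools needed are the elementary facts $\mi{L}_\rho \rho = 1$ and $\mi{L}_\rho \gv^{-1} = \mi{S}(\gv;\mi{L}_\rho\gv)$ --- from which one gets $\mi{L}_\rho(\trace{\gv}\mi{L}_\rho^k\gv) = \trace{\gv}\mi{L}_\rho^{k+1}\gv + \mi{S}(\gv;\mi{L}_\rho\gv,\mi{L}_\rho^k\gv)$ --- together with the Leibniz rule and the bookkeeping conventions of Definitions \ref{def.schematic_trivial} and \ref{def.schematic_vertical}. Note in particular that the curvature terms $\mi{L}_\rho^{k-2}\Rcv$, $\mi{L}_\rho^{k-1}\Rcv$ (and their traces) are never expanded, so neither \eqref{eq.comm_vertical} nor \eqref{eq.curvature_vertical} is required here.

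The crucial point is that the coefficients shift correctly under differentiation. Applying $\mi{L}_\rho$ to the leading pair $\rho\mi{L}_\rho^{k+1}\gv - (n-k)\mi{L}_\rho^k\gv$ produces $\rho\mi{L}_\rho^{k+2}\gv + \mi{L}_\rho^{k+1}\gv - (n-k)\mi{L}_\rho^{k+1}\gv = \rho\mi{L}_\rho^{(k+1)+1}\gv - (n-(k+1))\mi{L}_\rho^{k+1}\gv$; the same one-unit shift, sourced by $\mi{L}_\rho\rho = 1$, turns $(2n-k)$ into $(2n-(k+1))$ in the second identity, and turns the curvature pair $-2(k-1)\mi{L}_\rho^{k-2}\Rcv - 2\rho\mi{L}_\rho^{k-1}\Rcv$ into $-2(k-1)\mi{L}_\rho^{k-1}\Rcv - 2\mi{L}_\rho^{k-1}\Rcv - 2\rho\mi{L}_\rho^k\Rcv = -2k\mi{L}_\rho^{k-1}\Rcv - 2\rho\mi{L}_\rho^k\Rcv$ (and similarly for $\Rsv$), which is exactly the level-$(k+1)$ curvature contribution. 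Differentiating $-\trace{\gv}\mi{L}_\rho^k\gv\cdot\gv$ yields $-\trace{\gv}\mi{L}_\rho^{k+1}\gv\cdot\gv$ plus terms of the form $\mi{S}(\gv;\mi{L}_\rho\gv,\mi{L}_\rho^k\gv)$, which I would absorb into the level-$(k+1)$ schematic sums.

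What remains is to verify that every error term produced by $\mi{L}_\rho$ lands in the correct sum. Here I would use that $\mi{L}_\rho$ sends $\mi{S}(\gv;\mi{L}_\rho^{j_1}\gv,\dots,\mi{L}_\rho^{j_l}\gv)$ to a sum of terms of the same shape with one additional $\mi{L}_\rho$ distributed --- either raising some $j_p$ by one, or creating a new factor $\mi{L}_\rho\gv$ when $\mi{L}_\rho$ falls on one of the hidden $\gv^{\pm 1}$ factors. Thus the level-$k$ unweighted sum ($\sum j_p = k$, each $j_p < k$) differentiates into one with $\sum j_p = k+1$ and each $j_p \le k < k+1$; the level-$k$ weighted sum $\rho\sum\mi{S}$ ($\sum j_p = k+1$, each $j_p \le k$) differentiates, via $\mi{L}_\rho\rho = 1$, into an unweighted piece with $\sum j_p = k+1$, each $j_p \le k$, plus a weighted piece $\rho\sum\mi{S}$ with $\sum j_p = k+2$, each $j_p \le k+1$; and the freshly created errors (from differentiating the trace and the factor $\gv$) all have total order $k+1$ with exponents bounded by $k$. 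Matching all of these against the constraints in \eqref{eq.einstein_vertical_deriv} at level $k+1$ closes the induction. I expect the main obstacle --- though it is purely combinatorial rather than conceptual --- to be exactly this last accounting: one must check, uniformly for both identities, that nothing spills from the weighted sum into the unweighted one or vice versa, and that the exponent constraints ($j_p < k+1$ for the unweighted sum, $j_p \le k+1$ for the weighted one) are respected at every step.
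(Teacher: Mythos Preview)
Your proposal is correct and follows essentially the same approach as the paper: the paper applies $\mi{L}_\rho^{k-1}$ to \eqref{eq.einstein_vertical_ex} in one shot (using $\mi{L}_\rho^{k-1}(\rho\ms{A}) = (k-1)\mi{L}_\rho^{k-2}\ms{A} + \rho\mi{L}_\rho^{k-1}\ms{A}$ and the Leibniz rule), whereas you reach the same destination by applying $\mi{L}_\rho$ one step at a time via induction; the coefficient shifts you track are exactly what the one-shot computation produces. The only minor difference is that the paper obtains the second identity by taking a $\gv$-trace of the first, while you carry both identities through the induction in parallel.
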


\begin{proof}
First, note that for any vertical tensor field $\ms{A}$, we have
\[
\mi{L}_\rho^{ k - 1 } ( \rho \ms{A} ) = ( k - 1 ) \cdot \mi{L}_\rho^{ k - 2 } \ms{A} + \rho \cdot \mi{L}_\rho^{ k - 1 } \ms{A} \text{.}
\]
Applying $\mi{L}_\rho^{ k - 1 }$ to the first part of \eqref{eq.einstein_vertical_ex} and using the above, we obtain
\begin{align}
\label{eql.einstein_vertical_deriv_1} 0 &= \rho \mi{L}_\rho^{ k + 1 } \gv - ( n - k ) \mi{L}_\rho^k \gv - \mi{L}_\rho^{ k - 1 } ( \trace{ \gv } \mi{L}_\rho \gv \cdot \gv ) - 2 ( k - 1 ) \mi{L}_\rho^{ k - 2 } \Rcv \\
\notag &\qquad - 2 \rho \mi{L}_\rho^{ k - 1 } \Rcv + \mi{L}_\rho^{ k - 1 } [ \rho \cdot \mi{S} ( \gv; \mi{L}_\rho \gv, \mi{L}_\rho \gv ) ] \text{.}
\end{align}

We now apply the Leibniz rule repeatedly to the third and last terms on the right-hand side of \eqref{eql.einstein_vertical_deriv_1}.
For the third term, if all of $\mi{L}_\rho^{ k - 1 }$ hits the $\mi{L}_\rho \gv$, we obtain the term $\trace{\gv} \mi{L}_\rho^k \gv \cdot \gv$.
Otherwise, if at least one $\mi{L}_\rho$-derivative hits elsewhere, then we obtain terms with schematic form
\[
\sum_{ \substack{ j_1 + \dots + j_l = k \\ 1 \leq j_p < k } } \mi{S} ( \gv; \mi{L}_\rho^{ j_1 } \gv, \dots, \mi{L}_\rho^{ j_l } \gv ) \text{.}
\]
Similarly, applying the Leibniz rule to the last term of \eqref{eql.einstein_vertical_deriv_1} yields terms of the forms
\[
\sum_{ \substack{ j_1 + \dots + j_l = k \\ 1 \leq j_p < k } } \mi{S} ( \gv; \mi{L}_\rho^{ j_1 } \gv, \dots, \mi{L}_\rho^{ j_l } \gv ) \text{,} \qquad \sum_{ \substack{ j_1 + \dots + j_l = k + 1 \\ 1 \leq j_p \leq k } } \rho \cdot \mi{S} ( \gv; \mi{L}_\rho^{ j_1 } \gv, \dots, \mi{L}_\rho^{ j_l } \gv ) \text{,}
\]
depending on whether one of the $\mi{L}_\rho$'s hits the factor of $\rho$.
This results in the first part of \eqref{eq.einstein_vertical_deriv}; the second part of \eqref{eq.einstein_vertical_deriv} is obtained by taking a trace of the first part.
\end{proof}

\begin{proposition} \label{thm.comm_vertical_deriv}
Let $( \mi{M}, g )$ be an FG-aAdS segment.
Then, for any $k \geq 1$,
\begin{equation}
\label{eq.comm_vertical_deriv} [ \mi{L}_\rho^k, \Dv ] \ms{A} = \sum_{ \substack{ j + j_0 + \dots + j_l = k \\ 0 \leq j < k \text{, } j_p \geq 1 } } \mi{S} ( \gv; \mi{L}_\rho^j \ms{A}, \Dv \mi{L}_\rho^{ j_0 } \gv, \mi{L}_\rho^{ j_1 } \gv, \dots, \mi{L}_\rho^{ j_l } \gv ) \text{.}
\end{equation}
\end{proposition}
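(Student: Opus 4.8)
The plan is to prove \eqref{eq.comm_vertical_deriv} by induction on $k$, using Proposition \ref{thm.comm_vertical} both as the base case and as the engine of the inductive step. For $k = 1$ there is nothing to do beyond rephrasing: the right-hand side of \eqref{eq.comm_vertical} is a sum of terms each built from a single $\gv^{-1}$, a single $\Dv \mi{L}_\rho \gv$, and the tensor $\ms{A}$, combined by index permutations and non-metric contractions, which is exactly $\sch{ \gv; \ms{A}, \Dv \mi{L}_\rho \gv }$ --- the $k = 1$ case of \eqref{eq.comm_vertical_deriv}, with $j = 0$, $j_0 = 1$, $l = 0$.

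For the inductive step, assuming \eqref{eq.comm_vertical_deriv} at order $k$, I would start from the commutator Leibniz identity $[ \mi{L}_\rho^{ k + 1 }, \Dv ] = \mi{L}_\rho [ \mi{L}_\rho^k, \Dv ] + [ \mi{L}_\rho, \Dv ] \mi{L}_\rho^k$, applied to $\ms{A}$. The second term is handled directly by Proposition \ref{thm.comm_vertical} with $\mi{L}_\rho^k \ms{A}$ in place of $\ms{A}$: it equals $\sch{ \gv; \mi{L}_\rho^k \ms{A}, \Dv \mi{L}_\rho \gv }$, an admissible summand at order $k + 1$ (with $j = k$, $j_0 = 1$, $l = 0$). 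For the first term I would insert the inductive hypothesis for $[ \mi{L}_\rho^k, \Dv ] \ms{A}$ and then distribute $\mi{L}_\rho$, by the Leibniz rule, across each schematic summand $\sch{ \gv; \mi{L}_\rho^j \ms{A}, \Dv \mi{L}_\rho^{ j_0 } \gv, \mi{L}_\rho^{ j_1 } \gv, \dots, \mi{L}_\rho^{ j_l } \gv }$. Since $\mi{L}_\rho$ commutes with index permutations, non-metric contractions, and constant multiplication, each resulting term corresponds to $\mi{L}_\rho$ falling on exactly one factor, and one checks case by case that the output is again of the asserted schematic form.

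The step that needs genuine care --- and the one I expect to be the main obstacle --- is tracking what happens when $\mi{L}_\rho$ lands on the distinguished factor $\Dv \mi{L}_\rho^{ j_0 } \gv$ (and, more routinely, on a copy of $\gv$ or $\gv^{-1}$ hidden inside the schematically $\gv$-trivial operator). Here one writes $\mi{L}_\rho ( \Dv \mi{L}_\rho^{ j_0 } \gv ) = \Dv \mi{L}_\rho^{ j_0 + 1 } \gv + [ \mi{L}_\rho, \Dv ] \mi{L}_\rho^{ j_0 } \gv$ and applies Proposition \ref{thm.comm_vertical} once more to the commutator, obtaining $\sch{ \gv; \mi{L}_\rho^{ j_0 } \gv, \Dv \mi{L}_\rho \gv }$: the $\Dv$ now sits on a $\mi{L}_\rho \gv$ while the old $\mi{L}_\rho^{ j_0 } \gv$ has become an ordinary undifferentiated factor. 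In every case exactly one additional $\mi{L}_\rho$ has been distributed, so the new total order is $k + 1$; the order carried by $\ms{A}$ only ever rises from $j < k$ to $j + 1 \leq k < k + 1$; and every differentiated copy of $\gv$, including the one appearing under $\Dv$, still carries at least one $\mi{L}_\rho$. Collecting the cases then shows each produced term is an admissible summand of \eqref{eq.comm_vertical_deriv} at order $k + 1$, closing the induction. It is probably cleanest to isolate beforehand the auxiliary observation that $\mi{L}_\rho$ applied to any schematically $\gv$-trivial operator (viewed as acting on fixed arguments) returns a finite sum of schematically $\gv$-trivial operators applied to those same arguments together with at most one extra inserted factor of $\mi{L}_\rho \gv$; this is what makes the case analysis finite and essentially mechanical.
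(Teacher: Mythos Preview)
Your proposal is correct and follows essentially the same approach as the paper: induction on $k$, with the base case being Proposition \ref{thm.comm_vertical} and the inductive step using the commutator identity $[\mi{L}_\rho^{k+1}, \Dv] = \mi{L}_\rho [\mi{L}_\rho^k, \Dv] + [\mi{L}_\rho, \Dv] \mi{L}_\rho^k$, distributing $\mi{L}_\rho$ via Leibniz and handling $\mi{L}_\rho (\Dv \mi{L}_\rho^{j_0} \gv)$ by commuting once more. Your write-up is in fact more explicit than the paper's about the bookkeeping of the index constraints.
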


\if\comp1
\begin{proof}
See Appendix \ref{sec.comp_comm_vertical_deriv}.
\end{proof}
\fi

\begin{proposition} \label{thm.curvature_vertical_deriv}
Let $( \mi{M}, g )$ be a vacuum FG-aAdS segment.
Then, for any $k \geq 2$,
\begin{align}
\label{eq.curvature_vertical_deriv} 0 &= \mi{L}_\rho^k \Rv + \sum_{ \substack{ i_1 + \dots + i_l = 2 \\ j_1 + \dots + j_l = k \\ j_p \geq 1 } } \mi{S} ( \gv; \Dv^{ i_1 } \mi{L}_\rho^{ j_1 } \gv, \dots, \Dv^{ i_l } \mi{L}_\rho^{ j_l } \gv ) \text{,} \\
\notag 0 &= \Dv \cdot \mi{L}_\rho^k \gv - \Dv ( \trace{\gv} \mi{L}^k_\rho \gv ) + \sum_{ \substack{ j_0 + \dots + j_l = k \\ 1 \leq j_p < k } } \mi{S} ( \gv; \Dv \mi{L}_\rho^{ j_0 } \gv, \mi{L}_\rho^{ j_1 } \gv, \dots, \mi{L}_\rho^{ j_l } \gv ) \text{.}
\end{align}
\end{proposition}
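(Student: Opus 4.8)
The plan is to derive both identities from their first-order counterparts by applying $\mi{L}_\rho^{k-1}$ and bookkeeping the error terms; organizing this as an induction on $k$ keeps the combinatorics under control. For the first identity the base case $k = 1$ is the curvature formula \eqref{eq.curvature_vertical}, which in schematic form is the third line of \eqref{eq.einstein_vertical_ex}, namely $\mi{L}_\rho \Rv = \mi{S}(\gv; \Dv^2 \mi{L}_\rho \gv)$. For the second identity the base case $k = 1$ is the first line of \eqref{eq.einstein_vertical}, that is, $0 = \Dv \cdot \mi{L}_\rho \gv - \Dv(\trace{\gv} \mi{L}_\rho \gv)$. The three ingredients for the inductive step are: the Leibniz rule for $\mi{L}_\rho$; the identity $\mi{L}_\rho \gv^{ab} = -\gv^{ac}\gv^{bd}\mi{L}_\rho\gv_{cd}$, so that $\mi{L}_\rho$ hitting any $\gv$ or $\gv^{-1}$ factor inside an $\mi{S}(\gv;\cdot)$ expression simply inserts one more $\mi{L}_\rho\gv$ argument (while $\Dv$ kills such factors outright); and the commutation formula \eqref{eq.comm_vertical_deriv}, which rewrites $[\mi{L}_\rho^j, \Dv]$ applied to any vertical tensor as a sum of $\mi{S}(\gv;\cdot)$ terms whose arguments are lower-order $\mi{L}_\rho$-derivatives of that tensor together with $\Dv\mi{L}_\rho^{j_p}\gv$ and $\mi{L}_\rho^{j_p}\gv$ factors.

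For the first identity I apply $\mi{L}_\rho$ once to the inductive hypothesis at level $k-1$. When $\mi{L}_\rho$ lands on a metric factor it inserts a new $\mi{L}_\rho\gv$ argument; when it lands on a factor $\Dv^{i_p}\mi{L}_\rho^{j_p}\gv$ one commutes it inward via \eqref{eq.comm_vertical_deriv}, producing $\Dv^{i_p}\mi{L}_\rho^{j_p+1}\gv$ plus commutator terms of the same schematic shape. In every resulting summand the total number of covariant derivatives stays fixed at $\sum_p i_p = 2$ (a commutator merely transfers a $\Dv$ from one factor to a newly inserted one), the total number of $\mi{L}_\rho$'s rises by exactly one to $\sum_p j_p = k$, and each $j_p \geq 1$ because any bare $\gv$ or $\gv^{-1}$ is swallowed by the $\mi{S}(\gv;\cdot)$ prefix. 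This is precisely the displayed sum, with the ``clean'' term $\mi{S}(\gv;\Dv^2\mi{L}_\rho^k\gv)$ appearing as the $l = 1$ summand.

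For the second identity I write $\Dv\cdot\mi{L}_\rho\gv = \gv^{bc}\Dv_b\mi{L}_\rho\gv_{ac}$ and $\Dv(\trace{\gv}\mi{L}_\rho\gv) = \Dv_a(\gv^{bc}\mi{L}_\rho\gv_{bc})$, apply $\mi{L}_\rho^{k-1}$, and isolate the unique term in which all $k-1$ derivatives fall on the displayed $\mi{L}_\rho\gv$ and commute through the $\Dv$'s --- this yields the leading pair $\Dv\cdot\mi{L}_\rho^k\gv - \Dv(\trace{\gv}\mi{L}_\rho^k\gv)$ --- from all the remaining terms, which come either from a derivative hitting a $\gv^{-1}$ or from a commutator in \eqref{eq.comm_vertical_deriv} and therefore take the form $\mi{S}(\gv;\Dv\mi{L}_\rho^{j_0}\gv, \mi{L}_\rho^{j_1}\gv, \dots, \mi{L}_\rho^{j_l}\gv)$ with $\sum_p j_p = k$. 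The one nontrivial point here is that each such factor carries strictly fewer than $k$ copies of $\mi{L}_\rho$: the only way to pile all $k$ onto a single factor is the leading term already removed, and any invocation of \eqref{eq.comm_vertical_deriv} at a stage $\leq k-1$ can only produce factors of $\mi{L}_\rho$-order $\leq k-1$. The main obstacle in both parts is purely combinatorial --- arranging the iterated Leibniz and commutation steps so that the index ranges in the two displayed sums come out exactly as stated, including the strict inequalities on the $j_p$ and the rigid conservation of the covariant-derivative count --- and threading the inductive hypothesis through one further application of $\mi{L}_\rho$ together with \eqref{eq.comm_vertical_deriv} is how I would keep this bookkeeping honest.
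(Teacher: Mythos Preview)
Your proposal is correct and follows essentially the same approach as the paper. The only organizational difference is that for the first identity you run an explicit induction on $k$ (apply one $\mi{L}_\rho$ to the level-$(k-1)$ hypothesis and redistribute), whereas the paper applies $\mi{L}_\rho^{k-1}$ all at once to \eqref{eq.curvature_vertical} via Leibniz and then uses \eqref{eq.comm_vertical_deriv} to commute the resulting $\mi{L}_\rho^{j_0}$ past $\Dv^2$; for the second identity your argument and the paper's are identical.
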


\if\comp1
\begin{proof}
See Appendix \ref{sec.comp_curvature_vertical_deriv}.
\end{proof}
\fi

\subsection{Some General Estimates} \label{sec.aads_estimates}

In this section, we derive some general bounds and limit properties on FG-aAdS segments that will be useful in later sections.

\subsubsection{Geometric Bounds}

We begin by establishing some bounds for various geometric quantities.
For this, we first set some notations that will make future discussions more convenient:

\begin{definition} \label{def.coord_vector}
Let $( \mi{M}, g )$ be an FG-aAdS segment, let $( U, \varphi )$ be a coordinate system on $\mi{I}$, and let $M \geq 0$.
Given a vertical tensor field $\ms{A}$, we let $[ \ms{A} ]_{ M, \varphi }$ denote the vector-valued function, with domain $( 0, \rho_0 ] \times U$, whose components are all the $\varphi$-coordinate components of $\ms{A}$ and their $\varphi$-coordinate derivatives up to and including order $M$.
\end{definition}

\begin{proposition} \label{thm.geom_bound}
Let $( \mi{M}, g )$ be an FG-aAdS segment, let $( U, \varphi )$ be a compact coordinate system on $\mi{I}$, and let $M \geq 0$.
Suppose in addition there is some constant $C > 0$ such that
\footnote{Note that the third and fourth parts of \eqref{eq.geom_bound_ass} are actually consequences of the first two parts of \eqref{eq.geom_bound_ass}.
However, we include all four conditions here in order to simplify the writing.}
\begin{equation}
\label{eq.geom_bound_ass} \| \gv \|_{ M, \varphi } \leq C \text{,} \qquad \| \gv^{-1} \|_{ 0, \varphi } \leq C \text{,} \qquad \| \gm \|_{ M, \varphi } \leq C \text{,} \qquad \| \gm^{-1} \|_{ 0, \varphi } \leq C \text{.}
\end{equation}
Then:
\begin{itemize}
\item The following estimates hold:
\begin{equation}
\label{eq.geom_bound_g} \| \gv^{-1} \|_{ M, \varphi } \lesssim_C 1 \text{,} \qquad \| \gm^{-1} \|_{ M, \varphi } \lesssim_C 1 \text{,} \qquad | \gv^{-1} - \gm^{-1} |_{ M, \varphi } \lesssim_C | \gv - \gm |_{ M, \varphi } \text{.}
\end{equation}

\item If $M \geq 2$, then the following also hold:
\begin{align}
\label{eq.geom_bound_R} \| \Rv \|_{ M - 2, \varphi } \lesssim_C 1 \text{,} \qquad \| \Rm \|_{ M - 2, \varphi } &\lesssim_C 1 \text{,} \qquad | \Rv - \Rm |_{ M - 2, \varphi } \lesssim | \gv - \gm |_{ M, \varphi } \text{,} \\
\notag \| \Rcv \|_{ M - 2, \varphi } \lesssim_C 1 \text{,} \qquad \| \Rcm \|_{ M - 2, \varphi } &\lesssim_C 1 \text{,} \qquad | \Rcv - \Rcm |_{ M - 2, \varphi } \lesssim | \gv - \gm |_{ M, \varphi } \text{,} \\
\notag \| \Rsv \|_{ M - 2, \varphi } \lesssim_C 1 \text{,} \qquad \| \Rsm \|_{ M - 2, \varphi } &\lesssim_C 1 \text{,} \qquad | \Rsv - \Rsm |_{ M - 2, \varphi } \lesssim | \gv - \gm |_{ M, \varphi } \text{.}
\end{align}
\end{itemize}
\end{proposition}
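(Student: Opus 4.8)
The strategy is to treat all three inequalities in each line as consequences of the fact that the quantities being estimated---$\gv^{-1}$, $\gm^{-1}$, $\Rv$, $\Rcv$, $\Rsv$, and their boundary analogues---are \emph{universal smooth (indeed rational/polynomial) expressions} in the coordinate components of $\gv$ (resp.\ $\gm$), the components of $\gv^{-1}$ (resp.\ $\gm^{-1}$), and their $\varphi$-coordinate derivatives, with the only denominators being powers of $\det \gv$ (resp.\ $\det \gm$). The hypothesis \eqref{eq.geom_bound_ass} bounds all the inputs uniformly, so each output is bounded by a constant depending only on $C$; this is the content of the left-hand estimates. For the difference estimates, one applies the mean value theorem (or, equivalently, the fundamental theorem of calculus along the segment $t \mapsto t\gv + (1-t)\gm$ in the space of coordinate jets) to the same universal expression, noting that its gradient is again a universal smooth function of the same bounded inputs, hence bounded by a constant depending only on $C$.

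First I would handle the inverse metric. The identity $\gv^{ab} = (\det \gv)^{-1} \,\mathrm{cof}(\gv)^{ab}$ expresses each component of $\gv^{-1}$ as a polynomial in the components of $\gv$ divided by $\det\gv$; differentiating up to order $M$ and using $\|\gv\|_{M,\varphi}\le C$ together with the lower bound on $\det\gv$ (which follows from $\|\gv^{-1}\|_{0,\varphi}\le C$, since $\det\gv^{-1}=(\det\gv)^{-1}$ is bounded, and $\det\gv$ is bounded by $\|\gv\|_{0,\varphi}^n$) gives $\|\gv^{-1}\|_{M,\varphi}\lesssim_C 1$, and likewise for $\gm^{-1}$. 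For the difference, write $\gv^{-1}-\gm^{-1} = \gv^{-1}(\gm-\gv)\gm^{-1}$ (the standard resolvent identity, valid pointwise for each fixed $\rho$), then differentiate via Leibniz; every factor on the right is controlled in $C^{M}$ by the previous bullet, so $|\gv^{-1}-\gm^{-1}|_{M,\varphi}\lesssim_C |\gv-\gm|_{M,\varphi}$. This yields \eqref{eq.geom_bound_g}.

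Next, assuming $M\ge 2$, I would pass to the curvature. The Christoffel symbols $\ms{\Gamma}^a_{bc}$ are, by \eqref{eql.geom_vertical_1}, universal expressions linear in $\gv^{-1}$ and in first derivatives of $\gv$; the Riemann tensor $\Rv^a{}_{bcd}$ is then a universal expression in $\gv^{-1}$, in first and second derivatives of $\gv$, and quadratic in $\ms{\Gamma}$; hence $\Rv$ is controlled in $C^{M-2}$ by $\|\gv\|_{M,\varphi}$ and $\|\gv^{-1}\|_{M-2,\varphi}$, both $\lesssim_C 1$. Contracting gives the same for $\Rcv$ and $\Rsv$ (the latter involving one more factor of $\gv^{-1}$). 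The difference estimates follow exactly as before: $\ms{\Gamma}[\gv]-\ms{\Gamma}[\gm]$ is controlled by $|\gv^{-1}-\gm^{-1}|$ and $|\partial(\gv-\gm)|$, hence by $|\gv-\gm|_{M,\varphi}$ using \eqref{eq.geom_bound_g}; then $\Rv-\Rm$ is a sum of terms each containing one factor of the form $(\text{bounded})\times(\ms{\Gamma}[\gv]-\ms{\Gamma}[\gm])$ or $(\text{bounded})\times(\partial^2(\gv-\gm))$ or $(\text{bounded})\times(\gv^{-1}-\gm^{-1})$, and Leibniz gives $|\Rv-\Rm|_{M-2,\varphi}\lesssim|\gv-\gm|_{M,\varphi}$; similarly for $\Rcv-\Rcm$ and $\Rsv-\Rsm$. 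This gives \eqref{eq.geom_bound_R}.

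The only mild subtlety---and the thing to state carefully rather than the hard part---is the bookkeeping that all the implied constants depend solely on $C$ (and on $n$, $M$, and the fixed coordinate patch, which are all absorbed into $\lesssim$): one must make sure that the lower bound on $|\det\gv|$ and $|\det\gm|$ used to control the denominators is itself a function of $C$ only, which it is, as noted above. There are no analytic obstacles here; the entire argument is the observation that everything in sight is a tame algebraic function of uniformly bounded jets of the metrics, combined with a one-line resolvent/mean-value argument for the differences.
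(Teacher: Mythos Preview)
Your proposal is correct and follows essentially the same approach as the paper: both arguments express $\gv^{-1}$, $\Rv$, $\Rcv$, $\Rsv$ (and their boundary analogues) as universal polynomial or rational functions of the coordinate jets of $\gv$ and $\gv^{-1}$, use the determinant lower bound (deduced from $\|\gv^{-1}\|_{0,\varphi}\le C$) to control denominators, and then handle the difference estimates by exploiting this same algebraic structure. The only cosmetic distinction is that for $\gv^{-1}-\gm^{-1}$ you invoke the resolvent identity $\gv^{-1}(\gm-\gv)\gm^{-1}$ while the paper writes out the difference of the cofactor-over-determinant fractions explicitly; these are equivalent, and your version is arguably the cleaner of the two.
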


\begin{proof}
Throughout the proof, we will always index and differentiate with respect to $\varphi$-coordinates.
First, recall from basic linear algebra that we can write
\begin{equation}
\label{eql.geom_bound_1} \gv^{ b c } = \frac{1}{ \mc{V}_\varphi ( \gv ) } \mc{P}_0 ( [ \gv ]_{ 0, \varphi } ) \text{,} \qquad \gm^{ b c } = \frac{1}{ \mc{V}_\varphi ( \gm ) } \mc{P}_0 ( [ \gm ]_{ 0, \varphi } ) \text{,}
\end{equation}
for any pair of indices $b$ and $c$, where $\mc{P}_0$ is some polynomial function of its arguments, and where $\mc{V}_\varphi ( \gv )$ and $\mc{V}_\varphi ( \gm )$ denote the determinants of the matrices formed from the $\varphi$-components of $\gv$ and $\gm$, respectively.
Note that from the assumptions \eqref{eq.geom_bound_ass}, we have
\begin{equation}
\label{eql.geom_bound_2} | \mc{V}_\varphi ( \gv ) | \simeq 1 \text{,} \qquad | \mc{V}_\varphi ( \gm ) | \simeq 1 \text{.}
\end{equation}

We now take $m$ derivatives of \eqref{eql.geom_bound_1}, for every $0 \leq m \leq M$, which yields
\begin{equation}
\label{eql.geom_bound_3} \partial^m_{ d_1 \dots d_m } \gv^{ b c } = \frac{1}{ [ \mc{V}_\varphi ( \gv ) ]^{ m + 1 } } \mc{P}_m ( [ \gv ]_{ m, \varphi } ) \text{,} \qquad \partial^m_{ d_1 \dots d_m } \gm^{ b c } = \frac{1}{ [ \mc{V}_\varphi ( \gm ) ]^{ m + 1 } } \mc{P}_m ( [ \gm ]_{ m, \varphi } ) \text{.}
\end{equation}
where $\mc{P}_m$ is a polynomial function of its arguments.
Thus, applying \eqref{eq.geom_bound_ass} to \eqref{eql.geom_bound_3} yields
\[
| \partial^m_{ d_1 \dots d_m } \gv^{ b c } | \lesssim | \mc{P}_m ( [ \gv ]_{ m, \varphi } ) | \lesssim 1 \text{,} \qquad | \partial^m_{ d_1 \dots d_m } \gm^{ b c } | \lesssim | \mc{P}_m ( [ \gm ]_{ m, \varphi } ) | \lesssim 1 \text{,} \qquad 0 \leq m \leq M \text{,}
\]
which imply the first two estimates of \eqref{eq.geom_bound_g}.
Furthermore, from \eqref{eql.geom_bound_2} and \eqref{eql.geom_bound_3}, we see that
\begin{align*}
| \partial^m_{ d_1 \dots d_m } ( \gv^{ b c } - \gm^{ b c } ) | &= \left| \frac{ [ \mc{V}_\varphi ( \gm ) ]^{ m + 1 } \mc{P}_m ( [ \gv ]_{ m, \varphi } ) - [ \mc{V}_\varphi ( \gv ) ]^{ m + 1 } \mc{P}_m ( [ \gm ]_{ m, \varphi } ) }{ [ \mc{V}_\varphi ( \gv ) \cdot \mc{V}_\varphi ( \gm ) ]^{ m + 1 } } \right| \\
&\lesssim | [ \mc{V}_\varphi ( \gm ) ]^{ m + 1 } \mc{P}_m ( [ \gv ]_{ m, \varphi } ) - [ \mc{V}_\varphi ( \gv ) ]^{ m + 1 } \mc{P}_m ( [ \gm ]_{ m, \varphi } ) | \\
&\lesssim | \gv - \gm |_{ m, \varphi } \text{,}
\end{align*}
where $0 \leq m \leq M$, and where we also used \eqref{eq.geom_bound_ass} in the last step.
Summing the above inequality over all indices and all $0 \leq m \leq M$ results in the final estimate of \eqref{eq.geom_bound_g}.

Next, from standard formulas for the coordinate components of $\Rv$ and $\smash{\Rm}$ in terms of the corresponding Christoffel symbols for $\gv$ and $\gm$, respectively, we see---for any indices $b, c, d, e$---that
\[
\Rv^e{}_{ b c d } = \bar{\mc{P}}_0 ( [ \gv ]_{ 2, \varphi }, [ \gv^{-1} ]_{ 1, \varphi } ) \text{,} \qquad \Rm^e{}_{ b c d } = \bar{\mc{P}}_0 ( [ \gm ]_{ 2, \varphi }, [ \gm^{-1} ]_{ 1, \varphi } ) \text{,}
\]
where $\bar{\mc{P}}_0$ a polynomial.
If $M \geq 2$, then differentiating the above yields, for each $2 \leq m \leq M$,
\begin{align}
\label{eql.geom_bound_10} \partial^{ m - 2 }_{ f_1 \dots f_{ m - 2 } } \Rv^e{}_{ b c d } &= \bar{\mc{P}}_{ m - 2 } ( [ \gv ]_{ m, \varphi }, [ \gv^{-1} ]_{ m - 1, \varphi } ) \text{,} \\
\notag \partial^{ m - 2 }_{ f_1 \dots f_{ m - 2 } } \Rm^a{}_{ b c d } &= \bar{\mc{P}}_{ m - 2 } ( [ \gm ]_{ m, \varphi }, [ \gm^{-1} ]_{ m - 1, \varphi } ) \text{,}
\end{align}
where $\bar{\mc{P}}_{ m - 2 }$ is again a polynomial.
The first two parts of \eqref{eq.geom_bound_R} now follow from \eqref{eq.geom_bound_ass}, \eqref{eq.geom_bound_g}, and \eqref{eql.geom_bound_10}.
Moreover, applying \eqref{eq.geom_bound_ass} and \eqref{eq.geom_bound_g} to the difference of the identities in \eqref{eql.geom_bound_10} yields
\begin{align*}
| \partial^{ m - 2 }_{ f_1 \dots f_{ m - 2 } } ( \Rv^e{}_{ b c d } - \Rm^e{}_{ b c d } ) | &\lesssim | \bar{\mc{P}}_{ m - 2 } ( [ \gv ]_{ m, \varphi }, [ \gv^{-1} ]_{ m - 1, \varphi } ) - \bar{\mc{P}}_{ m - 2 } ( [ \gm ]_{ m, \varphi }, [ \gm^{-1} ]_{ m - 1, \varphi } ) | \\
&\lesssim | \gv - \gm |_{ m, \varphi } + | \gv^{-1} - \gm^{-1} |_{ m - 1, \varphi } \\
&\lesssim | \gv - \gm |_{ m, \varphi } \text{,}
\end{align*}
for any $2 \leq m \leq M$.
This implies the third estimate of \eqref{eq.geom_bound_R}.

The fourth, fifth, and sixth parts of \eqref{eq.geom_bound_R} follow trivially from the first three, since the Ricci curvature is a (non-metric) contraction of the Riemann curvature. 
The seventh and eight bounds of \eqref{eq.geom_bound_R} are consequences of the definition of scalar curvature, \eqref{eq.geom_bound_g}, and previous bounds in \eqref{eq.geom_bound_R}:
\[
| \Rsv |_{ M - 2, \varphi } \lesssim | \gv^{-1} |_{ M - 2, \varphi } | \Rcv |_{ M - 2, \varphi } \lesssim 1 \text{,} \qquad | \Rsm |_{ M - 2, \varphi } \lesssim | \gm^{-1} |_{ M - 2, \varphi } | \Rcm |_{ M - 2, \varphi } \lesssim 1 \text{.}
\]
Finally, since
\[
\Rsv - \Rsm = ( \gv^{ a b } - \gm^{ a b } ) \Rcv_{ a b } + \gm^{ a b } ( \Rcv_{ a b } - \Rcm_{ a b } ) \text{,}
\]
then \eqref{eq.geom_bound_g}, the fourth part of \eqref{eq.geom_bound_R}, and the sixth part of \eqref{eq.geom_bound_R} imply
\begin{align*}
| \Rsv - \Rsm |_{ M - 2, \varphi } &\lesssim | \gv^{-1} - \gm^{-1} |_{ M - 2, \varphi } | \Rcv |_{ M - 2, \varphi } + | \gm^{-1} |_{ M - 2, \varphi } | \Rcv - \Rcm |_{ M - 2, \varphi } \\
&\lesssim | \Rcv - \Rcm |_{ M - 2, \varphi } + | \gv^{-1} - \gm^{-1} |_{ M - 2, \varphi } \\
&\lesssim | \gv - \gm |_{ M, \varphi } \text{,}
\end{align*}
which is the last inequality of \eqref{eq.geom_bound_R}.
\end{proof}

\begin{proposition} \label{thm.geom_bound_deriv}
Let $( \mi{M}, g )$ be an FG-aAdS segment, let $( U, \varphi )$ be a compact coordinate system on $\mi{I}$, and let $M > 0$.
In addition, assume that the bounds \eqref{eq.geom_bound_ass} hold.
Then, for any vertical tensor field $\ms{A}$ on $\mi{M}$ and any tensor field $\mf{A}$ on $\mi{I}$ of the same rank, we have the following:
\begin{align}
\label{eq.geom_bound_deriv} \sum_{ m = 0 }^M | \Dv^m \ms{A} - \Dm^m \mf{A} |_{ 0, \varphi } &\lesssim_C | \ms{A} - \mf{A} |_{ M, \varphi } + ( | \ms{A} |_{ M - 1, \varphi } + | \mf{A} |_{ M - 1, \varphi } ) | \gv - \gm |_{ M, \varphi } \text{,} \\
\notag | \ms{A} - \mf{A} |_{ M, \varphi } &\lesssim_C \sum_{ m = 0 }^M | \Dv^m \ms{A} - \Dm^m \mf{A} |_{ 0, \varphi } + ( | \ms{A} |_{ M - 1, \varphi } + | \mf{A} |_{ M - 1, \varphi } ) | \gv - \gm |_{ M, \varphi } \text{.}
\end{align}
\end{proposition}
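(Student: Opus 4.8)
The plan is to compare iterated covariant derivatives with iterated \emph{coordinate} derivatives via universal polynomial formulas in the metric, and then to feed in the bounds of Proposition \ref{thm.geom_bound}. The whole argument is local in $\varphi$-coordinates, in which everything is differentiated.

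First I would establish the algebraic backbone by induction on $m$: each $\varphi$-coordinate component of $\Dv^m \ms{A}$ has the form
\[
\Dv^m \ms{A} = \partial^m \ms{A} + \sum_{ j = 0 }^{ m - 1 } \mc{T}^{(m)}_j \cdot \partial^j \ms{A} \text{,}
\]
where $\partial^j \ms{A}$ abbreviates the $j$-th order $\varphi$-derivatives of the components of $\ms{A}$, and each $\mc{T}^{(m)}_j$ is a \emph{universal} polynomial in $[ \gv ]_{ m - j, \varphi }$ and $[ \gv^{-1} ]_{ m - j, \varphi }$ (in the notation of Definition \ref{def.coord_vector}). The induction step uses only the coordinate formula $\Dv = \partial + ( \text{Christoffel symbols of } \gv ) \ast ( \cdot )$ together with the Leibniz rule; the structural fact that drives the whole estimate is that the coefficient of the top-order term $\partial^m \ms{A}$ is the identity, hence carries no dependence on $\gv$, and that at each step at most one new metric derivative is introduced, which gives the derivative count $m - j$. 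Since this relation is unitriangular in the derivatives of $\ms{A}$, it inverts recursively, yielding the dual expansion
\[
\partial^m \ms{A} = \Dv^m \ms{A} + \sum_{ j = 0 }^{ m - 1 } \tilde{\mc{T}}^{(m)}_j \cdot \Dv^j \ms{A} \text{,}
\]
with $\tilde{\mc{T}}^{(m)}_j$ again a universal polynomial in $[ \gv ]_{ m - j, \varphi }$ and $[ \gv^{-1} ]_{ m - j, \varphi }$. Since both formulas are universal, they hold verbatim with $( \gv, \Dv, \ms{A} )$ replaced by $( \gm, \Dm, \mf{A} )$ and the \emph{same} polynomials $\mc{T}^{(m)}_j$, $\tilde{\mc{T}}^{(m)}_j$.

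Next I would bound the coefficients. Under \eqref{eq.geom_bound_ass}, estimate \eqref{eq.geom_bound_g} gives $\| \gv^{-1} \|_{ M, \varphi } + \| \gm^{-1} \|_{ M, \varphi } \lesssim_C 1$ and $| \gv^{-1} - \gm^{-1} |_{ M, \varphi } \lesssim_C | \gv - \gm |_{ M, \varphi }$. Hence, for all $0 \leq j \leq m \leq M$, the polynomials $\mc{T}^{(m)}_j$ and $\tilde{\mc{T}}^{(m)}_j$ evaluated at either $\gv$ or $\gm$ are $\lesssim_C 1$, and — by the same elementary fact used in the proof of Proposition \ref{thm.geom_bound}, that a polynomial is Lipschitz on the bounded region where its arguments range — one has $| \mc{T}^{(m)}_j ( \gv ) - \mc{T}^{(m)}_j ( \gm ) |_{ 0, \varphi } \lesssim_C | \gv - \gm |_{ m - j, \varphi } + | \gv^{-1} - \gm^{-1} |_{ m - j, \varphi } \lesssim_C | \gv - \gm |_{ M, \varphi }$, and similarly for $\tilde{\mc{T}}^{(m)}_j$.

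Finally I would assemble both inequalities. For the first, write, for each $0 \leq m \leq M$,
\[
\Dv^m \ms{A} - \Dm^m \mf{A} = ( \partial^m \ms{A} - \partial^m \mf{A} ) + \sum_{ j = 0 }^{ m - 1 } \brak{ \mc{T}^{(m)}_j ( \gv ) ( \partial^j \ms{A} - \partial^j \mf{A} ) + ( \mc{T}^{(m)}_j ( \gv ) - \mc{T}^{(m)}_j ( \gm ) ) \partial^j \mf{A} } \text{;}
\]
the first two groups of terms are $\lesssim_C | \ms{A} - \mf{A} |_{ M, \varphi }$, and in the last group $j \leq m - 1 \leq M - 1$, so it is $\lesssim_C | \mf{A} |_{ M - 1, \varphi } | \gv - \gm |_{ M, \varphi }$; summing over $m$ gives the first line of \eqref{eq.geom_bound_deriv}. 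The second line follows identically from the dual expansion, writing $\partial^m \ms{A} - \partial^m \mf{A}$ in terms of the differences $\Dv^j \ms{A} - \Dm^j \mf{A}$ and the terms $( \tilde{\mc{T}}^{(m)}_j ( \gv ) - \tilde{\mc{T}}^{(m)}_j ( \gm ) ) \Dm^j \mf{A}$, together with the bound $| \Dm^j \mf{A} |_{ 0, \varphi } \lesssim_C | \mf{A} |_{ j, \varphi } \leq | \mf{A} |_{ M - 1, \varphi }$ valid for $j \leq m - 1$ (again a consequence of \eqref{eq.geom_bound_g}). I expect the only delicate point to be the bookkeeping in the first induction: one must verify that $\mc{T}^{(m)}_j$ and $\tilde{\mc{T}}^{(m)}_j$ depend on the metric only through its $\varphi$-derivatives up to order $m - j$, since this is precisely what pins the metric-difference contributions onto $| \ms{A} |_{ M - 1, \varphi }$ and $| \mf{A} |_{ M - 1, \varphi }$ rather than the full $| \cdot |_{ M, \varphi }$; everything else is a routine combination of the Leibniz rule with Proposition \ref{thm.geom_bound}.
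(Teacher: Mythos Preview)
Your proposal is correct and follows essentially the same route as the paper: both write $\Dv^m \ms{A}$ as $\partial^m \ms{A}$ plus lower-order terms with coefficients that are universal polynomials in the metric and its inverse, then invoke Proposition \ref{thm.geom_bound} and the Lipschitz property of polynomials on bounded sets to compare the $\gv$- and $\gm$-versions. The only cosmetic difference is that for the second inequality the paper keeps the correction term as a single polynomial $\mc{R}_m([\ms{A}]_{m-1,\varphi}, [\gv]_{m,\varphi}, [\gv^{-1}]_{m-1,\varphi})$ linear in its first slot and then argues by induction on $m$ (bounding $|\ms{A}-\mf{A}|_{m,\varphi}$ in terms of $|\ms{A}-\mf{A}|_{m-1,\varphi}$), whereas you explicitly invert the unitriangular system to obtain the dual coefficients $\tilde{\mc{T}}^{(m)}_j$ and bound directly; these are two packagings of the same computation.
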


\begin{proof}
Let $(k, l)$ be the rank of $\ms{A}$ and $\mf{A}$, and fix indices $a_1, \dots, a_M, b_1 \dots, b_k, c_1, \dots, c_l$.
Then, indexing and differentiating with respect to $\varphi$-coordinates, we see from the coordinate formula for covariant derivatives that for each $0 < m \leq M$, there is a polynomial $\mc{R}_m$ such that
\begin{align}
\label{eql.geom_bound_deriv_0} \Dv_{ a_1 \dots a_m } \ms{A}^{ b_1 \dots b_k }_{ c_1 \dots c_l } &= \partial^m_{ a_1 \dots a_m } \ms{A}^{ b_1 \dots b_k }_{ c_1 \dots c_l } - \mc{R}_m ( [ \ms{A} ]_{ m - 1, \varphi }, [ \gv ]_{ m, \varphi }, [ \gv^{-1} ]_{ m - 1, \varphi } ) \text{,} \\
\notag \Dm_{ a_1 \dots a_m } \mf{A}^{ b_1 \dots b_k }_{ c_1 \dots c_l } &= \partial^m_{ a_1 \dots a_m } \mf{A}^{ b_1 \dots b_k }_{ c_1 \dots c_l } - \mc{R}_m ( [ \mf{A} ]_{ m - 1, \varphi }, [ \gm ]_{ m, \varphi }, [ \gm^{-1} ]_{ m - 1, \varphi } ) \text{.}
\end{align}
Moreover, this $\mc{R}_m$ is linear with respect to its first set (i.e., $[ \ms{A} ]_{ m - 1, \varphi }$ and $[ \mf{A} ]_{ m - 1, \varphi }$) of components.
Thus, by Proposition \ref{thm.geom_bound}, the difference of the two $\mc{R}_m$-terms in \eqref{eql.geom_bound_deriv_0} satisfies
\begin{align}
\label{eql.geom_bound_deriv_1} &| \mc{R}_m ( [ \ms{A} ]_{ m - 1, \varphi }, [ \gv ]_{ m, \varphi }, [ \gv^{-1} ]_{ m - 1, \varphi } ) - \mc{R}_m ( [ \mf{A} ]_{ m - 1, \varphi }, [ \gm ]_{ m, \varphi }, [ \gm^{-1} ]_{ m - 1, \varphi } ) | \\
\notag &\quad \lesssim | \ms{A} - \mf{A} |_{ m - 1, \varphi } + ( | \ms{A} |_{ m - 1, \varphi } + | \mf{A} |_{ m - 1, \varphi } ) | \gv - \gm |_{ m, \varphi } \text{.}
\end{align}

Consider now the difference of the two identities in \eqref{eql.geom_bound_deriv_0}.
First, recalling \eqref{eql.geom_bound_deriv_1}, we see that
\[
| \Dv_{ a_1 \dots a_m } \ms{A}^{ b_1 \dots b_k }_{ c_1 \dots c_l } - \Dm_{ a_1 \dots a_m } \mf{A}^{ b_1 \dots b_k }_{ c_1 \dots c_l } | \lesssim | \ms{A} - \mf{A} |_{ m, \varphi } + ( | \ms{A} |_{ m - 1, \varphi } + | \mf{A} |_{ m - 1, \varphi } ) | \gv - \gm |_{ m, \varphi } \text{.}
\]
Summing the above over all indices and over $0 < m \leq M$ yields the first bound of \eqref{eq.geom_bound_deriv}.
Similarly, shuffling the terms in \eqref{eql.geom_bound_deriv_0} around, we can once again use \eqref{eql.geom_bound_deriv_1} to estimate
\begin{align*}
| \partial^m_{ a_1 \dots a_m } ( \ms{A}^{ b_1 \dots b_k }_{ c_1 \dots c_l } - \mf{A}^{ b_1 \dots b_k }_{ c_1 \dots c_l } ) | &\lesssim | \Dv^m \ms{A} - \Dm^m \mf{A} |_{ 0, \varphi } + | \ms{A} - \mf{A} |_{ m - 1, \varphi } \\
&\qquad + ( | \ms{A} |_{ m - 1, \varphi } + | \mf{A} |_{ m - 1, \varphi } ) | \gv - \gm |_{ m, \varphi } \text{.}
\end{align*}
Summing the above over all indices and then adding $| \ms{A} - \mf{A} |_{ m - 1, \varphi }$ to both sides, we obtain
\[
| \ms{A} - \mf{A} |_{ m, \varphi } \lesssim | \Dv^m \ms{A} - \Dm^m \mf{A} |_{ 0, \varphi } + | \ms{A} - \mf{A} |_{ m - 1, \varphi } + ( | \ms{A} |_{ m - 1, \varphi } + | \mf{A} |_{ m - 1, \varphi } ) | \gv - \gm |_{ m, \varphi } \text{.}
\]
Combining the above with an induction argument over $m$ yields the second bound in \eqref{eq.geom_bound_deriv}.
\end{proof}

\subsubsection{Boundary Limits}

Next, we prove a number of general boundary limit properties for vertical tensor fields.
We begin by stating some trivial but useful properties:

\begin{proposition} \label{thm.limit_bound}
Let $( \mi{M}, g )$ be an FG-aAdS segment, and let $M \geq 0$.
If a vertical tensor field $\ms{A}$ on $\mi{M}$ is locally bounded in $C^M$, then $\rho^p \ms{A} \Rightarrow^M 0$ for any $p > 0$.
\end{proposition}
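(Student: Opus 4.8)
The plan is to unwind the two conditions packaged in the relation $\Rightarrow^M$ (Definition \ref{def.aads_limit}, equation \eqref{eq.aads_limit_ex}) and to exploit the fact that multiplication by the purely radial scalar $\rho^p$ is ``invisible'' to the non-radial coordinate derivatives appearing in the norms of Definition \ref{def.aads_norm}. Concretely, I would fix an arbitrary compact coordinate system $(U,\varphi)$ on $\mi{I}$; by the hypothesis of local boundedness in $C^M$ we have $\| \ms{A} \|_{M,\varphi} < \infty$, and it then remains to check, with $\mf{A} := 0$, that $\lim_{\sigma \searrow 0} \sup_{\{\sigma\}\times U} |\rho^p \ms{A}|_{M,\varphi} = 0$ and that $\sup_U \int_0^{\rho_0} \sigma^{-1} |\rho^p \ms{A}|_{M,\varphi}|_\sigma \, d\sigma < \infty$.

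The first step is the pointwise identity
\[
| \rho^p \ms{A} |_{M,\varphi} = \rho^p \, | \ms{A} |_{M,\varphi} \qquad \text{on } (0,\rho_0] \times U .
\]
This holds because, in the coordinates $\varphi_\rho = (\rho,\varphi)$, the functions $x^a$ are $\rho$-independent (Definition \ref{def.aads_index}), so each $\partial_a$ annihilates $\rho$; hence $\partial^m_{a_1 \dots a_m}(\rho^p \ms{A}^{b_1 \dots b_k}_{c_1 \dots c_l}) = \rho^p \, \partial^m_{a_1 \dots a_m} \ms{A}^{b_1 \dots b_k}_{c_1 \dots c_l}$ for every multi-index, and summing absolute values as in \eqref{eq.aads_size} (using $\rho^p > 0$ on $(0,\rho_0]$ to pull the scalar out of $|\cdot|$) gives the identity.

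Granting this, both conditions are immediate. For the uniform limit, on the slice $\{\rho = \sigma\}$ one has $\sup_{\{\sigma\}\times U} |\rho^p \ms{A}|_{M,\varphi} = \sigma^p \sup_{\{\sigma\}\times U} |\ms{A}|_{M,\varphi} \leq \sigma^p \, \| \ms{A} \|_{M,\varphi}$, which tends to $0$ as $\sigma \searrow 0$ since $p > 0$. For the integrability, bounding the integrand pointwise by $\sigma^{p-1} \| \ms{A} \|_{M,\varphi}$ yields, uniformly over $U$,
\[
\int_0^{\rho_0} \frac{1}{\sigma} \, | \rho^p \ms{A} |_{M,\varphi}\big|_\sigma \, d\sigma \;\leq\; \| \ms{A} \|_{M,\varphi} \int_0^{\rho_0} \sigma^{p-1}\, d\sigma \;=\; \frac{\rho_0^{\,p}}{p}\, \| \ms{A} \|_{M,\varphi} \;<\; \infty ,
\]
where $p > 0$ guarantees that $\sigma^{p-1}$ is integrable at the origin. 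Since $(U,\varphi)$ was arbitrary, this is exactly $\rho^p \ms{A} \Rightarrow^M 0$.

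There is no serious obstacle here; the one place requiring a touch of care is the pointwise identity in the second step—namely that the $\varphi$-coordinate derivatives do not differentiate the factor $\rho^p$—together with noting that the hypothesis $p > 0$ (rather than merely $p \ge 0$) is genuinely used twice: once for the decay $\sigma^p \to 0$ and once for the integrability of $\sigma^{p-1}$ near $\sigma = 0$.
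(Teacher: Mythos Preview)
Your proof is correct and follows essentially the same approach as the paper: both arguments observe that the $\varphi$-coordinate derivatives do not touch the factor $\rho^p$, so $|\rho^p \ms{A}|_{M,\varphi} = \rho^p |\ms{A}|_{M,\varphi}$, and then verify the two conditions in \eqref{eq.aads_limit_ex} directly using the finiteness of $\|\ms{A}\|_{M,\varphi}$ and the integrability of $\sigma^{p-1}$ near $0$. Your write-up is in fact slightly more explicit than the paper's about justifying the pointwise identity, but the substance is identical.
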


\if\comp1
\begin{proof}
See Appendix \ref{sec.comp_limit_bound}.
\end{proof}
\fi

\begin{proposition} \label{thm.limit_schematic}
Let $( \mi{M}, g )$ be an FG-aAdS segment, and fix $M \geq 0$ and $N \geq 1$.
Moreover, let $\ms{A}_1, \dots, \ms{A}_N$ denote vertical tensor fields, and let $\mf{A}_1, \dots, \mf{A}_N$ denote tensor fields on $\mi{I}$.
Then:
\begin{itemize}
\item If $\ms{A}_j \rightarrow^M \mf{A}_j$ for each $1 \leq j \leq N$, then $\mi{S} ( \ms{A}_1, \dots, \ms{A}_N ) \rightarrow^M \mi{S} ( \mf{A}_1, \dots, \mf{A}_N )$.

\item If $\ms{A}_j \Rightarrow^M \mf{A}_j$ for each $1 \leq j \leq N$, then $\mi{S} ( \ms{A}_1, \dots, \ms{A}_N ) \Rightarrow^M \mi{S} ( \mf{A}_1, \dots, \mf{A}_N )$.
\end{itemize}
\end{proposition}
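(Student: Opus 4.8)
The plan is to reduce the claim to a single pointwise estimate on coordinate components, exploiting the fact that the schematic operators are metric-free polynomials. Fix a compact coordinate system $( U, \varphi )$ on $\mi{I}$, and fix a representative $\mi{S} ( \ms{A}_1, \dots, \ms{A}_N ) = \sum_{ j = 1 }^T \mc{Q}_j ( \ms{A}_1 \otimes \dots \otimes \ms{A}_N )$ with each $\mc{Q}_j$ schematically trivial, so that by Definition \ref{def.schematic_vertical} the corresponding representative of $\mi{S} ( \mf{A}_1, \dots, \mf{A}_N )$ is $\sum_{ j = 1 }^T \mc{Q}_j ( \mf{A}_1 \otimes \dots \otimes \mf{A}_N )$. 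Since each $\mc{Q}_j$ is a composition of component permutations, non-metric contractions, and constant multiplications, in $\varphi$-coordinates every component of $\mc{Q}_j ( \ms{A}_1 \otimes \dots \otimes \ms{A}_N )$ is a constant multiple of a finite sum of monomials $( \ms{A}_1 )_{ I_1 } \cdots ( \ms{A}_N )_{ I_N }$ in the $\varphi$-components of the $\ms{A}_i$'s; crucially, no factor of $\gv$ or $\gv^{-1}$ appears, which is exactly why the statement is posed for $\mi{S} ( \ms{A}_1, \dots, \ms{A}_N )$ rather than for $\mi{S} ( \gv; \dots )$. Thus it suffices to control each such monomial and its $\varphi$-derivatives up to order $M$.

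The key step is then a telescoping-plus-Leibniz estimate. Abbreviating $a_k := ( \ms{A}_k )_{ I_k }$ and $b_k := ( \mf{A}_k )_{ I_k }$, I would write $\prod_k a_k - \prod_k b_k = \sum_{ k = 1 }^N ( b_1 \cdots b_{ k - 1 } ) ( a_k - b_k ) ( a_{ k + 1 } \cdots a_N )$, apply $\varphi$-derivatives of order $m \leq M$ (which distribute among the $N$ factors, so each factor receives at most $M$ of them), and sum over monomials and over $j$, obtaining
\begin{equation*}
| \mi{S} ( \ms{A}_1, \dots, \ms{A}_N ) - \mi{S} ( \mf{A}_1, \dots, \mf{A}_N ) |_{ M, \varphi } \lesssim \sum_{ k = 1 }^N | \ms{A}_k - \mf{A}_k |_{ M, \varphi } \prod_{ i \neq k } \paren{ | \ms{A}_i |_{ M, \varphi } + | \mf{A}_i |_{ M, \varphi } } \text{,}
\end{equation*}
with implicit constant depending only on $M$, $N$, and the operators $\mc{Q}_j$.

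To finish, I would observe that in both parts of the proposition the hypothesis implies $\ms{A}_j \rightarrow^M \mf{A}_j$, hence $| \ms{A}_j - \mf{A}_j |_{ M, \varphi }$ is bounded on $( 0, \delta ] \times U$ for some $\delta > 0$; together with boundedness of the $\rho$-independent smooth $\mf{A}_j$'s on the compact $\bar{U}$ and of the smooth $\ms{A}_j$'s on the compact $[ \delta, \rho_0 ] \times \bar{U}$, this shows each factor $| \ms{A}_i |_{ M, \varphi } + | \mf{A}_i |_{ M, \varphi }$ is $\lesssim 1$ uniformly on $( 0, \rho_0 ] \times U$. Plugging this into the displayed bound gives $| \mi{S} ( \ms{A}_1, \dots, \ms{A}_N ) - \mi{S} ( \mf{A}_1, \dots, \mf{A}_N ) |_{ M, \varphi } \lesssim \sum_{ k = 1 }^N | \ms{A}_k - \mf{A}_k |_{ M, \varphi }$. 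The first assertion then follows by taking $\sup_{ \{ \sigma \} \times U }$ and sending $\sigma \searrow 0$; the second follows by additionally dividing by $\sigma$, integrating over $( 0, \rho_0 )$, and taking $\sup_U$, so that the integral bound for $\mi{S} ( \ms{A}_1, \dots, \ms{A}_N ) - \mi{S} ( \mf{A}_1, \dots, \mf{A}_N )$ is controlled by the corresponding integral bounds coming from $\ms{A}_k \Rightarrow^M \mf{A}_k$.

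I do not anticipate a real obstacle here; the proposition is essentially the statement that $\rightarrow^M$- and $\Rightarrow^M$-convergence are stable under the finite-dimensional, metric-free polynomial algebra generated by the schematic operators. The only points needing care are verifying that the schematic structure genuinely reduces matters to monomials with no concealed metric factors, and extracting local $C^M$-boundedness of the factors from the convergence hypotheses; the rest is the routine Leibniz computation above.
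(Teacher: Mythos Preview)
Your proposal is correct and follows essentially the same approach as the paper's proof: both reduce to a compact coordinate system, use that schematically trivial operators are metric-free polynomials in the component functions, apply a telescoping/Leibniz estimate to obtain $| \mi{S} ( \ms{A}_1, \dots, \ms{A}_N ) - \mi{S} ( \mf{A}_1, \dots, \mf{A}_N ) |_{ M, \varphi } \lesssim_C \sum_k | \ms{A}_k - \mf{A}_k |_{ M, \varphi }$, and then read off both convergence statements. Your justification of the uniform $C^M$-boundedness of the factors (splitting into $( 0, \delta ]$ and $[ \delta, \rho_0 ]$) is in fact more explicit than the paper's, which simply invokes compactness.
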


\if\comp1
\begin{proof}
See Appendix \ref{sec.comp_limit_schematic}.
\end{proof}
\fi

Next, we derive boundary limits for vertical tensor fields satisfying differential equations.

\begin{proposition} \label{thm.limit_positive}
Let $( \mi{M}, g )$ be an FG-aAdS segment, and let $M \geq 0$.
Furthermore, let $\ms{A}$, $\ms{G}$ be vertical tensor fields on $\mi{M}$, and suppose they satisfy the following relation:
\begin{equation}
\label{eq.limit_positive_eq} \rho \mi{L}_\rho \ms{A} - c \ms{A} = \ms{G} \text{,} \qquad c > 0 \text{.}
\end{equation}
\begin{itemize}
\item For any $\rho_\ast \in ( 0, \rho_0 ]$ and any compact coordinate system $( U, \varphi )$ on $\mi{I}$, we have that
\begin{equation}
\label{eq.limit_positive_bound} | \ms{A} |_{ M, \varphi } \leq \frac{ \rho^c }{ \rho_\ast^c } \cdot | \ms{A} |_{ M, \varphi } |_{ \rho_\ast } + \rho^c \int_{ \min ( \rho, \rho_\ast ) }^{ \max ( \rho, \rho_\ast ) } \sigma^{ - c - 1 } | \ms{G} |_{ M, \varphi } |_\sigma d \sigma \text{.}
\end{equation}

\item If $\ms{G} \rightarrow^M \mf{G}$, where $\mf{G}$ is a tensor field on $\mi{I}$, then
\begin{equation}
\label{eq.limit_positive} - c \ms{A} \rightarrow^M \mf{G} \text{,} \qquad \rho \mi{L}_\rho \ms{A} \rightarrow^M 0 \text{.}
\end{equation}

\item If $\ms{G} \Rightarrow^M \mf{G}$, where $\mf{G}$ is a tensor field on $\mi{I}$, then
\begin{equation}
\label{eq.limit_positive_ex} - c \ms{A} \Rightarrow^M \mf{G} \text{,} \qquad \rho \mi{L}_\rho \ms{A} \Rightarrow^M 0 \text{.}
\end{equation}
\end{itemize}
\end{proposition}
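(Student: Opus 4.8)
The plan is to regard \eqref{eq.limit_positive_eq} as a family of scalar singular ODEs of the form \eqref{eq.intro_ode}, one for each coordinate component, and to solve them with an integrating factor. Fix a compact coordinate system $(U,\varphi)$ on $\mi{I}$. Since the $\varphi$-coordinates are $\rho$-independent, $\mi{L}_\rho$ acts on $\varphi$-components of a vertical tensor field exactly as $\partial_\rho$, and $\partial_\rho$ commutes with the $\varphi$-coordinate derivatives $\partial_{a_1\dots a_m}$. Hence, for every component and every multi-index of order $m\le M$, the scalars $y:=\partial^m_{a_1\dots a_m}\ms{A}^{b_1\dots b_k}_{c_1\dots c_l}$ and $f:=\partial^m_{a_1\dots a_m}\ms{G}^{b_1\dots b_k}_{c_1\dots c_l}$ satisfy $\rho\,\partial_\rho y-cy=f$. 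Observing that $\partial_\rho(\rho^{-c}y)=\rho^{-c-1}f$ and integrating between $\rho_\ast$ and $\rho$ gives $y(\rho)=(\rho/\rho_\ast)^c\,y(\rho_\ast)+\rho^c\int_{\rho_\ast}^{\rho}\sigma^{-c-1}f(\sigma)\,d\sigma$; taking absolute values, bounding the (possibly reversed) integral by $\int_{\min(\rho,\rho_\ast)}^{\max(\rho,\rho_\ast)}\sigma^{-c-1}|f(\sigma)|\,d\sigma$, and summing over all components and all $m\le M$ yields \eqref{eq.limit_positive_bound}.

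For the second bullet, I would first reduce to the case $\mf{G}=0$: replacing $\ms{A}$ by $\ms{A}+c^{-1}\mf{G}$ (with $\mf{G}$ viewed as a $\rho$-independent vertical tensor field, so $\mi{L}_\rho\mf{G}=0$ by Definition \ref{def.aads_vertical_lie}) and $\ms{G}$ by $\ms{G}-\mf{G}$ preserves \eqref{eq.limit_positive_eq}, turns the hypothesis into $\ms{G}\rightarrow^M 0$ and the conclusion $-c\ms{A}\rightarrow^M\mf{G}$ into $\ms{A}\rightarrow^M 0$, and then $\rho\mi{L}_\rho\ms{A}=c\ms{A}+\ms{G}\rightarrow^M 0$ follows. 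So assume $\ms{G}\rightarrow^M 0$. Given $\varepsilon>0$, choose $\rho_\ast\in(0,\rho_0]$ with $\sup_{\{\sigma\}\times U}|\ms{G}|_{M,\varphi}<\varepsilon$ for all $\sigma\le\rho_\ast$; since $\ms{A}$ is smooth and $\bar U$ compact, $|\ms{A}|_{M,\varphi}|_{\rho_\ast}$ is bounded on $U$, say by $C_\ast$. Feeding these into \eqref{eq.limit_positive_bound} for $\rho<\rho_\ast$ bounds the first term by $C_\ast(\rho/\rho_\ast)^c\to0$ and the second term by $\varepsilon\,\rho^c\int_{\rho}^{\rho_\ast}\sigma^{-c-1}\,d\sigma\le\varepsilon/c$. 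Hence $\limsup_{\rho\searrow0}\sup_{\{\rho\}\times U}|\ms{A}|_{M,\varphi}\le\varepsilon/c$, and letting $\varepsilon\searrow0$ gives $\ms{A}\rightarrow^M 0$.

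For the third bullet, after the same reduction it suffices, given $\ms{G}\Rightarrow^M 0$, to prove $\ms{A}\Rightarrow^M 0$ (the $\rightarrow^M$ half is the second bullet, and $\rho\mi{L}_\rho\ms{A}=c\ms{A}+\ms{G}$ inherits both properties). I would apply \eqref{eq.limit_positive_bound} with $\rho_\ast=\rho_0$, divide by $\rho$, and integrate in $\rho$ over $(0,\rho_0]$: the first term contributes $\rho_0^{-c}|\ms{A}|_{M,\varphi}|_{\rho_0}\int_0^{\rho_0}\rho^{c-1}\,d\rho=c^{-1}|\ms{A}|_{M,\varphi}|_{\rho_0}$, and for the second term Tonelli's theorem (all integrands nonnegative) permits swapping the order of integration over $\{0<\rho<\sigma<\rho_0\}$, which after evaluating $\int_0^{\sigma}\rho^{c-1}\,d\rho$ produces $c^{-1}\int_0^{\rho_0}\sigma^{-1}|\ms{G}|_{M,\varphi}|_\sigma\,d\sigma$. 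Taking $\sup_U$ and using that $|\ms{A}|_{M,\varphi}|_{\rho_0}$ is bounded on $\bar U$ and that $\sup_U\int_0^{\rho_0}\sigma^{-1}|\ms{G}|_{M,\varphi}|_\sigma\,d\sigma<\infty$ by $\ms{G}\Rightarrow^M 0$, I obtain $\sup_U\int_0^{\rho_0}\sigma^{-1}|\ms{A}|_{M,\varphi}|_\sigma\,d\sigma<\infty$, which is the remaining condition in \eqref{eq.aads_limit_ex}.

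This is essentially a one-variable ODE argument, so I do not expect a serious obstacle; the points needing the most care are keeping every estimate \emph{uniform in} $U$ (which forces the order-of-integration swap in the third bullet to precede taking $\sup_U$) and tracking the direction of integration in \eqref{eq.limit_positive_bound} when $\rho>\rho_\ast$.
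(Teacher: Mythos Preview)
Your proof is correct and follows essentially the same integrating-factor/integral-representation strategy as the paper: rewrite \eqref{eq.limit_positive_eq} as $\partial_\rho(\rho^{-c}\ms{A})=\rho^{-c-1}\ms{G}$, integrate, and for the third bullet use Fubini/Tonelli after dividing by $\rho$. The one noteworthy difference is in the second bullet: you first reduce to $\mf{G}=0$ by the shift $\ms{A}\mapsto\ms{A}+c^{-1}\mf{G}$ and then run a clean $\varepsilon$--$\rho_\ast$ argument directly from \eqref{eq.limit_positive_bound}, whereas the paper keeps $\mf{G}$ in place, first establishes a uniform bound $\|\ms{A}\|_{M,\varphi}<\infty$, and then evaluates the integral identity at $\rho=\rho_1^2$ with $\rho_\ast=\rho_1$ so that the boundary term picks up a factor $\rho_1^c$. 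Your route is slightly more elementary and avoids the auxiliary boundedness step; the paper's route has the minor advantage of making the uniform bound on $\ms{A}$ explicit along the way, which it reuses elsewhere.
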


\begin{proof}
Let $( k, l )$ be the ranks of both $\ms{A}$ and $\ms{G}$.
Moreover, we fix an arbitrary compact coordinate system $( U, \varphi )$ on $\mi{I}$, and we always index with respect to $\varphi$-coordinates.

First, observe that \eqref{eq.limit_positive_eq} can be written as
\[
\mi{L}_\rho ( \rho^{ - c } \ms{A} ) = \rho^{ - c - 1 } \ms{G} \text{.}
\]
Fixing $0 < \rho_\ast \leq \rho_0$ and integrating the above from $\rho = \rho_\ast$ yields
\[
\ms{A}^{ b_1 \dots b_k }_{ c_1 \dots c_l } = \frac{ \rho^c }{ \rho_\ast^c } \cdot \ms{A}^{ b_1 \dots b_k }_{ c_1 \dots c_l } |_{ \rho_\ast } + \rho^c \int_{ \rho_\ast }^\rho \sigma^{ - c - 1 } \ms{G}^{ b_1 \dots b_k }_{ c_1 \dots c_l } |_\sigma d \sigma \text{.}
\]
We now differentiate the above along $\mi{I}$, and we apply the mean value and dominated convergence theorems to move the derivatives under the integral.
This yields, for all $0 \leq m \leq M$,
\begin{equation}
\label{eql.limit_positive_0} \partial^m_{ a_1 \dots a_m } \ms{A}^{ b_1 \dots b_k }_{ c_1 \dots c_l } = \frac{ \rho^c }{ \rho_\ast^c } \cdot \partial^m_{ a_1 \dots a_m } \ms{A}^{ b_1 \dots b_k }_{ c_1 \dots c_l } |_{ \rho_\ast } + \rho^c \int_{ \rho_\ast }^\rho \sigma^{ - c - 1 } \partial^m_{ a_1 \dots a_m } \ms{G}^{ b_1 \dots b_k }_{ c_1 \dots c_l } |_\sigma d \sigma \text{.}
\end{equation}
Taking absolute values of both sides of \eqref{eql.limit_positive_0} and summing appropriately, we obtain \eqref{eq.limit_positive_bound}.

Suppose now that $\ms{G} \rightarrow^M \mf{G}$.
Since $\ms{G}$ converges and $( U, \varphi )$ is compact, it follows that
\begin{equation}
\label{eql.limit_positive_1} \| \ms{A} |_{ \rho_0 } \|_{ M, \varphi } < \infty \text{,} \qquad \| \ms{G} \|_{ M, \varphi } < \infty \text{.}
\end{equation}
Applying \eqref{eql.limit_positive_0}, with $\rho_\ast = \rho_0$, results in the bound
\[
| \partial^m_{ a_1 \dots a_m } \ms{A}^{ b_1 \dots b_k }_{ c_1 \dots c_l } | = \frac{ \rho^c }{ \rho_0^c } \cdot | \partial^m_{ a_1 \dots a_m } \ms{A}^{ b_1 \dots b_k }_{ c_1 \dots c_l } |_{ \rho_0 } | + \rho^c \left| \int_{ \rho_0 }^\rho \sigma^{ - c - 1 } d \sigma \right| \| \ms{G} \|_{ M, \varphi } \text{.}
\]
Summing the above over $m$ and over all the indices, we obtain that
\begin{equation}
\label{eql.limit_positive_2} \| \ms{A} \|_{ M, \varphi } \lesssim \| \ms{A} |_{ \rho_0 } \|_{ M, \varphi } + \frac{1}{c} \left| 1 - \frac{ \rho^c }{ \rho_0^c } \right| \| \ms{G} \|_{ M, \varphi } < \infty \text{.}
\end{equation}

Let $\rho_1 > 0$ be small enough so that $\rho_1^2 \leq \rho_1 \leq \rho_0$.
Applying \eqref{eql.limit_positive_0} again, at $\rho = \rho_1^2 < 0$ and with $\rho_\ast := \rho_1 < \rho_0$, we then obtain, for any $0 \leq m \leq M$, that
\begin{align}
\label{eql.limit_positive_10} \partial^m_{ a_1 \dots a_m } \ms{A}^{ b_1 \dots b_k }_{ c_1 \dots c_l } |_{ \rho_1^2 } &= \rho_1^c \cdot \partial^m_{ a_1 \dots a_m } \ms{A}^{ b_1 \dots b_k }_{ c_1 \dots c_l } |_{ \rho_1 } - \frac{1}{c} ( 1 - \rho_1^c ) \cdot \partial^m_{ a_1 \dots a_m } \mf{G}^{ b_1 \dots b_k }_{ c_1 \dots c_l } \\
\notag &\qquad + \rho_1^{ 2 c } \int_{ \rho_1 }^{ \rho_1^2 } \sigma^{ - c - 1 } ( \partial^m_{ a_1 \dots a_m } \ms{G}^{ b_1 \dots b_k }_{ c_1 \dots c_l } - \partial^m_{ a_1 \dots a_m } \mf{G}^{ b_1 \dots b_k }_{ c_1 \dots c_l } ) |_\sigma d \sigma \text{.}
\end{align}
Letting $I$ denote the last term on the right-hand side of \eqref{eql.limit_positive_10}, we see that
\[
| I | \leq \left| \rho_1^{ 2 c } \int_{ \rho_1 }^{ \rho_1^2 } \sigma^{ - c - 1 } d \sigma \right| \sup_{ ( 0, \rho_1 ] \times U } | \ms{G} - \mf{G} |_{ M, \varphi } \leq \frac{1}{c} \sup_{ ( 0, \rho_1 ] \times U } | \ms{G} - \mf{G} |_{ M, \varphi } \text{,}
\]
which, when combined with \eqref{eql.limit_positive_10}, implies
\[
\left| \left. \left( \ms{A} + \frac{1}{c} \mf{G} \right) \right|_{ \rho_1^2 } \right|_{ M, \varphi } \lesssim \rho_1^c \| \ms{A} \|_{ M, \varphi } + \frac{ \rho_1^c }{c} \| \mf{G} \|_{ M, \varphi } + \frac{1}{c} \sup_{ ( 0, \rho_1 ] \times U } | \ms{G} - \mf{G} |_{ M, \varphi } \text{.}
\]

Letting $\rho_1 \searrow 0$ in the above results in the first limit of \eqref{eq.limit_positive}; this follows immediately from \eqref{eql.limit_positive_2}, the assumption $\ms{G} \rightarrow^M \mf{G}$, and that $( U, \varphi)$ is a compact coordinate system.
The remaining limit in \eqref{eq.limit_positive} then follows from the above and from \eqref{eq.limit_positive_eq}.

Finally, assume in addition that $\ms{G} \Rightarrow^M \mf{G}$.
We rewrite the identity \eqref{eql.limit_positive_0}, with $\rho_\ast = \rho_0$, as
\begin{align*}
\partial^m_{ a_1 \dots a_m } \left( \ms{A}^{ b_1 \dots b_k }_{ c_1 \dots c_l } + \frac{1}{c} \mf{G}^{ b_1 \dots b_k }_{ c_1 \dots c_l } \right) &= \frac{ \rho^c }{ \rho_0^c } \cdot \partial^m_{ a_1 \dots a_m } \ms{A}^{ b_1 \dots b_k }_{ c_1 \dots c_l } |_{ \rho_0 } + \rho^c \int_{ \rho_0 }^\rho \sigma^{ - c - 1 } \partial^m_{ a_1 \dots a_m } \ms{G}^{ b_1 \dots b_k }_{ c_1 \dots c_l } |_\sigma d \sigma \\
&\qquad + \frac{1}{c} \cdot \frac{ \rho^c }{ \rho_0^c } \cdot \partial^m_{ a_1 \dots a_m } \mf{G}^{ b_1 \dots b_k }_{ c_1 \dots c_l } + \frac{1}{c} \left( 1 - \frac{ \rho^c }{ \rho_0^c } \right) \partial^m_{ a_1 \dots a_m } \mf{G}^{ b_1 \dots b_k }_{ c_1 \dots c_l } \\
&= \frac{ \rho^c }{ \rho_0^c } \cdot \partial^m_{ a_1 \dots a_m } \ms{A}^{ b_1 \dots b_k }_{ c_1 \dots c_l } |_{ \rho_0 } + \frac{1}{c} \cdot \frac{ \rho^c }{ \rho_0^c } \cdot \partial^m_{ a_1 \dots a_m } \mf{G}^{ b_1 \dots b_k }_{ c_1 \dots c_l } \\
&\qquad + \rho^c \int_{ \rho_0 }^\rho \sigma^{ - c - 1 } ( \partial^m_{ a_1 \dots a_m } \ms{G}^{ b_1 \dots b_k }_{ c_1 \dots c_l } - \partial^m_{ a_1 \dots a_m } \mf{G}^{ b_1 \dots b_k }_{ c_1 \dots c_l } ) |_\sigma d \sigma \text{.}
\end{align*}
Dividing the above by $\rho$ and integrating its absolute value over $( 0, \rho_0 ]$, we obtain
\begin{align*}
\int_0^{ \rho_0 } \frac{1}{ \sigma } \left| \ms{A} + \frac{1}{c} \mf{G} \right|_{ M, \varphi } |_\sigma d \sigma &\lesssim \int_0^{ \rho_0 } \tau^{ c - 1 } d \tau \cdot ( | \ms{A} |_{ \rho_0 } |_{ M, \varphi } + | \mf{G} |_{ M, \varphi } ) \\
&\qquad + \int_0^{ \rho_0 } \tau^{ c - 1 } \int_\tau^{ \rho_0 } \sigma^{ - c - 1 } | \ms{G} - \mf{G} |_{ M, \varphi } |_\sigma d \sigma d \tau \text{.}
\end{align*}
The second term on the right-hand side can be simplified using Fubini's theorem, hence we have
\[
\int_0^{ \rho_0 } \frac{1}{ \sigma } \left| \ms{A} + \frac{1}{c} \mf{G} \right|_{ M, \varphi } |_\sigma d \sigma \lesssim \frac{1}{c} ( | \ms{A} |_{ \rho_0 } |_{ M, \varphi } + | \mf{G} |_{ M, \varphi } ) + \int_0^{ \rho_0 } \sigma^{ - 1 } | \ms{G} - \mf{G} |_{ M, \varphi } |_\sigma d \sigma \text{.}
\]
Note the right-hand side of the above is indeed finite, since $\ms{G} \Rightarrow^M \mf{G}$ and $( U, \varphi )$ is compact, and thus we conclude $-c \ms{A} \Rightarrow^M \mf{G}$.
The remaining limit $\rho \mi{L}_\rho \ms{A} \Rightarrow^M 0$ now follows from \eqref{eq.limit_positive_eq}.
\end{proof}

\begin{proposition} \label{thm.limit_zero}
Let $( \mi{M}, g )$ be an FG-aAdS segment, and let $M \geq 0$.
In addition, let $\ms{A}$ be a vertical tensor field on $\mi{M}$, and suppose that $\rho \mi{L}_\rho \ms{A} \Rightarrow^M \mf{G}$ for some tensor field $\mf{G}$ on $\mi{I}$.
Then, then there exists another tensor field $\mf{H}$ on $\mi{I}$ such that
\begin{equation}
\label{eq.limit_zero} \ms{A} - ( \log \rho ) \cdot \mf{G} \rightarrow^M \mf{H} \text{.}
\end{equation}
\end{proposition}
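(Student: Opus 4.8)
The plan is to recognize the hypothesis as the ODE $\rho \mi{L}_\rho \ms{A} = \ms{G}$, with $\ms{G} := \rho \mi{L}_\rho \ms{A}$, and to integrate it, isolating the expected logarithmic divergence by subtracting off $\mf{G}$. Concretely: write $\mi{L}_\rho \ms{A} = \rho^{-1} \ms{G}$, integrate over an interval $[\rho, \rho_0] \subset (0, \rho_0]$ bounded away from the boundary, split $\ms{G} = \mf{G} + (\ms{G} - \mf{G})$, and use $\int_\rho^{\rho_0} \sigma^{-1}\, d\sigma = \log \rho_0 - \log \rho$ to extract the $(\log \rho)\, \mf{G}$ term; the remainder is governed by $\sigma^{-1}(\ms{G} - \mf{G})$, which is integrable down to $\rho = 0$ precisely because $\ms{G} \Rightarrow^M \mf{G}$.

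First I would fix an arbitrary compact coordinate system $( U, \varphi )$ on $\mi{I}$ and index with respect to $\varphi$-coordinates. Since $\ms{A}$ is smooth, the relation $\mi{L}_\rho \ms{A} = \rho^{-1} \ms{G}$ may be integrated componentwise, and the resulting $\varphi$-derivatives up to order $m \leq M$ may be moved under the integral by the mean value and dominated convergence theorems, exactly as in the proof of Proposition \ref{thm.limit_positive}. After splitting $\partial^m \ms{G} = \partial^m \mf{G} + \partial^m ( \ms{G} - \mf{G} )$ and using that $\mf{G}$ is $\rho$-independent (so that $\partial^m$, involving no $\rho$-derivative, commutes with multiplication by $\log \rho$), this gives, for all $0 \leq m \leq M$ and all index tuples,
\begin{align*}
\partial^m_{ a_1 \dots a_m } &\left( \ms{A}^{ b_1 \dots b_k }_{ c_1 \dots c_l } - ( \log \rho ) \, \mf{G}^{ b_1 \dots b_k }_{ c_1 \dots c_l } \right) \Big|_\rho = \partial^m_{ a_1 \dots a_m } \ms{A}^{ b_1 \dots b_k }_{ c_1 \dots c_l } \big|_{ \rho_0 } - ( \log \rho_0 ) \, \partial^m_{ a_1 \dots a_m } \mf{G}^{ b_1 \dots b_k }_{ c_1 \dots c_l } \\
&\qquad - \int_\rho^{ \rho_0 } \sigma^{-1} \, \partial^m_{ a_1 \dots a_m } ( \ms{G} - \mf{G} )^{ b_1 \dots b_k }_{ c_1 \dots c_l } \big|_\sigma \, d \sigma \text{.}
\end{align*}

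Next I would define $\mf{H}$ in $\varphi$-coordinates by $\mf{H}^{ b_1 \dots b_k }_{ c_1 \dots c_l } := \ms{A}^{ b_1 \dots b_k }_{ c_1 \dots c_l } |_{ \rho_0 } - ( \log \rho_0 ) \, \mf{G}^{ b_1 \dots b_k }_{ c_1 \dots c_l } - \int_0^{ \rho_0 } \sigma^{-1} ( \ms{G} - \mf{G} )^{ b_1 \dots b_k }_{ c_1 \dots c_l } |_\sigma \, d \sigma$, where the integral converges absolutely since $\ms{G} \Rightarrow^M \mf{G}$ yields, by \eqref{eq.aads_limit_ex}, the bound $\sup_U \int_0^{ \rho_0 } \sigma^{-1} | \ms{G} - \mf{G} |_{ M, \varphi } |_\sigma \, d \sigma < \infty$, which dominates every component integral together with its $\varphi$-derivatives through order $M$. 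Subtracting the displayed identity from this definition, summing over all indices and over $0 \leq m \leq M$, gives
\[
\sup_{ \{ \rho \} \times U } \left| \ms{A} - ( \log \rho ) \, \mf{G} - \mf{H} \right|_{ M, \varphi } \leq \sup_U \int_0^\rho \sigma^{-1} | \ms{G} - \mf{G} |_{ M, \varphi } |_\sigma \, d \sigma \longrightarrow 0 \quad \text{as } \rho \searrow 0 \text{,}
\]
by absolute continuity of the integral. Finally I would note that $\mf{H}$, although defined through coordinates, is genuinely a tensor field on $\mi{I}$: for each $\rho$ the restriction of $\ms{A} - ( \log \rho ) \, \mf{G}$ to $\{ \rho = \sigma \} \cong \mi{I}$ is a tensor field, and these converge uniformly on $\bar{U}$ (in particular in $C^0$), so the limit transforms correctly and is independent of the chosen $( U, \varphi )$. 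This yields $\ms{A} - ( \log \rho ) \, \mf{G} \rightarrow^M \mf{H}$, which is \eqref{eq.limit_zero}.

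I do not expect a serious obstacle here: the argument is a one-dimensional ODE computation, with the two points needing care being the justification for differentiating under the integral sign (already carried out in Proposition \ref{thm.limit_positive}) and — the genuinely essential point — the use of the \emph{stronger} $\Rightarrow^M$-hypothesis rather than mere $\rightarrow^M$-convergence. It is exactly the integrability half of \eqref{eq.aads_limit_ex} that makes $\int_0^{ \rho_0 } \sigma^{-1} ( \ms{G} - \mf{G} ) \, d \sigma$ finite; without it the remainder $\ms{A} - ( \log \rho ) \, \mf{G}$ need not converge, which is why the proposition demands $\Rightarrow$ in its hypothesis.
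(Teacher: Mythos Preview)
Your proposal is correct and follows essentially the same approach as the paper's proof: integrate $\mi{L}_\rho \ms{A} = \rho^{-1}\ms{G}$ from $\rho_0$, split off $\mf{G}$ to extract the $\log\rho$ term, define $\mf{H}$ by the resulting integral formula, and use the integrability in $\Rightarrow^M$ both to make the integral converge and to pass $\varphi$-derivatives under it. The only addition you make is the explicit remark that $\mf{H}$ is coordinate-independent, which the paper leaves implicit.
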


\begin{proof}
Let $( k, l )$ be the rank of $\ms{A}$, and let $( U, \varphi )$ be a compact coordinate system on $\mi{I}$.
Applying the fundamental theorem of calculus, we obtain (with respect to $\varphi$-coordinates)
\[
\ms{A}^{ b_1 \dots b_k }_{ c_1 \dots c_l } = \ms{A}^{ b_1 \dots b_k }_{ c_1 \dots c_l } |_{ \rho_0 } + \int_{ \rho_0 }^\rho \sigma^{ - 1 } \cdot \sigma \mi{L}_\rho \ms{A}^{ b_1 \dots b_k }_{ c_1 \dots c_l } |_\sigma \cdot d \sigma \text{.}
\]
Differentiating the above yields, for any $0 \leq m \leq M$,
\begin{align*}
\partial^m_{ a_1 \dots a_m } \ms{A}^{ b_1 \dots b_k }_{ c_1 \dots c_l } &= \partial^m_{ a_1 \dots a_m } \ms{A}^{ b_1 \dots b_k }_{ c_1 \dots c_l } |_{ \rho_0 } + \int_{ \rho_0 }^\rho \sigma^{ - 1 } \cdot \partial^m_{ a_1 \dots a_m } ( \sigma \mi{L}_\rho \ms{A}^{ b_1 \dots b_k }_{ c_1 \dots c_l } ) |_\sigma \cdot d \sigma \\
&= \partial^m_{ a_1 \dots a_m } \ms{A}^{ b_1 \dots b_k }_{ c_1 \dots c_l } |_{ \rho_0 } + \int_{ \rho_0 }^\rho \sigma^{ - 1 } d \sigma \cdot \partial^m_{ a_1 \dots a_m } \mf{G}^{ b_1 \dots b_k }_{ c_1 \dots c_l } \\
&\qquad + \int_{ \rho_0 }^\rho \sigma^{ - 1 } \cdot \partial^m_{ a_1 \dots a_m } ( \sigma \mi{L}_\rho \ms{A}^{ b_1 \dots b_k }_{ c_1 \dots c_l } - \mf{G}^{ b_1 \dots b_k }_{ c_1 \dots c_l } ) |_\sigma d \sigma \text{,}
\end{align*}
Rearranging the above, we then obtain
\begin{align}
\label{eql.limit_zero_1} \partial^m_{ a_1 \dots a_m } [ \ms{A}^{ b_1 \dots b_k }_{ c_1 \dots c_l } - ( \log \rho ) \mf{G}^{ b_1 \dots b_k }_{ c_1 \dots c_l } ] &= \partial^m_{ a_1 \dots a_m } \ms{A}^{ b_1 \dots b_k }_{ c_1 \dots c_l } |_{ \rho_0 } - ( \log \rho_0 ) \cdot \partial^m_{ a_1 \dots a_m } \mf{G}^{ b_1 \dots b_k }_{ c_1 \dots c_l } \\
\notag &\qquad + \int_{ \rho_0 }^\rho \sigma^{ - 1 } \cdot \partial^m_{ a_1 \dots a_m } ( \sigma \mi{L}_\rho \ms{A}^{ b_1 \dots b_k }_{ c_1 \dots c_l } - \mf{G}^{ b_1 \dots b_k }_{ c_1 \dots c_l } ) |_\sigma d \sigma \text{.}
\end{align}
The first two terms on the right-hand side of \eqref{eql.limit_zero_1} are constant in $\rho$ and hence do not change as $\rho \searrow 0$.
Moreover, notice that for the last term in \eqref{eql.limit_zero_1}, we have
\begin{equation}
\label{eql.limit_zero_2} \int_{ \rho_0 }^\rho \sigma^{ - 1 } \left| \partial^m_{ a_1 \dots a_m } ( \sigma \mi{L}_\rho \ms{A}^{ b_1 \dots b_k }_{ c_1 \dots c_l } - \mf{G}^{ b_1 \dots b_k }_{ c_1 \dots c_l } ) |_\sigma \right| d \sigma \lesssim \int_0^{ \rho_0 } \frac{1}{ \sigma } | \rho \mi{L}_\rho \ms{A} - \mf{G} |_{ M, \varphi } |_\sigma d \sigma < \infty \text{.}
\end{equation}

Define now the tensor field $\mf{H}$ on $\mi{I}$ by
\begin{equation}
\label{eql.limit_zero_3} \mf{H}^{ b_1 \dots b_k }_{ c_1 \dots c_l } := \ms{A}^{ b_1 \dots b_k }_{ c_1 \dots c_l } |_{ \rho_0 } - ( \log \rho_0 ) \cdot \mf{G}^{ b_1 \dots b_k }_{ c_1 \dots c_l } + \int_{ \rho_0 }^0 \sigma^{ - 1 } ( \sigma \mi{L}_\rho \ms{A}^{ b_1 \dots b_k }_{ c_1 \dots c_l } - \mf{G}^{ b_1 \dots b_k }_{ c_1 \dots c_l } ) |_\sigma d \sigma \text{.}
\end{equation}
We now differentiate \eqref{eql.limit_zero_3} along $\mi{I}$.
Using the dominated convergence theorem and the assumption $\rho \mi{L}_\rho \ms{A} \Rightarrow^M \mf{G}$, we see these derivatives can be moved inside the integral in the last term of \eqref{eql.limit_zero_3}.
Therefore, combining \eqref{eql.limit_zero_1}, \eqref{eql.limit_zero_2}, and \eqref{eql.limit_zero_3}, we conclude
\[
\lim_{ \sigma \searrow 0 } \sup_{ \{ \sigma \} \times U } | \ms{A} - ( \log \rho ) \cdot \mf{G} - \mf{H} |_{ M, \varphi } \lesssim \lim_{ \rho_0 \searrow 0 } \sup_U \int_0^{ \rho_0 } \frac{1}{ \sigma } | \rho \mi{L}_\rho \ms{A} - \mf{G} |_{ M, \varphi } |_\sigma d \sigma = 0 \text{,}
\]
which completes the proof of \eqref{eq.limit_zero}.
\end{proof}

\section{Fefferman--Graham Boundary Expansions} \label{sec.fg}

In this section, we give (in Theorem \ref{thm.fg_main}) the precise statement of our main result, which characterizes the near-boundary geometry of vacuum FG-aAdS segments.
We also describe several immediate corollaries of Theorem \ref{thm.fg_main}, including the following:
\begin{itemize}
\item (Theorem \ref{thm.fg_exp}) A partial Fefferman--Graham expansion of the spacetime metric, up to $n$-th order (which, in particular, includes both undetermined parts of the metric boundary data).

\item (Corollary \ref{thm.fg_exp_W}) Similar partial expansions for the spacetime Weyl curvature.
\end{itemize}
In Section \ref{sec.fg_proof}, we give the detailed proof of Theorem \ref{thm.fg_main}.
Finally, in Section \ref{sec.fg_schw_ads}, we derive this partial expansion for a familiar family of manifolds: Schwarzschild--AdS spacetimes.

\subsection{Statements of Results} \label{sec.fg_result}

The main objective of this subsection is to state the main result of this paper.
In addition, we establish a number of its immediate consequences; in Section \ref{sec.fg_exp}, we derive partial Fefferman--Graham expansions for various geometric quantities.

\subsubsection{The Main Result}

One major feature of our main result is that the boundary limits of lower ($\rho$-)derivatives of the vertical metric depend only on the boundary metric and some finite number of its derivatives.
The following definition gives a precise characterization of this dependence:

\begin{definition} \label{def.fg_depend}
Let $( \mi{M}, g )$ be an FG-aAdS segment, let $M \geq 0$, and let $\mf{A}$ be a tensor field on $\mi{I}$.
We say $\mf{A}$ \emph{depends only on $\gm$ to order $M$}, abbreviated $\mf{A} = \mc{D}^M ( \gm )$, iff $\mf{A}$ can be written as
\begin{equation}
\label{eq.fg_depend} \mf{A} = \sum_{ j = 1 }^N \mi{S} ( \mf{A}^j_1, \dots, \mf{A}^j_{ N_j } ) \text{,} \qquad N \geq 0 \text{,}
\end{equation}
where for each $1 \leq j \leq N$, we have that $N_j \geq 1$, and that
\begin{equation}
\label{eq.fg_depend_comp} \mf{A}^j_i \in \begin{cases} \{ \gm, \gm^{-1}, \Rm, \dots, \Dm^{ M - 2 } \Rm \} & M \geq 2 \text{,} \\ \{ \gm, \gm^{-1} \} & M < 2 \text{,} \end{cases} \text{,} \qquad 1 \leq i \leq N_j \text{.}
\end{equation}
\end{definition}

Furthermore, another feature of the boundary limits in our main result is that they vanish for all odd orders (at least, up to order $n$).
Therefore, in order to simplify the technical writing, we define the following notational shorthand to capture this property:

\begin{definition} \label{def.fg_depend_even}
Let $( \mi{M}, g )$, $M$, $\mf{A}$ be as in Definition \ref{def.fg_depend}.
For $k \in \Z$, we write $\mf{A} = \mc{D}^M ( \gm; k )$ iff:
\begin{itemize}
\item $\mf{A} = \mc{D}^M ( \gm )$ whenever $k$ is even.

\item $\mf{A} = 0$ whenever $k$ is odd.
\end{itemize}
\end{definition}

With the above, we can now give a precise statement of the main result:

\begin{theorem} \label{thm.fg_main}
Let $( \mi{M}, g )$ be a vacuum FG-aAdS segment, and fix $M_0 \geq n + 2$.
Furthermore, assume $n = \dim \mi{I} > 1$, and suppose that the following conditions hold:
\begin{itemize}
\item $\gv$ is locally bounded in $C^{ M_0 + 2 }$---that is, for any compact coordinate system $( U, \varphi )$ on $\mi{I}$,
\begin{equation}
\label{eq.fg_main_ass_g} \| \gv \|_{ M_0 + 2, \varphi } < \infty \text{.}
\end{equation}

\item $\mi{L}_\rho \gv$ is ``weakly locally bounded"---for any compact coordinate system $( U, \varphi )$ on $\mi{I}$,
\begin{equation}
\label{eq.fg_main_ass_Lg} \sup_U \int_0^{ \rho_0 } | \mi{L}_\rho \gv |_{ 0, \varphi } |_\sigma d \sigma < \infty \text{.}
\end{equation}
\end{itemize}
Then, the following statements hold:
\begin{itemize}
\item $\gv$ and $\gv^{-1}$ satisfy the following boundary limits:
\begin{equation}
\label{eq.fg_main_low_g} \gv \Rightarrow^{ M_0 } \gm \text{,} \qquad \gv^{-1} \Rightarrow^{ M_0 } \gm^{-1} \text{.}
\end{equation}
Furthermore, $\Rv$, $\Rcv$, and $\Rsv$ satisfy the following boundary limits:
\begin{equation}
\label{eq.fg_main_low_R} \Rv \Rightarrow^{ M_0 - 2 } \Rm \text{,} \qquad \Rcv \Rightarrow^{ M_0 - 2 } \Rcm \text{,} \qquad \Rsv \Rightarrow^{ M_0 - 2 } \Rsm \text{.}
\end{equation}

\item For any $0 \leq k < n$ and $1 \leq l < n$, there exist tensor fields
\begin{equation}
\label{eq.fg_main_high_depend} \gb{k} = \mc{D}^k ( \gm; k ) \text{,} \qquad \Rb{k} = \mc{D}^{ k + 2 } ( \gm; k ) \text{,} \qquad \bb{l} = \mc{D}^{ l + 1 } ( \gm; l )
\end{equation}
on $\mi{I}$, such that the following limits hold for each $0 \leq k < n$ and $1 \leq l < n$:
\begin{align}
\label{eq.fg_main_high} \mi{L}_\rho^k \gv \Rightarrow^{ M_0 - k } k! \, \gb{k} \text{,} &\qquad \rho \mi{L}_\rho^{ k + 1 } \gv \Rightarrow^{ M_0 - k } 0 \text{,} \\
\notag \mi{L}_\rho^k \Rv \Rightarrow^{ M_0 - k - 2 } k! \, \Rb{k} \text{,} &\qquad \rho \mi{L}_\rho^{ k + 1 } \Rv \Rightarrow^{ M_0 - k - 2 } 0 \text{,} \\
\notag \mi{L}_\rho^{ l - 1 } ( \Dv \mi{L}_\rho \gv ) \Rightarrow^{ M_0 - l - 1 } ( l - 1 )! \, \bb{l} \text{,} &\qquad \rho \mi{L}_\rho^l ( \Dv \mi{L}_\rho \gv ) \Rightarrow^{ M_0 - l - 1 } 0 \text{.}
\end{align}

\item There exist tensor fields
\begin{equation}
\label{eq.fg_main_top_depend} \gb{\star} = \mc{D}^n ( \gm; n ) \text{,} \qquad \Rb{\star} = \mc{D}^{ n + 2 } ( \gm; n ) \text{,} \qquad \bb{\star} = \mc{D}^{ n + 1 } ( \gm; n )
\end{equation}
on $\mi{I}$, such that the following boundary limits hold:
\begin{align}
\label{eq.fg_main_top} \rho \mi{L}_\rho^{ n + 1 } \gv &\Rightarrow^{ M_0 - n } n! \, \gb{\star} \text{,} \\
\notag \rho \mi{L}_\rho^{ n + 1 } \Rv &\Rightarrow^{ M_0 - n - 2 } n! \, \Rb{\star} \text{,} \\
\notag \rho \mi{L}_\rho^n ( \Dv \mi{L}_\rho \gv ) &\Rightarrow^{ M_0 - n - 1 } ( n - 1 )! \, \bb{\star} \text{.}
\end{align}
In general, $\gb{\star}$ satisfies the following constraints:
\begin{equation}
\label{eq.fg_main_top_constraint} \trace{\gm} \gb{\star} = 0 \text{,} \qquad \Dm \cdot \gb{\star} = 0 \text{.}
\end{equation}

\item Furthermore, the following relations hold:
\begin{equation}
\label{eq.fg_main_schouten} \begin{cases} \gb{2} = - \frac{ 1 }{ n - 2 } \left[ \Rcm - \frac{ 1 }{ 2 ( n - 1 ) } \Rsm \cdot \gm \right] & n > 2 \text{,} \\ ( \gb{\star}, \Rb{\star}, \bb{\star} ) = ( 0, 0, 0 ) & n = 2 \text{.} \end{cases}
\end{equation}

\item There exist tensor fields $\gb{\dagger}$, $\Rb{\dagger}$, $\bb{\dagger}$ on $\mi{I}$ such that
\begin{align}
\label{eq.fg_main_free} \mi{L}_\rho^n \gv - n! \, ( \log \rho ) \gb{\star} &\rightarrow^{ M_0 - n } n! \, \gb{\dagger} \text{,} \\
\notag \mi{L}_\rho^n \Rv - n! \, ( \log \rho ) \Rb{\star} &\rightarrow^{ M_0 - n - 2 } n! \, \Rb{\dagger} \text{,} \\
\notag \mi{L}_\rho^{ n - 1 } ( \Dv \mi{L}_\rho \gv ) - ( n - 1 )! \, ( \log \rho ) \bb{\star} &\rightarrow^{ M_0 - n - 1 } ( n - 1 )! \, \bb{\dagger} \text{.}
\end{align}
In addition, $\gb{\dagger}$ satisfies the following constraints:
\begin{equation}
\label{eq.fg_main_free_constraint} \trace{\gm} \gb{\dagger} = \mc{D}^n ( \gm; n ) \text{,} \qquad \Dm \cdot \gb{\dagger} = \mc{D}^{ n + 1 } ( \gm; n ) \text{.}
\end{equation}
\end{itemize}
\end{theorem}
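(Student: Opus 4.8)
The plan is to follow the three-step scheme of the introduction --- first establishing the order-$0$ and order-$1$ limits, then all radial derivatives up to order $n-1$, then the $n$-th order with its logarithm and free coefficient --- while throughout carrying the stronger $\Rightarrow$-convergence rather than mere $\rightarrow$-convergence, since it is precisely the former that, via Proposition \ref{thm.limit_zero}, produces the undetermined $n$-th order data. For Step 1 I would establish $\gv\Rightarrow^{M_0}\gm$, $\gv^{-1}\Rightarrow^{M_0}\gm^{-1}$, $\mi{L}_\rho\gv\Rightarrow^{M_0-1}0$, $\rho\mi{L}_\rho^2\gv\Rightarrow^{M_0-1}0$, and the curvature limits \eqref{eq.fg_main_low_R}. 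Since only $\gv\rightarrow^0\gm$ is given a priori, one first uses \eqref{eq.fg_main_ass_g} (precompactness of $\{\gv|_\sigma\}$ in $C^{M_0+1}$ together with uniqueness of the $C^0$-limit) to upgrade to $\gv\rightarrow^{M_0+1}\gm$; in particular $\gv,\gv^{-1},\Rcv,\Rsv$ are then locally bounded by Proposition \ref{thm.geom_bound}. The decay and integrability encoded in $\Rightarrow$ are extracted from the Einstein equations: the second identity of \eqref{eq.einstein_vertical_ex} is an ODE $\rho\mi{L}_\rho(\trace{\gv}\mi{L}_\rho\gv)-(2n-1)\trace{\gv}\mi{L}_\rho\gv=\ms{G}_0$ with $2n-1>0$, whose forcing involves only $\rho\,\Rsv$ and quadratic $\mi{S}(\gv;\mi{L}_\rho\gv,\mi{L}_\rho\gv)$ terms. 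A continuity/Gronwall argument, exploiting \eqref{eq.fg_main_ass_Lg} to absorb the nonlinearity over a short $\rho$-interval, shows $\mi{L}_\rho\gv$ is locally bounded and decaying, so those terms are $\Rightarrow$-convergent by Propositions \ref{thm.geom_bound}, \ref{thm.limit_bound}, \ref{thm.limit_schematic}; Proposition \ref{thm.limit_positive} then gives $\trace{\gv}\mi{L}_\rho\gv\Rightarrow^{M_0-1}0$. Feeding this into the first identity of \eqref{eq.einstein_vertical_ex}, which now has $\Rightarrow$-convergent right-hand side, and applying Proposition \ref{thm.limit_positive} with $c=n-1>0$ yields $\mi{L}_\rho\gv\Rightarrow^{M_0-1}0$ and $\rho\mi{L}_\rho^2\gv\Rightarrow^{M_0-1}0$; this forces $\gv\Rightarrow^{M_0}\gm$, hence $\gv^{-1}\Rightarrow^{M_0}\gm^{-1}$ by Proposition \ref{thm.geom_bound}. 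Finally \eqref{eq.fg_main_low_R} follows from $\mi{L}_\rho\Rv=\mi{S}(\gv;\Dv^2\mi{L}_\rho\gv)$, Propositions \ref{thm.geom_bound_deriv} and \ref{thm.limit_schematic}, and the definitions of $\Rcv,\Rsv$.

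\textbf{Step 2 (orders $1\le k<n$).} I would induct on $k$, establishing the limits \eqref{eq.fg_main_high} and the dependence \eqref{eq.fg_main_high_depend}. For $2\le k<n$ one uses \eqref{eq.einstein_vertical_deriv}: its $\gv$-trace is an ODE for $\trace{\gv}\mi{L}_\rho^k\gv$ with leading coefficient $2n-k>0$, and after moving $\trace{\gv}\mi{L}_\rho^k\gv\cdot\gv$ into the forcing, the remaining equation for $\mi{L}_\rho^k\gv$ has leading coefficient $n-k>0$. By the inductive hypothesis and Propositions \ref{thm.limit_schematic}, \ref{thm.limit_bound}, every forcing term --- built from lower-order curvature and from $\mi{S}$-terms in $\mi{L}_\rho^{j_p}\gv$ with each $j_p<k$ --- is either $\Rightarrow$-convergent or $\Rightarrow 0$, so two applications of Proposition \ref{thm.limit_positive} (trace first, then the full tensor) give $\mi{L}_\rho^k\gv\Rightarrow^{M_0-k}k!\,\gb{k}$ and $\rho\mi{L}_\rho^{k+1}\gv\Rightarrow^{M_0-k}0$. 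The curvature limits then come from the purely algebraic identity \eqref{eq.curvature_vertical_deriv} (first line) with Proposition \ref{thm.geom_bound_deriv}, and the $\Dv\mi{L}_\rho\gv$-limits from its second line after commuting via \eqref{eq.comm_vertical_deriv}. The $\mc{D}$-dependence is read off the recursion; vanishing for odd $k$ follows from a parity observation --- a limit of $\mi{S}(\gv;\mi{L}_\rho^{j_1}\gv,\dots)$ (or of $\mi{L}_\rho^{j}\Rcv$) is nonzero only if every $j_p$ is even, which forces $\sum j_p$ even. The case $k=1$ is covered by \eqref{eq.einstein_vertical_ex} exactly as in Step 1, and \eqref{eq.einstein_vertical_deriv} at $k=2$ yields the Schouten identity for $\gb{2}$ when $n>2$.

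\textbf{Step 3 ($n$-th order, logarithm, and free data).} In \eqref{eq.einstein_vertical_deriv} the coefficient $n-k$ of $\mi{L}_\rho^k\gv$ degenerates to $0$ at $k=n$, but the coefficient $2n-k=n>0$ of $\trace{\gv}\mi{L}_\rho^k\gv$ does not; so the trace equation still gives $\trace{\gv}\mi{L}_\rho^n\gv\Rightarrow^{M_0-n}$ a $\gm$-dependent limit, where one uses $\rho\log\rho\Rightarrow 0$ to dispose of forcing terms that contain the log-carrying $\mi{L}_\rho^n\gv$ multiplied by $\rho$. Then the $k=n$ case of \eqref{eq.einstein_vertical_deriv} expresses $\rho\mi{L}_\rho^{n+1}\gv$ as a $\Rightarrow$-convergent combination of already-known quantities, giving $\rho\mi{L}_\rho^{n+1}\gv\Rightarrow^{M_0-n}n!\,\gb{\star}$; Proposition \ref{thm.limit_zero} then produces $\gb{\dagger}$ with \eqref{eq.fg_main_free}, and analogously for $\Rv$ via \eqref{eq.curvature_vertical_deriv} and for $\Dv\mi{L}_\rho\gv$. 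The constraint $\trace{\gm}\gb{\star}=0$ is the ``$\rho\mi{L}_\rho\ms{A}\Rightarrow 0$'' conclusion of Proposition \ref{thm.limit_positive} applied to the trace equation; $\Dm\cdot\gb{\star}=0$ comes from differentiating the constraint \eqref{eq.einstein_vertical} $n$ times, commuting, multiplying by $\rho$, and passing to the limit. The $\mc{D}^n(\gm;n)$-structure of $\gb{\star},\Rb{\star},\bb{\star}$ and of $\trace{\gm}\gb{\dagger},\Dm\cdot\gb{\dagger}$ (the latter from the second line of \eqref{eq.curvature_vertical_deriv} at $k=n$) follows as in Step 2, with the same parity argument for vanishing at odd $n$; and when $n=2$ the two-dimensional identity $\Rcm=\frac{1}{2}\Rsm\,\gm$ forces $\gb{\star}=0$, whence $\Rb{\star}=\bb{\star}=0$ as both are built from $\gb{\star}$.

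\textbf{Main obstacle.} The hardest point is Step 1: extracting from the weak hypotheses \eqref{eq.fg_main_ass_g}--\eqref{eq.fg_main_ass_Lg} not merely a $C^{M_0}$-limit for $\gv$ but the quantitative $\Rightarrow$-convergence of $\gv$ and $\mi{L}_\rho\gv$ that the rest of the argument consumes. This requires the nonlinear bootstrap controlling $\mi{L}_\rho\gv$ near $\mi{I}$ and a careful treatment of the borderline dimension $n=2$, where the relevant $\rho$-integrals are only logarithmically convergent. A secondary, recurring difficulty is that this $\Rightarrow$-strength must be propagated through every differentiation of the Einstein equations in Steps 2 and 3, rather than the weaker $\rightarrow$-convergence that would already suffice for the classical coefficients $\gb{k}$, $k<n$, and $\gb{\star}$ alone.
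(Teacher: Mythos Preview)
Your proposal is correct and follows essentially the same three-step strategy as the paper's proof in Section~\ref{sec.fg_proof}: the same key uses of Propositions~\ref{thm.limit_positive} and~\ref{thm.limit_zero}, the trace-first ordering in the induction, and the parity argument for odd-order vanishing. The only differences are cosmetic---the Arzel\`a--Ascoli step you propose in Step~1 is unnecessary (the paper obtains the needed curvature bounds directly from~\eqref{eq.fg_main_ass_g} via Proposition~\ref{thm.geom_bound} and then derives $\gv\Rightarrow^{M_0}\gm$ \emph{after} bounding $\mi{L}_\rho\gv$), and your worries about ``$\rho\log\rho$'' forcing terms in Step~3 and the ``borderline $n=2$'' case are non-issues, since at the stage where you treat the $k=n$ trace equation the induction already gives $\rho\mi{L}_\rho^n\gv\Rightarrow 0$ (no logarithm has yet appeared) and the coefficient $n-1\geq 1$ in Step~1 is never critical.
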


\begin{remark}
In particular, the first part of \eqref{eq.fg_main_schouten} implies that when $n > 2$, the limit $\gb{2}$ is precisely the Schouten tensor associated with the boundary metric $\gm$.
\end{remark}

We defer the proof of Theorem \ref{thm.fg_main} until Section \ref{sec.fg_proof}.
In the remainder of the present subsection, we discuss some variations and consequences of Theorem \ref{thm.fg_main}.

First, combining Proposition \ref{thm.geom_bound_deriv} with the conclusions of Theorem \ref{thm.fg_main}, we see that the boundary limits \eqref{eq.fg_main_low_R}, \eqref{eq.fg_main_high}, \eqref{eq.fg_main_top}, and \eqref{eq.fg_main_free} can also be expressed covariantly:

\begin{corollary} \label{thm.fg_covar}
Assume the hypotheses of Theorem \ref{thm.fg_main}, and let
\[
\gb{0}, \dots, \gb{n - 1}, \gb{\star}, \gb{\dagger} \text{,} \qquad \Rb{0}, \dots, \Rb{n - 1 }, \Rb{\star}, \Rb{\dagger} \text{,} \qquad \bb{1}, \dots, \bb{n - 1}, \bb{\star}, \bb{\dagger} \text{,}
\]
be as in the conclusions of Theorem \ref{thm.fg_main}.
Then:
\begin{itemize}
\item The following boundary limits hold for any $0 \leq v \leq M_0 - 2$:
\begin{equation}
\label{eq.fg_covar_low} \Dv^v \Rv \Rightarrow^0 \Dm^v \Rm \text{,} \qquad \Dv^v \Rcv \Rightarrow^0 \Dm^v \Rcm \text{,} \qquad \Dv^v \Rsv \Rightarrow^0 \Dm^v \Rsm \text{.}
\end{equation}

\item For any $0 \leq k < n$ and $1 \leq l < n$, we have the following boundary limits:
\begin{align}
\label{eq.fg_covar_high} \Dv^m \mi{L}_\rho^k \gv \Rightarrow^0 k! \, \Dm^m \gb{k} \text{,} \qquad \rho \Dv^m \mi{L}_\rho^{ k + 1 } \gv &\Rightarrow^0 0 \text{,} \qquad 0 \leq m \leq M_0 - k \text{,} \\
\notag \Dv^u \mi{L}_\rho^k \Rv \Rightarrow^0 k! \, \Dm^u \Rb{k} \text{,} \qquad \rho \Dv^u \mi{L}_\rho^{ k + 1 } \Rv &\Rightarrow^0 0 \text{,} \qquad 0 \leq u \leq M_0 - k - 2 \text{,} \\
\notag \Dv^v \mi{L}_\rho^{ l - 1 } ( \Dv \mi{L}_\rho \gv ) \Rightarrow^0 ( l - 1 )! \, \Dm^v \bb{l} \text{,} \qquad \rho \Dv^v \mi{L}_\rho^l ( \Dv \mi{L}_\rho \gv ) &\Rightarrow^0 0 \text{,} \qquad 0 \leq v \leq M_0 - l - 1 \text{.}
\end{align}

\item The following boundary limits hold:
\begin{align}
\label{eq.fg_covar_top} \rho \Dv^m \mi{L}_\rho^{ n + 1 } \gv \Rightarrow^0 n! \, \Dm^m \gb{\star} \text{,} &\qquad 0 \leq m \leq M_0 - n \text{,} \\
\notag \rho \Dv^u \mi{L}_\rho^{ n + 1 } \Rv \Rightarrow^0 n! \, \Dm^u \Rb{\star} \text{,} &\qquad 0 \leq u \leq M_0 - n - 2 \text{,} \\
\notag \rho \Dv^v \mi{L}_\rho^n ( \Dv \mi{L}_\rho \gv ) \Rightarrow^0 ( n - 1 )! \, \Dm^v \bb{\star} \text{,} &\qquad 0 \leq v \leq M_0 - n - 1 \text{.}
\end{align}

\item The following boundary limits hold:
\begin{align}
\label{eq.fg_covar_free} \Dv^m \mi{L}_\rho^n \gv - n! \, ( \log \rho ) \Dm^m \gb{\star} \rightarrow^0 n! \, \Dm^m \gb{\dagger} \text{,} &\qquad 0 \leq m \leq M_0 - n \text{,} \\
\notag \Dv^u \mi{L}_\rho^n \Rv - n! \, ( \log \rho ) \Dm^u \Rb{\star} \rightarrow^0 n! \, \Dm^u \Rb{\dagger} \text{,} &\qquad 0 \leq u \leq M_0 - n - 2 \text{,} \\
\notag \Dv^v \mi{L}_\rho^{ n - 1 } ( \Dv \mi{L}_\rho \gv ) - ( n - 1 )! \, ( \log \rho ) \Dm^v \bb{\star} \rightarrow^0 ( n - 1 )! \, \Dm^v \bb{\dagger} \text{,} &\qquad 0 \leq v \leq M_0 - n - 1 \text{.}
\end{align}
\end{itemize}
\end{corollary}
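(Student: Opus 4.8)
The plan is to obtain every assertion as a direct application of Proposition \ref{thm.geom_bound_deriv}, fed with the coordinate-component boundary limits already supplied by Theorem \ref{thm.fg_main}. First I would fix an arbitrary compact coordinate system $( U, \varphi )$ on $\mi{I}$ and check that the structural bounds \eqref{eq.geom_bound_ass} are available on it with some constant $C > 0$: the estimate $\| \gv \|_{ M_0 + 2, \varphi } < \infty$ is exactly hypothesis \eqref{eq.fg_main_ass_g}; $\| \gm \|_{ M_0 + 2, \varphi } < \infty$ and $\| \gm^{-1} \|_{ 0, \varphi } < \infty$ hold since $\gm$ is a fixed smooth Lorentzian metric and $\bar{U}$ is compact; and $\| \gv^{-1} \|_{ 0, \varphi } < \infty$ follows from $\gv^{-1} \Rightarrow^{ M_0 } \gm^{-1}$ in \eqref{eq.fg_main_low_g} (convergence as $\rho \searrow 0$, together with smoothness on $( 0, \rho_0 ]$, forces local boundedness). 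Hence all conclusions of Propositions \ref{thm.geom_bound} and \ref{thm.geom_bound_deriv} are available on $( U, \varphi )$.

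For each covariant limit in \eqref{eq.fg_covar_low}, \eqref{eq.fg_covar_high}, and \eqref{eq.fg_covar_top} I would apply the first inequality of Proposition \ref{thm.geom_bound_deriv} to the pair $( \ms{A}, \mf{A} )$ consisting of the relevant vertical tensor field and its claimed boundary value --- for instance $\ms{A} = \mi{L}_\rho^k \gv$, $\mf{A} = k! \, \gb{k}$ for the first line of \eqref{eq.fg_covar_high}, and $\ms{A} = \rho \mi{L}_\rho^{ k + 1 } \gv$, $\mf{A} = 0$ for the companion vanishing statement, using here that $\Dv$ commutes with multiplication by $\rho$ since $\rho$ is constant on the level sets of $\rho$. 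That inequality bounds $\sum_m | \Dv^m \ms{A} - \Dm^m \mf{A} |_{ 0, \varphi }$ by $| \ms{A} - \mf{A} |_{ M, \varphi }$ plus $( | \ms{A} |_{ M - 1, \varphi } + | \mf{A} |_{ M - 1, \varphi } ) | \gv - \gm |_{ M, \varphi }$, with $M$ the order of convergence in the corresponding clause of Theorem \ref{thm.fg_main}. The first term tends to $0$ --- and has finite $\sigma^{-1}$-weighted integral over $( 0, \rho_0 )$ --- precisely by the $\rightarrow^M$ or $\Rightarrow^M$ statement of Theorem \ref{thm.fg_main}; in the second term $| \mf{A} |_{ M - 1, \varphi }$ is bounded because $\mf{A}$ is a fixed tensor field on $\mi{I}$, $| \ms{A} |_{ M - 1, \varphi }$ is bounded because $\ms{A}$ converges in $C^M$ as $\rho \searrow 0$, and $| \gv - \gm |_{ M, \varphi } \to 0$ with finite $\sigma^{-1}$-weighted integral by \eqref{eq.fg_main_low_g} (using $M \leq M_0$). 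Thus the right-hand side has the convergence needed for $\rightarrow^0$ and, where claimed, the extra integrability for $\Rightarrow^0$. The cases in which the order of convergence in Theorem \ref{thm.fg_main} degenerates to $0$ --- possible, in any of the four clauses, only for certain curvature limits when $M_0 = n + 2$ --- need no argument, since only the $m = 0$ instance is then asserted and $\rightarrow^0$, $\Rightarrow^0$ convergence is already part of Theorem \ref{thm.fg_main}.

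The one clause requiring genuine care is \eqref{eq.fg_covar_free}, because of the logarithmic coefficient. Applying the mechanism above to $\ms{A} = \mi{L}_\rho^n \gv - n! \, ( \log \rho ) \gb{\star}$ (with $\gb{\star}$ read as a $\rho$-independent vertical tensor field) and $\mf{A} = n! \, \gb{\dagger}$ yields $\Dv^m \mi{L}_\rho^n \gv - n! \, ( \log \rho ) \Dv^m \gb{\star} \rightarrow^0 n! \, \Dm^m \gb{\dagger}$, which differs from the stated limit only by $n! \, ( \log \rho )( \Dv^m \gb{\star} - \Dm^m \gb{\star} )$. To kill this term, Proposition \ref{thm.geom_bound_deriv} applied to the pair $( \gb{\star}, \gb{\star} )$ gives $| \Dv^m \gb{\star} - \Dm^m \gb{\star} |_{ 0, \varphi } \lesssim_C | \gv - \gm |_{ M_0 - n, \varphi }$, while the fundamental theorem of calculus --- valid since $\gv \Rightarrow^{ M_0 } \gm$ --- gives $\gv - \gm = \int_0^\rho \mi{L}_\rho \gv \, d\sigma$ componentwise, so that, invoking $\mi{L}_\rho \gv \Rightarrow^{ M_0 - 1 } 0$ (the $k = 1$ instance of \eqref{eq.fg_main_high}, as $\gb{1} = 0$ by \eqref{eq.fg_main_high_depend}),
\[
| \gv - \gm |_{ M_0 - n, \varphi } |_\rho \leq \rho \sup_U \int_0^{ \rho_0 } \frac{1}{\sigma} | \mi{L}_\rho \gv |_{ M_0 - 1, \varphi } |_\sigma \, d\sigma \lesssim \rho \text{.}
\]
Since $\rho \abs{ \log \rho } \to 0$, this forces $( \log \rho )( \Dv^m \gb{\star} - \Dm^m \gb{\star} ) \rightarrow^0 0$, and the identical computation handles the curvature and divergence lines of \eqref{eq.fg_covar_free}. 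The main --- and essentially only --- obstacle is this logarithmic bookkeeping; everything else is a systematic, if repetitive, invocation of Proposition \ref{thm.geom_bound_deriv} against the limits of Theorem \ref{thm.fg_main}.
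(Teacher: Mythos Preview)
Your proposal is correct and follows essentially the same approach as the paper: apply Proposition \ref{thm.geom_bound_deriv} (together with \eqref{eq.fg_main_low_g}) to pass from the $C^M$-limits of Theorem \ref{thm.fg_main} to the covariant $\Dv^m$-limits, and for \eqref{eq.fg_covar_free} handle the residual $(\log\rho)(\Dv^m\gb{\star}-\Dm^m\gb{\star})$ term by bounding it with $|\gv-\gm|_{M_0-n,\varphi}\lesssim\rho$ via the fundamental theorem of calculus. Your treatment is slightly more explicit than the paper's in verifying the hypotheses \eqref{eq.geom_bound_ass} and in isolating the $M=0$ edge case (where Proposition \ref{thm.geom_bound_deriv} does not apply but nothing needs to be shown), and you use the $\Rightarrow$-integrability of $\mi{L}_\rho\gv$ rather than its uniform boundedness for the $|\gv-\gm|\lesssim\rho$ estimate, but these are cosmetic differences.
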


\begin{proof}
The limits \eqref{eq.fg_covar_low}, \eqref{eq.fg_covar_high}, and \eqref{eq.fg_covar_top} follow immediately by applying Proposition \ref{thm.geom_bound_deriv} and \eqref{eq.fg_main_low_g} to \eqref{eq.fg_main_low_R}, \eqref{eq.fg_main_high}, and \eqref{eq.fg_main_top}, respectively.
Thus, it remains only to prove \eqref{eq.fg_covar_free}.

Applying Proposition \ref{thm.geom_bound_deriv} and \eqref{eq.fg_main_low_g} to \eqref{eq.fg_main_free} yields
\begin{align*}
\Dv^m \mi{L}_\rho^n \gv - n! \, ( \log \rho ) \Dv^m \gb{\star} &\rightarrow^0 n! \, \Dm^m \gb{\dagger} \text{,} \\
\Dv^u \mi{L}_\rho^n \Rv - n! \, ( \log \rho ) \Dv^u \Rb{\star} &\rightarrow^0 n! \, \Dm^u \Rb{\dagger} \text{,} \\
\Dv^v \mi{L}_\rho^{ n - 1 } ( \Dv \mi{L}_\rho \gv ) - ( n - 1 )! \, ( \log \rho ) \Dv^v \bb{\star} &\rightarrow^0 ( n - 1 )! \, \Dm^v \bb{\dagger} \text{,}
\end{align*}
for $m$, $u$, $v$ as in \eqref{eq.fg_covar_free}.
Thus, it suffices to show---for the same $m$, $u$, $v$---that
\begin{align}
\label{eql.fg_covar_1} ( \log \rho ) ( \Dv^m \gb{\star} - \Dm^m \gb{\star} ) &\rightarrow^0 0 \text{,} \\
\notag ( \log \rho ) ( \Dv^u \Rb{\star} - \Dm^u \Rb{\star} ) &\rightarrow^0 0 \text{,} \\
\notag ( \log \rho ) ( \Dv^v \bb{\star} - \Dm^v \bb{\star} ) &\rightarrow^0 0 \text{.}
\end{align}

For this, we let $( U, \varphi )$ be any compact coordinate system on $\mi{I}$.
Observe that \eqref{eq.geom_bound_deriv}, the fundamental theorem of calculus, \eqref{eq.fg_main_low_g}, and \eqref{eq.fg_main_high} imply, for any $0 \leq m \leq M_0 - n$, that
\begin{align*}
| ( \log \rho ) ( \Dv^m \gb{\star} - \Dm^m \gb{\star} ) |_{ 0, \varphi } &\lesssim | \log \rho | \cdot \| \gb{\star} \|_{ M_0 - n - 1, \varphi } | \gv - \gm |_{ M_0 - n, \varphi } \\
&\lesssim | \rho \log \rho | \cdot \| \mi{L}_\rho \gv \|_{ M_0 - 1, \varphi } \\
&\lesssim | \rho \log \rho | \text{,}
\end{align*}
from which the first limit in \eqref{eql.fg_covar_1} follows.
By similar methods, we also obtain
\[
| ( \log \rho ) ( \Dv^u \Rb{\star} - \Dm^u \Rb{\star} ) |_{ 0, \varphi } \lesssim | \rho \log \rho | \text{,} \qquad | ( \log \rho ) ( \Dv^v \bb{\star} - \Dm^v \bb{\star} ) | \lesssim | \rho \log \rho | \text{,}
\]
for all $0 \leq u \leq M_0 - n - 2$ and $0 \leq v \leq M_0 - n - 1$, which yield the remaining parts of \eqref{eql.fg_covar_1}.
\end{proof}

\subsubsection{Partial Expansions} \label{sec.fg_exp}

We now use Theorem \ref{thm.fg_main} to recover the partial Fefferman--Graham expansions for $\gv$, $\Rv$, and $\Dv \mi{L}_\rho \gv$, up to and including the $n$-th order term.
All the information that is required for the expansion is already present in the boundary limits \eqref{eq.fg_main_low_g}, \eqref{eq.fg_main_low_R}, \eqref{eq.fg_main_high}, \eqref{eq.fg_main_top}, and \eqref{eq.fg_main_free}.
The remaining step is to systematically construct the expansion via Taylor's theorem (and to account for the anomalous logarithmic terms when $n$ is even).

\begin{theorem} \label{thm.fg_exp}
Let $( \mi{M}, g )$ be a vacuum FG-aAdS segment, assume $n = \dim \mi{I} > 1$, and fix $M_0 \geq n + 2$.
In addition, assume the bounds \eqref{eq.fg_main_ass_g} and \eqref{eq.fg_main_ass_Lg} hold for any compact coordinate system $( U, \varphi )$ on $\mi{I}$.
Then, $\gv$, $\Rv$, and $\Dv \mi{L}_\rho \gv$ can be expressed as partial expansions in $\rho$,
\begin{align}
\label{eq.fg_exp} \gv &= \begin{cases} \sum_{ k = 0 }^{ \frac{ n - 1 }{2} } \gb{ 2k } \rho^{ 2 k } + \gb{n} \rho^n + \ms{r}_{ \gv } \rho^n & \text{$n$ odd,} \\ \sum_{ k = 0 }^{ \frac{ n - 2 }{2} } \gb{ 2k } \rho^{ 2 k } + \gb{\star} \rho^n \log \rho + \gb{n} \rho^n + \ms{r}_{ \gv } \rho^n & \text{$n$ even,} \end{cases} \\
\notag \Rv &= \begin{cases} \sum_{ k = 0 }^{ \frac{ n - 1 }{2} } \Rb{ 2k } \rho^{ 2 k } + \Rb{n} \rho^n + \ms{r}_{ \Rv } \rho^n & \text{$n$ odd,} \\ \sum_{ k = 0 }^{ \frac{ n - 2 }{2} } \Rb{ 2k } \rho^{ 2 k } + \Rb{\star} \rho^n \log \rho + \Rb{n} \rho^n + \ms{r}_{ \Rv } \rho^n & \text{$n$ even,} \end{cases} \\
\notag \Dv \mi{L}_\rho \gv &= \begin{cases} \sum_{ l = 1 }^{ \frac{ n - 1 }{2} } \bb{ 2l } \rho^{ 2 l - 1 } + \bb{n} \rho^{ n - 1 } + \ms{r}_{ \Dv } \rho^{ n - 1 } & \text{$n$ odd,} \\ \sum_{ l = 1 }^{ \frac{ n - 2 }{2} } \bb{ 2l } \rho^{ 2 l - 1 } + \bb{\star} \rho^{ n - 1 } \log \rho + \bb{n} \rho^{ n - 1 } + \ms{r}_{ \Dv } \rho^{ n - 1 } & \text{$n$ even,} \end{cases}
\end{align}
where the following also hold:
\begin{itemize}
\item For any $0 \leq k < \frac{n}{2}$ and $1 \leq l < \frac{n}{2}$, we have that $\gb{2k}$, $\Rb{2k}$, and $\bb{2l}$ are tensor fields on $\mi{I}$ that depend only on $\gm$ to order $2 k$, $2 k + 2$, and $2 l + 1$, respectively.
In addition,
\begin{equation}
\label{eq.fg_exp_low} \gb{0} = \gm \text{,} \qquad \Rb{0} = \Rm \text{.}
\end{equation}

\item If $n$ is even, then the coefficients $\gb{\star}$, $\Rb{\star}$, and $\bb{\star}$ are tensor fields on $\mi{I}$ that depend only on $\gm$ to orders $n$, $n + 2$, and $n + 1$, respectively.
Moreover, $\gb{\star}$ satisfies
\begin{equation}
\label{eq.fg_exp_top} \trace{\gm} \gb{\star} = 0 \text{,} \qquad \Dm \cdot \gb{\star} = 0 \text{.}
\end{equation}

\item We also have the following identities:
\begin{equation}
\label{eq.fg_exp_schouten} \begin{cases} \gb{2} = - \frac{ 1 }{ n - 2 } \left[ \Rcm - \frac{ 1 }{ 2 ( n - 1 ) } \Rsm \cdot \gm \right] & n > 2 \text{,} \\ ( \gb{\star}, \Rb{\star}, \bb{\star} ) = ( 0, 0, 0 ) & n = 2 \text{.} \end{cases}
\end{equation}

\item $\gb{n}$, $\Rb{n}$, and $\bb{n}$ are $C^{ M_0 - n }$, $C^{ M_0 - n - 2 }$, and $C^{ M_0 - n - 1 }$ tensor fields on $\mi{I}$, respectively.
Furthermore, if $n$ is odd, then $\trace{\gm} \gb{n}$ and $\Dm \cdot \gb{n}$ vanish; whereas if $n$ is even, then $\trace{\gm} \gb{n}$ and $\smash{\Dm \cdot \gm}$ depend only on $\gm$ to orders $n$ and $n + 1$, respectively.

\item The ``remainders" $\ms{r}_{ \gv }$, $\ms{r}_{ \Rv }$, and $\ms{r}_{ \Dv }$ are vertical tensor fields satisfying 
\begin{equation}
\label{eq.fg_exp_error} \ms{r}_{ \gv } \rightarrow^{ M_0 - n } 0 \text{,} \qquad \ms{r}_{ \Rv } \rightarrow^{ M_0 - n - 2 } 0 \text{,} \qquad \ms{r}_{ \Dv } \rightarrow^{ M_0 - n - 1 } 0 \text{.}
\end{equation}
\end{itemize}
\end{theorem}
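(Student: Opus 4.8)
\emph{Proof proposal.} The plan is to obtain Theorem~\ref{thm.fg_exp} as a direct consequence of Theorem~\ref{thm.fg_main} via a Taylor expansion ``from the conformal boundary''. Since the hypotheses of Theorem~\ref{thm.fg_exp} coincide with those of Theorem~\ref{thm.fg_main}, I may freely use all of its conclusions \eqref{eq.fg_main_low_g}--\eqref{eq.fg_main_free_constraint}. The key tool is the following elementary identity, to be established by iterating the fundamental theorem of calculus as in the proof of Proposition~\ref{thm.limit_positive}: if $\ms{A}$ is a vertical tensor field on $\mi{M}$, $p \geq 1$, $M \geq 0$, and if $\mi{L}_\rho^j \ms{A} \rightarrow^M \mf{a}_j$ for tensor fields $\mf{a}_j$ on $\mi{I}$ and all $0 \leq j \leq p - 1$, while $\mi{L}_\rho^p \ms{A}$ is integrable near $\rho = 0$ (i.e.\ $\sup_U \int_0^{\rho_0} | \mi{L}_\rho^p \ms{A} |_{ M, \varphi } |_\sigma \, d\sigma < \infty$ for every compact $( U, \varphi )$), then
\begin{equation*}
\ms{A} = \sum_{ j = 0 }^{ p - 1 } \frac{ \mf{a}_j }{ j! } \rho^j + \frac{1}{ ( p - 1 )! } \int_0^\rho ( \rho - \sigma )^{ p - 1 } \, \mi{L}_\rho^p \ms{A} |_\sigma \, d\sigma \text{.}
\end{equation*}
One proves this by induction on $p$, each step applying the fundamental theorem of calculus on $( 0, \rho_0 ]$ (using the limit hypothesis to anchor the antiderivative at $\rho = 0$) together with Fubini's theorem; spatial derivatives are moved under the integral by the mean value and dominated convergence theorems, exactly as in Proposition~\ref{thm.limit_positive}.

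Next I would apply this with $\ms{A} = \gv$, $p = n$, $M = M_0 - n$. The hypotheses hold: \eqref{eq.fg_main_high} gives $\mi{L}_\rho^j \gv \rightarrow^{ M_0 - n } j! \, \gb{j}$ for $0 \leq j \leq n - 1$, and \eqref{eq.fg_main_free} shows $\mi{L}_\rho^n \gv$ is integrable near $\rho = 0$ (it differs from $n! \, ( \log \rho ) \gb{\star}$ by a $C^{ M_0 - n }$-convergent, hence locally bounded, family). Setting $\ms{e} := \mi{L}_\rho^n \gv - n! \, ( \log \rho ) \gb{\star} - n! \, \gb{\dagger}$, so that $\ms{e} \rightarrow^{ M_0 - n } 0$ by \eqref{eq.fg_main_free}, I split the remainder integral into the three contributions of $n! \, \gb{\dagger}$, $n! \, ( \log \sigma ) \gb{\star}$, and $\ms{e}$. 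The first equals $\rho^n \gb{\dagger}$; the second, after the substitution $\sigma = \rho t$, equals $\gb{\star} \rho^n \log \rho + c_n \gb{\star} \rho^n$ with $c_n := n \int_0^1 ( 1 - t )^{ n - 1 } \log t \, dt$; and the third, which I call $\ms{r}_{ \gv } \rho^n$, satisfies $| \ms{r}_{ \gv } |_{ M_0 - n, \varphi } \lesssim \sup_{ ( 0, \rho ] \times U } | \ms{e} |_{ M_0 - n, \varphi }$ and hence $\ms{r}_{ \gv } \rightarrow^{ M_0 - n } 0$. Setting $\gb{n} := \gb{\dagger} + c_n \gb{\star}$ yields \eqref{eq.fg_exp} for $\gv$; the polynomial coefficients are exactly the $\gb{j}$ of Theorem~\ref{thm.fg_main}, and since \eqref{eq.fg_main_high_depend} forces $\gb{j} = \mc{D}^j ( \gm; j ) = 0$ for odd $j$, the sum collapses to the stated sum over even indices. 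When $n$ is odd, $\gb{\star} = \mc{D}^n ( \gm; n ) = 0$, so the logarithmic term disappears and $\gb{n} = \gb{\dagger}$.

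The same argument applied to $\ms{A} = \Rv$ (with $p = n$, $M = M_0 - n - 2$) and to $\ms{A} = \Dv \mi{L}_\rho \gv$ (with $p = n - 1$, $M = M_0 - n - 1$, using that $\mi{L}_\rho^j ( \Dv \mi{L}_\rho \gv ) \rightarrow^{ M_0 - n - 1 } j! \, \bb{ j + 1 }$ for $0 \leq j \leq n - 2$ by \eqref{eq.fg_main_high}) produces the remaining two expansions, with $\Rb{n} := \Rb{\dagger} + c_n \Rb{\star}$ and $\bb{n} := \bb{\dagger} + c_{ n - 1 } \bb{\star}$; here one must track the index shift $j \mapsto j + 1$, which turns the powers $\rho^{ 2l }$ into $\rho^{ 2l - 1 }$ and $\rho^n$ into $\rho^{ n - 1 }$ in \eqref{eq.fg_exp}. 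All the remaining assertions are then read off: \eqref{eq.fg_exp_low} follows from \eqref{eq.aads_metric_limit}, \eqref{eq.fg_main_low_g}, \eqref{eq.fg_main_low_R}; the claims that $\gb{2k}, \Rb{2k}, \bb{2l}$ and $\gb{\star}, \Rb{\star}, \bb{\star}$ depend only on $\gm$ to the stated orders are precisely \eqref{eq.fg_main_high_depend} and \eqref{eq.fg_main_top_depend}; \eqref{eq.fg_exp_top} and \eqref{eq.fg_exp_schouten} are \eqref{eq.fg_main_top_constraint} and \eqref{eq.fg_main_schouten}; the $C^{ M_0 - n }$, $C^{ M_0 - n - 2 }$, $C^{ M_0 - n - 1 }$ regularity of $\gb{n}, \Rb{n}, \bb{n}$ is inherited from that of the limits $\gb{\dagger}, \Rb{\dagger}, \bb{\dagger}$ in \eqref{eq.fg_main_free} (the added $\gm$-dependent multiples of $\gb{\star}, \Rb{\star}, \bb{\star}$ being smooth); and the trace and divergence of $\gb{n}$ follow from $\trace{\gm} \gb{n} = \trace{\gm} \gb{\dagger} + c_n \trace{\gm} \gb{\star}$ and $\Dm \cdot \gb{n} = \Dm \cdot \gb{\dagger} + c_n \Dm \cdot \gb{\star}$, combined with \eqref{eq.fg_main_top_constraint} (which kills the $\gb{\star}$ terms), \eqref{eq.fg_main_free_constraint}, and \eqref{eq.fg_main_schouten} (which gives $\gb{\star} = 0$, hence vanishing trace and divergence, when $n$ is odd). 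I expect the main technical obstacle to be the Taylor-from-boundary identity itself --- chiefly the rigorous justification of repeatedly integrating and differentiating under the integral sign in the locally uniform $C^M$ topology --- together with the bookkeeping of the constant $c_n$ and of the index shift in the $\Dv \mi{L}_\rho \gv$ expansion.
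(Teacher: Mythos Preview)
Your proposal is correct and follows essentially the same approach as the paper: invoke Theorem~\ref{thm.fg_main} and then apply Taylor's theorem from the boundary. The only tactical difference is that the paper first subtracts $\gb{\star}\rho^n\log\rho$ from $\gv$ (so that the $n$-th $\rho$-derivative of the difference has a finite limit $n!\,\gb{n}$, with $\gb{n}:=\gb{\dagger}-C_n\gb{\star}$ and $C_n=\sum_{j=1}^n 1/j$) and then Taylor-expands that difference, whereas you keep the integral remainder and evaluate $\int_0^\rho(\rho-\sigma)^{n-1}\log\sigma\,d\sigma$ directly; since your constant $c_n=n\int_0^1(1-t)^{n-1}\log t\,dt=-C_n$, the two routes yield the same $\gb{n}$.
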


\begin{proof}
First, observe that the conclusions of Theorem \ref{thm.fg_main} hold in this setting, since the hypotheses of Theorems \ref{thm.fg_exp} and \ref{thm.fg_main} are the same.
As a result, we let the tensor fields
\[
\gb{0}, \dots, \gb{n - 1}, \gb{\star}, \gb{\dagger} \text{,} \qquad \Rb{0}, \dots, \Rb{n - 1}, \Rb{\star}, \Rb{\dagger} \text{,} \qquad \bb{1}, \dots, \bb{n - 1}, \bb{\star}, \bb{\dagger} \text{,}
\]
be as in the conclusions of Theorem \ref{thm.fg_main}.

Define now the vertical tensor fields
\begin{equation}
\label{eql.fg_exp_0} \ms{h} := \gv - \gb{\star} \rho^n \log \rho \text{,} \qquad \ms{S} := \Rv - \Rb{\star} \rho^n \log \rho \text{,} \qquad \ms{B} := \Dv \mi{L}_\rho \gv - \bb{\star} \rho^{ n - 1 } \log \rho \text{.}
\end{equation}
Differentiating $\ms{h}$, $\ms{S}$, and $\ms{B}$ and recalling \eqref{eq.fg_main_high}, we see that
\begin{equation}
\label{eql.fg_exp_1} \mi{L}_\rho^k \ms{h} \Rightarrow^{ M_0 - k } k! \, \gb{k} \text{,} \qquad \mi{L}_\rho^k \ms{S} \Rightarrow^{ M_0 - k - 2 } k! \, \Rb{k} \text{,} \qquad \mi{L}_\rho^{ l - 1 } \ms{B} \Rightarrow^{ M_0 - l - 1 } ( l - 1 )! \, \bb{l} \text{.}
\end{equation}
for each $0 \leq k < n$ and $1 \leq l < n$.
Moreover, taking one more $\rho$-derivative yields
\begin{align}
\label{eql.fg_exp_2} \mi{L}_\rho^n \ms{h} &= \mi{L}_\rho^n \gv - n! \, \gb{\star} \log \rho - n! \, C_n \gb{\star} \text{,} \\
\notag \mi{L}_\rho^n \ms{S} &= \mi{L}_\rho^n \Rv - n! \, \Rb{\star} \log \rho - n! \, C_n \Rb{\star} \text{,} \\
\notag \mi{L}_\rho^{ n - 1 } \ms{B} &= \mi{L}_\rho^{ n - 1 } ( \Dv \mi{L}_\rho \gv ) - ( n - 1 )! \, \bb{\star} \log \rho - ( n - 1 )! \, C_{ n - 1 } \bb{\star} \text{,}
\end{align}
where the constants $C_n$ and $C_{ n - 1 }$ are given by
\[
C_n = 1^{-1} + \dots + n^{-1} \text{,} \qquad C_{ n - 1 } = 1^{-1} + \dots + ( n - 1 )^{-1} \text{.}
\]
Defining now the tensor fields
\[
\gb{n} := \gb{\dagger} - C_n \gb{\star} \text{,} \qquad \Rb{n} := \Rb{\dagger} - C_n \Rb{\star} \text{,} \qquad \bb{n} := \bb{\dagger} - C_{ n - 1 } \gb{\star} \text{,}
\]
we then see from \eqref{eq.fg_main_free} and \eqref{eql.fg_exp_2} that
\begin{equation}
\label{eql.fg_exp_3} \mi{L}_\rho^n \ms{h} \rightarrow^{ M_0 - n } n! \, \gb{n} \text{,} \qquad \mi{L}_\rho^n \ms{S} \rightarrow^{ M_0 - n - 2 } n! \, \Rb{n} \text{,} \qquad \mi{L}_\rho^{ n - 1 } \ms{B} \rightarrow^{ M_0 - n - 1 } ( n - 1 )! \, \bb{n} \text{.}
\end{equation}

Applying Taylor's theorem, \eqref{eql.fg_exp_1}, and \eqref{eql.fg_exp_3} to $\ms{h}$, $\ms{S}$, $\ms{B}$ yields
\[
\ms{h} = \sum_{ k = 0 }^n \gb{k} \rho^k + \ms{r}_{ \gv } \rho^n \text{,} \qquad \ms{S} = \sum_{ k = 0 }^n \Rb{k} \rho^k + \ms{r}_{ \Rv } \rho^n \text{,} \qquad \ms{B} = \sum_{ l = 1 }^n \bb{l} \rho^{ l - 1 } + \ms{r}_{ \Dv } \rho^{ n - 1 } \text{,}
\]
where $\ms{r}_{ \gv }$, $\ms{r}_{ \Rv }$, and $\ms{r}_{ \Dv }$ are vertical tensor fields satisfying \eqref{eq.fg_exp_error}.
The above and \eqref{eql.fg_exp_0} then imply 
\begin{align*}
\gv &= \sum_{ k = 0 }^n \gb{k} \rho^k + \gb{\star} \rho^n \log \rho + \ms{r}_{ \gv } \rho^n \text{,} \\
\notag \Rv &= \sum_{ k = 0 }^n \Rb{k} \rho^k + \Rb{\star} \rho^n \log \rho + \ms{r}_{ \Rv } \rho^n \text{,} \\
\notag \Dv \mi{L}_\rho \gv &= \sum_{ l = 1 }^n \bb{l} \rho^{ l - 1 } + \bb{\star} \rho^{ n - 1 } \log \rho + \ms{r}_{ \Dv } \rho^{ n - 1 } \text{.}
\end{align*}
Combining the above with \eqref{eq.fg_main_high_depend} and \eqref{eq.fg_main_top_depend} results in the expansions \eqref{eq.fg_exp}.

Finally, the dependence of all the coefficients in \eqref{eq.fg_exp} on $\gm$ follows from \eqref{eq.fg_main_high_depend}, \eqref{eq.fg_main_top_depend}, and \eqref{eq.fg_main_free_constraint}.
The remaining relations \eqref{eq.fg_exp_low}--\eqref{eq.fg_exp_schouten} then follow from \eqref{eq.fg_main_low_g}, \eqref{eq.fg_main_low_R}, \eqref{eq.fg_main_top_constraint}, and \eqref{eq.fg_main_schouten}.
\end{proof}

Finally, we derive corresponding partial expansions for the spacetime Weyl curvature:

\begin{corollary} \label{thm.fg_exp_W}
Let $( \mi{M}, g )$ be a vacuum FG-aAdS segment, assume $n = \dim \mi{I} > 2$, and fix $M_0 \geq n + 2$.
In addition, assume the bounds \eqref{eq.fg_main_ass_g} and \eqref{eq.fg_main_ass_Lg} hold for any compact coordinate system $( U, \varphi )$ on $\mi{I}$.
Then, with respect to any arbitrary compact coordinate system on $\mi{I}$, the components of the spacetime Weyl curvature $W$ can be expressed as partial expansions in $\rho$,
\begin{align}
\label{eq.fg_exp_W} W_{ \rho c a b } &= \begin{cases} \sum_{ k = 1 }^{ \frac{ n - 1 }{2} } \mf{x}^{ (2k) }_{ c a b } \rho^{ 2 k - 3 } + \mf{x}^{ (n) }_{ c a b } \rho^{ n - 3 } + \ms{r}^{ \mf{x} }_{ c a b } \rho^{ n - 3 } & \text{$n$ odd,} \\ \sum_{ k = 1 }^{ \frac{ n - 2 }{2} } \mf{x}^{ (2k) }_{ c a b } \rho^{ 2 k - 3 } + \mf{x}^{ (\star) }_{ c a b } \rho^{ n - 3 } \log \rho + \mf{x}^{ (n) }_{ c a b } \rho^{ n - 3 } + \ms{r}^{ \mf{x} }_{ c a b } \rho^{ n - 3 } & \text{$n$ even,} \end{cases} \\
\notag W_{ a b c d } &= \begin{cases} \sum_{ k = 1 }^{ \frac{ n - 1 }{2} } \mf{y}^{ (2k) }_{ a b c d } \rho^{ 2 k - 4 } + \mf{y}^{ (n) }_{ a b c d } \rho^{ n - 4 } + \ms{r}^{ \mf{y} }_{ a b c d } \rho^{ n - 4 } & \text{$n$ odd,} \\ \sum_{ k = 1 }^{ \frac{ n - 2 }{2} } \mf{y}^{ (2k) }_{ a b c d } \rho^{ 2 k - 4 } + \mf{y}^{ (\star) }_{ a b c d } \rho^{ n - 4 } \log \rho + \mf{y}^{ (n) }_{ a b c d } \rho^{ n - 4 } + \ms{r}^{ \mf{y} }_{ a b c d } \rho^{ n - 4 } & \text{$n$ even,} \end{cases} \\
\notag W_{ \rho a \rho b } &= \begin{cases} \sum_{ k = 2 }^{ \frac{ n - 1 }{2} } \mf{z}^{ (2k) }_{ a b } \rho^{ 2 k - 4 } + \mf{z}^{ (n) }_{ a b } \rho^{ n - 4 } + \ms{r}^{ \mf{z} }_{ a b } \rho^{ n - 4 } & \text{$n$ odd,} \\ \sum_{ k = 2 }^{ \frac{ n - 2 }{2} } \mf{z}^{ (2k) }_{ a b } \rho^{ 2 k - 4 } + \mf{z}^{ (\star) }_{ a b } \rho^{ n - 4 } \log \rho + \mf{z}^{ (n) }_{ a b } \rho^{ n - 4 } + \ms{r}^{ \mf{z} }_{ a b } \rho^{ n - 4 } & \text{$n$ even,} \end{cases}
\end{align}
where the following properties hold:
\begin{itemize}
\item $\mf{x}^{ ( 2 k ) }$ (where $1 \leq k < \frac{n}{2}$), $\mf{y}^{ ( 2 k ) }$ (where $1 \leq k < \frac{n}{2}$), and $\mf{z}^{ ( 2 k ) }$ (where $2 \leq k < \frac{n}{2}$) are tensor fields on $\mi{I}$ that depend on $\gm$ to order $2 k + 1$, $2 k$, and $2 k$, respectively.

\item $\mf{x}^{ (\star) }$, $\mf{y}^{ (\star) }$, $\mf{z}^{ (\star) }$ are tensor fields on $\mi{I}$ depending on $\gm$ to orders $n + 1$, $n$, and $n$, respectively.

\item $\mf{x}^{ (n) }$, $\mf{y}^{ (n) }$, and $\mf{z}^{ (n) }$ are $C^{ M_0 - n - 1 }$, $C^{ M_0 - n }$, and $C^{ M_0 - n }$ tensor fields on $\mi{I}$, respectively.

\item $\ms{r}^{ \mf{x} }$, $\ms{r}^{ \mf{y} }$, and $\ms{r}^{ \mf{z} }$ are vertical tensor fields that satisfy
\begin{equation}
\label{eq.fefferman_graham_W_remainder} \ms{r}^{ \mf{x} } \rightarrow^{ M_0 - n - 1 } 0 \text{,} \qquad \ms{r}^{ \mf{y} } \rightarrow^{ M_0 - n } 0 \text{,} \qquad \ms{r}^{ \mf{z} } \rightarrow^{ M_0 - n } 0 \text{.}
\end{equation}
\end{itemize}
\end{corollary}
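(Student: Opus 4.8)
The hypotheses here are precisely those of Theorems~\ref{thm.fg_main} and~\ref{thm.fg_exp}, so the plan is to substitute the partial Fefferman--Graham expansions of $\gv$, $\Rv$, and $\Dv \mi{L}_\rho \gv$ from Theorem~\ref{thm.fg_exp} into the algebraic identities of Proposition~\ref{thm.geom_vertical} for $\rho^2 W_{\rho c a b}$, $\rho^2 W_{a b c d}$, $\rho^2 W_{\rho a \rho b}$, and then divide by $\rho^2$. First I would record, by the same Taylor-theorem argument used to prove Theorem~\ref{thm.fg_exp} but now applied to $\mi{L}_\rho \gv$ and $\mi{L}_\rho^2 \gv$ (whose $\mi{L}_\rho$-derivatives have boundary limits supplied by \eqref{eq.fg_main_high}, \eqref{eq.fg_main_top}, \eqref{eq.fg_main_free}), the partial expansions of $\mi{L}_\rho \gv$ up to order $n-1$ and of $\mi{L}_\rho^2 \gv$ up to order $n-2$. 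Since the odd Fefferman--Graham coefficients $\gb{2j+1}$ vanish by \eqref{eq.fg_main_high_depend}, these expansions carry only even-power coefficients below their top orders, with the anomalous $\log\rho$ term appearing one, respectively two, orders below the one for $\gv$, and with remainders converging to $0$ in the $C^M$-norms inherited from Theorem~\ref{thm.fg_exp}.

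Next I would establish a short bookkeeping lemma: the class of vertical tensor fields admitting a partial expansion of the form $(\text{polynomial in }\rho) + (\text{coefficient})\,\rho^p \log\rho + (\text{remainder})\,\rho^p$, with the remainder $\rightarrow^M 0$, is closed under tensor products, (non-metric) contractions, index permutations, and multiplication by $\gv$ or $\gv^{-1}$ (using $\gv \Rightarrow^{M_0} \gm$ and $\gv^{-1} \Rightarrow^{M_0} \gm^{-1}$ from \eqref{eq.fg_main_low_g} together with Proposition~\ref{thm.geom_bound}), with the expected arithmetic on leading orders and the expected regularities of the coefficients, and that multiplication by $\rho^j$ with $j \geq 0$ only improves matters. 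With this, the computation reduces to substituting the expansions above into the three right-hand sides of \eqref{eq.geom_vertical} and collecting powers of $\rho$.

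Carrying this out: for $\rho^2 W_{\rho c a b}$ the result is immediate, as it is the antisymmetrization in $a,b$ of $\Dv \mi{L}_\rho \gv$, whose expansion begins at order $\rho$; dividing by $\rho^2$ produces the claimed $\mf{x}$-expansion, with $\mf{x}^{(2k)}$ an antisymmetrization of $\bb{2k}$ (hence of type $\mc{D}^{2k+1}(\gm)$), $\mf{x}^{(\star)}$ built from $\bb{\star}$, and $\mf{x}^{(n)}$ built from $\bb{n}$ (hence $C^{M_0-n-1}$). For $\rho^2 W_{a b c d}$ one substitutes the expansions of $\Rv$, $\mi{L}_\rho \gv$, and $\gv$; the only mildly delicate point is the $\rho^{-1}$-prefactored term, but since $\gv$ has no order-$\rho$ coefficient below order $n$ and $\mi{L}_\rho \gv$ starts at order $\rho$, that term is in fact regular at $\rho = 0$, and its order-$\rho^0$ contribution combines with $\Rm$ into a Kulkarni--Nomizu-type expression in $\gm$ and $\gb{2}$ --- i.e.\ the Weyl tensor of $\gm$, of type $\mc{D}^2(\gm)$. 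For $\rho^2 W_{\rho a \rho b}$ one substitutes the expansions of $\mi{L}_\rho^2 \gv$, $\mi{L}_\rho \gv$, and $\gv$; here the structural point is that the would-be singular and order-$\rho^0$ contributions of $-\tfrac{1}{2} \mi{L}_\rho^2 \gv + \tfrac{1}{2} \rho^{-1} \mi{L}_\rho \gv$ cancel, because this operator sends a monomial $\rho^{2k}$ to $2k(1-k)\rho^{2k-2}$, which vanishes for $k \leq 1$, and $\gv$ carries no odd-power coefficients below order $n$; hence $\rho^2 W_{\rho a \rho b}$ begins only at order $\rho^2$, which is exactly why the $\mf{z}$-expansion starts at $k=2$. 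In every case, dividing by $\rho^2$, truncating at the orders displayed in \eqref{eq.fg_exp_W}, and absorbing the higher-order polynomial terms (including the anomalous $\log\rho$ terms sitting beyond those orders) and all products of remainders into $\ms{r}^{\mf{x}}$, $\ms{r}^{\mf{y}}$, $\ms{r}^{\mf{z}}$ yields \eqref{eq.fg_exp_W}; the stated $\gm$-dependence and regularity of each displayed coefficient then follow from Theorem~\ref{thm.fg_exp}, Definition~\ref{def.fg_depend}, and the closure property in Proposition~\ref{thm.limit_schematic}.

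The main obstacle is this last bookkeeping step: one must track carefully which terms are peeled off as explicit coefficients and which are swept into the remainders, so that the remainders genuinely converge to $0$ in the asserted $C^M$-norms --- which is what dictates the precise truncation orders in \eqref{eq.fg_exp_W}, since the least regular inputs (the ``free'' coefficients $\gb{n}$, $\Rb{n}$, $\bb{n}$ and the remainders $\ms{r}_{\gv}$, $\ms{r}_{\Rv}$, $\ms{r}_{\Dv}$ from Theorem~\ref{thm.fg_exp}) control only a limited number of derivatives --- and so that each displayed coefficient inherits the correct dependence on $\gm$ and the correct regularity. The one genuinely structural input is the cancellation in the $W_{\rho a \rho b}$ identity (and, more generally, the verification that the starting powers $\rho^{2k-3}$ and $\rho^{2k-4}$ in \eqref{eq.fg_exp_W} are correct), but this is a direct consequence of the vanishing of the odd Fefferman--Graham coefficients established in Theorem~\ref{thm.fg_main}.
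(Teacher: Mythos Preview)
Your proposal is correct and follows essentially the same route as the paper's proof: both substitute the partial expansions of $\gv$, $\Rv$, $\Dv\mi{L}_\rho\gv$ (and the derived expansions of $\mi{L}_\rho\gv$, $\mi{L}_\rho^2\gv$, $\gv^{-1}$) into the three identities of Proposition~\ref{thm.geom_vertical}, multiply out, and collect powers of $\rho$, with the key structural observation being the cancellation of the leading terms in $-\tfrac{1}{2}\mi{L}_\rho^2\gv + \tfrac{1}{2}\rho^{-1}\mi{L}_\rho\gv$ that forces the $\mf{z}$-expansion to start at $k=2$. Your ``bookkeeping lemma'' on closure of the class of partial expansions is exactly what the paper does implicitly when it multiplies expansions together and tracks remainder pairs.
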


\begin{proof}
The first expansion of \eqref{eq.fg_exp_W} follows immediately from the third part of \eqref{eq.fg_exp} and the first equation in \eqref{eq.geom_vertical}.
For the remaining expansions, we adopt the following shorthand:~we say $( \mf{a}, \ms{r} )$ is a \emph{remainder pair} iff $\mf{a}$ is a $C^{ M_0 - n }$ tensor field on $\mi{I}$ and $\ms{r}$ is a vertical tensor field with $\ms{r} \rightarrow^{ M_0 - n } 0$.

For the second part of \eqref{eq.fg_exp_W}, we begin by applying Taylor's theorem along with \eqref{eq.fg_main_high}, \eqref{eq.fg_main_top}, and \eqref{eq.fg_main_free} as before to obtain, for some remainder pair $( \mf{a}_1, \ms{r}_1 )$, the expansion
\begin{align}
\label{eql.fg_exp_W_1} \mi{L}_\rho \gv &= \begin{cases} \sum_{ k = 1 }^{ \frac{ n - 1 }{2} } \mc{D}^{ 2 k } ( \gm ) \cdot \rho^{ 2 k - 1 } + \mf{a}_1 \rho^{ n - 1 } + \ms{r}_1 \rho^{ n - 1 } & \text{$n$ odd,} \\ \sum_{ k = 1 }^{ \frac{ n - 2 }{2} } \mc{D}^{ 2 k } ( \gm ) \cdot \rho^{ 2 k - 1 } + \mc{D}^n ( \gm ) \cdot \rho^{ n - 1 } \log \rho + \mf{a}_1 \rho^{ n - 1 } + \ms{r}_1 \rho^{ n - 1 } & \text{$n$ even.} \end{cases}
\end{align}
Moreover, multiplying \eqref{eql.fg_exp_W_1} by itself and using the first expansion in \eqref{eq.fg_exp}, we compute
\begin{align}
\label{eql.fg_exp_W_2} \mi{S} ( \mi{L}_\rho \gv, \mi{L}_\rho \gv ) &= \begin{cases} \sum_{ k = 1 }^{ \frac{ n - 1 }{2} } \mc{D}^{ 2k } ( \gm ) \cdot \rho^{ 2 k } + \mf{a}_2 \rho^n + \ms{r}_2 \rho^n & \text{$n$ odd,} \\ \sum_{ k = 1 }^{ \frac{ n - 2 }{2} } \mc{D}^{ 2k } ( \gm ) \cdot \rho^{ 2 k } + \mc{D}^n ( \gm ) \cdot \rho^n \log \rho + \mf{a}_2 \rho^n + \ms{r}_2 \rho^n & \text{$n$ even,} \end{cases} \\
\notag \mi{S} ( \rho^{-1} \gv, \mi{L}_\rho \gv ) &= \begin{cases} \sum_{ k = 1 }^{ \frac{ n - 1 }{2} } \mc{D}^{ 2k } ( \gm ) \cdot \rho^{ 2 k - 2 } + \mf{a}_3 \rho^{ n - 2 } + \ms{r}_3 \rho^{ n - 2 } & \text{$n$ odd,} \\ \sum_{ k = 1 }^{ \frac{ n - 2 }{2} } \mc{D}^{ 2k } ( \gm ) \cdot \rho^{ 2 k - 2 } + \mc{D}^n ( \gm ) \cdot \rho^{ n - 2 } \log \rho + \mf{a}_3 \rho^{ n - 2 } + \ms{r}_3 \rho^{ n - 2 } & \text{$n$ even,} \end{cases}
\end{align}
where $( \mf{a}_2, \ms{r}_2 )$ and $( \mf{a}_3, \mf{r}_3 )$ are remainder pairs.
Substituting the second expansion of \eqref{eq.fg_exp} and the expansions in \eqref{eql.fg_exp_W_2} into the right-hand side of \eqref{eq.geom_vertical} results in the second expansion of \eqref{eq.fg_exp_W}.

For the final expansion, we first obtain---using \eqref{eq.fg_main_high}, \eqref{eq.fg_main_top}, and \eqref{eq.fg_main_free}---that
\[
\gv^{-1} = \begin{cases} \sum_{ k = 0 }^{ \frac{ n - 1 }{2} } \mc{D}^{ 2k } ( \gm ) \cdot \rho^{ 2 k } + \mf{a}_4 \rho^n + \ms{r}_4 \rho^n & \text{$n$ odd,} \\ \sum_{ k = 0 }^{ \frac{ n - 2 }{2} } \mc{D}^{ 2k } ( \gm ) \cdot \rho^{ 2 k } + \mc{D}^n ( \gm ) \cdot \rho^n \log \rho + \mf{a}_4 \rho^n + \ms{r}_4 \rho^n & \text{$n$ even,} \end{cases}
\]
with $( \mf{a}_4, \ms{r}_4 )$ being a remainder pair.
From the first part of \eqref{eql.fg_exp_W_2} and the above, we then have
\begin{equation}
\label{eql.fg_exp_W_3} \mi{S} ( \gv^{-1}, \mi{L}_\rho \gv, \mi{L}_\rho \gv ) = \begin{cases} \sum_{ k = 1 }^{ \frac{ n - 1 }{2} } \mc{D}^{ 2k } ( \gm ) \cdot \rho^{ 2 k } + \mf{a}_5 \rho^n + \ms{r}_5 \rho^n & \text{$n$ odd,} \\ \sum_{ k = 1 }^{ \frac{ n - 2 }{2} } \mc{D}^{ 2k } ( \gm ) \cdot \rho^{ 2 k } + \mc{D}^n ( \gm ) \cdot \rho^n \log \rho + \mf{a}_5 \rho^n + \ms{r}_5 \rho^n & \text{$n$ even,} \end{cases} \\
\end{equation}
with $( \mf{a}_5, \mr{r}_5 )$ also being a remainder pair.
Furthermore, recalling \eqref{eq.fg_main_high}, \eqref{eq.fg_main_top}, \eqref{eq.fg_main_free} again and then applying Taylor's theorem as before, we derive the expansion
\begin{equation}
\label{eql.fg_exp_W_4} \mi{L}_\rho^2 \gv - \rho^{-1} \mi{L}_\rho \gv = \begin{cases} \sum_{ k = 2 }^{ \frac{ n - 1 }{2} } \mc{D}^{ 2k } ( \gm ) \cdot \rho^{ 2 k - 2 } + \mf{a}_6 \rho^{ n - 2 } + \ms{r}_6 \rho^{ n - 2 } & \text{$n$ odd,} \\ \sum_{ k = 2 }^{ \frac{ n - 2 }{2} } \mc{D}^{ 2k } ( \gm ) \cdot \rho^{ 2 k - 2 } + \mc{D}^n ( \gm ) \cdot \rho^{ n - 2 } \log \rho + \mf{a}_6 \rho^{ n - 2 } + \ms{r}_6 \rho^{ n - 2 } & \text{$n$ even,} \end{cases}
\end{equation}
where $( \mf{a}_6, \ms{r}_6 )$ is a remainder pair.
(Note in particular that the leading terms of the expansions for $\mi{L}_\rho^2 \gv$ and $\rho^{-1} \mi{L}_\rho \gv$ cancel.)
Applying the third formula of \eqref{eq.geom_vertical} and substituting the expansions of \eqref{eql.fg_exp_W_3} and \eqref{eql.fg_exp_W_4} into the right-hand side yields the last expansion of \eqref{eq.fg_exp_W}.
\end{proof}

\subsection{Proof of Theorem \ref{thm.fg_main}} \label{sec.fg_proof}

This subsection is dedicated to the proof of the main result.
Throughout, we will assume that the hypotheses of Theorem \ref{thm.fg_main} hold.

\subsubsection{The Initial Limits}

The first step is to establish the lower-order limits \eqref{eq.fg_main_low_g} and \eqref{eq.fg_main_low_R}.
In proving this, we will also set up the iteration process for obtaining the higher-order limits.

\begin{lemma} \label{thm.limit_low}
$\mi{L}_\rho \gv$ is locally bounded in $C^{ M_0 + 1 }$.
Furthermore,
\begin{equation}
\label{eq.limit_low} \mi{L}_\rho \gv \Rightarrow^{ M_0 } 0 \text{,} \qquad \rho \mi{L}_\rho^2 \gv \Rightarrow^{ M_0 } 0 \text{.}
\end{equation}
\end{lemma}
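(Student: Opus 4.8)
The plan is to base the whole lemma on the first two identities of \eqref{eq.einstein_vertical_ex}, which say that $\mi{L}_\rho\gv$ and its $\gv$-trace satisfy equations of exactly the type \eqref{eq.limit_positive_eq} handled by Proposition \ref{thm.limit_positive},
\[
\rho\mi{L}_\rho(\mi{L}_\rho\gv) - (n-1)\mi{L}_\rho\gv = \trace{\gv}\mi{L}_\rho\gv\cdot\gv + 2\rho\,\Rcv + \rho\,\mi{S}(\gv;\mi{L}_\rho\gv,\mi{L}_\rho\gv)\text{,}
\]
together with the scalar companion equation for $\trace{\gv}\mi{L}_\rho\gv$ with constant $2n-1$. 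Both constants $n-1$ and $2n-1$ are \emph{positive}, and this is the only place the hypothesis $n=\dim\mi{I}>1$ is used: it is precisely what lets Proposition \ref{thm.limit_positive} produce finite limits at $\rho=0$. The obstruction to applying that proposition at once is that the forcings are not yet of the required form: the term $\trace{\gv}\mi{L}_\rho\gv\cdot\gv$ carries no factor of $\rho$, the quadratic term is self-referential, and $\Rcv$ is only known to be locally bounded in $C^{M_0}$ (by Proposition \ref{thm.geom_bound}, applied to $\|\gv\|_{M_0+2,\varphi}<\infty$). So one must bootstrap, starting only from the weak hypothesis \eqref{eq.fg_main_ass_Lg}.

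The first and hardest step is to upgrade \eqref{eq.fg_main_ass_Lg} to local boundedness of $\mi{L}_\rho\gv$ in $C^0$. I would fix a compact coordinate system $(U,\varphi)$ and work pointwise at a point of $U$ --- this is forced by the gap between $\sup_U\int$ and $\int\sup_U$ in \eqref{eq.fg_main_ass_Lg} --- and set $F(\rho):=|\mi{L}_\rho\gv|_{0,\varphi}|_\rho$ and $G(\rho):=|\trace{\gv}\mi{L}_\rho\gv||_\rho$, so that $\int_0^{\rho_0}F\,d\sigma\le C_0<\infty$ with $C_0$ uniform in the chosen point. Applying the estimate \eqref{eq.limit_positive_bound} (base point $\rho_\ast=\rho_0$) to the scalar equation for $\trace{\gv}\mi{L}_\rho\gv$, whose forcing is $\rho$ times a field bounded by $C(1+F^2)$, and using $\rho^{2n-1}\sigma^{-(2n-1)}\le1$, gives $G(\rho)\lesssim\rho+\rho^{2n-1}G(\rho_0)+\int_\rho^{\rho_0}F(\sigma)^2\,d\sigma$; here the \emph{larger} constant $2n-1$ of the trace equation is what tames the $\rho$-free term. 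Substituting this, along with $|\trace{\gv}\mi{L}_\rho\gv\cdot\gv|_{0,\varphi}\lesssim G$, $|\mi{S}(\gv;\mi{L}_\rho\gv,\mi{L}_\rho\gv)|_{0,\varphi}\lesssim F^2$, and $\|\Rcv\|_{0,\varphi}<\infty$, into \eqref{eq.limit_positive_bound} for $\mi{L}_\rho\gv$ (constant $n-1$), and simplifying the resulting iterated integral by Fubini, one obtains
\[
F(\rho)\le a+C\int_\rho^{\rho_0}F(\sigma)^2\,d\sigma\text{,}
\]
with $a$ a finite constant depending only on $n$ and on the values of $\mi{L}_\rho\gv$, $\trace{\gv}\mi{L}_\rho\gv$ at $\rho=\rho_0$ (bounded uniformly on $\bar U$ by smoothness). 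Reading one factor of $F(\sigma)$ in the quadratic integrand as an $L^1$ coefficient, with $L^1$-norm $\le C_0$, Gr\"onwall's inequality yields $F(\rho)\le a\,e^{CC_0}$, hence $\|\mi{L}_\rho\gv\|_{0,\varphi}<\infty$. I expect this Gr\"onwall bootstrap --- taming the $\rho$-free trace term via the larger constant of its own equation, and linearizing the quadratic term against an $L^1$ coefficient so that no smallness is needed --- to be the main difficulty of the lemma.

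With $\mi{L}_\rho\gv$ now locally bounded in $C^0$, the order-$0$ assertions follow. Since $\Rsv$ and $\mi{S}(\gv;\mi{L}_\rho\gv,\mi{L}_\rho\gv)$ are now locally bounded in $C^0$, Proposition \ref{thm.limit_bound} gives $\rho(2\Rsv+\mi{S}(\gv;\mi{L}_\rho\gv,\mi{L}_\rho\gv))\Rightarrow^0 0$, so Proposition \ref{thm.limit_positive} applied to the trace equation gives $\trace{\gv}\mi{L}_\rho\gv\Rightarrow^0 0$. Also, the $C^0$-bound on $\mi{L}_\rho\gv$ gives $|\gv-\gm|_{0,\varphi}\lesssim\rho$ (by the fundamental theorem of calculus, using $\mi{L}_\rho\gv\in L^1$), hence $\gv\Rightarrow^0\gm$. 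Then the right-hand side of the displayed evolution equation is $\Rightarrow^0 0$: the term $\trace{\gv}\mi{L}_\rho\gv\cdot\gv$ is a product of fields that $\Rightarrow^0 0$ and $\Rightarrow^0\gm$ (Proposition \ref{thm.limit_schematic}), and the other two terms carry a factor $\rho$ (Proposition \ref{thm.limit_bound}). Proposition \ref{thm.limit_positive} with $c=n-1$ then gives $\mi{L}_\rho\gv\Rightarrow^0 0$ and $\rho\mi{L}_\rho^2\gv\Rightarrow^0 0$, which is \eqref{eq.limit_low} for $M=0$.

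Finally, I would propagate all of this to $\mi{I}$-derivatives by induction on the order $m$, up to $m=M_0$. Since coordinate derivatives commute with $\mi{L}_\rho$ on components, each $\partial^m$-component of $\mi{L}_\rho\gv$ satisfies the same equation with constant $n-1$ and forcing $\partial^m$ of the right-hand side. By the Leibniz rule this splits into: a top-order piece that is again $\gv$-linear in $\partial^m\mi{L}_\rho\gv$, absorbed exactly as at $m=0$ via the scalar equation for $\partial^m(\trace{\gv}\mi{L}_\rho\gv)$ (the Gr\"onwall step now using an $O(\rho)$ coefficient since $\|\mi{L}_\rho\gv\|_{0,\varphi}<\infty$, hence closing immediately); the curvature piece $2\rho\,\partial^m\Rcv$, which for $m\le M_0$ is $\rho$ times a locally bounded field by Proposition \ref{thm.geom_bound}; and lower-order pieces involving $\partial^j\mi{L}_\rho\gv$ or $\partial^j(\trace{\gv}\mi{L}_\rho\gv)$ with $j<m$, which are $\Rightarrow^0 0$ by the inductive hypothesis (times bounded $\gv$-, $\gv^{-1}$-factors, using $\gv\Rightarrow^m\gm$). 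Thus the forcing is locally bounded and $\Rightarrow^0 0$ at each order, so the Gr\"onwall estimate and Proposition \ref{thm.limit_positive} give, for $m\le M_0$, that $\partial^m\mi{L}_\rho\gv$ is locally bounded with $\partial^m\mi{L}_\rho\gv\Rightarrow^0 0$ and $\rho\,\partial^m\mi{L}_\rho^2\gv\Rightarrow^0 0$; summing over $m$ gives $\mi{L}_\rho\gv\Rightarrow^{M_0}0$ and $\rho\mi{L}_\rho^2\gv\Rightarrow^{M_0}0$. The remaining assertion --- local boundedness of $\mi{L}_\rho\gv$ one order further, in $C^{M_0+1}$, past where the $C^{M_0}$-control of $\Rcv$ delivers convergence --- is obtained by carrying the boundedness half of this iteration one more step, the bookkeeping of derivative orders (in particular the two orders of slack in the hypothesis $\gv\in C^{M_0+2}$) being supplied by Proposition \ref{thm.geom_bound}; I expect this to involve only care with indices and no new idea.
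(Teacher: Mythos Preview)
Your approach is correct and essentially matches the paper's: both rest on the first two identities of \eqref{eq.einstein_vertical_ex}, bootstrap $C^0$-boundedness of $\mi{L}_\rho\gv$ from \eqref{eq.fg_main_ass_Lg} via Gr\"onwall (linearizing the quadratic schematic term against one $L^1$ factor of $|\mi{L}_\rho\gv|_{0,\varphi}$), iterate the boundedness to higher orders, and finish with Proposition \ref{thm.limit_positive}. The only difference is organizational---the paper runs the boundedness induction all the way up first and then invokes Proposition \ref{thm.limit_positive} once at the top level $M_0$, rather than interleaving boundedness and convergence as you suggest---but both orderings work.
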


\begin{proof}
Let $( U, \varphi )$ denote a compact coordinate system on $\mi{I}$.
We begin by applying \eqref{eq.limit_positive_bound}, with $\rho_\ast := \rho_0$, to the second identity of \eqref{eq.einstein_vertical_ex}.
This yields the inequality
\begin{equation}
\label{eql.limit_low_11} | \trace{\gv} \mi{L}_\rho \gv | |_{ 0, \varphi } \lesssim | \trace{\gv} \mi{L}_\rho \gv |_{ 0, \varphi } |_{ \rho_0 } + \rho^{ 2 n - 1 } \int_\rho^{ \rho_0 } \sigma^{ - 2 n + 1 } [ | \Rsv |_{ 0, \varphi } + | \mi{S} ( \gv; \mi{L}_\rho \gv, \mi{L}_\rho \gv ) |_{ 0, \varphi } ] |_\sigma d \sigma \text{.}
\end{equation}
The first term on the right-hand side of \eqref{eql.limit_low_11} is bounded, since $( U, \varphi )$ is compact.
Moreover, the curvature term in \eqref{eql.limit_low_11} can be controlled using \eqref{eq.geom_bound_R} and \eqref{eq.fg_main_ass_g}.
Therefore, bounding the factors of $\gv$ and $\gv^{-1}$ in the remaining schematic term in \eqref{eql.limit_low_11} using \eqref{eq.geom_bound_g} and \eqref{eq.fg_main_ass_g}, we obtain
\begin{equation}
\label{eql.limit_low_12} | \trace{\gv} \mi{L}_\rho \gv | |_{ 0, \varphi } \lesssim 1 + \int_\rho^{ \rho_0 } | \mi{L}_\rho \gv |_{ 0, \varphi } | \mi{L}_\rho \gv |_{ 0, \varphi } |_\sigma d \sigma \text{.}
\end{equation}

Applying \eqref{eq.limit_positive_bound} to the first identity in \eqref{eq.einstein_vertical_ex}---along with \eqref{eq.geom_bound_g}, \eqref{eq.geom_bound_R}, \eqref{eq.fg_main_ass_g}, and \eqref{eql.limit_low_12}---yields
\begin{align*}
| \mi{L}_\rho \gv |_{ 0, \varphi } &\lesssim 1 + \rho^{ n - 1 } \int_\rho^{ \rho_0 } \sigma^{ -n } \left[ | \trace{\gv} \mi{L}_\rho \gv |_{ 0, \varphi } + \sigma | \Rcv |_{ 0, \varphi } + \sigma | \mi{L}_\rho \gv |_{ 0, \varphi } | \mi{L}_\rho \gv |_{ 0, \varphi } \right] |_\sigma d \sigma\\
&\lesssim 1 + \rho^{ n - 1 } \int_\rho^{ \rho_0 } \sigma^{-n} | \trace{\gv} \mi{L}_\rho \gv |_{ 0, \varphi } |_\sigma d \sigma + \int_\rho^{ \rho_0 } | \mi{L}_\rho \gv |_{ 0, \varphi } | \mi{L}_\rho \gv |_{ 0, \varphi } |_\sigma d \sigma \\
&\lesssim 1 + \int_\rho^{ \rho_0 } | \mi{L}_\rho \gv |_{ 0, \varphi } | \mi{L}_\rho \gv |_{ 0, \varphi } |_\sigma d \sigma \text{.}
\end{align*}
Applying Gronwall's inequality to the above and recalling the assumption \eqref{eq.fg_main_ass_Lg} yields
\begin{equation}
\label{eql.limit_low_13} \| \mi{L}_\rho \gv \|_{ 0, \varphi } \lesssim 1 \text{.}
\end{equation}

Fix now $0 < M \leq M_0$, and assume for induction that
\begin{equation}
\label{eql.limit_low_20} \| \mi{L}_\rho \gv \|_{ M - 1, \varphi } \lesssim 1 \text{.}
\end{equation}
Observe that the product rule, \eqref{eq.geom_bound_g}, \eqref{eq.fg_main_ass_g}, and \eqref{eql.limit_low_20} imply the schematic bound
\begin{equation}
\label{eql.limit_low_21} | \mi{S} ( \gv; \mi{L}_\rho \gv, \mi{L}_\rho \gv ) |_{ M, \varphi } \lesssim 1 + | \mi{L}_\rho \gv |_{ M, \varphi } \text{.}
\end{equation}
We now apply \eqref{eq.limit_positive_bound} to the second equation in \eqref{eq.einstein_vertical_ex}, with $M$ derivatives, to obtain
\begin{align}
\label{eql.limit_low_22} | \trace{\gv} \mi{L}_\rho \gv | |_{ M, \varphi } &\lesssim | \trace{\gv} \mi{L}_\rho \gv |_{ M, \varphi } |_{ \rho_0 } + \rho^{ 2 n - 1 } \int_\rho^{ \rho_0 } \sigma^{ - 2 n + 1 } [ | \Rsv |_{ M, \varphi } + | \mi{S} ( \gv; \mi{L}_\rho \gv, \mi{L}_\rho \gv ) |_{ M, \varphi } ] |_\sigma d \sigma \\
\notag &\lesssim 1 + \int_\rho^{ \rho_0 } | \mi{L}_\rho \gv |_{ M, \varphi } |_\sigma d \sigma \text{,}
\end{align}
where in the second step, we used \eqref{eq.geom_bound_R}, \eqref{eq.fg_main_ass_g}, \eqref{eql.limit_low_21}, and the compactness of $( U, \varphi )$.
Applying \eqref{eq.limit_positive_bound} to the first identity of \eqref{eq.einstein_vertical_ex} and recalling \eqref{eq.geom_bound_R}, \eqref{eq.fg_main_ass_g}, \eqref{eql.limit_low_21}, and \eqref{eql.limit_low_22} yields
\begin{align*}
| \mi{L}_\rho \gv |_{ M, \varphi } &\lesssim 1 + \rho^{ n - 1 } \int_{ \rho_0 }^\rho \sigma^{ -n } \left[ | \trace{\gv} \mi{L}_\rho \gv |_{ M, \varphi } + \sigma | \Rcv |_{ M, \varphi } + \sigma | \mi{L}_\rho \gv |_{ M, \varphi } \right] |_\sigma d \sigma \\
&\lesssim 1 + \int_\rho^{ \rho_0 } | \mi{L}_\rho \gv |_{ M, \varphi } |_\sigma d \sigma \text{.}
\end{align*}
Applying Gronwall's inequality to the above yields the bound
\[
\| \mi{L}_\rho \gv \|_{ M, \varphi } \lesssim 1 \text{.}
\]

As a result, iterating the argument from \eqref{eql.limit_low_20} to the above, we conclude that
\begin{equation}
\label{eql.limit_low_30} \| \mi{L}_\rho \gv \|_{ M_0, \varphi } \lesssim 1 \text{,} \qquad \| \mi{S} ( \gv; \mi{L}_\rho \gv, \mi{L}_\rho \gv ) \|_{ M_0, \varphi } \lesssim 1 \text{.}
\end{equation}
(Notice that the second bound in \eqref{eql.limit_low_30} follows from \eqref{eq.geom_bound_g}, \eqref{eq.fg_main_ass_g}, and the first part of \eqref{eql.limit_low_30}.)
Therefore, applying Proposition \ref{thm.limit_bound} to \eqref{eql.limit_low_30}, we obtain
\begin{equation}
\label{eql.limit_low_31} \rho \mi{L}_\rho \gv \Rightarrow^{ M_0 } 0 \text{,} \qquad \rho \cdot \mi{S} ( \mi{L}_\rho \gv, \mi{L}_\rho \gv ) \Rightarrow^{ M_0 } 0 \text{.}
\end{equation}
Similarly, Proposition \ref{thm.limit_bound}, \eqref{eq.geom_bound_R}, and \eqref{eq.fg_main_ass_g} imply that
\begin{equation}
\label{eql.limit_low_32} \rho \Rcv \Rightarrow^{ M_0 } 0 \text{,} \qquad \rho \Rsv \Rightarrow^{ M_0 } 0 \text{.}
\end{equation}
From \eqref{eql.limit_low_31} and \eqref{eql.limit_low_32}, we conclude that the right-hand side $\ms{I}$ of the second equation of \eqref{eq.einstein_vertical_ex} satisfies $\ms{I} \Rightarrow^{ M_0 } 0$.
Consequently, using Proposition \ref{thm.limit_positive}, we conclude that
\begin{equation}
\label{eql.limit_low_33} \trace{\gv} \mi{L}_\rho \gv \Rightarrow^{ M_0 } 0 \text{.}
\end{equation}
Also, applying Proposition \ref{thm.limit_positive} and \eqref{eql.limit_low_31}--\eqref{eql.limit_low_33} to the first part of \eqref{eq.einstein_vertical_ex} yields \eqref{eq.limit_low}.
\end{proof}

Now, by the fundamental theorem of calculus, we have, in any compact coordinate system $( U, \varphi )$,
\[
\gv_{ a b } - \gm_{ a b } = \int_0^\rho \mi{L}_\rho \gv_{ a b } |_\sigma d \sigma \text{.}
\]
Differentiating the above (in $\varphi$-coordinates) and recalling \eqref{eq.limit_low}, we see that
\[
\partial^M_{ c_1 \dots c_M } \gv_{ a b } - \partial^M_{ c_1 \dots c_M } \gm_{ a b } = \int_0^\rho \partial^M_{ c_1 \dots c_M } \mi{L}_\rho \gv_{ a b } |_\sigma d \sigma \text{,} \qquad 0 \leq M \leq M_0 \text{.}
\]
From the above and Lemma \ref{thm.limit_low}, we immediately obtain
\begin{align*}
\lim_{ \rho \searrow 0 } \sup_{ \{ \rho \} \times U } | \gv - \gm |_{ M_0, \varphi } &\lesssim \lim_{ \rho \searrow 0 } \int_0^\rho d \sigma \cdot \| \mi{L}_\rho \gv \|_{ M_0, \varphi } = 0 \text{,} \\
\int_0^{ \rho_0 } \frac{1}{ \sigma } | \gv - \gm |_{ M_0, \varphi } |_\sigma d \sigma &\lesssim \int_0^{ \rho_0 } \frac{1}{ \sigma } \int_0^\sigma d \tau d \sigma \cdot \| \mi{L}_\rho \gv \|_{ M_0, \varphi } < \infty \text{,}
\end{align*}
which yields the first limit in \eqref{eq.fg_main_low_g}.
This, along with the last part of \eqref{eq.geom_bound_g} and \eqref{eq.fg_main_ass_g}, then yields the second limit of \eqref{eq.fg_main_low_g}.
Similarly, the limits in \eqref{eq.fg_main_low_R} follow from \eqref{eq.geom_bound_R}, \eqref{eq.fg_main_ass_g}, \eqref{eq.fg_main_low_g}.

Finally, we establish some consequences of \eqref{eq.fg_main_low_g}:

\begin{lemma} \label{thm.limit_schematic_metric}
Let $0 \leq M \leq M_0$ and $N \geq 1$.
Furthermore, let $\ms{A}_1, \dots, \ms{A}_N$ denote vertical tensor fields, and let $\mf{A}_1, \dots, \mf{A}_N$ denote tensor fields on $\mi{I}$.
Then:
\begin{itemize}
\item If $\ms{A}_j \rightarrow^M \mf{A}_j$ for every $1 \leq j \leq N$, then there exists a tensor field $\mf{G}$ on $\mi{I}$ that satisfies $\mi{S} ( \gv; \ms{A}_1, \dots, \ms{A}_N ) \rightarrow^M \mf{G}$.
Moreover, $\mf{G}$ is tensor field defined by replacing every instance of $\ms{A}_1, \dots, \ms{A}_N, \gv, \gv^{-1}$ in $\mi{S} ( \gv; \ms{A}_1, \dots, \ms{A}_N )$ by $\mf{A}_1, \dots, \mf{A}_N, \gm, \gm^{-1}$, respectively.

\item If, in addition, $\ms{A}_j \Rightarrow^M \mf{A}_j$ for each $1 \leq j \leq N$, then $\mi{S} ( \gv; \ms{A}_1, \dots, \ms{A}_N ) \Rightarrow^M \mf{G}$.
\end{itemize}
\end{lemma}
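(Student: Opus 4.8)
The plan is to reduce the statement to Proposition \ref{thm.limit_schematic} by placing the internal factors of $\gv$ and $\gv^{-1}$ on the same footing as the arguments $\ms{A}_1, \dots, \ms{A}_N$, and then feeding in the boundary limits \eqref{eq.fg_main_low_g} established just above.

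First I would unpack the schematic notation. Fix a representative $\sch{\gv; \ms{A}_1, \dots, \ms{A}_N} = \sum_{j=1}^T \mc{Q}_j ( \ms{A}_1 \otimes \dots \otimes \ms{A}_N )$ as in Definition \ref{def.schematic_vertical}, with each $\mc{Q}_j$ schematically $\gv$-trivial. By Definition \ref{def.schematic_trivial}, each $\mc{Q}_j$ is built from component permutations, non-metric contractions, constant multiplications, and tensor products with $\gv$ and $\gv^{-1}$; gathering the latter, one can write $\mc{Q}_j ( \ms{A}_1 \otimes \dots \otimes \ms{A}_N ) = \tilde{\mc{Q}}_j ( \gv^{\otimes p_j} \otimes ( \gv^{-1} )^{\otimes q_j} \otimes \ms{A}_1 \otimes \dots \otimes \ms{A}_N )$ for suitable integers $p_j, q_j \geq 0$ and a \emph{schematically trivial} operator $\tilde{\mc{Q}}_j$ (in particular a metric contraction is realized by first tensoring with $\gv^{-1}$ and then performing a non-metric contraction). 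Thus each summand is itself an instance of the plain schematic notation applied to the enlarged list consisting of $p_j$ copies of $\gv$, $q_j$ copies of $\gv^{-1}$, and $\ms{A}_1, \dots, \ms{A}_N$.

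Next I would invoke Proposition \ref{thm.limit_schematic}. Since $M \leq M_0$ and $| \cdot |_{ M, \varphi } \leq | \cdot |_{ M_0, \varphi }$, the limits \eqref{eq.fg_main_low_g} descend to $\gv \rightarrow^M \gm$ and $\gv^{-1} \rightarrow^M \gm^{-1}$ (and, for the second bullet, $\gv \Rightarrow^M \gm$, $\gv^{-1} \Rightarrow^M \gm^{-1}$). Combined with the hypothesis $\ms{A}_i \rightarrow^M \mf{A}_i$, Proposition \ref{thm.limit_schematic} applied to the single summand above yields $\mc{Q}_j ( \ms{A}_1 \otimes \dots \otimes \ms{A}_N ) \rightarrow^M \tilde{\mc{Q}}_j ( \gm^{\otimes p_j} \otimes ( \gm^{-1} )^{\otimes q_j} \otimes \mf{A}_1 \otimes \dots \otimes \mf{A}_N )$ for each $1 \leq j \leq T$. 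Summing these finitely many limits — legitimate because each $| \cdot |_{ M, \varphi }$ is subadditive, so a finite sum of $\rightarrow^M$-null vertical tensor fields is $\rightarrow^M$-null — produces $\mf{G} := \sum_{j=1}^T \tilde{\mc{Q}}_j ( \gm^{\otimes p_j} \otimes ( \gm^{-1} )^{\otimes q_j} \otimes \mf{A}_1 \otimes \dots \otimes \mf{A}_N )$ with $\sch{\gv; \ms{A}_1, \dots, \ms{A}_N} \rightarrow^M \mf{G}$. By construction $\mf{G}$ is exactly the tensor field on $\mi{I}$ obtained from the chosen representative by replacing every $\gv, \gv^{-1}, \ms{A}_i$ with $\gm, \gm^{-1}, \mf{A}_i$; this is well defined because $\gm$ is a Lorentzian metric on $\mi{I}$, so $\gm^{-1}$ exists. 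The second bullet is then proved verbatim, using the $\Rightarrow^M$ half of Proposition \ref{thm.limit_schematic} and noting that a finite sum again satisfies both conditions of \eqref{eq.aads_limit_ex} — the uniform limit by subadditivity as above, and the integral condition because $\int_0^{\rho_0} \sigma^{-1} | \sum_j ( \ms{B}_j - \mf{B}_j ) |_{ M, \varphi } |_\sigma \, d\sigma \leq \sum_j \int_0^{\rho_0} \sigma^{-1} | \ms{B}_j - \mf{B}_j |_{ M, \varphi } |_\sigma \, d\sigma < \infty$.

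The hard part here is purely bookkeeping rather than analytic: one must carefully justify that a schematically $\gv$-trivial operator always factors as a schematically trivial operator acting on an augmented tensor product in which the copies of $\gv$ and $\gv^{-1}$ appear explicitly, so that Proposition \ref{thm.limit_schematic} applies without modification. It is also worth flagging that the phrase ``$\mf{G}$ obtained by the indicated replacement'' should be read relative to a fixed representative $\sum_j \mc{Q}_j ( \cdots )$ of the schematic quantity, with different representatives yielding the same limit by uniqueness.
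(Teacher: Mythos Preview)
Your argument is correct and is precisely the approach the paper takes: the paper's proof is a one-line appeal to Definitions \ref{def.schematic_trivial}--\ref{def.schematic_vertical}, Proposition \ref{thm.limit_schematic}, and the limits \eqref{eq.fg_main_low_g}, and you have simply spelled out the bookkeeping (rewriting each schematically $\gv$-trivial summand as a schematically trivial operator on an enlarged tensor product containing explicit copies of $\gv$ and $\gv^{-1}$) that makes that appeal go through.
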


\begin{proof}
This is an immediate consequence of the Definitions \ref{def.schematic_trivial} and \ref{def.schematic_vertical}, the limit properties in Proposition \ref{thm.limit_schematic}, and the limits \eqref{eq.fg_main_low_g} (which yield the restriction $M \leq M_0$).
\end{proof}

\begin{lemma} \label{thm.limit_deriv}
Let $0 \leq M \leq M_0$, let $\ms{A}$ be a vertical tensor field, and let $\mf{A}$ be a tensor field on $\mi{I}$ of the same rank as $\ms{A}$.
Then, the following statements hold:
\begin{itemize}
\item $\ms{A} \rightarrow^M \mf{A}$ if and only if $\Dv^m \ms{A} \rightarrow^0 \Dm^m \mf{A}$ for every $0 \leq m \leq M$.

\item $\ms{A} \Rightarrow^M \mf{A}$ if and only if $\Dv^m \ms{A} \Rightarrow^0 \Dm^m \mf{A}$ for every $0 \leq m \leq M$.
\end{itemize}
\end{lemma}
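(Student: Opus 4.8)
The plan is to read off both equivalences from the two-sided comparison estimate of Proposition~\ref{thm.geom_bound_deriv}, using the boundary limit $\gv \Rightarrow^{M_0} \gm$ from \eqref{eq.fg_main_low_g} as an input (recall that throughout this subsection we work under the hypotheses of Theorem~\ref{thm.fg_main}). First I would fix an arbitrary compact coordinate system $(U,\varphi)$ on $\mi{I}$; since $\rightarrow^M$ and $\Rightarrow^M$ are defined by quantifying over all such systems, it suffices to verify their defining conditions for this one. Next I would observe that the structural bounds \eqref{eq.geom_bound_ass} hold on $(0,\rho_0]\times U$ for some $C$ depending on $M$ and $(U,\varphi)$: indeed $\|\gv\|_{M,\varphi}<\infty$ by \eqref{eq.fg_main_ass_g}, $\gv^{-1}$ is bounded in $C^0$ on $(0,\rho_0]\times U$ by \eqref{eq.fg_main_low_g}, and $\gm,\gm^{-1}$ are fixed smooth tensor fields on $\mi{I}$ restricted to the relatively compact $U$. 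The case $M=0$ is trivial since $\Dv^0\ms{A}=\ms{A}$, so I may assume $1\le M\le M_0$, placing us in the scope of Proposition~\ref{thm.geom_bound_deriv}.

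The equivalences are then proved by induction on $M$. For the forward implications, assume $\ms{A}\rightarrow^M\mf{A}$ (respectively $\ms{A}\Rightarrow^M\mf{A}$). Then $|\ms{A}-\mf{A}|_{M-1,\varphi}\le|\ms{A}-\mf{A}|_{M,\varphi}\to0$ as $\rho\searrow0$, while $|\mf{A}|_{M-1,\varphi}$ is bounded on $U$, so the factor $|\ms{A}|_{M-1,\varphi}+|\mf{A}|_{M-1,\varphi}$ appearing in \eqref{eq.geom_bound_deriv} is bounded near $\{\rho=0\}$; moreover $|\gv-\gm|_{M,\varphi}\to0$, and in the $\Rightarrow$ case also $\sup_U\int_0^{\rho_0}\sigma^{-1}|\gv-\gm|_{M,\varphi}|_\sigma\,d\sigma<\infty$, both by \eqref{eq.fg_main_low_g} since $M\le M_0$. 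Substituting these into the first inequality of \eqref{eq.geom_bound_deriv} gives $\sum_{m=0}^M|\Dv^m\ms{A}-\Dm^m\mf{A}|_{0,\varphi}\to0$ and, after dividing by $\sigma$ and integrating over $(0,\rho_0)$, that its integral against $\sigma^{-1}$ is finite; hence $\Dv^m\ms{A}\rightarrow^0\Dm^m\mf{A}$ (respectively $\Rightarrow^0$) for all $0\le m\le M$.

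For the converse implications, assume $\Dv^m\ms{A}\rightarrow^0\Dm^m\mf{A}$ (respectively $\Rightarrow^0$) for all $0\le m\le M$. Restricting to $0\le m\le M-1$ and applying the inductive hypothesis yields $\ms{A}\rightarrow^{M-1}\mf{A}$, so $|\ms{A}|_{M-1,\varphi}$ is bounded near $\{\rho=0\}$; with the same bounds on $|\gv-\gm|_{M,\varphi}$ as above, I would then substitute into the second inequality of \eqref{eq.geom_bound_deriv} to conclude $|\ms{A}-\mf{A}|_{M,\varphi}\to0$, and in the $\Rightarrow$ case also $\sup_U\int_0^{\rho_0}\sigma^{-1}|\ms{A}-\mf{A}|_{M,\varphi}|_\sigma\,d\sigma<\infty$, that is, $\ms{A}\rightarrow^M\mf{A}$ (respectively $\ms{A}\Rightarrow^M\mf{A}$).

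The only delicate point is the treatment of the cross term $(|\ms{A}|_{M-1,\varphi}+|\mf{A}|_{M-1,\varphi})\,|\gv-\gm|_{M,\varphi}$ in \eqref{eq.geom_bound_deriv}: in the forward direction its first factor is bounded immediately from the hypothesis, but in the converse direction one must first secure the lower-order convergence $\ms{A}\rightarrow^{M-1}\mf{A}$ before the estimate closes, which is exactly why the converse must be run as an induction on $M$ rather than in one step. Everything else is routine bookkeeping with the two inequalities of \eqref{eq.geom_bound_deriv}.
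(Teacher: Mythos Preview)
Your proposal is correct and follows essentially the same approach as the paper: both directions rest on the two inequalities of Proposition~\ref{thm.geom_bound_deriv} combined with the limit $\gv \Rightarrow^{M_0} \gm$ from \eqref{eq.fg_main_low_g}, with the forward direction handled in one step and the converse requiring an induction to bound the lower-order factor $|\ms{A}|_{M-1,\varphi}+|\mf{A}|_{M-1,\varphi}$ before the estimate closes. The only cosmetic difference is that the paper phrases the converse as an induction on the intermediate level $m$ (up to the fixed $M$) rather than on $M$ itself, but the content is identical.
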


\begin{proof}
Since both statements are trivial when $M = 0$, we assume henceforth that $M > 0$.

First, if $\ms{A} \rightarrow^M \mf{A}$, then the first part of \eqref{eq.geom_bound_deriv} implies that
\begin{equation}
\label{eql.limit_deriv_0} \sum_{ m = 0 }^M | \Dv^m \ms{A} - \Dm^m \mf{A} |_{ 0, \varphi } \lesssim_C | \ms{A} - \mf{A} |_{ M, \varphi } + | \gv - \gm |_{ M, \varphi } \text{,}
\end{equation}
for any compact coordinate system $( U, \varphi )$ on $\mi{I}$.
Applying \eqref{eq.fg_main_low_g} to \eqref{eql.limit_deriv_0}, we obtain that $\Dv^m \ms{A} \rightarrow^0 \Dm^m \mf{A}$ for any $0 \leq m \leq M$.
Furthermore, using a similar argument with \eqref{eq.fg_main_low_g} and \eqref{eql.limit_deriv_0}, we conclude as well that $\ms{A} \Rightarrow^M \mf{A}$ implies $\Dv^m \ms{A} \Rightarrow^0 \Dm^m \mf{A}$ for every $0 \leq m \leq M$.

Similarly, for the reverse directions, assume first that $\Dv^m \ms{A} \rightarrow^0 \Dm^m \mf{A}$ for each $0 \leq m \leq M$.
Then, from the second part of \eqref{eq.geom_bound_deriv}, we obtain, for any compact coordinate system $( U, \varphi )$ on $\mi{I}$, that
\begin{align}
\label{eql.limit_deriv_1} | \ms{A} - \mf{A} |_{ m, \varphi } &\lesssim \sum_{ s = 0 }^m | \Dv^s \ms{A} - \Dm^s \mf{A} |_{ 0, \varphi } + ( | \ms{A} |_{ m - 1, \varphi } + | \mf{A} |_{ m - 1, \varphi } ) | \gv - \gm |_{ m, \varphi } \\
\notag &\lesssim \sum_{ s = 0 }^m | \Dv^s \ms{A} - \Dm^s \mf{A} |_{ 0, \varphi } + ( 1 + | \ms{A} - \mf{A} |_{ m - 1, \varphi } ) | \gv - \gm |_{ m, \varphi } \text{,}
\end{align}
for any $0 < m \leq M$.
Combining \eqref{eql.limit_deriv_1} with \eqref{eq.fg_main_low_g} and an induction argument over $m$ yields the desired limit $\ms{A} \rightarrow^M \ms{A}$.
Furthermore, if $\Dv^m \ms{A} \Rightarrow^0 \Dm^m \mf{A}$ for every $0 \leq m \leq M$, then the same estimate \eqref{eql.limit_deriv_1}, along with an analogous induction argument, implies $\ms{A} \Rightarrow^M \mf{A}$.
\end{proof}

\begin{lemma} \label{thm.limit_low_R}
The following limits hold:
\begin{equation}
\label{eq.limit_low_R} \mi{L}_\rho \Rv \Rightarrow^{ M_0 - 2 } 0 \text{,} \qquad \rho \mi{L}_\rho^2 \Rv \Rightarrow^{ M_0 - 2 } 0 \text{.}
\end{equation}
\end{lemma}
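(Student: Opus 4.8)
The plan is to derive both limits directly from the equations already available for $\Rv$ and its $\mi{L}_\rho$-derivatives, feeding in the limits \eqref{eq.limit_low} obtained in Lemma \ref{thm.limit_low}. For the first limit $\mi{L}_\rho \Rv \Rightarrow^{ M_0 - 2 } 0$, I would start from the third identity of \eqref{eq.einstein_vertical_ex}, namely $\mi{L}_\rho \Rv = \mi{S} ( \gv; \Dv^2 \mi{L}_\rho \gv )$. Since Lemma \ref{thm.limit_low} gives $\mi{L}_\rho \gv \Rightarrow^{ M_0 } 0$, Lemma \ref{thm.limit_deriv} upgrades this to $\Dv^j \mi{L}_\rho \gv \Rightarrow^0 0$ for all $0 \leq j \leq M_0$ and, reassembling, to $\Dv^2 \mi{L}_\rho \gv \Rightarrow^{ M_0 - 2 } 0$. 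Then Lemma \ref{thm.limit_schematic_metric} converts the schematic right-hand side into a limit: $\mi{S} ( \gv; \Dv^2 \mi{L}_\rho \gv ) \Rightarrow^{ M_0 - 2 } \mi{S} ( \gm; 0 ) = 0$, which is exactly the claim.

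For the second limit $\rho \mi{L}_\rho^2 \Rv \Rightarrow^{ M_0 - 2 } 0$, I would apply the first identity of \eqref{eq.curvature_vertical_deriv} with $k = 2$, which expresses $\mi{L}_\rho^2 \Rv$ as a combination of schematic terms of the shapes $\mi{S} ( \gv; \Dv^2 \mi{L}_\rho^2 \gv )$, $\mi{S} ( \gv; \mi{L}_\rho \gv, \Dv^2 \mi{L}_\rho \gv )$, and $\mi{S} ( \gv; \Dv \mi{L}_\rho \gv, \Dv \mi{L}_\rho \gv )$. Multiplying through by $\rho$ and using that $\Dv$ commutes with multiplication by $\rho$, the first term becomes $\mi{S} ( \gv; \Dv^2 ( \rho \mi{L}_\rho^2 \gv ) )$; since $\rho \mi{L}_\rho^2 \gv \Rightarrow^{ M_0 } 0$ by Lemma \ref{thm.limit_low}, Lemmas \ref{thm.limit_deriv} and \ref{thm.limit_schematic_metric} force this term to $\Rightarrow^{ M_0 - 2 } 0$. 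The remaining two terms are locally bounded in $C^{ M_0 - 2 }$ — they are schematic products of $\gv$, $\gv^{-1}$, and derivatives of $\mi{L}_\rho \gv$, all locally bounded by Lemma \ref{thm.limit_low} — so after multiplication by $\rho$, Proposition \ref{thm.limit_bound} sends them to $\Rightarrow^{ M_0 - 2 } 0$ as well. Summing then gives $\rho \mi{L}_\rho^2 \Rv \Rightarrow^{ M_0 - 2 } 0$.

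I expect the only genuine subtlety to lie in the interplay between the loss of two derivatives and the factor of $\rho$ in the second limit: $\mi{L}_\rho^2 \gv$ itself is not known to be bounded, only $\rho \mi{L}_\rho^2 \gv$ is (via \eqref{eq.limit_low}), so one must keep that $\rho$ paired with the term $\Dv^2 \mi{L}_\rho^2 \gv$ and exploit the commutation of $\Dv$ with multiplication by $\rho$ before invoking Lemmas \ref{thm.limit_deriv} and \ref{thm.limit_schematic_metric}. Everything else is routine bookkeeping with the limit notions $\rightarrow^M$, $\Rightarrow^M$ and the schematic calculus developed in Section \ref{sec.aads_estimates}.
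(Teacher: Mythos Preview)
Your proof is correct and follows essentially the same route as the paper. For the first limit both arguments use the third identity in \eqref{eq.einstein_vertical_ex} together with Lemmas \ref{thm.limit_deriv} and \ref{thm.limit_schematic_metric}; for the second limit your appeal to \eqref{eq.curvature_vertical_deriv} with $k=2$ is exactly what the paper obtains by differentiating the third identity in \eqref{eq.einstein_vertical_ex} and commuting via \eqref{eq.comm_vertical}, producing the same three schematic terms $I_1, I_2, I_3$ handled in the same way.
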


\begin{proof}
Since \eqref{eq.limit_low} and Lemma \ref{thm.limit_deriv} imply $\Dv^2 \mi{L}_\rho \gv \Rightarrow^{ M_0 - 2 } 0$, then Lemma \ref{thm.limit_schematic_metric} yields
\begin{equation}
\label{eql.limit_low_R_1} \mi{S} ( \gv; \Dv^2 \mi{L}_\rho \gv ) \Rightarrow^{ M_0 - 2 } 0 \text{.}
\end{equation}
As a result, the third equation in \eqref{eq.einstein_vertical_ex} and \eqref{eql.limit_low_R_1} yield the first limit in \eqref{eq.limit_low_R}.

For the remaining limit, we differentiate the third equation of \eqref{eq.einstein_vertical_ex} to obtain
\[
\rho \mi{L}_\rho^2 \Rv = \mi{S} ( \gv; \rho \mi{L}_\rho \Dv^2 \mi{L}_\rho \gv ) + \mi{S} ( \gv; \mi{L}_\rho \gv, \rho \Dv^2 \mi{L}_\rho \gv ) \text{.}
\]
Applying the commutation formula \eqref{eq.comm_vertical} twice to the first term on the right-hand side yields
\begin{align}
\label{eql.limit_low_R_2} \rho \mi{L}_\rho^2 \Rv &= \mi{S} ( \gv; \rho \Dv^2 \mi{L}^2_\rho \gv ) + \rho \cdot \mi{S} ( \gv; \Dv \mi{L}_\rho \gv, \Dv \mi{L}_\rho \gv ) + \rho \cdot \mi{S} ( \gv; \mi{L}_\rho \gv, \Dv^2 \mi{L}_\rho \gv ) \\
\notag &:= I_1 + I_2 + I_3 \text{.}
\end{align}
By Proposition \ref{thm.limit_bound}, \eqref{eq.limit_low}, Lemma \ref{thm.limit_schematic_metric}, and Lemma \ref{thm.limit_deriv}, the schematic terms satisfy
\[
I_1 \Rightarrow^{ M_0 - 2 } 0 \text{,} \qquad I_2 \Rightarrow^{ M_0 - 1 } 0 \text{,} \qquad I_3 \Rightarrow^{ M_0 - 2 } 0 \text{.}
\]
The final limit of \eqref{eq.limit_low_R} now follows from \eqref{eql.limit_low_R_2} and the above.
\end{proof}

\subsubsection{The Non-Anomalous Limits}

In Lemmas \ref{thm.limit_low} and \ref{thm.limit_low_R}, we obtained zeroth and first-order (in $\mi{L}_\rho$) boundary limits for $\gv$ and $\Rv$.
Next, we derive the boundary limits \eqref{eq.fg_main_high_depend} and \eqref{eq.fg_main_high}, that is, the limits up to (but not including) order $n$, before the anomalous logarithmic power.

We begin with the limits for $\gv$ and $\Rv$.
First, observe that the cases $k = 0$ and $k = 1$ in already follow from \eqref{eq.fg_main_low_g}, \eqref{eq.fg_main_low_R}, Lemma \ref{thm.limit_low}, and Lemma \ref{thm.limit_low_R}:
\begin{align}
\label{eq.limit_high_pre} \gv \Rightarrow^{ M_0 } \gm \text{,} \qquad \rho \mi{L}_\rho \gv \Rightarrow^{ M_0 } 0 \text{,} &\qquad \mi{L}_\rho \gv \Rightarrow^{ M_0 - 1 } 0 \text{,} \qquad \rho \mi{L}_\rho^2 \gv \Rightarrow^{ M_0 - 1 } 0 \text{,} \\
\notag \Rv \Rightarrow^{ M_0 - 2 } \Rm \text{,} \qquad \rho \mi{L}_\rho \Rv \Rightarrow^{ M_0 - 2 } 0 \text{,} &\qquad \mi{L}_\rho \Rv \Rightarrow^{ M_0 - 3 } 0 \text{,} \qquad \rho \mi{L}_\rho^2 \Rv \Rightarrow^{ M_0 - 3 } 0 \text{.}
\end{align}
For the remaining cases $2 \leq k < n$, we apply an induction argument over $k$:

\begin{lemma} \label{thm.limit_high}
Fix $2 \leq k < n$, and assume the following hold for all $0 \leq p < k$:
\begin{align}
\label{eq.limit_high_ass} \mi{L}_\rho^p \gv \Rightarrow^{ M_0 - p } \mc{D}^p ( \gm; p ) \text{,} &\qquad \rho \mi{L}_\rho^{ p + 1 } \gv \Rightarrow^{ M_0 - p } 0 \text{,} \\
\notag \mi{L}_\rho^p \Rv \Rightarrow^{ M_0 - p - 2 } \mc{D}^{ p + 2 } ( \gm; p ) \text{,} &\qquad \rho \mi{L}_\rho^{ p + 1 } \Rv \Rightarrow^{ M_0 - p - 2 } 0 \text{.}
\end{align}
Then, the following limits also hold:
\begin{align}
\label{eq.limit_high} \mi{L}_\rho^k \gv \Rightarrow^{ M_0 - k } \mc{D}^k ( \gm; k ) \text{,} &\qquad \rho \mi{L}_\rho^{ k + 1 } \gv \Rightarrow^{ M_0 - k } 0 \text{,} \\
\notag \mi{L}_\rho^k \Rv \Rightarrow^{ M_0 - k - 2 } \mc{D}^{ k + 2 } ( \gm; k ) \text{,} &\qquad \rho \mi{L}_\rho^{ k + 1 } \Rv \Rightarrow^{ M_0 - k - 2 } 0 \text{.}
\end{align}
\end{lemma}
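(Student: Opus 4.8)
The plan is to extract the four limits in \eqref{eq.limit_high} from the level-$k$ identities in Propositions~\ref{thm.einstein_vertical_deriv} and~\ref{thm.curvature_vertical_deriv}, feeding in the inductive hypotheses \eqref{eq.limit_high_ass} to dispose of every term of $\mi{L}_\rho$-order below $k$, and closing each equation at order $k$ with Proposition~\ref{thm.limit_positive} (the $c>0$ ODE argument), whose proof also furnishes the needed local boundedness internally. The auxiliary tools are Lemma~\ref{thm.limit_schematic_metric} (schematic limits), Proposition~\ref{thm.limit_bound} (an extra factor of $\rho$ kills a bounded quantity), and Lemma~\ref{thm.limit_deriv} (exchange of $\Dv$ with $\Rightarrow$-limits). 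The preliminary remark that makes everything go through without a Gronwall argument is this: since $k\ge 2$, applying \eqref{eq.limit_high_ass} with $p=k-2$ gives the \emph{sharp} statements $\rho\mi{L}_\rho^{k-1}\Rv\Rightarrow^{M_0-k}0$ (hence the same for $\Rcv$, $\Rsv$), while $p=k-1$ gives $\rho\mi{L}_\rho^k\gv\Rightarrow^{M_0-k+1}0$; combined with boundedness of all lower-order quantities, this controls every schematic summand that carries an explicit $\rho$ or an order-$k$ factor. Finally, $2\le k<n$ makes the constants $n-k$ and $2n-k$ below strictly positive, as Proposition~\ref{thm.limit_positive} requires.

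First I would treat the $\gv$-trace. The second identity of \eqref{eq.einstein_vertical_deriv} reads $\rho\mi{L}_\rho\big(\trace{\gv}\mi{L}_\rho^k\gv\big)-(2n-k)\trace{\gv}\mi{L}_\rho^k\gv=\ms{G}$, with $\ms{G}$ built from $\mi{L}_\rho^{k-2}\Rsv$, $\rho\mi{L}_\rho^{k-1}\Rsv$, and the two schematic sums $\mi{S}(\gv;\mi{L}_\rho^{j_1}\gv,\dots)$ (with $\sum j_p=k$, every $j_p<k$) and $\rho\cdot\mi{S}(\gv;\mi{L}_\rho^{j_1}\gv,\dots)$ (with $\sum j_p=k+1$, every $j_p\le k$). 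Using \eqref{eq.limit_high_ass} for $p<k$, the preliminary remark (which controls $\rho\mi{L}_\rho^{k-1}\Rsv$ and the unique summand of the last sum containing $\mi{L}_\rho^k\gv$, rewritten as $\mi{S}(\gv;\rho\mi{L}_\rho^k\gv,\mi{L}_\rho\gv)$), Proposition~\ref{thm.limit_bound}, and Lemma~\ref{thm.limit_schematic_metric}, one gets $\ms{G}\Rightarrow^{M_0-k}\mc{D}^k(\gm;k)$; the parity assertion holds because an odd $k$ is not a sum of positive \emph{even} integers, so every relevant product of lower-order limits carries a vanishing odd-order factor. Proposition~\ref{thm.limit_positive} then gives $\trace{\gv}\mi{L}_\rho^k\gv\Rightarrow^{M_0-k}\mc{D}^k(\gm;k)$.

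Next I would run the identical scheme on the first identity of \eqref{eq.einstein_vertical_deriv}, now with unknown $\ms{A}=\mi{L}_\rho^k\gv$ and constant $n-k>0$: its right-hand side is $\trace{\gv}\mi{L}_\rho^k\gv\cdot\gv+2(k-1)\mi{L}_\rho^{k-2}\Rcv+2\rho\mi{L}_\rho^{k-1}\Rcv$ plus the analogous schematic sums, where the first term is controlled by the previous paragraph, the curvature terms by \eqref{eq.limit_high_ass} with index $k-2$ (same parity as $k$) and the preliminary remark, and the rest as before, so the right-hand side $\Rightarrow^{M_0-k}\mc{D}^k(\gm;k)$. Proposition~\ref{thm.limit_positive} delivers $\mi{L}_\rho^k\gv\Rightarrow^{M_0-k}\mc{D}^k(\gm;k)$ and $\rho\mi{L}_\rho^{k+1}\gv\Rightarrow^{M_0-k}0$, the first line of \eqref{eq.limit_high}. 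For the curvature I would use the first identity of \eqref{eq.curvature_vertical_deriv} at level $k$, which writes $\mi{L}_\rho^k\Rv$ as a sum of terms $\mi{S}(\gv;\Dv^{i_1}\mi{L}_\rho^{j_1}\gv,\dots)$ with $\sum i_p=2$, $\sum j_p=k$: the unique $l=1$ term $\mi{S}(\gv;\Dv^2\mi{L}_\rho^k\gv)$ has limit $\mc{D}^{k+2}(\gm;k)$ by the just-proved limit for $\mi{L}_\rho^k\gv$ together with Lemmas~\ref{thm.limit_deriv} and~\ref{thm.limit_schematic_metric}, while all remaining terms have every $\mi{L}_\rho$-order $<k$ and are handled by \eqref{eq.limit_high_ass}, yielding $\mi{L}_\rho^k\Rv\Rightarrow^{M_0-k-2}\mc{D}^{k+2}(\gm;k)$. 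Applying the same identity one level higher and multiplying through by $\rho$ (so that the offending $l=1$ term becomes $\mi{S}\big(\gv;\Dv^2(\rho\mi{L}_\rho^{k+1}\gv)\big)$, which vanishes in the limit by the conclusion just obtained, and the remaining $\rho$-weighted terms have bounded order-$\le k$ factors) gives $\rho\mi{L}_\rho^{k+1}\Rv\Rightarrow^{M_0-k-2}0$, completing \eqref{eq.limit_high}.

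The only real difficulty is bookkeeping: one must verify that in \eqref{eq.einstein_vertical_deriv} and \eqref{eq.curvature_vertical_deriv} an order-$k$ factor occurs only (i) alone and linearly in the leading part of the $\gv$- and $\trace{\gv}$-equations, where Proposition~\ref{thm.limit_positive} applies, or (ii) multiplied by $\rho$ elsewhere, where the preliminary remark on $\rho\mi{L}_\rho^k\gv$ and $\rho\mi{L}_\rho^{k+1}\gv$ applies; and one must keep the regularity indices $M_0-k$ for $\gv$ and $M_0-k-2$ for $\Rv$ exactly consistent across all summands, which is why the $\rho$-weighted curvature terms must be fed the sharp hypothesis $\rho\mi{L}_\rho^{k-1}\Rv\Rightarrow^{M_0-k}0$ from \eqref{eq.limit_high_ass} with $p=k-2$ rather than the weaker bound coming from boundedness times $\rho$. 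With that organization the argument is a routine iteration of the three identities.
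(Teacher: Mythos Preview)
Your proposal is correct and follows essentially the same route as the paper's own proof: first close the trace equation from \eqref{eq.einstein_vertical_deriv} via Proposition~\ref{thm.limit_positive} (constant $2n-k>0$), then the full metric equation (constant $n-k>0$) to obtain the first line of \eqref{eq.limit_high}, and finally read off both curvature limits from \eqref{eq.curvature_vertical_deriv} at levels $k$ and $k+1$, using Lemmas~\ref{thm.limit_schematic_metric}, \ref{thm.limit_deriv} and Proposition~\ref{thm.limit_bound} exactly as you describe. Your ``preliminary remark'' simply packages observations the paper establishes inline (its claims \eqref{eql.limit_high_1} and \eqref{eql.limit_high_11}), and your treatment of the parity and of the unique $\rho\mi{L}_\rho^k\gv$-summand in the $\rho$-weighted schematic sum matches the paper's handling of $\mc{S}_2$ and \eqref{eql.limit_high_41}.
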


\begin{proof}
We begin by claiming that
\begin{align}
\label{eql.limit_high_1} \mi{L}_\rho^{ k - 2 } \Rcv \Rightarrow^{ M_0 - k } \mc{D}^k ( \gm; k ) \text{,} &\qquad \rho \mi{L}_\rho^{ k - 1 } \Rcv \Rightarrow^{ M_0 - k } 0 \text{,} \\
\notag \mi{L}_\rho^{ k - 2 } \Rsv \Rightarrow^{ M_0 - k } \mc{D}^k ( \gm; k ) \text{,} &\qquad \rho \mi{L}_\rho^{ k - 1 } \Rsv \Rightarrow^{ M_0 - k } 0 \text{,}
\end{align}
The first two limits in \eqref{eql.limit_high_1} follow from the induction hypothesis \eqref{eq.limit_high_ass} (with $p := k - 2$) and from the observation that $\mi{L}_\rho$ commutes with (non-metric) contractions.
For $\Rsv$, we first write
\begin{align}
\label{eql.limit_high_2} \mi{L}_\rho^{ k - 2 } \Rsv &= \mi{S} ( \gv; \mi{L}_\rho^{ k - 2 } \Rv ) + \sum_{ \substack{ j + j_1 + \dots + j_l = k - 2 \\ 0 \leq j < k - 2 \\ 1 \leq j_p \leq k - 2 } } \mi{S} ( \gv; \mi{L}_\rho^j \Rv, \mi{L}_\rho^{ j_1 } \gv, \dots, \mi{L}_\rho^{ j_l } \gv ) \text{,} \\
\notag \rho \mi{L}_\rho^{ k - 1 } \Rsv &= \mi{S} ( \gv; \rho \mi{L}_\rho^{ k - 1 } \Rv ) + \sum_{ \substack{ j + j_1 + \dots + j_l = k - 1 \\ 0 \leq j < k - 1 \\ 1 \leq j_p \leq k - 1 } } \mi{S} ( \gv; \rho \mi{L}_\rho^j \Rv, \mi{L}_\rho^{ j_1 } \gv, \dots, \mi{L}_\rho^{ j_l } \gv ) \text{.}
\end{align}
Applying Lemma \ref{thm.limit_schematic_metric} and \eqref{eq.limit_high_ass} to \eqref{eql.limit_high_2} yields
\begin{equation}
\label{eql.limit_high_3} \mi{L}_\rho^{ k - 2 } \Rsv \Rightarrow^{ M_0 - k } \mc{D}^k ( \gm ) \text{,} \qquad \rho \mi{L}_\rho^{ k - 1 } \Rsv \Rightarrow^{ M_0 - k } 0 \text{,}
\end{equation}
Moreover, when $k$ is odd, Lemma \ref{thm.limit_schematic_metric} and \eqref{eq.limit_high_ass} also imply
\[
\mi{S} ( \gv; \mi{L}_\rho^{ k - 2 } \Rv ) \Rightarrow^{ M_0 - k } 0 \text{,} \qquad \sum_{ \substack{ j + j_1 + \dots + j_l = k - 2 \\ 0 \leq j < k - 2 \\ 1 \leq j_p \leq k - 2 } } \mi{S} ( \gv; \mi{L}_\rho^j \Rv, \mi{L}_\rho^{ j_1 } \gv, \dots, \mi{L}_\rho^{ j_l } \gv ) \Rightarrow^{ M_0 - k } 0 \text{.}
\]
(For the second limit, note that in each schematic term of the sum, one of $j, j_1, \dots, j_l$ must be odd.)
In particular, \eqref{eql.limit_high_2} and the above together imply the claim \eqref{eql.limit_high_1}.

Next, we claim the following limits:
\begin{align}
\label{eql.limit_high_11} \mc{S}_1 &:= \sum_{ \substack{ j_1 + \dots + j_l = k \\ 1 \leq j_p < k } } \mi{S} ( \gv; \mi{L}_\rho^{ j_1 } \gv, \dots, \mi{L}_\rho^{ j_l } \gv ) \Rightarrow^{ M_0 - k } \mc{D}^{ k - 1 } ( \gm; k ) \text{,} \\
\notag \mc{S}_2 &:= \sum_{ \substack{ j_1 + \dots + j_l = k + 1 \\ 1 \leq j_p \leq k } } \rho \cdot \mi{S} ( \gv; \mi{L}_\rho^{ j_1 } \gv, \dots, \mi{L}_\rho^{ j_l } \gv ) \Rightarrow^{ M_0 - k } 0 \text{.}
\end{align}
Similar to before, the limit for $\mc{S}_1 \Rightarrow^{ M_0 - k } \mc{D}^{ k - 1 } ( \gm )$ follows immediately from Lemma \ref{thm.limit_schematic_metric} and \eqref{eq.limit_high_ass}.
Moreover, when $k$ is odd, then within each schematic term comprising $\mc{S}_1$, one of $j_1, \dots, j_l$ must also be odd.
Therefore, Lemma \ref{thm.limit_schematic_metric} and \eqref{eq.limit_high_ass} imply that $\mc{S}_1 \Rightarrow^{ M_0 - k } \mc{D}^{ k - 1 } ( \gm; k )$.

For $\mc{S}_2$, we consider each term $\mc{S}_2^\ast$ in the corresponding summation in \eqref{eql.limit_high_11} separately.
First, if none of $j_1, \dots, j_l$ in $\mc{S}_2^\ast$ is equal to $k$, then Proposition \ref{thm.limit_bound} and the first part of \eqref{eql.limit_high_11} immediately imply $\mc{S}_2^\ast \Rightarrow^{ M_0 - k } 0$.
This leaves only the case $j_1 = k$ and $j_2 = 1$:
\[
\mc{S}_2^\ast = \mi{S} ( \gv; \rho \mi{L}_\rho^k \gv, \mi{L}_\rho \gv ) \text{.}
\]
Again, Lemma \ref{thm.limit_schematic_metric} and \eqref{eq.limit_high_ass} imply $\mc{S}_2^\ast \Rightarrow^{ M_0 - k } 0$ in this case.
The above results in the second limit of \eqref{eql.limit_high_11} and hence completes the proof of the claim \eqref{eql.limit_high_11}.

Applying Proposition \ref{thm.limit_positive} to the second equation of \eqref{eq.einstein_vertical_deriv} (with $\ms{A} := \trace{\gv} \mi{L}_\rho^k \gv$, while noting that $2 n - k > 0$), and recalling the limits \eqref{eql.limit_high_1} and \eqref{eql.limit_high_11}, we obtain that
\begin{equation}
\label{eql.limit_high_20} \trace{\gv} \mi{L}_\rho^k \gv \Rightarrow^{ M_0 - k } \mc{D}^k ( \gm; k ) \text{.}
\end{equation}
Moreover, applying Proposition \ref{thm.limit_positive} again to the first part of \eqref{eq.einstein_vertical_deriv} (with $\ms{A} := \mi{L}_\rho^k \gv$, while noting that $n - k > 0$), and recalling \eqref{eql.limit_high_1}, \eqref{eql.limit_high_11} and \eqref{eql.limit_high_20}, we obtain the first two limits of \eqref{eq.limit_high}.

Similarly, by Lemma \ref{thm.limit_schematic_metric}, Lemma \ref{thm.limit_deriv}, \eqref{eq.limit_high_ass}, and the first part of \eqref{eq.limit_high}, we obtain
\begin{equation}
\label{eql.limit_high_31} \sum_{ \substack{ i_1 + \dots + i_l = 2 \\ j_1 + \dots + j_l = k \\ j_p \geq 1 } } \mi{S} ( \gv; \Dv^{ i_1 } \mi{L}_\rho^{ j_1 } \gv, \dots, \Dv^{ i_l } \mi{L}_\rho^{ j_l } \gv ) \Rightarrow^{ M_0 - k - 2 } \mc{D}^{ k + 2 } ( \gm ) \text{.}
\end{equation}
(Both $k + 2$ and $M_0 - k - 2$ arise from schematic terms of the form $\mi{S} ( \Dv^2 \mi{L}_\rho^k \gv )$.)
Again, when $k$ is odd, then one of $j_1, \dots, j_l$ in each schematic term in \eqref{eql.limit_high_31} must also be odd, and hence Lemma \ref{thm.limit_deriv} and \eqref{eq.limit_high_ass} imply that the right-hand side of \eqref{eql.limit_high_31} vanishes.
Combining this observation with the first part of \eqref{eq.curvature_vertical_deriv} and \eqref{eql.limit_high_31} yields the third limit in \eqref{eq.limit_high}.

For the remaining limit of \eqref{eq.limit_high}, we first claim that
\begin{equation}
\label{eql.limit_high_41} \sum_{ \substack{ i_1 + \dots + i_l = 2 \\ j_1 + \dots + j_l = k + 1 \\ j_p \geq 1 } } \rho \cdot \mi{S} ( \Dv^{ i_1 } \mi{L}_\rho^{ j_1 } \gv, \dots, \Dv^{ i_l } \mi{L}_\rho^{ j_l } \gv ) \Rightarrow^{ M_0 - k - 2 } 0 \text{.}
\end{equation}
To show \eqref{eql.limit_high_41}, we consider each term $\mc{S}^\ast$ of the sum separately.
If none of $j_1, \dots, j_l$ is equal to $k + 1$, then $\mc{S}^\ast \Rightarrow^{ M_0 - k - 2 } 0$ by Proposition \ref{thm.limit_bound} and \eqref{eql.limit_high_31}.
The only remaining possibility is
\[
\mc{S}^\ast = \mi{S} ( \gv; \rho \Dv^2 \mi{L}_\rho^{ k + 1 } \gv ) \text{,}
\]
which also satisfies $\mc{S}^\ast \Rightarrow^{ M_0 - k - 2 } 0$ by Lemma \ref{thm.limit_schematic_metric} and the second limit of \eqref{eq.limit_high}.
This completes the proof of the claim \eqref{eql.limit_high_41}.
Finally, combining \eqref{eql.limit_high_41} with the first equation in \eqref{eq.curvature_vertical_deriv}---with $k + 1$ in the place of $k$---we obtain the final limit in \eqref{eq.limit_high}.
\end{proof}

Now, from the base case \eqref{eq.limit_high_pre} and the inductive case proved in Lemma \ref{thm.limit_high}, we obtain the first two relations in \eqref{eq.fg_main_high_depend} and the first four limits of \eqref{eq.fg_main_high}.

Furthermore, when $n > 2$, we can take the limit as $\rho \searrow 0$ of the first equation in \eqref{eq.einstein_vertical_deriv}, with $k = 2$.
Applying \eqref{eq.fg_main_low_g}, \eqref{eq.fg_main_low_R}, and the first two limits of \eqref{eq.fg_main_high} (with $k = 2$) results in the equation
\begin{equation}
\label{eq.limit_high_schouten} 0 = - 2 ( n - 2 ) \cdot \gb{2} - \trace{\gm} \gb{2} \cdot \gm - 2 \cdot \Rcm \text{.}
\end{equation}
Taking a ($\gm$-)trace of \eqref{eq.limit_high_schouten} yields
\[
- ( n - 1 ) \cdot \trace{\gm} \gb{2} = \Rsm \text{.}
\]
Substituting the above into \eqref{eq.limit_high_schouten} yields the first part of \eqref{eq.fg_main_schouten}.

It remains to prove the last relation in \eqref{eq.fg_main_high_depend}, along with the last two limits of \eqref{eq.fg_main_high}.
For this, we commute derivatives and rely on already established limits.
We begin by applying \eqref{eq.comm_vertical_deriv}:
\begin{align}
\label{eq.limit_high_DLg_0} \mi{L}_\rho^{ l - 1 } ( \Dv \mi{L}_\rho \gv ) &= \Dv \mi{L}_\rho^l \gv + \sum_{ \substack{ j + j_0 + \dots + j_k = l - 1 \\ 0 \leq j < l - 1 \text{, } j_p \geq 1 } } \mi{S} ( \gv; \mi{L}_\rho^j ( \mi{L}_\rho \gv ), \Dv \mi{L}_\rho^{ j_0 } \gv, \mi{L}_\rho^{ j_1 } \gv, \dots, \mi{L}_\rho^{ j_k } \gv ) \\
\notag &= \Dv \mi{L}_\rho^l \gv + \sum_{ \substack{ j_0 + \dots + j_k = l \\ 1 \leq j_p < l } } \mi{S} ( \gv; \Dv \mi{L}_\rho^{ j_0 } \gv, \mi{L}_\rho^{ j_1 } \gv, \dots, \mi{L}_\rho^{ j_k } \gv ) \text{.}
\end{align}
Applying Lemma \ref{thm.limit_schematic_metric}, Lemma \ref{thm.limit_deriv}, and the first part of \eqref{eq.fg_main_high} to the above yields
\begin{equation}
\label{eq.limit_high_DLg_1} \mi{L}_\rho^{ l - 1 } ( \Dv \mi{L}_\rho \gv ) \Rightarrow^{ M_0 - l - 1 } \mc{D}^{ l + 1 } ( \gm; l ) \text{.}
\end{equation}
(In particular, when $l$ is odd, then in each schematic term on the right-hand side of \eqref{eq.limit_high_DLg_0}, one of $j_0, \dots, j_k$ must also be odd, hence this term converges to $0$.)

Next, applying \eqref{eq.comm_vertical_deriv} again, we have that
\[
\rho \mi{L}_\rho^l ( \Dv \mi{L}_\rho \gv ) = \rho \cdot \Dv \mi{L}_\rho^{ l + 1 } \gv + \sum_{ \substack{ j_0 + \dots + j_k = l + 1 \\ 1 \leq j_p < l + 1 } } \rho \cdot \mi{S} ( \gv; \Dv \mi{L}_\rho^{ j_0 } \gv, \mi{L}_\rho^{ j_1 } \gv, \dots, \mi{L}_\rho^{ j_k } \gv ) \text{.}
\]
Again, Lemma \ref{thm.limit_schematic_metric}, Lemma \ref{thm.limit_deriv}, and the first two parts of \eqref{eq.fg_main_high} imply that
\begin{equation}
\label{eq.limit_high_DLg_2} \rho \mi{L}_\rho^l ( \Dv \mi{L}_\rho \gv ) \Rightarrow^{ M_0 - l - 1 } 0 \text{.}
\end{equation}
(In the schematic terms, $\rho$ can always be paired with the factor with the highest derivative of $\gv$.)

In particular, the limits \eqref{eq.limit_high_DLg_1} and \eqref{eq.limit_high_DLg_2} complete the proofs of \eqref{eq.fg_main_high_depend} and \eqref{eq.fg_main_high}.

\subsubsection{The Anomalous and Free Limits}

Lastly, we consider the boundary limits at order $n$.
We establish the remaining boundary limits \eqref{eq.fg_main_top_depend}, \eqref{eq.fg_main_top}, and \eqref{eq.fg_main_free}, which include both anomalous logarithmic limits (when $n$ is even) determined entirely by $\gm$ and non-logarithmic limits that need not depend on $\gm$.
In this process, we will also establish the relations \eqref{eq.fg_main_top_constraint} and \eqref{eq.fg_main_free_constraint}.

We begin by recalling the equations \eqref{eq.einstein_vertical_deriv} in the special case $k = n$:
\begin{align}
\label{eq.limit_top_g_1} 0 &= \rho \mi{L}_\rho^{ n + 1 } \gv - \trace{ \gv } \mi{L}^n_\rho \gv \cdot \gv - 2 ( n - 1 ) \mi{L}_\rho^{ n - 2 } \Rcv - 2 \rho \mi{L}_\rho^{ n - 1 } \Rcv \\
\notag &\qquad + \sum_{ \substack{ j_1 + \dots + j_l = n \\ 1 \leq j_p < n } } \mi{S} ( \gv; \mi{L}_\rho^{ j_1 } \gv, \dots, \mi{L}_\rho^{ j_l } \gv ) + \sum_{ \substack{ j_1 + \dots + j_l = n + 1 \\ 1 \leq j_p \leq n } } \rho \cdot \mi{S} ( \gv; \mi{L}_\rho^{ j_1 } \gv, \dots, \mi{L}_\rho^{ j_l } \gv ) \text{,} \\
\notag 0 &= \rho \mi{L}_\rho ( \trace{\gv} \mi{L}_\rho^n \gv ) - n \cdot \trace{\gv} \mi{L}_\rho^n \gv - 2 ( n - 1 ) \mi{L}_\rho^{ n - 2 } \Rsv - 2 \rho \mi{L}_\rho^{ n - 1 } \Rsv \\
\notag &\qquad + \sum_{ \substack{ j_1 + \dots + j_l = n \\ 1 \leq j_p < n } } \mi{S} ( \gv; \mi{L}_\rho^{ j_1 } \gv, \dots, \mi{L}_\rho^{ j_l } \gv ) + \sum_{ \substack{ j_1 + \dots + j_l = n + 1 \\ 1 \leq j_p \leq n } } \rho \cdot \mi{S} ( \gv; \mi{L}_\rho^{ j_1 } \gv, \dots, \mi{L}_\rho^{ j_l } \gv ) \text{.}
\end{align}
Applying \eqref{eq.fg_main_high_depend} and \eqref{eq.fg_main_high} (and noting that $\mi{L}_\rho$ commutes with contractions) yields
\begin{equation}
\label{eq.limit_top_g_11} \mi{L}_\rho^{ n - 2 } \Rcv \Rightarrow^{ M_0 - n } \mc{D}^n ( \gm; n ) \text{,} \qquad \rho \mi{L}_\rho^{ n - 1 } \Rcv \Rightarrow^{ M_0 - n } 0 \text{.}
\end{equation}
Moreover, applying \eqref{eq.fg_main_high_depend}, \eqref{eq.fg_main_high}, and Lemma \ref{thm.limit_schematic_metric}, we obtain
\begin{align}
\label{eq.limit_top_g_12} \mi{L}_\rho^{ n - 2 } \Rsv &= \mi{S} ( \gv; \mi{L}_\rho^{ n - 2 } \Rv ) + \sum_{ \substack{ j + j_1 + \dots + j_l = n - 2 \\ 0 \leq j < n - 2 \\ 1 \leq j_p \leq n - 2 } } \mi{S} ( \gv; \mi{L}_\rho^j \Rv, \mi{L}_\rho^{ j_1 } \gv, \dots, \mi{L}_\rho^{ j_l } \gv ) \Rightarrow^{ M_0 - n } \mc{D}^n ( \gm; n ) \text{,} \\
\notag \rho \mi{L}_\rho^{ n - 1 } \Rsv &= \mi{S} ( \gv; \rho \mi{L}_\rho^{ n - 1 } \Rv ) + \sum_{ \substack{ j + j_1 + \dots + j_l = n - 1 \\ 0 \leq j < n - 1 \\ 1 \leq j_p \leq n - 1 } } \rho \cdot \mi{S} ( \gv; \mi{L}_\rho^j \Rv, \mi{L}_\rho^{ j_1 } \gv, \dots, \mi{L}_\rho^{ j_l } \gv ) \Rightarrow^{ M_0 - n } 0 \text{.}
\end{align}
(For the schematic terms in the first part of \eqref{eq.limit_top_g_12}, note when $n$ is odd, one of $j, j_1, \dots, j_n$ must also be odd.)
By similar reasoning, \eqref{eq.fg_main_high_depend}, \eqref{eq.fg_main_high}, and Lemma \ref{thm.limit_schematic_metric} also yield the limits
\begin{align}
\label{eq.limit_top_g_13} \sum_{ \substack{ j_1 + \dots + j_l = n \\ 1 \leq j_p < n } } \mi{S} ( \gv; \mi{L}_\rho^{ j_1 } \gv, \dots, \mi{L}_\rho^{ j_l } \gv ) &\Rightarrow^{ M_0 - n } \mc{D}^{ n - 1 } ( \gm; n ) \text{,} \\
\notag \sum_{ \substack{ j_1 + \dots + j_l = n + 1 \\ 1 \leq j_p \leq n } } \rho \cdot \mi{S} ( \gv; \mi{L}_\rho^{ j_1 } \gv, \dots, \mi{L}_\rho^{ j_l } \gv ) &\Rightarrow^{ M_0 - n } 0 \text{.} 
\end{align}

Applying Proposition \ref{thm.limit_positive} to the second part of \eqref{eq.limit_top_g_1} and using \eqref{eq.limit_top_g_12} and \eqref{eq.limit_top_g_13}, we see that
\begin{equation}
\label{eq.limit_top_g_20} \trace{\gv} \mi{L}_\rho^n \gv \Rightarrow^{ M_0 - n } \mc{D}^n ( \gm; n ) \text{,} \qquad \rho \mi{L}_\rho ( \trace{\gv} \mi{L}_\rho^n \gv ) \Rightarrow^{ M_0 - n } 0 \text{.}
\end{equation}
Thus, applying \eqref{eq.fg_main_low_g}, \eqref{eq.limit_top_g_11}, \eqref{eq.limit_top_g_13}, and the first part of \eqref{eq.limit_top_g_20} to the first identity in \eqref{eq.limit_top_g_1} yields
\begin{equation}
\label{eq.limit_top_g_21} \rho \mi{L}_\rho^{ n + 1 } \gv \Rightarrow^{ M_0 - n } n! \, \gb{\star} = \mc{D}^n ( \gm; n ) \text{,}
\end{equation}
which is the first identity of \eqref{eq.fg_main_top_depend} and first limit of \eqref{eq.fg_main_top}.

Moreover, note that \eqref{eq.fg_main_high_depend}, \eqref{eq.fg_main_high}, Lemma \ref{thm.limit_schematic_metric}, and \eqref{eq.limit_top_g_20} imply
\begin{equation}
\label{eq.limit_top_g_22} \trace{\gv} ( \rho \mi{L}_\rho^{ n + 1 } \gv ) = \rho \mi{L}_\rho ( \trace{\gv} \mi{L}_\rho^n \gv ) + \mi{S} ( \gv; \rho \mi{L}_\rho^n \gv, \mi{L}_\rho \gv ) \Rightarrow^{ M_0 - n } 0 \text{.}
\end{equation}
Combining the above with \eqref{eq.fg_main_low_g}, \eqref{eq.fg_main_high}, and \eqref{eq.limit_top_g_21} results in the first identity in \eqref{eq.fg_main_top_constraint}.

In addition, applying Proposition \ref{thm.limit_zero} to \eqref{eq.limit_top_g_21}, we obtain a tensor field $\gb{\dagger}$ on $\mi{I}$ such that
\begin{equation}
\label{eq.limit_top_g_30} \mi{L}_\rho^n \gv - n! \, ( \log \rho ) \gb{\star} \rightarrow^{ M_0 - n } n! \, \gb{\dagger} \text{,}
\end{equation}
which is the first limit of \eqref{eq.fg_main_free}.
Taking a $\gv$-trace of \eqref{eq.limit_top_g_30} and recalling \eqref{eq.fg_main_low_g} then yields
\begin{equation}
\label{eq.limit_top_g_31} \trace{\gv} \mi{L}_\rho^n \gv - n! \, ( \log \rho ) \cdot \trace{\gv} \gb{\star} \rightarrow^{ M_0 - n } n! \, \trace{\gm} \gb{\dagger} \text{.}
\end{equation}
By \eqref{eq.fg_main_low_g}, \eqref{eq.fg_main_high}, the first identity in \eqref{eq.fg_main_top_constraint}, and the fundamental theorem of calculus, we have
\begin{align*}
| ( \log \rho ) \cdot \trace{\gv} \gb{\star} | &= | \log \rho | | ( \gv^{ a b } - \gm^{ a b } ) \gb{\star}_{ a b } | \\
&\lesssim | \log \rho | \int_0^\rho | \mi{L}_\rho \gv |_{ 0, \varphi } |_\sigma d \sigma \cdot \| \gb{\star} \|_{ 0, \varphi } \\
&\lesssim \rho \cdot \log \rho \text{,}
\end{align*}
with respect to any compact coordinate system $( U, \varphi )$ on $\mi{I}$.
As a result,
\[
( \log \rho ) \cdot \trace{\gv} \gb{\star} \rightarrow^0 0 \text{.}
\]
From the first part of \eqref{eq.limit_top_g_20}, \eqref{eq.limit_top_g_31}, and the above, we obtain the first identity of \eqref{eq.fg_main_free_constraint}.

Next, consider the second identity of \eqref{eq.curvature_vertical_deriv}, in the case $k = n + 1$:
\begin{equation}
\label{eq.limit_top_g_40} \rho ( \Dv \cdot \mi{L}_\rho^{ n + 1 } \gv ) = \rho \cdot \Dv ( \trace{\gv} \mi{L}^{ n + 1 }_\rho \gv ) + \sum_{ \substack{ j_0 + \dots + j_l = n + 1 \\ 1 \leq j_p < n + 1 } } \rho \cdot \mi{S} ( \gv; \Dv \mi{L}_\rho^{ j_0 } \gv, \mi{L}_\rho^{ j_1 } \gv, \dots, \mi{L}_\rho^{ j_l } \gv ) \text{.}
\end{equation}
We now let $\rho \searrow 0$ in \eqref{eq.limit_top_g_40}.
Recalling \eqref{eq.fg_main_low_g}, Lemma \ref{thm.limit_deriv}, \eqref{eq.limit_top_g_21}, and \eqref{eq.limit_top_g_22} yields
\begin{equation}
\label{eq.limit_top_g_41} \rho ( \Dv \cdot \mi{L}_\rho^{ n + 1 } \gv ) \rightarrow^0 \Dm \cdot \gb{\star} \text{,} \qquad \rho \cdot \Dv ( \trace{\gv} \mi{L}^{ n + 1 }_\rho \gv ) \rightarrow^0 0 \text{.}
\end{equation}
Furthermore, \eqref{eq.fg_main_high}, Lemma \ref{thm.limit_schematic_metric}, and Lemma \ref{thm.limit_deriv} imply that
\[
\sum_{ \substack{ j_0 + \dots + j_l = n + 1 \\ 1 \leq j_p < n + 1 } } \rho \cdot \mi{S} ( \gv; \Dv \mi{L}_\rho^j \gv, \mi{L}_\rho^{ j_1 } \gv, \dots, \mi{L}_\rho^{ j_l } \gv ) \rightarrow^0 0 \text{.}
\]
(In particular, one always pairs the factor of $\rho$ with an instance of $\mi{L}_\rho^n \gv$.)
Combining \eqref{eq.limit_top_g_40}, \eqref{eq.limit_top_g_41}, and the above then yields the second identity in \eqref{eq.fg_main_top_constraint}; this completes the proof of \eqref{eq.fg_main_top_constraint}.

Note in particular that when $n = 2$, the first equation in \eqref{eq.einstein_vertical_deriv}, with $k = 2$, simplifies to
\begin{align}
\label{eq.limit_top_g_60} 0 &= \rho \mi{L}_\rho^3 \gv - \trace{ \gv } \mi{L}^2_\rho \gv \cdot \gv - 2 \Rcv - 2 \rho \mi{L}_\rho \Rcv + \mi{S} ( \gv; \mi{L}_\rho \gv, \mi{L}_\rho \gv ) \\
\notag &\qquad + \mi{S} ( \gv; \rho \mi{L}_\rho^2 \gv, \mi{L}_\rho \gv ) + \rho \cdot \mi{S} ( \gv; \mi{L}_\rho \gv, \mi{L}_\rho \gv, \mi{L}_\rho \gv ) \text{,} \\
\notag 0 &= \rho \mi{L}_\rho ( \trace{\gv} \mi{L}_\rho^2 \gv ) - 2 \trace{\gv} \mi{L}_\rho^2 \gv - 2 \Rsv - 2 \rho \mi{L}_\rho \Rsv + \mi{S} ( \gv; \mi{L}_\rho \gv, \mi{L}_\rho \gv ) \\
\notag &\qquad + \mi{S} ( \gv; \rho \mi{L}_\rho^2 \gv, \mi{L}_\rho \gv ) + \rho \cdot \mi{S} ( \gv; \mi{L}_\rho \gv, \mi{L}_\rho \gv, \mi{L}_\rho \gv ) \text{.}
\end{align}
Applying Proposition \ref{thm.limit_positive}, \eqref{eq.fg_main_low_R}, \eqref{eq.fg_main_high}, and Lemma \ref{thm.limit_schematic_metric} to the second equation in \eqref{eq.limit_top_g_60} yields
\[
\trace{ \gv } \mi{L}^2_\rho \gv \rightarrow_0 - \Rsm \text{.}
\]
Moreover, taking $\rho \searrow 0$ in the first part of \eqref{eq.limit_top_g_60} and recalling \eqref{eq.limit_top_g_21} and the above, we obtain
\begin{equation}
\label{eq.limit_top_g_61} 0 = 2 \gb{\star} + \Rsm \cdot \gm - 2 \Rcm = 2 \gb{\star} \text{,} \qquad n = 2 \text{,}
\end{equation}
where the last equality follows since $\Rcm$ and $\Rsm$ represent the curvature on a $2$-dimensional manifold.

Consider again the second equation in \eqref{eq.curvature_vertical_deriv}, now with general $n$ and $k = n$:
\begin{equation}
\label{eq.limit_top_g_50} \Dv \cdot \mi{L}_\rho^n \gv = \Dv ( \trace{\gv} \mi{L}^n_\rho \gv ) + \sum_{ \substack{ j_0 + \dots + j_l = n \\ 1 \leq j_p < n } } \mi{S} ( \gv; \Dv \mi{L}_\rho^{ j_0 } \gv, \mi{L}_\rho^{ j_1 } \gv, \dots, \mi{L}_\rho^{ j_l } \gv ) \text{.}
\end{equation}
By Lemma \ref{thm.limit_deriv} and the first part of \eqref{eq.limit_top_g_20}, we have that
\begin{equation}
\label{eq.limit_top_g_51} \Dv ( \trace{\gv} \mi{L}^n_\rho \gv ) \Rightarrow^{ M_0 - n - 1 } \mc{D}^{ n + 1 } ( \gm; n ) \text{.}
\end{equation}
Furthermore, applying \eqref{eq.fg_main_high}, Lemma \ref{thm.limit_schematic_metric}, and Lemma \ref{thm.limit_deriv}, we observe that
\begin{equation}
\label{eq.limit_top_g_52} \sum_{ \substack{ j_0 + \dots + j_l = n \\ 1 \leq j_p < n } } \mi{S} ( \gv; \Dv \mi{L}_\rho^{ j_0 } \gv, \mi{L}_\rho^{ j_1 } \gv, \dots, \mi{L}_\rho^{ j_l } \gv ) \Rightarrow^{ M_0 - n - 1 } \mc{D}^n ( \gm; n ) \text{.}
\end{equation}
From \eqref{eq.limit_top_g_50}--\eqref{eq.limit_top_g_52}, we then conclude that
\begin{equation}
\label{eq.limit_top_g_53} \Dv \cdot \mi{L}_\rho^n \gv \Rightarrow^{ M_0 - n - 1 } \mc{D}^{ n + 1 } ( \gm; n ) \text{.}
\end{equation}

On the other hand, Lemma \ref{thm.limit_deriv} and \eqref{eq.limit_top_g_30} imply
\[
\Dv \mi{L}_\rho^n \gv - ( \log \rho ) \Dv \gb{\star} \rightarrow^0 \Dm \gb{\dagger} \text{.}
\]
Taking a $\gv$-trace of the above and recalling both \eqref{eq.fg_main_low_g} and the second part of \eqref{eq.fg_main_top_constraint}, we obtain
\begin{equation}
\label{eq.limit_top_g_54} \gv^{ b c } \Dv_b \mi{L}_\rho^n \gv_{ a c } - ( \log \rho ) \gv^{ b c } ( \Dv_b \gb{\star}_{ a c } - \Dm_b \gb{\star}_{ a c } ) - ( \log \rho ) ( \gv^{ b c } - \gm^{ b c } ) \Dm_b \gb{\star}_{ a c } \rightarrow^0 \gm^{ b c } \Dm_b \gb{\dagger}_{ a c } \text{,}
\end{equation}
where the indices are with respect to any compact coordinate system $( U, \varphi )$ on $\mi{I}$.
Again, by the fundamental theorem of calculus, the first inequality of \eqref{eq.geom_bound_deriv}, \eqref{eq.fg_main_low_g}, and \eqref{eq.fg_main_high},
\begin{align*}
| ( \log \rho ) \gv^{ b c } ( \Dv_b \gb{\star}_{ a c } - \Dm_b \gb{\star}_{ a c } ) | &\lesssim | \log \rho | \| \gv - \gm \|_{ 1, \varphi } \lesssim | \rho \cdot \log \rho | \| \mi{L}_\rho \gv \|_{ 1, \varphi } \lesssim | \rho \cdot \log \rho | \text{,} \\
| ( \log \rho ) ( \gv^{ b c } - \gm^{ b c } ) \Dm_b \gb{\star}_{ a c } | &\lesssim | \rho \cdot \log \rho | \| \mi{L}_\rho \gv \|_{ 0, \varphi } \| \Dm \gb{\star} \|_{ 0, \varphi } \lesssim | \rho \cdot \log \rho | \text{,}
\end{align*}
both of which vanish as $\rho \searrow 0$.
Finally, combining \eqref{eq.limit_top_g_53}, \eqref{eq.limit_top_g_54}, and the above yields the second identity of \eqref{eq.fg_main_free_constraint}; this now completes the proof of \eqref{eq.fg_main_free_constraint}.

For $\Rv$, we begin with the first equation of \eqref{eq.curvature_vertical_deriv}, in the case $k = n + 1$:
\begin{equation}
\label{eq.limit_top_R_0} 0 = \rho \mi{L}_\rho^{ n + 1 } \Rv + \sum_{ \substack{ i_1 + \dots + i_l = 2 \\ j_1 + \dots + j_l = n + 1 \\ j_p \geq 1 } } \rho \cdot \mi{S} ( \gv; \Dv^{ i_1 } \mi{L}_\rho^{ j_1 } \gv, \dots, \Dv^{ i_l } \mi{L}_\rho^{ j_l } \gv ) \text{.}
\end{equation}
We deal with each schematic term in \eqref{eq.limit_top_R_0}.
First, if $j_1, \dots, j_n$ are all strictly less than $n$, then
\begin{equation}
\label{eq.limit_top_R_free_1} \rho \cdot \mi{S} ( \gv; \Dv^{ i_1 } \mi{L}_\rho^{ j_1 } \gv, \dots, \Dv^{ i_l } \mi{L}_\rho^{ j_l } \gv ) \Rightarrow^{ M_0 - n - 2 } 0 \text{,}
\end{equation}
by Lemma \ref{thm.limit_schematic_metric} and \eqref{eq.fg_main_high}.
This leaves only two remaining cases, which we can then treat in the following manner using Lemma \ref{thm.limit_schematic_metric}, \eqref{eq.fg_main_high}, \eqref{eq.fg_main_top}, and \eqref{eq.limit_top_g_61} (when $n = 2$):
\begin{align}
\label{eq.limit_top_R_2} \mi{S} ( \gv; \rho \cdot \Dv^{ i_1 } \mi{L}_\rho^n \gv, \Dv^{ i_2 } \mi{L}_\rho \gv ) &\Rightarrow^{ M_0 - n - 2 } 0 \text{,} \\
\notag \mi{S} ( \gv; \rho \cdot \Dv^2 \mi{L}_\rho^{ n + 1 } \gv ) &\Rightarrow^{ M_0 - n - 2 } \begin{cases} \mc{D}^{ n + 2 } ( \gm; n ) & n > 2 \text{,} \\ 0 & n = 2 \text{.} \end{cases}
\end{align}
From Lemma \ref{thm.limit_deriv} and \eqref{eq.limit_top_R_0}--\eqref{eq.limit_top_R_2}, we obtain the second identity in \eqref{eq.fg_main_top_depend} and the second limit of \eqref{eq.fg_main_top}.
In particular, when $n = 2$, the above points also imply that
\begin{equation}
\label{eq.limit_top_R_3} \Rb{\star} = 0 \text{,} \qquad n = 2 \text{.}
\end{equation}
Moreover, applying Proposition \ref{thm.limit_zero} to this limit yields the second limit in \eqref{eq.fg_main_free}.

For $\Dv \mi{L}_\rho \gv$, we obtain, from the commutation formula \eqref{eq.comm_vertical_deriv},
\[
\rho \mi{L}_\rho^n ( \Dv \mi{L}_\rho \gv ) = \rho \cdot \Dv \mi{L}_\rho^{ n + 1 } \gv + \sum_{ \substack{ j_0 + \dots + j_l = n + 1 \\ 1 \leq j_p < n + 1 } } \rho \cdot \mi{S} ( \gv; \Dv \mi{L}_\rho^{ j_0 } \gv, \mi{L}_\rho^{ j_1 } \gv, \dots, \mi{L}_\rho^{ j_l } \gv ) \text{.}
\]
Applying Lemma \ref{thm.limit_schematic_metric}, Lemma \ref{thm.limit_deriv}, \eqref{eq.fg_main_high}, \eqref{eq.fg_main_top}, and \eqref{eq.limit_top_g_61}, we see that
\begin{equation}
\label{eq.limit_top_b} \rho \mi{L}_\rho^n ( \Dv \mi{L}_\rho \gv ) \Rightarrow^{ M_0 - n - 1 } ( n - 1 ) ! \, \bb{\star} = \begin{cases} \mc{D}^{ n + 1 } ( \gm; n ) & n > 2 \text{,} \\ 0 & n = 2 \text{.} \end{cases}
\end{equation}
This proves the last identity of \eqref{eq.fg_main_top_depend}, as well as the last limit of \eqref{eq.fg_main_top}.
Furthermore, applying Proposition \ref{thm.limit_zero} to \eqref{eq.limit_top_b} yields the final limit in \eqref{eq.fg_main_free}.

Finally, when $n = 2$, we combine \eqref{eq.limit_top_g_61}, \eqref{eq.limit_top_R_3}, and \eqref{eq.limit_top_b} to obtain the second part of \eqref{eq.fg_main_schouten}.
This completes the proof of \eqref{eq.fg_main_schouten}, as well as of Theorem \ref{thm.fg_main} altogether.

\subsection{Schwarzschild-AdS Spacetimes} \label{sec.fg_schw_ads}

In this subsection, we connect the partial expansions obtained from Theorem \ref{thm.fg_exp} to the special family of Schwarzschild-AdS spacetimes.
We first recall the definition of these spacetimes (near the conformal boundary) in standard coordinates:

\begin{definition} \label{def.schw_ads}
Fix $M \in \R$ and $n > 1$, and consider the spacetime $( \mi{M}_M, g_M )$, where:
\begin{itemize}
\item There is some $r_0 \gg 1$ such that $\mi{M}_M$ is of the form
\begin{equation}
\label{eq.schw_ads_manifold} \mi{M}_M := [ r_0, \infty ) \times \R \times \Sph^{ n - 1 } \text{.}
\end{equation}

\item The metric $g_M$ is given by
\begin{equation}
\label{eq.schw_ads_metric} g_M := \left( r^2 + 1 - \frac{ M }{ r^{ n - 2 } } \right)^{-1} dr^2 - \left( r^2 + 1 - \frac{ M }{ r^{ n - 2 } } \right) dt^2 + r^2 \mathring{\gamma} \text{,}
\end{equation}
where $r$ and $t$ are the projections to the first and second components of $\mc{M}_M$, and where $\mathring{\gamma}$ denotes the unit round metric on the last $n - 1$ components of $\mi{M}_M$.
\end{itemize}
We refer to $( \mi{M}_M, g_M )$ as an \emph{Schwarzschild-AdS outer segment}, with \emph{mass} $M$.
\end{definition}

It is well-known that $( \mi{M}_M, g_M )$ is a solution of the Einstein-vacuum equations (normalized with $\Lambda$ given by \eqref{eq.einstein}).
The goal of this discussion is to express $( \mi{M}_M, g_M )$ as a vacuum FG-aAdS segment and thus read off the partial expansion for $g_M$ from the conformal boundary.
The first of the above objectives is accomplished in the subsequent proposition:

\begin{proposition} \label{thm.schw_ads_aux}
Fix $M \in \R$ and $n > 1$, and let the spacetime $( \mi{M}_M, g_M )$ be as in Definition \ref{def.schw_ads}.
Then, $( \mi{M}_M, g_M )$ is isometric to a spacetime $( \mi{M}, g )$ satisfying the following:
\begin{itemize}
\item There is some $\rho_0 \ll 1$ such that
\begin{equation}
\label{eq.schw_ads_aux_manifold} \mi{M} = ( 0, \rho_0 ] \times \mi{I} \text{,} \qquad \mi{I} = \R \times \Sph^{ n - 1 } \text{,} \qquad \rho_0 \ll 1 \text{,}
\end{equation}

\item The metric $g$ is of the form
\begin{align}
\label{eq.schw_ads_aux_metric} \rho^2 g &= d \rho^2 - \left[ 1 + \frac{1}{2} \rho^2 + \frac{1}{16} \rho^4 - \frac{ ( n - 1 ) M }{ n } \cdot \rho^n + \alpha_1 ( \rho ) \cdot \rho^{ n + 2 } \right] dt^2 \\
\notag &\qquad + \left[ 1 - \frac{1}{2} \rho^2 + \frac{1}{16} \rho^4 + \frac{ M }{ n } \cdot \rho^n + \alpha_2 ( \rho ) \cdot \rho^{ n + 2 } \right] \mathring{\gamma} \text{,}
\end{align}
where $\rho$ and $t$ are the projections to the first and second components of $\mi{M}$, where $\mathring{\gamma}$ denotes the unit round metric on the last $n - 1$ components of $\mi{M}_M$, and where the remainders $\alpha_1, \alpha_2: [ -\rho_0, \rho_0 ] \rightarrow \R$ are real-analytic functions.

\item In particular, when $n = 2$ and $n = 4$, we have explicit expressions for $g$:
\begin{align}
\label{eq.schw_ads_aux_low} \rho^2 g = \begin{cases}
d \rho^2 - \left[ 1 - \frac{ M - 1 }{2} \cdot \rho^2 + \frac{ ( M - 1 )^2 }{ 16 } \cdot \rho^4 \right] dt^2 & \\
\qquad + \left[ 1 + \frac{ M - 1 }{2} \cdot \rho^2 + \frac{ ( M - 1 )^2 }{ 16 } \cdot \rho^4 \right] \mathring{\gamma} & n = 2 \text{,} \\
d \rho^2 - \left( 1 - \frac{ 1 + 4 M }{ 16 } \cdot \rho^4 \right)^2 \left( 1 - \frac{1}{2} \rho^2 + \frac{ 1 + 4 M }{ 16 } \cdot \rho^4 \right)^{-1} dt^2 & \\
\qquad + \left( 1 - \frac{1}{2} \rho^2 + \frac{ 1 + 4 M }{ 16 } \cdot \rho^4 \right) \mathring{\gamma} & n = 4 \text{.}
\end{cases}
\end{align}
\end{itemize}
\end{proposition}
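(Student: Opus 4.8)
The plan is to realize the claimed isometry as a reparametrization of the radial coordinate. Write $f ( r ) := r^2 + 1 - M r^{ - ( n - 2 ) }$, so that $g_M = f ( r )^{-1} d r^2 - f ( r ) \, d t^2 + r^2 \mathring{\gamma}$; since $r_0$ is taken large in Definition \ref{def.schw_ads}, we have $f > 0$ on $[ r_0, \infty )$. I would define a new coordinate $\rho$ by the (singular) ODE
\begin{equation*}
\frac{ d \rho }{ \rho } = - \frac{ d r }{ \sqrt{ f ( r ) } } \text{,}
\end{equation*}
normalized uniquely by the requirement $\rho r \to 1$ as $r \nearrow \infty$. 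The map $( r, t, \omega ) \mapsto ( \rho ( r ), t, \omega )$ is then a diffeomorphism from $\mi{M}_M$ onto $\mi{M} := ( 0, \rho_0 ] \times \R \times \Sph^{ n - 1 }$, with $\rho_0 := \rho ( r_0 ) \ll 1$, and it pulls $g_M$ back to a metric $g$ for which $f ( r )^{-1} d r^2 = \rho^{-2} d \rho^2$ by construction. Hence $g$ has the Fefferman--Graham form \eqref{eq.aads_metric}, with vertical metric $\gv = - \rho^2 f ( r ) \, d t^2 + \rho^2 r^2 \, \mathring{\gamma}$; since $\rho^2 f ( r ) \to 1$ and $\rho^2 r^2 \to 1$, we get $\gv \rightarrow^0 ( - d t^2 + \mathring{\gamma} )$, so $( \mi{M}, g )$ is an FG-aAdS segment, and it is vacuum because it is isometric to $( \mi{M}_M, g_M )$, which solves the Einstein-vacuum equations \eqref{eq.einstein}.

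To establish the regularity statement and read off the expansion, I would pass to the variable $u := r^{-1}$ in order to resolve the singularity of the ODE at $r = \infty$. One computes $f ( r ) = r^2 ( 1 + u^2 - M u^n )$, so the ODE becomes $d \rho / \rho = [ u \, \phi ( u ) ]^{-1} d u$ with $\phi ( u ) := ( 1 + u^2 - M u^n )^{1/2}$, which is real-analytic near $u = 0$ with $\phi ( 0 ) = 1$. Isolating the logarithmic singularity, $d \rho / \rho = d u / u + \psi ( u ) \, d u$, where $\psi ( u ) := u^{-1} ( \phi ( u )^{-1} - 1 )$ is real-analytic near $0$; integrating and fixing the constant by the normalization $\rho r \to 1$ gives $\rho = u \, \Theta ( u )$ with $\Theta$ real-analytic near $0$ and $\Theta ( 0 ) = 1$. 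Since $d \rho / d u |_{ u = 0 } = 1 \neq 0$, the analytic inverse function theorem furnishes a real-analytic $u = u ( \rho )$ near $\rho = 0$ with $u ( 0 ) = 0$ and $u ( \rho ) = \rho + \tfrac14 \rho^3 + \dots$; therefore $\rho^2 r^2 = \Theta ( u ( \rho ) )^2$ and $\rho^2 f ( r ) = \Theta ( u ( \rho ) )^2 ( 1 + u ( \rho )^2 - M u ( \rho )^n )$ are real-analytic functions of $\rho$ near $0$. A routine Taylor expansion of $\phi^{-1}$, $\psi$, $\Theta$, and $u ( \rho )$ up to order $n$ then reproduces the coefficients $\tfrac12$, $\tfrac1{16}$, $- \tfrac{ ( n - 1 ) M }{ n }$ and $- \tfrac12$, $\tfrac1{16}$, $\tfrac{M}{n}$ appearing in \eqref{eq.schw_ads_aux_metric}, and one defines $\alpha_1$, $\alpha_2$ to be $\rho^{ - ( n + 2 ) }$ times the resulting real-analytic remainders.

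For the two distinguished cases $n = 2$ and $n = 4$, the coordinate change can instead be exhibited in closed form and checked directly, sidestepping the general argument. For $n = 2$, I would verify the algebraic identity $\rho^{-2} f ( r ) = ( \rho^{-2} + \tfrac14 ( 1 - M ) )^2$ for $r = \rho^{-1} - \tfrac14 ( 1 - M ) \rho$, which yields $f ( r )^{-1} d r^2 = \rho^{-2} d \rho^2$ immediately; substituting this $r ( \rho )$ into $\rho^2 f ( r )$ and $\rho^2 r^2$ produces \eqref{eq.schw_ads_aux_low}. For $n = 4$, I would take $r^2 = \rho^{-2} ( 1 - \tfrac12 \rho^2 + \tfrac{ 1 + 4 M }{ 16 } \rho^4 )$ and verify the identity $r^4 + r^2 - M = \rho^{-4} ( 1 - \tfrac{ 1 + 4 M }{ 16 } \rho^4 )^2$, from which both $f ( r )^{-1} d r^2 = \rho^{-2} d \rho^2$ and the stated form of $\rho^2 g$ follow by elementary algebra.

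I expect the only genuine obstacle to be the regularity claim of the second paragraph: showing that this reparametrization extends to a real-analytic diffeomorphism all the way up to the conformal boundary, even though the defining ODE degenerates as $r \nearrow \infty$. The device of working in $u = r^{-1}$ and splitting off the $d u / u$ term is precisely what repairs this; once $u ( \rho )$ is known to be real-analytic with $u ( 0 ) = 0$ and $u' ( 0 ) = 1$, the remaining work---the order-$n$ expansion in the general case, and the one-line algebraic verifications in the cases $n = 2$ and $n = 4$---is entirely routine.
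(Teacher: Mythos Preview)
Your proposal is correct and follows essentially the same route as the paper: define the Fefferman--Graham coordinate by $d\rho/\rho = -dr/\sqrt{f(r)}$, resolve the singularity at $r=\infty$ via an auxiliary variable to obtain real-analyticity, expand to order $n$, and handle $n=2,4$ by an explicit closed-form change of variables. The only cosmetic differences are that the paper uses the intermediate variable $\xi$ defined by $r = \tfrac12(\xi^{-1}-\xi)$ (the exact FG coordinate for pure AdS, so the mass appears purely as a perturbation) rather than your $u = r^{-1}$, and that for $n=2,4$ the paper integrates the ODE explicitly and then inverts, whereas you posit the answer and verify by substitution---your treatment of those two cases is in fact a bit cleaner.
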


\if\comp1
\begin{proof}
See Appendix \ref{sec.comp_schw_ads_aux}.
\end{proof}
\fi

From the Fefferman--Graham gauge derived in Proposition \ref{thm.schw_ads_aux}, we can extract the corresponding partial expansions near the boundary for the Schwarzschild-AdS metric:

\begin{corollary} \label{thm.schw_ads_fg}
Fix $M \in \R$ and $n > 1$, let $( \mi{M}_M, g_M )$ be as in Definition \ref{def.schw_ads}, and let $( \mi{M}, g )$ be as in Proposition \ref{thm.schw_ads_aux}.
Then, $( \mi{M}, g )$ is a vacuum FG-aAdS segment, with vertical metric
\begin{align}
\label{eq.schw_ads_fg} \gv &= ( - dt^2 + \mathring{\gamma} ) - \frac{1}{2} ( dt^2 + \mathring{\gamma} ) \cdot \rho^2 + \frac{1}{16} ( - dt^2 + \mathring{\gamma} ) \cdot \rho^4 \\
\notag &\qquad + \frac{ M }{ n } [ ( n - 1 ) dt^2 + \mathring{\gamma} ] \cdot \rho^n + \ms{r} \cdot \rho^{ n + 2 } \text{.}
\end{align}
where the vertical $( 0, 2 )$-tensor field $\ms{r}$ is locally bounded in $C^M$ for any $M \geq 0$.
Moreover, in the special case $n = 2$, we have the following explicit formula for $\gv$:
\begin{align}
\label{eq.schw_ads_fg_low} \gv &= ( - dt^2 + \mathring{\gamma} ) - \frac{ 1 - M }{2} ( dt^2 + \mathring{\gamma} ) \cdot \rho^2 + \frac{ ( 1 - M )^2 }{ 16 } ( - dt^2 + \mathring{\gamma} ) \cdot \rho^4 \text{.}
\end{align}
\end{corollary}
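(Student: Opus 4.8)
The plan is to read the entire statement off the Fefferman--Graham form \eqref{eq.schw_ads_aux_metric} already obtained in Proposition \ref{thm.schw_ads_aux}: once that normal form is in hand, the corollary is essentially a bookkeeping exercise.

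First I would identify the vertical metric. Comparing \eqref{eq.aads_metric} with \eqref{eq.schw_ads_aux_metric} shows that on $\mi{M} = ( 0, \rho_0 ] \times \mi{I}$, with $\mi{I} = \R \times \Sph^{ n - 1 }$, the vertical metric must be
\[
\gv = - \left[ 1 + \tfrac{1}{2} \rho^2 + \tfrac{1}{16} \rho^4 - \tfrac{ ( n - 1 ) M }{ n } \rho^n + \alpha_1 ( \rho ) \rho^{ n + 2 } \right] dt^2 + \left[ 1 - \tfrac{1}{2} \rho^2 + \tfrac{1}{16} \rho^4 + \tfrac{ M }{ n } \rho^n + \alpha_2 ( \rho ) \rho^{ n + 2 } \right] \mathring{\gamma} \text{.}
\]
Shrinking $\rho_0$ if necessary, both bracketed coefficients are positive on $( 0, \rho_0 ]$, so $\gv |_\sigma$ has Lorentzian signature $( -, +, \dots, + )$ for every $\sigma$; thus $\gv$ is a vertical metric, and $g = \rho^{-2} ( d \rho^2 + \gv )$ is a Lorentzian metric on $\mi{M}$. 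Each $\varphi$-coordinate component of $\gv$, together with all its $\mi{I}$-derivatives (which act only on the $\rho$-independent tensors $dt^2$ and $\mathring{\gamma}$, never on the functions $\alpha_i ( \rho )$), converges uniformly on compact subsets of $\mi{I}$ as $\rho \searrow 0$ to the corresponding component of $\gm := - dt^2 + \mathring{\gamma}$, so $\gv \rightarrow^0 \gm$; since $\gm$ is a Lorentzian metric on $\mi{I}$, Definition \ref{def.aads_metric} is satisfied and $( \mi{M}, g )$ is an FG-aAdS segment. For the vacuum property I would invoke the fact that $( \mi{M}_M, g_M )$ is a classical solution of the Einstein-vacuum equations with $\Lambda$ normalized as in \eqref{eq.einstein}: since Proposition \ref{thm.schw_ads_aux} provides an isometry $( \mi{M}_M, g_M ) \cong ( \mi{M}, g )$, the metric $g$ solves the same equations, so $( \mi{M}, g )$ is a vacuum FG-aAdS segment.

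Next I would group the above expression for $\gv$ by powers of $\rho$: the $\rho^0$-term is $- dt^2 + \mathring{\gamma}$, the $\rho^2$-term is $- \tfrac{1}{2} ( dt^2 + \mathring{\gamma} )$, the $\rho^4$-term is $\tfrac{1}{16} ( - dt^2 + \mathring{\gamma} )$, the $\rho^n$-term is $\tfrac{ M }{ n } [ ( n - 1 ) dt^2 + \mathring{\gamma} ]$, and everything left over is $\ms{r} \cdot \rho^{ n + 2 }$ with $\ms{r} := - \alpha_1 ( \rho ) \, dt^2 + \alpha_2 ( \rho ) \, \mathring{\gamma}$; this is precisely \eqref{eq.schw_ads_fg}. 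Because $\alpha_1, \alpha_2$ are real-analytic on the compact interval $[ -\rho_0, \rho_0 ]$, they are bounded there together with all their derivatives; since on any compact coordinate patch the $\varphi$-components of $dt^2$ and $\mathring{\gamma}$ are smooth and bounded, while $\varphi$-differentiation never touches $\alpha_i ( \rho )$, one obtains $\| \ms{r} \|_{ M, \varphi } < \infty$ for every $M \geq 0$ and every compact $( U, \varphi )$, i.e.\ $\ms{r}$ is locally bounded in $C^M$ for all $M$. Finally, for $n = 2$ I would run the same grouping on the explicit metric in the first case of \eqref{eq.schw_ads_aux_low} (which carries no remainder term): the $\rho^2$-coefficient is $\tfrac{ M - 1 }{ 2 } ( dt^2 + \mathring{\gamma} ) = - \tfrac{ 1 - M }{ 2 } ( dt^2 + \mathring{\gamma} )$ and the $\rho^4$-coefficient is $\tfrac{ ( M - 1 )^2 }{ 16 } ( - dt^2 + \mathring{\gamma} ) = \tfrac{ ( 1 - M )^2 }{ 16 } ( - dt^2 + \mathring{\gamma} )$, which gives \eqref{eq.schw_ads_fg_low}.

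I do not anticipate a genuine obstacle: all the hard analytic work is contained in Proposition \ref{thm.schw_ads_aux}, and the remaining content is rearrangement. The only step requiring a careful word is the $C^M$-boundedness of $\ms{r}$, and that reduces entirely to the observations that the analytic coefficients $\alpha_i$ are functions of $\rho$ alone (hence contribute no $\mi{I}$-derivatives) and are bounded on $[ -\rho_0, \rho_0 ]$.
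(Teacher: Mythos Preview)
Your proposal is correct and follows essentially the same route as the paper: both read off $\gv$ directly from the Fefferman--Graham form \eqref{eq.schw_ads_aux_metric}, group terms by powers of $\rho$, note that the remainder depends only on $\rho$ so its $C^M$-boundedness is immediate, verify the FG-aAdS conditions with $\gm = -dt^2 + \mathring{\gamma}$, and transfer the vacuum property via the isometry. Your write-up is slightly more explicit in checking the Lorentzian signature and the convergence $\gv \rightarrow^0 \gm$, but the argument is the same.
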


\begin{proof}
It is clear from \eqref{eq.schw_ads_aux_manifold} that $\mi{M}$ is an aAdS region (see Definition \ref{def.aads_manifold}), with $\mi{I} := \R \times \Sph^{ n - 1 }$.
Moreover, from \eqref{eq.schw_ads_aux_metric}, we see that $g$ can be written in the form \eqref{eq.aads_metric}, with vertical metric
\begin{align}
\label{eql.schw_ads_fg_1} \gv &= - \left[ 1 + \frac{1}{2} \rho^2 + \frac{1}{16} \rho^4 - \frac{ ( n - 1 ) M }{ n } \cdot \rho^n + \alpha_1 ( \rho ) \cdot \rho^{ n + 2 } \right] dt^2 \\
\notag &\qquad + \left[ 1 - \frac{1}{2} \rho^2 + \frac{1}{16} \rho^4 + \frac{ M }{ n } \cdot \rho^n + \alpha_2 ( \rho ) \cdot \rho^{ n + 2 } \right] \mathring{\gamma} \text{.}
\end{align}
Expanding the above immediately yields \eqref{eq.schw_ads_fg}.
(In particular, $\ms{r}$ depends only on $\rho$, so the $C^M$-boundedness of $\ms{r}$ for any $M \geq 0$ is trivial.)
A similar computation using the first part of \eqref{eq.schw_ads_aux_low} yields the remaining formula \eqref{eq.schw_ads_fg_low}.
Observe also that \eqref{eq.aads_metric_limit} holds, with Lorentzian limit
\[
\gm := - dt^2 + \mathring{\gamma} \text{.}
\]

The above proves that $( \mi{M}, g )$ is indeed a vacuum FG-aAdS segment.
Finally, since $( \mi{M}_M, g_M )$ is known to satisfy the Einstein-vacuum equations, $( \mi{M}, g )$ must be vacuum as well.
\end{proof}

\begin{corollary} \label{thm.schw_ads_exp}
Assume the hypotheses, and hence the conclusions, of Corollary \ref{thm.schw_ads_fg}.
Then, in the context of Theorem \ref{thm.fg_exp}, the expansion \eqref{eq.fg_exp} of the vertical metric $\gv$ in \eqref{eq.schw_ads_fg} satisfies
\begin{align}
\label{eq.schw_ads_exp} \gb{0} &= - dt^2 + \mathring{\gamma} \text{,} \\
\notag \gb{2} &= \begin{cases}
- \frac{1}{2} ( dt^2 + \mathring{\gamma} ) & n > 2 \text{,} \\
- \frac{1}{2} ( dt^2 + \mathring{\gamma} ) + \frac{ M }{2} ( dt^2 + \mathring{\gamma} ) & n = 2 \text{,}
\end{cases} \\
\notag \gb{n} &= \begin{cases}
\frac{ M }{ n } [ ( n - 1 ) dt^2 + \mathring{\gamma} ] & n \not\in \{ 2, 4 \} \text{,} \\
- \frac{1}{2} ( dt^2 + \mathring{\gamma} ) + \frac{ M }{2} ( dt^2 + \mathring{\gamma} ) & n = 2 \text{,} \\
\frac{1}{16} ( -dt^2 + \mathring{\gamma} ) + \frac{M}{4} ( 3 dt^2 + \mathring{\gamma} ) & n = 4 \text{.}
\end{cases}
\end{align}
\end{corollary}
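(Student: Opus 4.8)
The plan is to obtain the corollary directly from Corollary \ref{thm.schw_ads_fg} together with Theorem \ref{thm.fg_exp}: the substance is simply to identify the abstractly produced Fefferman--Graham coefficients with the explicit, $\rho$-independent coefficients already displayed in \eqref{eq.schw_ads_fg}.

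First I would check that the hypotheses of Theorem \ref{thm.fg_exp} hold for $( \mi{M}, g )$. By Corollary \ref{thm.schw_ads_fg}, $( \mi{M}, g )$ is a vacuum FG-aAdS segment whose vertical metric $\gv$ is given by \eqref{eq.schw_ads_fg}; since the tensor fields multiplying the powers of $\rho$ there are $\rho$-independent and $\ms{r}$ is locally bounded in $C^M$ for every $M \geq 0$, both $\gv$ and $\mi{L}_\rho \gv$ are locally bounded in $C^M$ for every $M$, so \eqref{eq.fg_main_ass_g} and \eqref{eq.fg_main_ass_Lg} hold for every choice of $M_0$. Theorem \ref{thm.fg_exp} then applies and yields the expansion \eqref{eq.fg_exp}, with coefficient tensor fields $\gb{0}, \dots, \gb{n-1}, \gb{\star}, \gb{n}$ on $\mi{I}$ and a remainder $\ms{r}_{ \gv } \rightarrow^{ M_0 - n } 0$.

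The main step is to match \eqref{eq.fg_exp} against \eqref{eq.schw_ads_fg}. The right-hand side of \eqref{eq.schw_ads_fg} is a finite sum $\sum_j \mf{c}_j \rho^j$ of terms with $\rho$-independent coefficients $\mf{c}_j$, plus $\ms{r}\,\rho^{ n + 2 }$; writing $\mf{P} := \sum_{ j \leq n } \mf{c}_j \rho^j$ for the partial sum through order $\rho^n$, the remaining terms of \eqref{eq.schw_ads_fg} have order at least $\rho^{ n + 1 }$, and since $\ms{r}$ is locally bounded, Proposition \ref{thm.limit_bound} gives $\gv = \mf{P} + \mf{d}( \rho )\,\rho^n$ with $\mf{d} \rightarrow^{ M_0 - n } 0$. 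Subtracting this from \eqref{eq.fg_exp} gives, in any compact coordinate system and componentwise,
\[
\sum_{ j \leq n } ( \gb{j} - \mf{c}_j ) \rho^j + \gb{\star} \rho^n \log \rho = \mf{e}( \rho )\,\rho^n \text{,} \qquad \mf{e} := \mf{d} - \ms{r}_{ \gv } \rightarrow^{ M_0 - n } 0 \text{,}
\]
where the $\log$ term is absent when $n$ is odd (and $\gb{j}$ is read as $0$ for odd $j < n$). Dividing successively by $\rho^{ j_0 }$ for the least $j_0 < n$ with $\gb{ j_0 } \neq \mf{c}_{ j_0 }$, if such an index exists, and using $\rho^{ n - j_0 } \log \rho \to 0$ and $\mf{e}\,\rho^{ n - j_0 } \to 0$, yields a contradiction; hence $\gb{j} = \mf{c}_j$ for all $j < n$. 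Dividing the remaining identity by $\rho^n$ then forces $\gb{\star} = 0$ (since $\log \rho \to - \infty$) and $\gb{n} = \mf{c}_n$.

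It then only remains to read off $\mf{c}_0$, $\mf{c}_2$, and $\mf{c}_n$ from \eqref{eq.schw_ads_fg}, or from \eqref{eq.schw_ads_fg_low} when $n = 2$. The one point requiring care is that when $n \in \{ 2, 4 \}$ the ``mass term'' $\tfrac{M}{n}[ ( n - 1 ) dt^2 + \mathring{\gamma} ]\,\rho^n$ has the same order in $\rho$ as a lower term of \eqref{eq.schw_ads_fg} (the $\rho^2$-term when $n = 2$, the $\rho^4$-term when $n = 4$), so the stated values of $\gb{2}$ for $n = 2$ and of $\gb{n}$ for $n \in \{ 2, 4 \}$ are obtained by summing the two contributions at that order, whereas for $n \notin \{ 2, 4 \}$ the $\rho^n$-coefficient is exactly the mass term. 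Together with the trivial identification $\gb{0} = \gm = - dt^2 + \mathring{\gamma}$, this gives \eqref{eq.schw_ads_exp}. I do not anticipate any genuine obstacle: once Corollary \ref{thm.schw_ads_fg} and Theorem \ref{thm.fg_exp} are in hand, the proof is a uniqueness-of-asymptotic-expansion argument followed by elementary bookkeeping, the only mild subtlety being the coincidence of orders in the cases $n = 2$ and $n = 4$.
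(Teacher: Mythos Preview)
Your proposal is correct. The paper states this corollary without proof, treating it as immediate from reading off the coefficients in \eqref{eq.schw_ads_fg} (and \eqref{eq.schw_ads_fg_low} when $n=2$); your uniqueness-of-asymptotic-expansion argument supplies precisely the justification the paper leaves implicit, and your handling of the order coincidences at $n\in\{2,4\}$ matches what the explicit formulas require.
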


Note in particular that $\gb{0}$ is independent of $M$.
One can also check that $\gb{2}$ satisfies the last formula of \eqref{eq.fg_main_schouten}, with $\gm = \gb{0}$.
Furthermore, $\gb{n}$ is the first coefficient that depends on the mass parameter $M$, and each value of $M$ corresponds to a unique value of $\gb{n}$.
Finally, note that the logarithmic coefficient $\gb{\star}$ always vanishes, regardless of dimension.

\appendix

\if\comp1

\section{Details and Computations} \label{sec.comp}

This appendix contains additional proofs, computations, and details for readers' convenience.
In particular, propositions which were stated but not proved in the main text are proved here.

\subsection{Proof of Proposition \ref{thm.g_rho}} \label{sec.comp_g_rho}

First, note that \eqref{eq.g_rho} and the first two parts of \eqref{eq.geod_rho} follow trivially from \eqref{eq.aads_metric}.
Moreover, from \eqref{eq.g_rho} and the fact that $\partial_\rho$ and $\partial_a$ commute, we have
\[
g ( \nabla_\rho \partial_\rho, \partial_\rho ) = \frac{1}{2} \partial_\rho g_{ \rho \rho } = - \rho^{-3} \text{,} \qquad g ( \nabla_\rho \partial_\rho, \partial_a ) = - g ( \partial_\rho, \nabla_\rho \partial_a ) = - \frac{1}{2} \partial_a g_{ \rho \rho } = 0 \text{.}
\]
Observe in particular that \eqref{eq.g_rho} and the above imply
\begin{align*}
g ( \nabla_N N, \partial_\rho ) &= \rho^2 \cdot g ( \nabla_\rho \partial_\rho, \partial_\rho ) + \rho \cdot g ( \partial_\rho, \partial_\rho ) = 0 \text{,} \\
g ( \nabla_N N, \partial_a ) &= \rho^2 \cdot g ( \nabla_\rho \partial_\rho, \partial_a ) + \rho \cdot g ( \partial_\rho, \partial_a ) = 0 \text{,}
\end{align*}
from which the last part of \eqref{eq.geod_rho} immediately follows.

Next, since $\partial_\rho$ commutes with the $\partial_a$'s, then \eqref{eq.aads_metric} implies
\[
- \frac{1}{2} \rho^{-1} \mi{L}_\rho \gv_{ a b } + \rho^{-2} \gv_{ a b } = - \frac{1}{2} \rho \partial_\rho g_{ a b } = - \frac{1}{2} \rho \cdot [ g ( \nabla_a \partial_\rho, \partial_b ) + g ( \partial_a, \nabla_b \partial_\rho ) ] \text{.}
\]
The desired identity \eqref{eq.sff_rho} now follows from the above and the fact that
\[
g ( \partial_a, \nabla_b \partial_\rho ) = - g ( \nabla_b \partial_a, \partial_\rho ) = - g ( \nabla_a \partial_b, \partial_\rho ) = g ( \nabla_a \partial_\rho, \partial_b ) \text{,}
\]
which is a consequence of \eqref{eq.g_rho}.

\subsection{Proof of Proposition \ref{thm.einstein_ex}} \label{sec.comp_einstein_ex}

The second identity in \eqref{eq.einstein_ex} is obtained by taking the $g$-trace of \eqref{eq.einstein}.
The first part of \eqref{eq.einstein_ex} then follows by substituting the second part into \eqref{eq.einstein}.
The last identity in \eqref{eq.einstein_ex} now follows from the above and from the standard decomposition formula for $R$:
\begin{align*}
W_{ \alpha \beta \gamma \delta } &= R_{ \alpha \beta \gamma \delta } - \frac{1}{ n - 1 } ( g_{ \alpha \gamma } Rc_{ \beta \delta } - g_{ \alpha \delta } Rc_{ \beta \gamma } - g_{ \beta \gamma } Rc_{ \alpha \delta } + g_{ \beta \delta } Rc_{ \gamma \alpha } ) \\
&\qquad + \frac{1}{ n (n - 1) } Rs ( g_{ \alpha \beta } g_{ \beta \delta } - g_{ \alpha \delta } g_{ \beta \gamma } ) \\
&= R_{ \alpha \beta \gamma \delta } + \frac{ 2 n }{ n - 1 } ( g_{ \alpha \gamma } g_{ \beta \delta } - g_{ \alpha \delta } g_{ \beta \gamma } ) - \frac{ n + 1 }{ n - 1 } ( g_{ \alpha \beta } g_{ \beta \delta } - g_{ \alpha \delta } g_{ \beta \gamma } ) \\
&= R_{ \alpha \beta \gamma \delta } + ( g_{ \alpha \gamma } g_{ \beta \delta } - g_{ \alpha \delta } g_{ \beta \gamma } ) \text{.}
\end{align*}

\subsection{Proof of Proposition \ref{thm.comm_vertical}} \label{sec.comp_comm_vertical}

Letting $\ms{\Gamma}^a_{ b c }$ denote the Christoffel symbols for $\gv$ with respect to $\varphi$-coordinates, then a direct computation yields the following identity:
\begin{align}
\label{eql.comm_vertical_1} [ \mi{L}_\rho, \Dv_a ] \ms{A}^{ b_1 \dots b_k }_{ c_1 \dots c_l } &= \mi{L}_\rho \Dv_a \ms{A}_b - \Dv_a \mi{L}_\rho \ms{A}^{ b_1 \dots b_k }_{ c_1 \dots c_l } \\
\notag &= \partial_\rho \partial_a \ms{A}^{ b_1 \dots b_k }_{ c_1 \dots c_l } + \sum_{ j = 1 }^k \partial_\rho ( \ms{\Gamma}_{ a d }^{ b_j } \ms{A}^{ b_1 \hat{d}_j b_k }_{ c_1 \dots c_l } ) - \sum_{ j = 1 }^l \partial_\rho ( \ms{\Gamma}_{ a c_j }^d \ms{A}^{ b_1 \dots b_k }_{ c_1 \hat{d}_j c_l } ) \\
\notag &\qquad - \partial_a \partial_\rho \ms{A}^{ b_1 \dots b_k }_{ c_1 \dots c_l } - \sum_{ j = 1 }^k \ms{\Gamma}_{ a d }^{ b_j } \partial_\rho \ms{A}^{ b_1 \hat{d}_j b_k }_{ c_1 \dots c_l } + \sum_{ j = 1 }^l \ms{\Gamma}_{ a c_j }^d \partial_\rho \ms{A}^{ b_1 \dots b_k }_{ c_1 \hat{d}_j c_l } \\
\notag &= \sum_{ j = 1 }^k \partial_\rho \ms{\Gamma}_{ a d }^{ b_j } \cdot \ms{A}^{ b_1 \hat{d}_j b_k }_{ c_1 \dots c_l } - \sum_{ j = 1 }^l \partial_\rho \ms{\Gamma}_{ a c_j }^d \cdot \ms{A}^{ b_1 \dots b_k }_{ c_1 \hat{d}_j c_l } \text{.}
\end{align}
To determine $\partial_\rho \ms{\Gamma}^b_{ a c }$, we apply \eqref{eql.comm_vertical_1}, with $\ms{A} := \gv$, to obtain
\[
\Dv_a \mi{L}_\rho \gv_{ b c } = \mi{L}_\rho \Dv_a \gv_{ b c } - [ \mi{L}_\rho, \Dv_a ] \gv_{ b c } = \partial_\rho \ms{\Gamma}_{ a b }^d \cdot \gv_{ d c } + \partial_\rho \ms{\Gamma}_{ a c }^d \cdot \gv_{ b d } \text{.}
\]
Using the above, we then conclude that
\begin{align*}
\gv^{ b d } ( \Dv_a \mi{L}_\rho \gv_{ d c } + \Dv_c \mi{L}_\rho \gv_{ d a } - \Dv_d \mi{L}_\rho \gv_{ a c } ) &= \gv^{ b d } ( \gv_{ e c } \partial_\rho \ms{\Gamma}_{ a d }^e + \gv_{ d e } \partial_\rho \ms{\Gamma}_{ a c }^e ) + \gv^{ b d } ( \gv_{ e a } \partial_\rho \ms{\Gamma}_{ c d }^e + \gv_{ d e } \partial_\rho \ms{\Gamma}_{ c a }^e ) \\
&\qquad - \gv^{ b d } ( \gv_{ e c } \partial_\rho \ms{\Gamma}_{ d a }^e + \gv_{ a e } \partial_\rho \ms{\Gamma}_{ d c }^e ) \\
&= 2 \cdot \partial_\rho \ms{\Gamma}_{ a c }^b \text{.}
\end{align*}
Combining the above with \eqref{eql.comm_vertical_1} results in \eqref{eq.comm_vertical}.

\subsection{Proof of Proposition \ref{thm.curvature_vertical}} \label{sec.comp_curvature_vertical}

Let $\ms{X}$ be a vector field on $\mi{M}$ that is both vertical and independent of $\rho$.
By the definition of the Riemann curvature and \eqref{eq.comm_vertical}, we have that
\begin{align*}
\mi{L}_\rho ( \Rv^c{}_{ d a b } \ms{X}^d ) &= \mi{L}_\rho ( \Dv_{ a b } \ms{X}^c - \Dv_{ b a } \ms{X}^c ) \\
&= \Dv_a \mi{L}_\rho \Dv_b \ms{X}^c - \Dv_b \mi{L}_\rho \Dv_a \ms{X}^c - \frac{1}{2} \gv^{ d e } ( \Dv_a \mi{L}_\rho \gv_{ e b } + \Dv_b \mi{L}_\rho \gv_{ e a } - \Dv_e \mi{L}_\rho \gv_{ a b } ) \Dv_d \ms{X}^c \\
&\qquad + \frac{1}{2} \gv^{ d e } ( \Dv_b \mi{L}_\rho \gv_{ e a } + \Dv_a \mi{L}_\rho \gv_{ e b } - \Dv_e \mi{L}_\rho \gv_{ b a } ) \Dv_d \ms{X}^c \\
&\qquad + \frac{1}{2} \gv^{ c e } ( \Dv_a \mi{L}_\rho \gv_{ e d } + \Dv_d \mi{L}_\rho \gv_{ e a } - \Dv_e \mi{L}_\rho \gv_{ a d } ) \Dv_b \ms{X}^d \\
&\qquad - \frac{1}{2} \gv^{ c e } ( \Dv_b \mi{L}_\rho \gv_{ e d } + \Dv_d \mi{L}_\rho \gv_{ e b } - \Dv_e \mi{L}_\rho \gv_{ b d } ) \Dv_a \ms{X}^d \text{.}
\end{align*}
Applying \eqref{eq.comm_vertical} again to the right-hand side and recalling that $\mi{L}_\rho \ms{X} = 0$, we obtain
\begin{align*}
\mi{L}_\rho ( \Rv^c{}_{ d a b } \ms{X}^d ) &= \frac{1}{2} \Dv_a [ \gv^{ c e } ( \Dv_b \mi{L}_\rho \gv_{ e d } + \Dv_d \mi{L}_\rho \gv_{ e b } - \Dv_e \mi{L}_\rho \gv_{ b d } ) \ms{X}^d ] \\
&\qquad - \frac{1}{2} \Dv_b [ \gv^{ c e } ( \Dv_a \mi{L}_\rho \gv_{ e d } + \Dv_d \mi{L}_\rho \gv_{ e a } - \Dv_e \mi{L}_\rho \gv_{ a d } ) \ms{X}^d ] \\
&\qquad + \frac{1}{2} \gv^{ c e } ( \Dv_a \mi{L}_\rho \gv_{ e d } + \Dv_d \mi{L}_\rho \gv_{ e a } - \Dv_e \mi{L}_\rho \gv_{ a d } ) \Dv_b \ms{X}^d \\
&\qquad - \frac{1}{2} \gv^{ c e } ( \Dv_b \mi{L}_\rho \gv_{ e d } + \Dv_d \mi{L}_\rho \gv_{ e b } - \Dv_e \mi{L}_\rho \gv_{ b d } ) \Dv_a \ms{X}^d \\
&= \frac{1}{2} \gv^{ c e } ( \Dv_{ a b } \mi{L}_\rho \gv_{ e d } + \Dv_{ a d } \mi{L}_\rho \gv_{ e b } - \Dv_{ a e } \mi{L}_\rho \gv_{ b d } ) \ms{X}^d \\ 
&\qquad - \frac{1}{2} \gv^{ c e } ( \Dv_{ b a } \mi{L}_\rho \gv_{ e d } + \Dv_{ b d } \mi{L}_\rho \gv_{ e a } - \Dv_{ b e } \mi{L}_\rho \gv_{ a d } ) \ms{X}^d \text{.} 
\end{align*}
Taking $\ms{X}$ to be the coordinate vector field $\partial_{ x^d }$ in the above yields
\[
\mi{L}_\rho \Rv^c{}_{ d a b } = \frac{1}{2} \gv^{ c e } ( \Dv_{ a b } \mi{L}_\rho \gv_{ e d } + \Dv_{ a d } \mi{L}_\rho \gv_{ e b } - \Dv_{ a e } \mi{L}_\rho \gv_{ b d } - \Dv_{ b a } \mi{L}_\rho \gv_{ e d } - \Dv_{ b d } \mi{L}_\rho \gv_{ e a } + \Dv_{ b e } \mi{L}_\rho \gv_{ a d } ) \text{,}
\]
which, after reindexing, is precisely \eqref{eq.curvature_vertical}.

\subsection{Proof of Proposition \ref{thm.einstein_vertical_ex}} \label{sec.comp_einstein_vertical_ex}

The first equation in \eqref{eq.einstein_vertical_ex} follows from the second part of \eqref{eq.einstein_vertical} and Definition \ref{def.schematic_vertical}, while the third equation in \eqref{eq.einstein_vertical_ex} is a consequence of \eqref{eq.curvature_vertical}.

Taking a $\gv$-trace of the first identity of \eqref{eq.einstein_vertical_ex} and then commuting $\mi{L}_\rho$ with $\trace{\gv}$ yields
\begin{equation}
\label{eql.einstein_vertical_ex_1} \rho \mi{L}_\rho ( \trace{\gv} \mi{L}_\rho \gv ) - \rho \mi{L}_\rho \gv^{ a b } \mi{L}_\rho \gv_{ a b } - ( 2 n - 1 ) \trace{\gv} \mi{L}_\rho \gv = 2 \rho \cdot \Rsv + \rho \cdot \mi{S} ( \gv; \mi{L}_\rho \gv, \mi{L}_\rho \gv ) \text{.}
\end{equation}
From standard identities regarding Lie derivatives of metric duals, we see that
\[
\mi{L}_\rho \gv^{ a b } = - \gv^{ a c } \gv^{ b d } \mi{L}_\rho \gv_{ c d } \text{,} \qquad \rho \mi{L}_\rho \gv^{ a b } \mi{L}_\rho \gv_{ a b } = \rho \cdot \mi{S} ( \gv; \mi{L}_\rho \gv, \mi{L}_\rho \gv ) \text{.}
\]
The second identity in \eqref{eq.einstein_vertical_ex} now follows from \eqref{eql.einstein_vertical_ex_1} and the above.

\subsection{Proof of Proposition \ref{thm.comm_vertical_deriv}} \label{sec.comp_comm_vertical_deriv}

We prove \eqref{eq.comm_vertical_deriv} via induction on $k$.
First, the case $k = 1$ is simply \eqref{eq.comm_vertical}.
Now, if \eqref{eq.comm_vertical_deriv} holds for some $k \geq 1$, then by the induction hypothesis and \eqref{eq.comm_vertical},
\begin{align*}
[ \mi{L}_\rho^{ k + 1 }, \Dv ] \ms{A} &= [ \mi{L}_\rho, \Dv ] \mi{L}_\rho^k \ms{A} + \mi{L}_\rho ( [ \mi{L}_\rho^k, \Dv ] \ms{A} ) \\
&= \mi{S} ( \gv; \mi{L}_\rho^k \ms{A}, \Dv \mi{L}_\rho \gv ) + \sum_{ \substack{ j + j_0 + \dots + j_l = k \\ 0 \leq j < k \text{, } j_p \geq 1 } } \mi{L}_\rho [ \mi{S} ( \gv; \mi{L}_\rho^j \ms{A}, \Dv \mi{L}_\rho^{ j_0 } \gv, \mi{L}_\rho^{ j_1 } \gv, \dots, \mi{L}_\rho^{ j_l } \gv ) ] \text{.}
\end{align*}
Applying the Leibniz rule and then \eqref{eq.comm_vertical} on the right-hand side yields
\begin{align*}
[ \mi{L}_\rho^{ k + 1 }, \Dv ] \ms{A} &= \sum_{ \substack{ j + j_0 + \dots + j_l = k + 1 \\ 0 \leq j < k + 1 \text{, } j_p \geq 1 } } \mi{S} ( \gv; \mi{L}_\rho^j \ms{A}, \Dv \mi{L}_\rho^{ j_0 } \gv, \mi{L}_\rho^{ j_1 } \gv, \dots, \mi{L}_\rho^{ j_l } \gv ) \\
&\qquad + \sum_{ \substack{ j + j_0 + \dots + j_l = k \\ 0 \leq j < k \text{, } j_p \geq 1 } } \mi{S} ( \gv; \mi{L}_\rho^j \ms{A}, \mi{L}_\rho \Dv \mi{L}_\rho^{ j_0 } \gv, \mi{L}_\rho^{ j_1 } \gv, \dots, \mi{L}_\rho^{ j_l } \gv ) \\
&= \sum_{ \substack{ j + j_0 + \dots + j_l = k + 1 \\ 0 \leq j < k + 1 \text{, } j_p \geq 1 } } \mi{S} ( \gv; \mi{L}_\rho^j \ms{A}, \Dv \mi{L}_\rho^{ j_0 } \gv, \mi{L}_\rho^{ j_1 } \gv, \dots, \mi{L}_\rho^{ j_l } \gv ) \text{,}
\end{align*}
which completes the proof of \eqref{eq.comm_vertical_deriv}.

\subsection{Proof of Proposition \ref{thm.curvature_vertical_deriv}} \label{sec.comp_curvature_vertical_deriv}

We first apply $\mi{L}_\rho^{ k - 1 }$ to \eqref{eq.curvature_vertical} and use the Leibniz rule:
\begin{equation}
\label{eql.curvature_vertical_deriv_0} \mi{L}_\rho^k \Rv = \mi{L}_\rho^{ k - 1 } [ \mi{S} ( \gv; \Dv^2 \mi{L}_\rho \gv ) ] = \sum_{ \substack{ j_0 + \dots + j_l = k - 1 \\ j_p \geq 1 } } \mi{S} ( \gv; \mi{L}_\rho^{ j_0 } \Dv^2 \mi{L}_\rho \gv, \mi{L}_\rho^{ j_1 } \gv, \dots, \mi{L}_\rho^{ j_l } \gv ) \text{.}
\end{equation}
For $j_0 \geq 0$, we apply \eqref{eq.comm_vertical_deriv} repeatedly to obtain
\begin{align*}
\mi{L}_\rho^{ j_0 } \Dv^2 \mi{L}_\rho \gv &= \Dv \mi{L}_\rho^{ j_0 } \Dv \mi{L}_\rho \gv + \sum_{ \substack{ j' + j_0' + \dots + j'_{ l' } = j_0 \\ 0 \leq j' < j_0 \text{, } j'_p \geq 1 } } \mi{S} ( \gv; \mi{L}_\rho^{ j' } \Dv \mi{L}_\rho \gv, \Dv \mi{L}_\rho^{ j'_0 } \gv, \mi{L}_\rho^{ j'_1 } \gv, \dots, \mi{L}_\rho^{ j'_{ l' } } \gv ) \\
&= \Dv^2 \mi{L}_\rho^{ j_0 + 1 } \gv + \sum_{ \substack{ j' + j_0' + \dots + j'_{ l' } = j_0 \\ 0 \leq j' < j_0 \text{, } j'_p \geq 1 } } \Dv [ \mi{S} ( \gv; \mi{L}_\rho^{ j' } \mi{L}_\rho \gv, \Dv \mi{L}_\rho^{ j'_0 } \gv, \mi{L}_\rho^{ j'_1 } \gv, \dots, \mi{L}_\rho^{ j'_{ l' } } \gv ) ] \\
&\qquad + \sum_{ \substack{ j_0' + \dots + j'_{ l' } = j_0 + 1 \\ j'_p \geq 1 } } \mi{S} ( \gv; \Dv \mi{L}_\rho^{ j'_0 } \gv, \Dv \mi{L}_\rho^{ j'_1 } \gv, \mi{L}_\rho^{ j'_2 } \gv, \dots, \mi{L}_\rho^{ j'_{ l' } } \gv ) \\
&= \sum_{ \substack{ i'_1 + \dots + i'_{ l' } = 2 \\ j'_1 + \dots + j'_l = j_0 + 1 \\ j'_p \geq 1 } } \mi{S} ( \gv; \Dv^{ i'_1 } \mi{L}_\rho^{ j'_1 } \gv, \dots, \Dv^{ i'_{ l' } } \mi{L}_\rho^{ j'_{ l' } } \gv ) \text{.}
\end{align*}
Combining \eqref{eql.curvature_vertical_deriv_0} with the above results in the first identity in \eqref{eq.curvature_vertical_deriv}.

Next, we apply $\mi{L}_\rho^{ k - 1 }$ to the first equation in \eqref{eq.einstein_vertical}:
\begin{equation}
\label{eql.curvature_vertical_deriv_1} 0 = \mi{L}_\rho^{ k - 1 } ( \gv^{ b c } \Dv_b \mi{L}_\rho \gv_{ a c } ) - \mi{L}_\rho^{ k - 1 } \Dv_a ( \trace{\gv} \mi{L}_\rho \gv ) \text{.}
\end{equation}
For each of the terms on the right-hand side of \eqref{eql.curvature_vertical_deriv_1}, we use the Leibniz rule to commute $\mi{L}_\rho^{ k - 1 }$ with $\gv^{-1}$, and we use \eqref{eq.comm_vertical_deriv} to commute $\mi{L}_\rho^{ k - 1 }$ with $\Dv$.
This yields
\begin{align*}
\mi{L}_\rho^{ k - 1 } ( \gv^{ b c } \Dv_b \mi{L}_\rho \gv_{ a c } ) &= \gv^{ b c } \mi{L}_\rho^{ k - 1 } \Dv_b \mi{L}_\rho \gv_{ a c } + \sum_{ \substack{ j + j_1 + \dots + j_l = k - 1 \\ 0 \leq j < k - 1 \text{, } j_p \geq 1 } } [ \mi{S} ( \gv; \mi{L}_\rho^j \Dv \mi{L}_\rho \gv, \mi{L}_\rho^{ j_1 } \gv, \dots, \mi{L}_\rho^{ j_l } \gv ) ]_a \\
&= \gv^{ b c } \Dv_b \mi{L}_\rho^k \gv_{ a c } + \sum_{ \substack{ j_0 + \dots + j_l = k \\ 1 \leq j_p < k } } [ \mi{S} ( \gv; \Dv \mi{L}_\rho^{ j_0 } \gv, \mi{L}_\rho^{ j_1 } \gv, \dots, \mi{L}_\rho^{ j_l } \gv ) ]_a \text{,} \\
\Dv_a ( \trace{\gv} \mi{L}_\rho^k \gv ) &= \Dv_a \mi{L}_\rho^{ k - 1 } ( \trace{\gv} \mi{L}_\rho \gv ) + \sum_{ \substack{ j_1 + \dots + j_l = k \\ 1 \leq j_p < k } } \Dv_a [ \mi{S} ( \gv; \mi{L}_\rho^{ j_1 } \gv, \dots, \mi{L}_\rho^{ j_l } \gv ) ] \\
&= \mi{L}_\rho^{ k - 1 } \Dv_a ( \trace{\gv} \mi{L}_\rho \gv ) + \sum_{ \substack{ j_0 + \dots + j_l = k \\ 1 \leq j_p < k } } [ \mi{S} ( \gv; \Dv \mi{L}_\rho^{ j_0 } \gv, \mi{L}_\rho^{ j_1 } \gv, \dots, \mi{L}_\rho^{ j_l } \gv ) ]_a \text{.}
\end{align*}
The second identity of \eqref{eq.curvature_vertical_deriv} now follows from \eqref{eql.curvature_vertical_deriv_1} and the above.

\subsection{Proof of Proposition \ref{thm.limit_bound}} \label{sec.comp_limit_bound}

For any compact coordinate system $( U, \varphi )$ on $\mi{I}$, we have that
\[
\lim_{ \sigma \searrow 0 } \sup_{ \{ \sigma \} \times U } | \rho^p \ms{A} |_{ M, \varphi } \leq \left( \lim_{ \sigma \searrow 0 } \sigma^p \right) \| \ms{A} \|_{ M, \varphi } = 0 \text{,}
\]
and it follows that $\rho^p \ms{A} \rightarrow^M 0$.
The desired property $\rho^p \ms{A} \Rightarrow^M 0$ also follows, since
\[
\int_0^{ \rho_0 } \frac{1}{ \sigma } | \rho^p \ms{A} |_{ M, \varphi } |_\sigma d \sigma \lesssim \int_0^{ \rho_0 } \sigma^{ -1 + p } d \sigma \cdot \| \ms{A} \|_{ M, \varphi } < \infty \text{.}
\]

\subsection{Proof of Proposition \ref{thm.limit_schematic}} \label{sec.comp_limit_schematic}

Fix a compact coordinate system $( U, \varphi )$ on $\mi{I}$, and assume that the limits $\ms{A}_j \rightarrow^M \mf{A}_j$ hold for every $1 \leq j \leq N$.
Then, by compactness,
\begin{equation}
\label{eql.limit_schematic_1} \| \ms{A}_j \|_{ M, \varphi } \leq C \text{,} \qquad \| \mf{A}_j \|_{ M, \varphi } \leq C \text{,} \qquad 1 \leq j \leq N \text{,}
\end{equation}
for some $C > 0$.
Now, given any vertical tensor field
\begin{equation}
\label{eql.limit_schematic_2} \ms{B} = \mi{S} ( \ms{A}_1, \dots, \ms{A}_N ) := \sum_{ j = 1 }^T \mc{Q}_j ( \ms{A}_1 \otimes \dots \otimes \ms{A}_N ) \text{,}
\end{equation}
where $T \geq 0$ and each $\mc{Q}_j$, $1 \leq i \leq T$, is schematically trivial, we let
\begin{equation}
\label{eql.limit_schematic_3} \mf{G} = \mi{S} ( \mf{A}_1, \dots, \mf{A}_N ) := \sum_{ j = 1 }^T \mc{Q}_j ( \mf{A}_1 \otimes \dots \otimes \mf{A}_N ) \text{.}
\end{equation}
Using \eqref{eql.limit_schematic_1}, we estimate, for any $1 \leq j \leq N$,
\[
| \mc{Q}_j ( \ms{A}_1 \otimes \dots \otimes \ms{A}_N ) - \mc{Q}_j ( \mf{A}_1 \otimes \dots \otimes \mf{A}_N ) |_{ M, \varphi } \lesssim_C \sum_{ i = 1 }^N | \ms{A}_i - \mf{A}_i |_{ M, \varphi } \text{.}
\]
Combining \eqref{eql.limit_schematic_2}, \eqref{eql.limit_schematic_3}, and the above, we conclude that
\begin{equation}
\label{eql.limit_schematic_4} | \ms{B} - \mf{G} |_{ M, \varphi } \lesssim_C \sum_{ i = 1 }^N | \ms{A}_i - \mf{A}_i |_{ M, \varphi } \text{.}
\end{equation}

In particular, \eqref{eql.limit_schematic_4} implies $\ms{B} \rightarrow^M \mf{G}$, which proves the first part of the proposition.
If $\ms{A}_j \Rightarrow^M \mf{A}_j$ for each $1 \leq j \leq N$ as well, then \eqref{eql.limit_schematic_4} also implies $\ms{B} \Rightarrow^M \mf{G}$.

\subsection{Proof of Proposition \ref{thm.schw_ads_aux}} \label{sec.comp_schw_ads_aux}

First, note that since $r_0$ is assumed to be large, the relation
\begin{equation}
\label{eql.schw_ads_aux_0} r := \frac{1}{2} ( \xi^{-1} - \xi ) \text{,} \qquad 0 < \xi \ll 1 \text{,}
\end{equation}
defines a smooth and bijective function $\xi$ on $\mi{M}_M$ whose range is an interval $( 0, \xi_0 ]$, for some $\xi_0 \ll 1$.
Furthermore, direct computations using \eqref{eql.schw_ads_aux_0} yield the following:
\begin{equation}
\label{eql.schw_ads_aux_1} dr = - \frac{1}{2} ( \xi^{-2} + 1 ) d \xi \text{,} \qquad r^2 + 1 = \frac{1}{4} ( \xi^{-1} + \xi )^2 \text{.}
\end{equation}

Combining \eqref{eql.schw_ads_aux_0} and \eqref{eql.schw_ads_aux_1}, we then see that
\begin{align*}
r^2 + 1 - \frac{M}{ r^{ n - 2 } } &= \frac{1}{4} ( \xi^{-1} + \xi )^2 - \frac{ 2^{ n - 2 } M }{ ( \xi^{-1} - \xi )^{ n - 2 } } \\
&= \frac{1}{4} \xi^{-2} [ ( 1 + \xi^2 )^2 - 2^n M \xi^n \cdot ( 1 - \xi^2 )^{ - n + 2 } ] \text{,} \\
\left( r^2 + 1 - \frac{M}{ r^{ n - 2 } } \right)^{-1} dr^2 &= \frac{ ( \xi^{-2} + 1 )^2 d \xi^2 }{ \xi^{-2} [ ( 1 + \xi^2 )^2 - 2^n M \xi^n \cdot ( 1 - \xi^2 )^{ - n + 2 } ] } \\
&= \frac{ d \xi^2 }{ \xi^2 [ 1 - 2^n M \xi^n \cdot ( 1 - \xi^2 )^{ - n + 2 } ( 1 + \xi^2 )^{-2} ] } \text{.}
\end{align*}
From \eqref{eq.schw_ads_metric}, \eqref{eql.schw_ads_aux_0}, and the above, we conclude that
\begin{align}
\label{eql.schw_ads_aux_2} g_M &= \frac{ d \xi^2 }{ \xi^2 [ 1 - 2^n M \xi^n \cdot ( 1 - \xi^2 )^{ -n + 2 } ( 1 + \xi^2 )^{-2} ] } \\
\notag &\qquad - \frac{ [ ( 1 + \xi^2 )^2 - 2^n M \xi^n \cdot ( 1 - \xi^2 )^{ -n + 2 } ] dt^2 }{ 4 \xi^2 } + \frac{ ( 1 - \xi^2 )^2 \mathring{\gamma} }{ 4 \xi^2 } \text{.}
\end{align}

From now on, we will use $\zeta_k$, for any $k \in \N$, to denote appropriately chosen real-analytic functions $\zeta_k: [ -\xi_0, \xi_0 ] \rightarrow \R$.
Next, we solve the differential relation
\begin{equation}
\label{eql.schw_ads_aux_10} \frac{ d \rho }{ \rho } = \frac{ d \xi }{ \xi [ 1 - 2^n M \xi^n \cdot ( 1 - \xi^2 )^{ -n + 2 } ( 1 + \xi^2 )^{-2} ]^\frac{1}{2} } \text{,}
\end{equation}
for a positive function $\rho$ on $\mi{M}_M$ (of only $\xi$), with the boundary condition that $\rho \searrow 0$ whenever $\xi \searrow 0$.
Since $\xi_0$ is small, the right-hand side of \eqref{eql.schw_ads_aux_10} can be expanded as
\[
\frac{ d \xi }{ \xi [ 1 - 2^n M \xi^n \cdot ( 1 - \xi^2 )^{ -n + 2 } ( 1 + \xi^2 )^{-2} ]^\frac{1}{2} } = [ \xi^{-1} + 2^{ n - 1 } M \xi^{ n - 1 } + \zeta_1 ( \xi ) \cdot \xi^{ n + 1 } ] d \xi \text{.}
\]
Integrating the above (and normalizing the constant of integration), we obtain
\[
\log \rho = \log 2 + \log \xi + \frac{ 2^{ n - 1 } M \xi^n }{ n } + \zeta_2 ( \xi ) \cdot \xi^{ n + 2 } \text{,} \qquad \rho = 2 \xi \left[ 1 + \frac{ 2^{ n - 1 } M \xi^n }{ n } + \zeta_3 ( \xi ) \cdot \xi^{ n + 2 } \right] \text{.}
\]

Inverting the second expression in the above yields
\begin{equation}
\label{eql.schw_ads_aux_11} \xi = \frac{1}{2} \rho \left[ 1 - \frac{ M \rho^n }{ 2 n } + \zeta_4 ( \xi ) \cdot \rho^{ n + 2 } \right] \text{.}
\end{equation}
Moreover, using \eqref{eql.schw_ads_aux_11}, we obtain the expansions
\begin{align*}
\xi^{-2} &= 4 \rho^{-2} \left[ 1 + \frac{ M \rho^n }{ n } + \zeta_5 ( \xi ) \cdot \rho^{ n + 2 } \right] \text{,} \\
( 1 + \xi^2 )^2 &= 1 + \frac{1}{2} \rho^2 + \frac{1}{16} \rho^4 + \zeta_6 ( \xi ) \cdot \rho^{ n + 2 } \text{,} \\
( 1 - \xi^2 )^2 &= 1 - \frac{1}{2} \rho^2 + \frac{1}{16} \rho^4 + \zeta_7 ( \xi ) \cdot \rho^{ n + 2 } \text{.}
\end{align*}
Thus, combining \eqref{eql.schw_ads_aux_11} and the above, we see that
\begin{align*}
\frac{1}{4} \xi^{-2} ( 1 - \xi^2 )^2 &= \frac{ 1 - \frac{1}{2} \rho^2 + \frac{1}{16} \rho^4 + \frac{ M }{ n } \cdot \rho^n + \zeta_8 ( \xi ) \cdot \rho^{ n + 2 } }{ \rho^2 } \text{,} \\
\frac{1}{4} \xi^{-2} [ ( 1 + \xi^2 )^2 - 2^n M \xi^n \cdot ( 1 - \xi^2 )^{ -n + 2 } ] &= \frac{ 1 + \frac{1}{2} \rho^2 + \frac{1}{16} \rho^4 - \frac{ ( n - 1 ) M }{ n } \cdot \rho^n + \zeta_9 ( \xi ) \cdot \rho^{ n + 2 } }{ \rho^2 } \text{.}
\end{align*}
Substituting \eqref{eql.schw_ads_aux_10} and the above into \eqref{eql.schw_ads_aux_2} yields
\begin{align}
\label{eql.schw_ads_aux_13} g_M &= \frac{ d \rho^2 }{ \rho^2 } - \frac{ [ 1 + \frac{1}{2} \rho^2 + \frac{1}{16} \rho^4 - \frac{ ( n - 1 ) M }{ n } \cdot \rho^n + \zeta_9 ( \xi ) \cdot \rho^{ n + 2 } ] dt^2 }{ \rho^2 } \\
\notag &\qquad + \frac{ [ 1 - \frac{1}{2} \rho^2 + \frac{1}{16} \rho^4 + \frac{ M }{ n } \cdot \rho^n + \zeta_8 ( \xi ) \cdot \rho^{ n + 2 } ] \cdot \mathring{\gamma} }{ \rho^2 } \text{.}
\end{align}

Now, the form \eqref{eq.schw_ads_aux_manifold} of the manifold $\mi{M}$ is obtained by mapping $\mi{M}_M$ through the coordinate changes $r \mapsto \xi \mapsto \rho$.
The form \eqref{eq.schw_ads_aux_metric} of the metric $g$ then follows from \eqref{eql.schw_ads_aux_13} and from expressing the remainders $\zeta_8 ( \xi )$ and $\zeta_9 ( \xi )$ as functions of $\rho$ instead.

Finally, in the cases $n = 2$ and $n = 4$, we can directly solve the differential relation
\begin{equation}
\label{eql.schw_ads_aux_20} \frac{ d \rho }{ \rho } = - \frac{ d r }{ \sqrt{ r^2 + 1 - M r^{ -n + 2 } } } = \begin{cases} - \frac{ d r }{ \sqrt{ r^2 + ( 1 - M ) } } & n = 2 \text{,} \\ - \frac{ d r }{ \sqrt{ r^2 + 1 - M r^{ -2 } } } & n = 4 \text{.} \end{cases}
\end{equation}
First, when $n = 2$, integrating \eqref{eql.schw_ads_aux_20} yields the (normalized) solution
\[
\log \rho = \log 2 - \log \left( r + \sqrt{ r^2 + ( 1 - M ) } \right) \text{,} \qquad \rho = 2 \left[ r + \sqrt{ r^2 + ( 1 - M ) } \right]^{-1} \text{.}
\]
To invert the above relation, we observe that
\begin{align*}
\frac{ ( M - 1 ) \rho }{4} + \frac{ 1 }{ \rho } &= \frac{ M - 1 }{ 2 \left[ r + \sqrt{ r^2 + ( 1 - M ) } \right] } \cdot \frac{ r - \sqrt{ r^2 + ( 1 - M ) } }{ r - \sqrt{ r^2 + ( 1 - M ) } } + \frac{1}{2} \left[ r + \sqrt{ r^2 + ( 1 - M ) } \right] \\
&= \frac{1}{2} \left[ r - \sqrt{ r^2 + ( 1 - M ) } \right] + \frac{1}{2} \left[ r + \sqrt{ r^2 + ( 1 - M ) } \right] \\
&= r \text{.}
\end{align*}
Substituting the above into \eqref{eq.schw_ads_metric}, in the case $n = 2$, yields the first part of \eqref{eq.schw_ads_aux_low}.

For the remaining case $n = 4$, we make the substitution $x = r^2$ in \eqref{eql.schw_ads_aux_20} and obtain
\[
\frac{ d \rho }{ \rho } = - \frac{ r dr }{ \sqrt{ r^4 + r^2 - M } } = - \frac{ dx }{ 2 \sqrt{ x^2 + x - M } } \text{.}
\]
The above can then be directly integrated:
\[
\log \rho = \log 2 - \log \left[ ( 2 x + 1 ) + 2 \sqrt{ x^2 + x - M } \right] \text{,} \qquad \rho^2 = 4 \left[ ( 2 r^2 + 1 ) + 2 \sqrt{ r^4 + r^2 - M } \right]^{-1} \text{.}
\]
To invert this, we note that
\begin{align*}
\frac{ ( 1 + 4 M ) \rho^2 }{16} + \frac{1}{ \rho^2 } &= \frac{ ( 1 + 4 M ) }{ 4 \left[ ( 2 r^2 + 1 ) + 2 \sqrt{ r^4 + r^2 - M } \right] } + \frac{ ( 2 r^2 + 1 ) + 2 \sqrt{ r^4 + r^2 - M } }{4} \\
&= \frac{ ( 2 r^2 + 1 ) - 2 \sqrt{ r^4 + r^2 - M } }{4} + \frac{ ( 2 r^2 + 1 ) + 2 \sqrt{ r^4 + r^2 - M } }{4} \\
&= r^2 + \frac{1}{2} \text{,} \\
- \frac{ ( 1 + 4 M ) \rho^2 }{16} + \frac{1}{ \rho^2 } &= - \frac{ ( 2 r^2 + 1 ) - 2 \sqrt{ r^4 + r^2 - M } }{4} + \frac{ ( 2 r^2 + 1 ) + 2 \sqrt{ r^4 + r^2 - M } }{4} \\
&= r \sqrt{ r^2 + 1 - M r^{-2} } \text{.}
\end{align*}
From the above, we then conclude that
\begin{align*}
r^2 &= \frac{1}{ \rho^2 } \left[ 1 - \frac{1}{2} \rho^2 + \frac{ 1 + 4 M }{ 16 } \cdot \rho^4 \right] \text{,} \\
r^2 + 1 - M r^{-2} &= \left[ \frac{1}{ \rho^2 } - \frac{ ( 1 + 4 M ) \rho^2 }{ 16 } \right]^2 \left[ \frac{1}{ \rho^2 } - \frac{1}{2} + \frac{ ( 1 + 4 M ) \rho^2 }{ 16 } \right]^{-1} \\
&= \frac{1}{ \rho^2 } \left[ 1 - \frac{ ( 1 + 4 M ) \rho^4 }{ 16 } \right]^2 \left[ 1 - \frac{1}{2} \rho^2 + \frac{ ( 1 + 4 M ) \rho^4 }{ 16 } \right]^{-1} \text{.}
\end{align*}
Putting the above into \eqref{eq.schw_ads_metric}, in the case $n = 4$, results in the second part of \eqref{eq.schw_ads_aux_low}.

\fi

\raggedright
\bibliographystyle{amsplain}
\bibliography{articles,books}

\end{document}